\title{Lower Bound Techniques in the Comparison-Query Model and Inversion Minimization on Trees}
\author{Ivan Hu \and Dieter van Melkebeek \and Andrew Morgan}
\date{\today}
\begin{document}

\maketitle

\begin{abstract}

Given a rooted tree and a ranking of its leaves, what is the minimum number of inversions of the leaves that can be attained by ordering the tree? This variation of the well-known problem of counting inversions in arrays originated in mathematical psychology. It has the evaluation of the Mann--Whitney statistic for detecting differences between distributions as a special case. 

We study the complexity of the problem in the comparison-query model, the standard model for problems like sorting, selection, and heap construction. The complexity depends heavily on the shape of the tree: for trees of unit depth, the problem is trivial; for many other shapes, we establish lower bounds close to the strongest known in the model, namely the lower bound of $\log_2(n!)$ for sorting $n$ items. For trees with $n$ leaves we show, in increasing order of closeness to the sorting lower bound:
\begin{itemize}
\item[(a)] $\log_2((\alpha(1-\alpha)n)!) - O(\log n)$ queries are needed whenever the tree has a subtree that contains a fraction $\alpha$ of the leaves. This implies a lower bound of $\log_2((\frac{k}{(k+1)^2}n)!) - O(\log n)$ for trees of degree $k$.
\item[(b)] $\log_2(n!) - O(\log n)$ queries are needed in case the tree is binary. 
\item[(c)] $\log_2(n!) - O(k \log k)$ queries are needed for certain classes of trees of degree $k$, including perfect trees with even $k$.
\end{itemize}

The lower bounds are obtained by developing two novel techniques for a generic problem $\Pi$ in the comparison-query model and applying them to inversion minimization on trees. Both techniques can be described in terms of the Cayley graph of the symmetric group with adjacent-rank transpositions as the generating set, or equivalently, in terms of the edge graph of the permutahedron, the polytope spanned by all permutations of the vector $(1,2,\dots,n)$. Consider the subgraph consisting of the edges between vertices with the same value under $\Pi$. We show that the size of any decision tree for $\Pi$ must be at least:
\begin{itemize}
\item[(i)] the number of connected components of this subgraph, and
\item[(ii)] the factorial of the average degree of the complementary subgraph, divided by $n$.
\end{itemize}
Lower bounds on query complexity then follow by taking the base-2 logarithm. Technique (i) represents a discrete analog of a classical technique in algebraic complexity and allows us to establish (c) and a tight lower bound for counting cross inversions, as well as unify several of the known lower bounds in the comparison-query model. Technique (ii) represents an analog of sensitivity arguments in Boolean complexity and allows us to establish (a) and (b). 

Along the way to proving (b), we derive a tight upper bound on the maximum probability of the distribution of cross inversions, which is the distribution of the Mann--Whitney statistic in the case of the null hypothesis. Up to normalization, the probabilities alternately appear in the literature as the coefficients of polynomials formed by the Gaussian binomial coefficients, also known as Gaussian polynomials.

\end{abstract}


\section{Overview}
\label{sec:overview}

The result of a hierarchical cluster analysis on a set $X$ of items can be thought of as an unordered rooted tree $T$ with leaf set $X$. To visualize the tree, or to spell out the classification in text, one needs to decide for every internal node of $T$ in which order to visit its children. Figure~\ref{fig:example:initial} represents an example of a classification of eight body parts from the psychology literature \cite{Degerman1982}. It is obtained by repeatedly clustering nearest neighbors where the distance between two items is given by the number of people in a survey who put the items into different classes \cite{Miller1969}. The ordering of the resulting binary tree in Figure~\ref{fig:example:initial} is the output produced by a particular implementation of the clustering algorithm. 

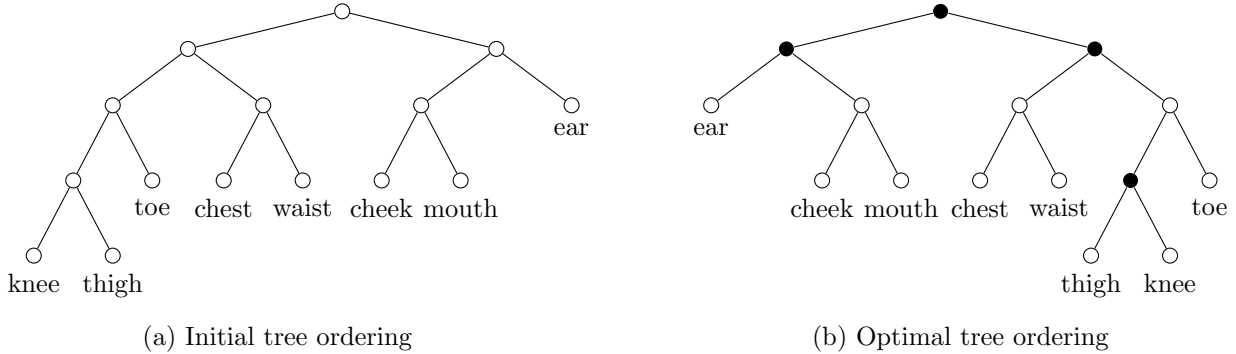
\begin{figure}[h]
\centering
\begin{subfigure}[b]{0.45\textwidth}
\centering
\begin{tikzpicture}
[
level 1/.style = {sibling distance = 4.1cm, level distance = 0.5cm},
level 2/.style = {sibling distance = 2cm, level distance = 0.75cm},
level 3/.style = {sibling distance = 1.05cm, level distance = 1cm}
]
\node [circle,draw,inner sep=2pt]{}
    child 
    {node [circle,draw,inner sep=2pt]{}
        child
        {node [circle,draw,inner sep=2pt]{}
            child
            {node [circle,draw,inner sep=2pt]{}
                child
                {node [circle,draw,inner sep=2pt, label={below:\small knee}]{}}
                child
                {node [circle,draw,inner sep=2pt, label={below:\small thigh}]{}}}
            child
            {node [circle,draw,inner sep=2pt, label={below:\small toe}]{}}}
        child
        {node [circle,draw,inner sep=2pt]{}
            child
            {node [circle,draw,inner sep=2pt, label={below:\small chest}]{}}
            child
            {node [circle,draw,inner sep=2pt, label={below:\small waist}]{}}}}
    child 
    {node [circle,draw,inner sep=2pt]{}
        child
        {node [circle,draw,inner sep=2pt]{}
            child
            {node [circle,draw,inner sep=2pt, label={below:\small cheek}]{}}
            child
            {node [circle,draw,inner sep=2pt, label={below:\small mouth}]{}}}
        child
        {node [circle,draw,inner sep=2pt, label={below:\small ear}]{}}};
\end{tikzpicture}
\caption{Initial tree ordering}\label{fig:example:initial}
\end{subfigure}
\hfill
\begin{subfigure}[b]{0.45\textwidth}
\centering
\begin{tikzpicture}
[
level 1/.style = {sibling distance = 4.1cm, level distance = 0.5cm},
level 2/.style = {sibling distance = 2cm, level distance = 0.75cm},
level 3/.style = {sibling distance = 1.05cm, level distance = 1cm}
]
\node [circle,fill,inner sep=2pt]{}
    child
    {node [circle,fill,inner sep=2pt]{}
        child
        {node [circle,draw,inner sep=2pt, label={below:\small ear}]{}}
        child
        {node [circle,draw,inner sep=2pt]{}
            child
            {node [circle,draw,inner sep=2pt, label={below:\small cheek}]{}}
            child
            {node [circle,draw,inner sep=2pt, label={below:\small mouth}]{}}}}
    child
    {node [circle,fill,inner sep=2pt]{}
        child
        {node [circle,draw,inner sep=2pt]{}
            child
            {node [circle,draw,inner sep=2pt, label={below:\small chest}]{}}
            child
            {node [circle,draw,inner sep=2pt, label={below:\small waist}]{}}}
        child
        {node [circle,draw,inner sep=2pt]{}
            child
            {node [circle,fill,inner sep=2pt]{}
                child
                {node [circle,draw,inner sep=2pt, label={below:\small thigh}]{}}
                child
                {node [circle,draw,inner sep=2pt, label={below:\small knee}]{}}}
            child
            {node [circle,draw,inner sep=2pt, label={below:\small toe}]{}}}};
\end{tikzpicture}
\caption{Optimal tree ordering}\label{fig:example:optimal}
\end{subfigure}
\caption{Classification of body parts}\label{fig:example}
\end{figure}

Another ordering is given in Figure~\ref{fig:example:optimal}; black marks the nodes whose children have been swapped from the ordering in Figure~\ref{fig:example:initial}. Figure~\ref{fig:example:optimal} has the advantage over Figure~\ref{fig:example:initial} that the leaves now appear in an interesting global order, namely head-to-toe: ear, cheek, mouth, chest, waist, thigh, knee, toe. Indeed, Figure~\ref{fig:example:optimal} makes apparent that the anatomical order correlates perfectly with the clustering. In general, given a tree $T$ and a ranking $\ranking$ of its leaves, one might ask ``how correlated'' is $T$ with $\ranking$? Degerman~\cite{Degerman1982} suggests evaluating the orderings of $T$ in terms of the number of inversions of the left-to-right ranking $\ordering$ of the leaves with respect to the given ranking $\ranking$, and use the minimum number over all orderings as a measure of (non)correlation.
\begin{definition}[ranking, inversion, $\Inv_{\cdot}(\cdot)$]
A \emph{ranking} $\ranking$ of a set $X$ of $n$ items is a bijection from $X$ to $[n]$. 
Given two rankings $\ordering$ and $\ranking$, an \emph{inversion} of $\ordering$ with respect to $\ranking$ is a pair of items $x_1,x_2\in X$ such that $\ranking(x_1) < \ranking(x_2)$ but $\ordering(x_1) > \ordering(x_2)$. The number of inversions is denoted by $\Inv_\ranking(\ordering)$.
An inversion in an array $A$ of values is an inversion of $\ordering$ with respect to $\ranking$ where $\ordering$ denotes the ranking by array index and $\ranking$ the ranking by value; in this setting we write $\Inv(A)$ for $\Inv_\ranking(\sigma)$.
\end{definition}

The minimum number of inversions can be used to compare the quality of different trees $T$ for a given ranking $\ranking$, or of different rankings $\ranking$ for a given tree $T$. This mimics the use of the number of inversions in applications like collaborative filtering in recommender systems, rank aggregation for meta searching the web, and Kendall's test for dependencies between two random variables. 
In particular, the Mann--Whitney test for differences between random variables can be viewed as a special case of our optimization problem. The test is widely used because of its nonparametric nature, meaning that no assumptions need to be made about the distribution of the two variables; the distribution of the statistic in the case of the null hypothesis (both variables have the same distribution) is always the same. The test achieves this property by only considering the relative order of the samples. It takes a sequence $A$ of $a$ samples from a random variable $Y$, a sequence $B$ of $b$ samples from another random variable $Z$, and computes the statistic $U \doteq \min(\XInv(A,B),\XInv(B,A))$ that is the minimum of the number $\XInv(A,B)$ of cross inversions from $A$ to $B$, and vice versa. 
\begin{definition}[cross inversions, $\XInv_{\cdot}(\cdot,\cdot)$]
Let $\ranking$ be a ranking of $X$, and $A, B \subseteq X$. A \emph{cross inversion} from $A$ to $B$ with respect to $\ranking$ is a pair $(x_1,x_2) \in A \times B$ that is out of order with respect to $\ranking$, i.e., such that $\ranking(x_1) > \ranking(x_2)$. The number of cross inversions is denoted by $\XInv_\ranking(A,B)$.
For two arrays $A$ and $B$ of values, a cross inversion from $A$ to $B$ is a cross inversion from the set of entries in $A$ to the set of entries in $B$ where $\ranking$ denotes the ranking by value; in this setting we write $\XInv(A,B)$ for $\XInv_\ranking(A,B)$.
\end{definition}
The statistic $U$ coincides with the optimum value of our optimization problem with the tree $T$ in Figure~\ref{fig:Mann-Whitney} as input. The leftmost $a$ leaves correspond to the samples $A$, the rightmost $b$ leaves to the samples $B$, and the ranking $\ranking$ to the value order of the combined $a+b$ samples.
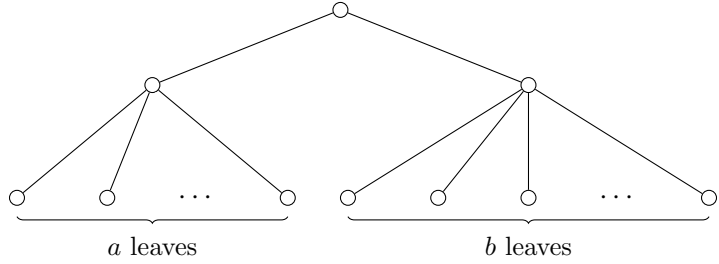
\begin{figure}
\centering
\begin{tikzpicture}
[
level 1/.style = {sibling distance = 5cm, level distance = 1cm},
level 2/.style = {sibling distance = 1.2cm, level distance = 1.5cm}
]
\node [circle,draw,inner sep=2pt]{}
    child
    {node [circle,draw,inner sep=2pt]{}
        child
        {node (a1) [circle,draw,inner sep=2pt]{}}
        child
        {node [circle,draw,inner sep=2pt]{}}
        child
        {node {$\cdots$} edge from parent [color=white]}
        child
        {node (a2) [circle,draw,inner sep=2pt]{}}}
    child
    {node [circle,draw,inner sep=2pt]{}
        child
        {node (b1) [circle,draw,inner sep=2pt]{}}
        child
        {node [circle,draw,inner sep=2pt]{}}
        child
        {node [circle,draw,inner sep=2pt]{}}
        child
        {node {$\cdots$} edge from parent [color=white]}
        child
        {node (b2) [circle,draw,inner sep=2pt]{}}};

\draw[decorate, decoration=brace] ($(a2) + (0,-0.25)$) -- ($(a1) + (0,-0.25)$);
\draw[decorate, decoration=brace] ($(b2) + (0,-0.25)$) -- ($(b1) + (0,-0.25)$);
\node at ($0.5*(a1) + 0.5*(a2) + (0,-0.25)$)[label={below:\small $a$ leaves}]{};
\node at ($0.5*(b1) + 0.5*(b2) + (0,-0.25)$)[label={below:\small $b$ leaves}]{};
\end{tikzpicture}
\caption{Mann--Whitney instance}\label{fig:Mann-Whitney}
\end{figure}

We mainly study the value version of our optimization problem, which we denote by $\MInv$.
\begin{definition}[inversion minimization on trees, $\MInv(\cdot,\cdot)$, $\Pi_{\cdot}$]
Inversion minimization on trees is the computational problem with the following specification:
\begin{description}
\item[Input:] A rooted tree $T$ with leaf set $X$ of size $n$, and a ranking $\ranking$ of $X$.
\item[Output:] $\MInv(T,\ranking)$, the minimum of $\Inv_\ranking(\ordering)$ over all possible orderings of $T$, where $\ordering$ denotes the left-to-right ranking of $X$ induced by the ordering of $T$.
\end{description}
For any fixed tree $T$ with leaf set $X$, we use the short-hand $\Pi_T$ to denote the computational problem that takes as input a ranking $\ranking$ of $X$ and outputs $\MInv(T,\ranking)$. 
\end{definition}

Degerman~\cite{Degerman1982} observes that the ordering at each internal node can be optimized independently in a greedy fashion. In the setting of binary trees, for each node $v$, we can count the cross inversions from the leaves in the left subtree of $v$ to the leaves in the right subtree of $v$. Between the two possible orderings of the children of a node $v$, we choose the one that yields the smaller number of cross inversions. 
Based on his observation, Degerman presents a polynomial-time algorithm for the case of binary trees $T$. A more refined implementation and analysis yields a running time of $O(\avgdepth(T) \cdot n)$, where $\avgdepth(T)$ denotes the average depth of a leaf in $T$. For balanced binary trees the running time becomes $O(n \log n)$. All of this can be viewed as variants of the well-known $O(n \log n)$ divide-and-conquer algorithm for counting inversions in arrays of length $n$.

For trees of degree $\deg(T) > 2$, the local greedy optimization at each internal node becomes more complicated, as there are many ways to order the children of each internal node. Exhaustive search results in a running time of $O((\deg(T)! + \deg(T) \cdot \avgdepth(T)) \cdot n)$, which can be improved to $O((\deg(T)^2 2^{\deg(T)} + \deg(T) \cdot \avgdepth(T)) \cdot n)$ using dynamic programming. The problem is closely related to the classical problem of minimum arc feedback set, and becomes NP-hard without any constraints on the degree. We refer to Section~\ref{sec:turing} for more details.

\paragraph{Query complexity.}
Rather than running time in the Turing machine model, our focus lies on query complexity in the comparison-query model. There we can only access the ranking $\ranking: X \to [n]$ via queries of the form: Is $\ranking(x_1) < \ranking(x_2)$? For any fixed tree $T$, we want to determine the minimum number of queries needed to solve the problem. 

The comparison-query model represents the standard model for analyzing problems like sorting, selection, and heap construction. Sorting represents the hardest problem in the comparison-query model as it is tantamount to knowing the entire ranking $\ranking$. Its query complexity has a well-known information-theoretic lower bound of $\log_2(n!) = n \log_2(n/e) + \frac{1}{2} \log_2(n) + O(1)$. Standard algorithms such as mergesort and heapsort yield an upper bound of $\log_2(n!) + O(n)$, which has been improved to $\log_2(n!) + o(n)$ recently \cite{Sergeev2020}.
We refer to Section~\ref{sec:model} for an overview of results and techniques for lower bounds in the model.

Information theory only yields a very weak lower bound on the query complexity of inversion minimization on trees: $\log_2 \binom{n}{2} = 2 \log_2(n) - O(1)$. The complexity of the problem critically depends on the shape of the tree $T$ and can be significantly lower than the one for sorting. For starters, the problem becomes trivial for trees of depth one as their leaves can be arranged freely in any order. More precisely, the trees $T$ for which the answer is identically zero, irrespective of the ranking $\ranking$, are exactly those such that all root-to-leaf paths have only the root in common. 

Arguably, the simplest nontrivial instances of inversion minimization are for trees $T$ of the Mann--Whitney type in Figure~\ref{fig:Mann-Whitney} with $a=1$ and $b=n-1$. Depending on the rank $r$ of the isolated leaf, an optimal ordering of $T$ is either the left or the right part in Figure~\ref{fig:simplest-non-trivial}, where the label of each leaf is its rank under $\ranking$.

\begin{figure}[h]
\centering
    \begin{tikzpicture}
      \node (11) at (1,0) [circle, draw, inner sep=1.5pt, label={below:\small \(r\)}]{};
      \node (21) at (2,0) [circle, draw, inner sep=1.5pt, label={below:\small \(1\)}]{};
      \node (22) at (3,0) [circle, draw, inner sep=1.5pt, label={below:\small \(2\)}]{};
      \node (2d) at (4,0) {\footnotesize \(\dots\)};
      \node (2n) at (5,0) [circle, draw, inner sep=1.5pt, label={below:\small \(n\)}]{};
      \node (mid) at (3.5,1)[circle, draw, inner sep=1.5pt]{};
      \node (top) at (2.5,2)[circle, draw, inner sep=1.5pt]{};
      \draw (21) -- (mid);
      \draw (22) -- (mid);
      \draw (2n) -- (mid);
      \draw (11) -- (top);
      \draw (mid) -- (top);
    \end{tikzpicture}
    \hspace{0.75in}
    \begin{tikzpicture}
      \node (11) at (5,0) [circle, draw, inner sep=1.5pt, label={below:\small \(r\)}]{};
      \node (21) at (1,0) [circle, draw, inner sep=1.5pt, label={below:\small \(1\)}]{};
      \node (22) at (2,0) [circle, draw, inner sep=1.5pt, label={below:\small \(2\)}]{};
      \node (2d) at (3,0) {\footnotesize \(\dots\)};
      \node (2n) at (4,0) [circle, draw, inner sep=1.5pt, label={below:\small \(n\)}]{};
      \node (mid) at (2.5,1)[circle, draw, inner sep=1.5pt]{};
      \node (top) at (3.5,2)[circle, draw, inner sep=1.5pt]{};
      \draw (21) -- (mid);
      \draw (22) -- (mid);
      \draw (2n) -- (mid);
      \draw (11) -- (top);
      \draw (mid) -- (top);
    \end{tikzpicture}
\caption{Rank instance}\label{fig:simplest-non-trivial}
\end{figure}
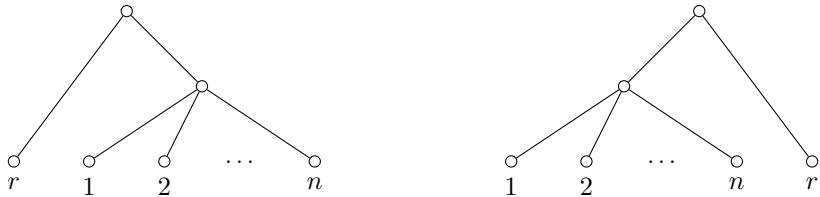

As the ordering on the left has $r-1$ inversions and the one on the right $n-r$, the answer is $\min(r-1,n-r)$. Thus, this instance of inversion minimization on trees is essentially equivalent to rank finding, which has query complexity exactly $n-1$.

\paragraph{Results.} 
We prove that for many trees $T$, inversion minimization on $T$ is nearly as hard as sorting. First, we exhibit a common structure that guarantees high complexity, namely a subtree that contains a fairly balanced fraction of the leaves. We make use of the following notation.
\begin{definition}[leaf set, $\leafset(\cdot)$, and subtree]
For a tree $T$, the leaf set of $T$, denoted $\leafset(T)$, is the set of leaves of $T$. For a node $v$ to $T$, $T_v$ denotes the subtree of $T$ rooted at $v$.
\end{definition}
The quantitative statement references the gamma function $\Gamma$, which is a proxy for any convex real function that interpolates the factorial function on the positive integers. More precisely, we have that $\Gamma(n+1) = n!$ for every integer $n \ge 1$. 
\begin{theorem}[lower bound for general trees]\label{thm:main:general}
Let $T$ be a tree with $n$ leaves, and $v$ a node with $\lvert\leafset(T_v)\rvert = \ell$. The query complexity of inversion minimization on $T$ is at least $\log_2(\Gamma( \frac{\ell(n-\ell)}{n}+1 ))$. In particular, the complexity is at least $\log_2(\Gamma(\frac{k}{(k+1)^2} \cdot n+1))$ where $k$ denotes the degree of $T$.
\end{theorem}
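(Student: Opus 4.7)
My plan is to invoke Technique~(ii) from the abstract with $\Pi=\Pi_T$: by that technique the query complexity is at least $\log_2(\bar d!/n)$, where $\bar d$ is the average degree of the complementary subgraph---the subgraph of the Cayley graph of $S_n$ (with adjacent-rank transpositions as generators) whose edges join rankings $\ranking,\ranking'$ with $\MInv(T,\ranking)\neq\MInv(T,\ranking')$. It will suffice to show $\bar d \ge 2\ell(n-\ell)/n$ up to sub-leading loss; the monotonicity and growth of $\log_2\Gamma$ then absorb the $-\log_2 n$ via the elementary inequality $\Gamma(2x+1)\ge n\cdot\Gamma(x+1)$ (valid whenever $\binom{2x}{x}\ge n$, so in particular once $x = \ell(n-\ell)/n$ is at least $\log_2 n$), while for smaller $x$ the theorem's bound is modest enough to be handled separately.

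\textbf{Decomposition and sensitivity.} Set $A=\leafset(T_v)$ and $B=X\setminus A$. Call a swap of ranks $(i,i+1)$ \emph{cross-part} at $\ranking$ if one of the two affected leaves lies in $A$ and the other in $B$; a direct calculation gives $\tfrac{2\ell(n-\ell)}{n}$ for the expected count of cross-part swaps under a uniform random $\ranking$. As a preliminary lemma I would prove the decomposition
\[
\MInv(T,\ranking) = \MInv(T_v,\ranking|_A) + \MInv(T\setminus T_v,\ranking|_B) + \sum_{c}\min\!\bigl(\XInv_\ranking(\leafset(T_c),A),\,\XInv_\ranking(A,\leafset(T_c))\bigr),
\]
where $c$ ranges over siblings of nodes on the root-to-$v$ path. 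This holds because $A$ appears as a contiguous block in any ordering of $T$, and because the side (left or right of this block) of each sibling subtree $T_c$ can be chosen independently of the other orderings involved. A cross-part swap preserves both $\ranking|_A$ and $\ranking|_B$ as rankings of their respective sets (no element of $A$ holds rank $i+1$ before the swap, and symmetrically for $B$ and rank $i$), leaving the first two summands invariant; it shifts $\XInv_\ranking(\leafset(T_c),A)$ by $-1$ and $\XInv_\ranking(A,\leafset(T_c))$ by $+1$ for the unique sibling $c$ whose subtree contains the swapped $B$-leaf, and leaves every other summand intact. A short case analysis shows the affected $\min(\cdot,\cdot)$ changes unless the two counts differ by exactly $1$ in the unfavorable direction---a boundary event whose total expected contribution is $o(1)$ per swap, via a local central-limit estimate on $\XInv_\ranking(\leafset(T_c),A)-\XInv_\ranking(A,\leafset(T_c))$ when $|\leafset(T_c)|$ is not too small, and by direct counting otherwise. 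This yields $\bar d \ge (2-o(1))\,\ell(n-\ell)/n$, which delivers the theorem via the strategy above.

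\textbf{Main obstacle.} The crux is the decomposition lemma. While each summand is natural in isolation, demonstrating that all three can be minimized \emph{jointly} by independent choices requires an exchange argument: at each ancestor $p$ of $v$, the relative order among $p$'s non-path children influences both the $A$-cross contribution (via the side placements) and the contribution to $\MInv(T\setminus T_v,\ranking|_B)$ (via the induced order among $B$-leaves at $p$), so one needs to show that the side assignment attaining $\sum_c\min$ is compatible with an ordering of $T\setminus T_v$ that witnesses its own minimum. Controlling the boundary event contribution to $\bar d$, and handling the small-$\ell(n-\ell)/n$ regime in which the $-\log_2 n$ term is not automatically absorbed, are secondary challenges but comfortable in view of the factor-of-two slack afforded by the cross-part count.
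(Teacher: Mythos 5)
The central claim in your plan---the decomposition
\[
\MInv(T,\ranking) = \MInv(T_v,\ranking|_A) + \MInv(T\setminus T_v,\ranking|_B) + \sum_{c}\min\bigl(\XInv_\ranking(\leafset(T_c),A),\,\XInv_\ranking(A,\leafset(T_c))\bigr)
\]
---is false, not merely in need of an exchange argument. Take $n=3$ leaves: the root has children $u$ and a leaf $c$; the node $u$ has children a leaf $v$ (so $\ell=1$, $A=\{v\}$) and a leaf $c'$; set $\ranking(c')=1$, $\ranking(c)=2$, $\ranking(v)=3$. Then $\MInv(T,\ranking)=1$ (attained by the orderings $c',v,c$ and $c,c',v$), yet the right-hand side is $0+0+0+0=0$: the singleton $A$ contributes $0$, the tree $T\setminus T_v$ with leaves $c',c$ in rank order contributes $0$, and both $\min$-terms vanish because $v$ carries the top rank so each singleton sibling can be placed on the low side of $A$ for free. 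The coupling you flag as the ``main obstacle'' is precisely the failure mode: the side choice for $c$ (at the root) and for $c'$ (at $u$) jointly determines the relative order of $c$ and $c'$, which feeds the $B$-versus-$B$ inversion at their LCA. No ordering of $T\setminus T_v$ attaining $\MInv(T\setminus T_v,\ranking|_B)=0$ is compatible with the side assignment attaining $\sum_c\min=0$, so no exchange argument can restore equality. Since the decomposition is your vehicle for showing that almost every cross-part swap is sensitive, the asserted lower bound $\bar d\ge(2-o(1))\,\ell(n-\ell)/n$ is unsupported.

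The paper never attempts this factor of two and uses no decomposition of this kind for \cref{thm:main:general}. Its argument for $s(\Pi_T)\ge \ell(n-\ell)/n-1$ (\cref{lemma:k-ary:subtree-sensitivity}) consists of three short claims: (\cref{claim:k-ary:decreasing-transposition}) $\ranking$ is sensitive at any rank $r$ where an optimal ordering $\ordering_{\min}$ descends, i.e.\ $\ordering_{\min}(\ranking^{-1}(r))>\ordering_{\min}(\ranking^{-1}(r+1))$, since that transposition strictly decreases $\Inv_\ranking(\ordering_{\min})$; (\cref{claim:k-ary:transposition-to-gap}) because any tree ordering maps $\leafset(T_v)$ onto a contiguous block of positions, for all but one rank $s$ with $\ranking^{-1}(s)\in\leafset(T_v)$ and $\ranking^{-1}(s+1)\notin\leafset(T_v)$ there is a descent of $\ordering_{\min}$ somewhere in the gap before the next $\leafset(T_v)$-leaf; (\cref{claim:k-ary:gap-count}) the expected number of such $s$ under a uniform $\ranking$ is $\ell(n-\ell)/n$. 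The Sensitivity Lemma then gives the theorem immediately, without any local central-limit estimate (the paper reserves that machinery for the sharper binary-tree bound of Section~\ref{sec:sensitivity:binary}) and without joint-optimization concerns, because the descent count is lower-bounded for an \emph{arbitrary} ordering and in particular the minimizer. Your idea of using convexity of $\log\Gamma$ to absorb the $-\log_2 n$ is a reasonable cosmetic move, but it is moot without the factor-of-two sensitivity bound, and note that counting ascents of $\ordering_{\min}$ cannot supply it either, since an ascent of the current minimizer does not imply a change in $\MInv$ after the swap.
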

For trees of constant degree, Theorem~\ref{thm:main:general} yields a lower bound that is as strong as the one for sorting up to a constant multiplicative factor. For the important case of binary trees (like the classification trees from the motivating example), we obtain a lower bound that is only a logarithmic additive term shy of the lower bound for sorting.
\begin{theorem}[lower bound for binary trees]\label{thm:main:binary}
For binary trees $T$ with $n$ leaves, the query complexity of inversion minimization on $T$ is at least $\log_2(n!) - O(\log n)$.
\end{theorem}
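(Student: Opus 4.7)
The plan is to apply Technique (ii): any decision tree for $\Pi_T$ has size at least $d!/n$, where $d$ is the average degree of the complementary subgraph of the permutahedron of $S_n$. Equivalently, letting $I(\ranking)$ denote the number of adjacent-rank transpositions at $\ranking$ that leave $\MInv(T,\ranking)$ unchanged, we have $d = n-1-E[I]$ under uniform $\ranking\in S_n$, so it suffices to prove $E[I] = O(1)$; then $d!/n \geq (n-1-O(1))!/n$, whose base-$2$ logarithm is $\log_2(n!)-O(\log n)$.

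First, I would invoke Degerman's greedy decomposition
\begin{align*}
\MInv(T,\ranking) \;=\; \sum_v \min\!\bigl(X_L^v,\,X_R^v\bigr),
\end{align*}
summed over internal nodes $v$ of $T$, where $L_v$ and $R_v$ denote the leaf sets of the two children subtrees of $v$ and $X_L^v = \XInv_\ranking(L_v,R_v)$, $X_R^v = \XInv_\ranking(R_v,L_v)$. A case analysis on an adjacent-rank swap of the two leaves at ranks $(i,i+1)$ shows that only the summand for their least common ancestor $v$ can change, and it stays the same precisely when $|X_L^v - X_R^v|=1$ and the swap direction decreases the larger of the two. For uniform $\ranking$, the sub-ranking $\pi$ of $L_v\cup R_v$ is independent of the set of positions that $L_v\cup R_v$ occupy in $[n]$, so the expected contribution of $v$ to $E[I]$ factors as $(s_v/n)\cdot E[\mathbf{1}_{|X_L^v - X_R^v|=1}\cdot D_v(\pi)]$, where $s_v=|L_v|+|R_v|$ and $D_v(\pi)$ counts the indices $k$ for which $\pi^{-1}(k)\in L_v$ and $\pi^{-1}(k+1)\in R_v$ (or the reverse, whichever matches the insensitive direction).

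To bound this expectation, I would combine the paper's companion tight upper bound on the maximum probability of the distribution of cross inversions, which gives $\Pr[|X_L^v - X_R^v|=1] = O(1/\sqrt{|L_v|\,|R_v|\,s_v})$, with a symmetry estimate $E[D_v \mid |X_L^v - X_R^v|=1] = O(|L_v|\,|R_v|/s_v)$. Plugging in yields a per-node contribution of $O(\sqrt{|L_v||R_v|/s_v}/n) \leq O(\sqrt{\min(|L_v|,|R_v|)}/n)$. A potential-function induction of the form $F(T)\leq cn - c'\sqrt{n}$, exploiting the sub-additivity of the square root at each split, then establishes $\sum_v \sqrt{\min(|L_v|,|R_v|)} = O(n)$ for every binary tree $T$ on $n$ leaves, so $E[I] = O(1)$ as required. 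The main obstacle lies in the joint analysis above: $\mathbf{1}_{|X_L^v - X_R^v|=1}$ and $D_v(\pi)$ are correlated functions of $\pi$, precluding a naive independence or Cauchy--Schwarz argument, so handling them requires the full strength of the max-probability bound on cross inversions in order to compare $\Pr[X_L^v = k_0]$ for near-median $k_0$ with the joint probability that $X_L^v = k_0$ and specific $\pi$-adjacencies occur.
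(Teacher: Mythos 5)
Your proposal follows the paper's proof almost step for step: it invokes the Sensitivity Lemma (your ``Technique (ii)'' applied to $d = s(\Pi_T)$), uses Degerman's per-node decomposition, derives the sensitivity criterion $|X_L^v - X_R^v| = 1$ together with the matching swap direction (this is exactly Proposition~\ref{prop:sensitivity:criterion}, since $\DInv_\ranking(\{\ell_\lo\},\{\ell_\hi\}) = -1$), applies the max-probability bound on the cross-inversion distribution (Lemma~\ref{thm:gaussian:main}), and closes with the potential-function induction $\sum_v\sqrt{\min(|L_v|,|R_v|)} = O(n)$ (the paper's Claim~\ref{claim:sensitivity:sum} proves the equivalent bound with $\sqrt{n_{v,1}n_{v,2}/(n_{v,1}+n_{v,2})}$).

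The one place where you depart is how the per-node contribution to $E[I]$ is parameterized. The paper writes it as $\frac{2n_1n_2}{n}$ times one minus the root sensitivity of $T_v$; you write it as $(s_v/n)\,E[\mathbf{1}_{|X_L^v - X_R^v|=1}\,D_v(\pi)]$, using the fact that the probability that two prescribed $\pi$-adjacent positions end up $\ranking$-adjacent is $s_v/n$ and is the same for all positions. Both factorizations are correct and equal.

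You then flag the correlation between $\mathbf{1}_{|X_L^v - X_R^v|=1}$ and $D_v(\pi)$ as ``the main obstacle'' and leave it unresolved — this is the gap in your write-up as submitted. But the obstacle is a phantom: $D_v(\pi)$ counts directed crossings $L_v\to R_v$ (or $R_v\to L_v$) among $\pi$-adjacent positions, so deterministically $D_v \le \min(|L_v|,|R_v|) \le 2|L_v||R_v|/s_v$, giving $E[D_v \mid \text{anything}] \le 2|L_v||R_v|/s_v$ with no probabilistic argument at all. Plugging that into your expression recovers the paper's per-node bound $O(\sqrt{\min(|L_v|,|R_v|)}/n)$, and the rest goes through. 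The paper's own route to the same bound is genuinely different at this one step: Section~\ref{sec:sensitivity:root} conditions on the identity of the two affected leaves, splits the remaining elements of $L_\lo$ and $L_\hi$ into the four sets $A_<,A_>,B_<,B_>$, and observes that conditioned on their sizes, $\DInv_\ranking(A_<,B_<)+\DInv_\ranking(A_>,B_>)$ is a sum of two independent cross-inversion counts, to which Lemma~\ref{thm:gaussian:main} is applied in a four-way case analysis. Your route, once the trivial bound on $D_v$ is supplied, is a somewhat shorter path to the same estimate.
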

The logarithmic loss can be reduced to a constant for certain restricted classes of trees. The full statement is somewhat technical. First, it assumes that the tree has no nodes of degree 1. This is without loss of generality, as we can short-cut all degree-1 nodes in the tree without affecting the minimum number of inversions. For example, trivial trees for inversion minimization have depth 1 without loss of generality. Second, the strength of the lower bound depends on the maximum size of a leaf child set, defined as follows.
\begin{definition}[leaf child set, $\leafchildset(\cdot)$]\label{def:leaf-child-set} The \emph{leaf child set} $\leafchildset(v)$ of a vertex $v$ in a tree $T$ is the set $\leafchildset(v)$ of all the children of $v$ that are leaves in $T$.
\end{definition}
Most importantly, the result requires certain fragile parity conditions to hold. That said, there are interesting classes satisfying all requirements, and the bounds are very tight. 
\begin{theorem}[lower bound for restricted classes]\label{thm:main:special}
Let $T$ be a tree without nodes of degree 1 such that the leaf child sets have size at most $k$, at most one of them is odd, and if there exists an odd one, say $\leafchildset(v^*)$, then all ancestors of $v^*$ have empty leaf child sets. The query complexity of inversion minimization on $T$ is at least $\log_2(n!) - O(k \log k)$. In particular, the lower bound applies to:
\begin{itemize}
\item perfect trees of even degree $k$, and
\item full binary ($k=2$) trees with at most one leaf without a sibling leaf.
\end{itemize}
\end{theorem}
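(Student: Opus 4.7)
The plan is to apply Technique~(i) from the overview: the query complexity of inversion minimization on $T$ is at least $\log_2$ of the number of connected components of the subgraph of the Cayley graph of $S_n$ (with adjacent-rank transpositions as generators) whose edges connect rankings $\ranking,\ranking'$ with $\MInv(T,\ranking)=\MInv(T,\ranking')$. It thus suffices to show this graph has at least $n!/2^{O(k\log k)}$ components.

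The crux is a local characterization of which adjacent-rank swaps preserve $\MInv$ combined with a parity argument. Swapping items $x,y$ at ranks $i,i+1$ only changes the minimum-inversion contribution at their least common ancestor $v^*$: the cross-inversion matrix $C$ of $v^*$'s children shifts by $C_{p,q}\mapsto C_{p,q}+1$ and $C_{q,p}\mapsto C_{q,p}-1$, where $u_p,u_q$ are the children of $v^*$ containing $x,y$. A direct calculation yields
\[
\sum_{a<b} C_{\pi(a),\pi(b)} \;\equiv\; \sum_{a<b} C_{a,b} \;+\; \sum_{p<q} [\pi\text{ inverts }(p,q)]\cdot s_p s_q \pmod 2,
\]
where $s_i=|\leafset(T_{u_i})|$. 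When every product $s_p s_q$ is even---equivalently, at most one child-subtree of $v^*$ has odd size---the local minimum $m=\min_\pi \sum_{a<b} C_{\pi(a),\pi(b)}$ has parity independent of $\pi$, and the swap (which flips $\sum_{a<b} C_{a,b}\bmod 2$) must change $m$'s parity and hence $m$ itself, precluding preservation of $\MInv$.

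Under the theorem's hypotheses, a structural induction on $T$ shows that the parity condition holds at every non-leaf-parent internal node. The only odd-sized subtrees of $T$ lie along the path from the unique node $v^*$ with odd leaf child set (if one exists) up to the root; at each node on this path exactly one child lies on the path (odd-sized) while all its siblings are even-sized, and off the path all subtree sizes are even. In the subclasses featured by the theorem (perfect trees of even degree $k$, and full binary trees with at most one lonely leaf), every internal node is either a leaf parent or has no leaf children, so the parity argument directly yields that preserving adjacent-rank swaps are exactly the within-leaf-child-set swaps: pairs of leaves in the same $\leafchildset(v)$ at adjacent ranks. The connected component of $\ranking$ is therefore its orbit under such swaps, of size $\prod_v\prod_j r_{v,j}!$ where $r_{v,j}$ are the lengths of the maximal consecutive-rank runs in the rank set of $\leafchildset(v)$.

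Finally, the total number of components equals $\sum_\ranking 1/\prod_v\prod_j r_{v,j}!$, which is bounded below by the count of rankings whose every leaf child set avoids consecutive ranks (so all runs have length one). A combinatorial count---using $\binom{n-k_v+1}{k_v}/\binom{n}{k_v}$ for the fraction of non-adjacent $k_v$-subsets of $[n]$ at each $\leafchildset(v)$, combined over all $v$ using $\sum_v k_v=n$ and $k_v\leq k$---lower-bounds this fraction by $2^{-O(k\log k)}$, yielding the claimed bound. The main obstacle is handling trees satisfying the hypotheses but lying outside the featured subclasses, where an internal node may have both leaf children and internal children and hence multiple odd child-subtrees at once; the parity identity fails there, and one must supplement it with a direct analysis of the feedback-arc-set balance condition to show that leaf-to-internal-subtree swaps at such nodes still do not preserve $\MInv$.
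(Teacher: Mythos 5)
Your proposal follows the same high-level route as the paper: invoke the Connectivity Lemma, establish via a parity argument that the only $\MInv$-preserving adjacent-rank swaps are within-leaf-child-set swaps (i.e.\ the partition property), and then lower-bound the number of connected components. The parity identity you write down is correct. However, there are two gaps that prevent this from being a complete proof, and you have correctly flagged the first but not resolved it.

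The first gap is the case of an internal node $v$ with two or more leaf children alongside internal children. There the child-subtree sizes include several $1$'s, so ``at most one odd child-subtree size'' fails and the costs $\sum_{a<b} C_{\pi(a),\pi(b)}$ do not all have the same parity. You observe this but leave it to ``a direct analysis of the feedback-arc-set balance condition.'' The missing argument is short and sharper than that: for any two leaf children $\ell_i,\ell_j$ of $v$ with $\ranking(\ell_i)<\ranking(\ell_j)$, every \emph{optimal} ordering of the children of $v$ places $\ell_i$ before $\ell_j$ (swapping them strictly decreases the cost, since $\XInv_\ranking(\{\ell_i\},L_m)\le\XInv_\ranking(\{\ell_j\},L_m)$ and $\XInv_\ranking(L_m,\{\ell_i\})\ge\XInv_\ranking(L_m,\{\ell_j\})$ for every other child $u_m$, and $\XInv_\ranking(\{\ell_i\},\{\ell_j\})=0<1$). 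Since a non-sibling adjacent-rank transposition does not alter the relative ranks of any two leaf children of $v$, the argmin orderings for $\ranking$ and $\transposition\ranking$ agree on all leaf-child pairs; those pairs never appear in the set of inverted pairs, so the parity identity need only be applied to pairs $(p,q)$ with at least one non-leaf child, for which the product condition (Proposition~\ref{prop:connectivity:equivalence}(a)) gives $s_p s_q$ even. This closes the gap for \emph{all} trees satisfying the hypotheses, not just the two featured subclasses. (The paper's own phrasing---``\eqref{eq:necessary} fails whenever at most one of the leaf sets $L_i$ involved is odd''---also leans implicitly on this restriction to argmins.)

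The second gap is the final count. You lower-bound the number of components by the number of rankings whose connected component is a singleton, i.e.\ rankings in which no two members of any $\leafchildset(v)$ receive adjacent ranks, and propose to lower-bound that fraction by $\prod_v \binom{n-k_v+1}{k_v}/\binom{n}{k_v}$. This product treats the placements of the different leaf child sets as independent, which they are not: the sets partition the rank positions, and the non-adjacency events across different classes in a random permutation are correlated. A correct bound of the form you want exists (e.g.\ via a lopsided Lov\'asz Local Lemma or a careful inclusion-exclusion, and would in fact give the stronger $2^{-O(k)}$), but it is not the two-line calculation you sketch. The paper sidesteps all of this by a simple Markov argument: the expected number of same-class adjacencies is at most $k-1$, so at least half the rankings have at most $2(k-1)$ of them; any such ranking has component size at most $(k!)^2$; hence at least $n!/(2(k!)^2)$ components. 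That gives the $\log_2(n!) - O(k\log k)$ bound directly, with no correlation issues.

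So: same framework, correct parity lemma, but the two steps that make the theorem hold \emph{for all} trees of the stated class and that turn the picture into a quantitative bound are missing.
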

Recall that a tree of degree $k$ is full if every node has degree 0 or $k$. It is perfect if it is full and all leaves have the same depth.

For the Mann--Whitney statistic, Theorem~\ref{thm:main:general} provides an $\Omega(n \log n)$ lower bound for balanced instances, i.e., when $a$ and $b$ are $\Theta(n)$. For unbalanced instances there is a more efficient way to count cross inversions and thus evaluate the statistic: Sort the smaller of the two sides, and then do a binary search for each item of the larger side to find its position within the sorted smaller side so as to determine the number of cross inversions that it contributes. For $a \le b$ the approach makes $b \log_2(a) + O(a \log a)$ comparisons. We establish a lower bound that shows the approach is optimal up to a constant multiplicative factor.
\begin{theorem}[lower bound for counting cross inversions]\label{thm:Mann-Whitney}
Counting cross inversions from a set $A$ of size $a$ to a set $B$ of size $b \ge a$ with respect to a ranking $\ranking$ of $X \doteq A \sqcup B$ requires $\Omega((a+b)\log(a))$ queries in the comparison-query model, as does inversion minimization on the tree of Figure~\ref{fig:Mann-Whitney}.
\end{theorem}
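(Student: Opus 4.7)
The plan is to apply technique~(i) from the paper: the query complexity of any comparison-query problem $\Pi$ is at least $\log_2$ of the number of connected components of the subgraph of the permutahedron obtained by retaining those edges whose two endpoint rankings assign $\Pi$ the same value. I work first with $\Pi(\ranking) = \XInv_\ranking(A, B)$ and then adapt to $\MInv$. An adjacent-rank transposition at ranks $i, i+1$ leaves $\XInv(A, B)$ unchanged if and only if the items at ranks $i$ and $i+1$ both lie in $A$ or both lie in $B$. Consequently each connected component of the equality subgraph is uniquely determined by (i)~a positioning pattern $\pi \in \{A, B\}^n$ giving the side of the item at each rank, together with (ii)~for each maximal monotyped run of $\pi$, the (unordered) subset of items occupying that run, since within a run adjacent transpositions generate the full symmetric group and between runs every swap alters $\XInv$.

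I lower-bound the number of components by restricting to patterns of the form $\pi = A\,B^{k_1}\,A\,B^{k_2}\,\cdots\,A\,B^{k_a}$ with $k_i \ge 1$ and $\sum_{i=1}^{a} k_i = b$. Each such pattern contributes $a! \cdot \binom{b}{k_1, \dots, k_a}$ components, so the total is at least $a!$ times the number of surjections $[b] \twoheadrightarrow [a]$. An elementary lower bound on the number of surjections is $a! \cdot a^{b-a}$, obtained by forcing any $a$ specified elements of $[b]$ to map bijectively onto $[a]$ ($a!$ choices) while letting the remaining $b-a$ elements map arbitrarily into $[a]$ ($a^{b-a}$ choices). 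Combining and taking logarithms yields $\log_2(\#\,\text{components}) \ge 2\log_2(a!) + (b-a)\log_2 a = (a+b)\log_2 a - O(a)$, which is $\Omega((a+b)\log_2 a)$ for $a \ge 2$. This settles the lower bound for counting cross inversions.

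To adapt the argument to inversion minimization on the tree $T$ of Figure~\ref{fig:Mann-Whitney}, observe that $\MInv(T, \ranking) = \min(\XInv(A, B),\, ab - \XInv(A, B))$ is a folding of $\XInv(A, B)$ about $ab/2$. An adjacent-rank transposition preserves $\MInv$ if and only if it preserves $\XInv$, \emph{or} it moves $\XInv$ between $(ab-1)/2$ and $(ab+1)/2$; the latter is possible only when $ab$ is odd. Hence the $\MInv$-equality subgraph coincides with the $\XInv$-equality subgraph when $ab$ is even, and otherwise differs only by additional edges among rankings whose $\XInv$-value lies in $\{(ab-1)/2, (ab+1)/2\}$. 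All components with a fixed composition $(k_1, \dots, k_a)$ share the same $\XInv$-value $\sum_{j=1}^{a-1} k_j\,(a-j)$, so discarding the compositions whose sum hits either midpoint removes at most a $1/\mathrm{poly}(n)$ fraction of the count (the relative weight of any individual target value being $O(1/(a\sqrt{b}))$), preserving the $\Omega((a+b)\log_2 a)$ bound.

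The main obstacle is selecting a pattern family that is simultaneously (a)~combinatorially rich enough to yield $2^{\Omega((a+b)\log a)}$ components across the full range $a \le b$, (b)~easy to enumerate, and (c)~compatible with the $\min$-folding used for the tree version. The single-$A$-block family above meets all three requirements; the key enabling estimate is the Stirling-free surjection bound $a! \cdot a^{b-a}$, which keeps the error term at $O(a)$ uniformly in $b$ rather than yielding a $b$-dependent loss like the balanced-multinomial Stirling estimate would.
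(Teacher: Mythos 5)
Your argument for $\Pi_{\XInv}$ is correct and takes a genuinely different route from the paper's. The paper also invokes the Connectivity Lemma, but bounds the component count indirectly: it analyzes the size of the component containing a random ranking, bounds the run-length thresholds $m^*$, $n^*$ via a stars-and-bars tail estimate, and applies a Markov-type argument so that at least half of the $(a+b)!$ rankings land in small components. You instead enumerate a subfamily of components directly---those whose rank pattern has every $A$-item isolated---and observe that this subfamily alone already has at least $(a!)^2 a^{b-a}$ members via the elementary surjection lower bound. Your route is shorter, avoids the probabilistic threshold analysis, and makes the $O(a)$ additive loss transparent.

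The adaptation to $\Pi_T$ has a gap. You discard compositions $(k_1,\dots,k_a)$ with $f(\mathbf k)\doteq\sum_{i<a}k_i(a-i)\in\{(ab\pm1)/2\}$, asserting parenthetically that ``the relative weight of any individual target value'' is $O(1/(a\sqrt b))$. That is an anti-concentration claim about $f(\mathbf k)$ when $\mathbf k$ is the preimage-size vector of a uniformly random \emph{surjection} $[b]\to[a]$. Without the surjectivity conditioning, $f=\sum_{j\in[b]}(a-g(j))$ is a sum of $b$ i.i.d.\ uniforms on $\{0,\dots,a-1\}$ and the bound is standard; but conditioning on surjectivity is severe when $b$ is close to $a$ (the surjectivity probability is roughly $e^{-a}\sqrt{2\pi a}$ at $b=a$), and the conditional anti-concentration does not follow and is left unproven. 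The paper sidesteps this entirely by working with the fraction of \emph{rankings} at the midpoint, which it controls with Lemma~\ref{thm:gaussian:main}---a result occupying an entire section that your argument does not invoke. Fortunately you do not actually need a distributional estimate: the reversal involution $(k_1,\dots,k_a)\mapsto(k_a,\dots,k_1)$ preserves $\binom{b}{k_1,\dots,k_a}$ and sends $f(\mathbf k)$ to $(a-1)b-f(\mathbf k)$, and a direct check shows that for $b\ge 2$ neither midpoint value $(ab\pm1)/2$ is a fixed point of $f\mapsto(a-1)b-f$ nor is mapped to the other midpoint value. Hence the reversal injects the midpoint compositions, weight-preservingly, into the non-midpoint ones, so at most half the weighted count is discarded. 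Replacing the unproven anti-concentration claim by this symmetry argument repairs the tree half with only a one-bit loss.
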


\paragraph{Techniques.} We obtain our results by developing two new query lower bound techniques for generic problems $\Pi$ in the comparison-query model, and then instantiating them to the problem $\Pi_T$ of inversion minimization on a fixed tree $T$. Although some of our techniques extend to relations, we restrict attention to computational problems $\Pi$ that are functions, just like the problem $\Pi_T$ that we focus on.
\begin{definition}[computational problem and algorithm in the comparison-query model]
\label{def:computational-problem}
A computational problem in the comparison-query model is a total function on the rankings of a set $X$. An algorithm in the comparison-query model can access an input ranking $\ranking: X \to [n]$ using comparison queries: For given $x_1,x_2 \in X$, test if $\ranking(x_1)< \ranking(x_2)$.
\end{definition}

Both of our techniques follow the common pattern of lower bounding the number of distinct execution traces that any algorithm for $\Pi$ needs to have.
\begin{definition}[execution trace, complexity measures $\traces(\cdot)$ and $\DQ(\cdot)$]
\label{def:execution-and-deterministic}
Consider an algorithm $A$ for a problem $\Pi$ in the comparison-query model. An execution trace of $A$ is the sequence of comparisons that $A$ makes on some input $\ranking$, as well as the outcomes of the comparisons. The complexity $\traces(\Pi)$ is the minimum over all possible algorithms for $\Pi$ of the number of distinct traces the algorithm has over the set of all inputs $\ranking$. The complexity $\DQ(\Pi)$ is the minimum, over all possible algorithms for $\Pi$ of the maximum number of comparisons that the algorithm makes over the set of all inputs $\ranking$.
\end{definition}
The complexity measure $\DQ$ is what we refer to as query complexity. Since the maximum number of queries that an algorithm $A$ makes is at least the base-2 logarithm of the number of execution traces, we have that $\DQ(\Pi) \ge \log_2(\traces(\Pi))$. 
Note that, in order to avoid confusion with the tree $T$ specifying an instance of inversion minimization, we refrain from the common terminology of decision trees in the context of the complexity measure $\traces$. In those terms, we lower bound the number of leaves of any decision tree for $\Pi$, and use the fact that the depth of this binary decision tree is at least the base-2 logarithm of the number of leaves.

Both techniques proceed by considering the effect on the output of perturbations to the input ranking $\ranking$ that are hard for queries to observe. More specifically, we consider the following perturbations:
\begin{definition}[adjacent-rank transposition, affected items]
An \emph{adjacent-rank transposition} is a permutation $\transposition$ of $[n]$ of the form $\transposition = (r,r+1)$, where $r \in [n-1]$ and $n$ denotes the number of items. Given $\transposition$ and a ranking $\ranking: X \to [n]$, the \emph{affected items} are the two elements $x \in X$ for which $\transposition(\ranking(x)) \ne \ranking(x)$, i.e., the items with ranks $r$ and $r+1$ under $\ranking$. 
\end{definition}
As with any permutation of the set of ranks, the effect of $\transposition$ on a ranking $\ranking$ is the ranking $\transposition \ranking$. 
Adjacent-rank transpositions are the least noticeable perturbations one can apply to a ranking in the following sense: If two rankings differ by an adjacent-rank transposition, then the only query that distinguishes them is the query that compares the affected items. 

\paragraph{Sensitivity.}
Our first technique turns this observation around to obtain a lower bound on query complexity. We adopt the terminology of sensitivity from Boolean query complexity. 

\begin{definition}[sensitivity, average sensitivity, $s(\cdot)$]\label{def:sensitivity}
Let $\Pi$ be a computational problem in the comparison-query model on a set $X$ of items. For a fixed ranking $\ranking$ and adjacent-rank transposition $\transposition$, we say that $\Pi$ is \emph{sensitive} to $\transposition$ at $\ranking$ if $\Pi(\ranking) \ne \Pi(\transposition \ranking)$. The sensitivity of $\Pi$ at $\ranking$ is the number of adjacent-rank transpositions $\transposition$ such that $\Pi$ is sensitive to $\transposition$ at $\ranking$. The \emph{average sensitivity} of $\Pi$, denoted $s(\Pi)$, is the average sensitivity of $\Pi$ at $\ranking$ when $\ranking$ is drawn uniformly at random from all rankings of $X$.
\end{definition}
On input a ranking $\ranking$, any algorithm for $\Pi$ needs to make a number of queries that is at least the sensitivity of $\Pi$ at $\ranking$. Indeed, consider an adjacent-rank transposition $\transposition$ to which $\Pi$ is sensitive at $\ranking$. If the algorithm does not make the query that compares the affected items, then it must output the same answer on input $\transposition \ranking$ as on input $\ranking$. Since the value of $\Pi$ differs on both inputs, this means the algorithm makes a mistake on at least one of the two. It follows that the average number of queries that any algorithm for $\Pi$ makes is at least the average sensitivity $s(\Pi)$. A fortiori, $\DQ(\Pi) \ge s(\Pi)$. 

As sensitivity cannot exceed $n-1$, the best lower bound on query complexity that we can establish based on the above basic observation alone, is $n-1$. The following improvement yields a the lower bound $\traces(\Pi) \ge n!/n = (n-1)!$, and therefore $\DQ(\Pi) \ge \log_2(n!) - \log_2(n)$ for problems $\Pi$ of maximum average sensitivity $s(\Pi)=n-1$. The argument hinges on an efficient encoding of rankings that share the same execution trace. See Section~\ref{sec:sensitivity} for more details.

\begin{lemma}[Sensitivity Lemma]\label{lemma:sensitivity}
For any problem $\Pi$ in the comparison-query model with $n$ items, $\traces(\Pi) \ge \Gamma(s(\Pi)+2)/n$. 
\end{lemma}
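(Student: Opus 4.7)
First, I would fix an arbitrary algorithm $A$ for $\Pi$ with trace set $T(A)$, and for each trace $t \in T(A)$ let $R_t$ denote the set of rankings on which $A$ produces trace $t$; these sets partition the $n!$ rankings into $|T(A)|$ classes. The queries in $t$ together with their outcomes induce a partial order $P_t$ on $X$ (the transitive closure), and $R_t$ is exactly the set of linear extensions of $P_t$.

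The key observation is that, for every $\ranking \in R_t$ and every adjacent-rank transposition $\transposition$ that is sensitive at $\ranking$, the two affected items must be directly compared in $t$. Otherwise, since no other item can occupy a rank strictly between those of the two affected items, the two would be incomparable in $P_t$, so $\transposition\ranking$ would be a linear extension of $P_t$ producing the same trace $t$ and hence the same output as $\ranking$, contradicting sensitivity. Thus at least $s_\Pi(\ranking)$ of the $n-1$ adjacent-rank pairs of $\ranking$ are comparable in $P_t$.

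Second, I would combine this with an entropy argument. With $\ranking$ uniform, the chain rule yields $\log_2(n!) = H(\ranking) \le H(t(\ranking)) + H(\ranking \mid t(\ranking)) \le \log_2|T(A)| + \frac{1}{n!}\sum_\ranking \log_2|R_{t(\ranking)}|$, so the lemma reduces to upper bounding the expectation $\frac{1}{n!}\sum_\ranking \log_2|R_{t(\ranking)}|$ by $\log_2(n \cdot n!/\Gamma(s(\Pi)+2))$. To get per-$\ranking$ control, note that the comparable adjacent pairs of $\ranking$ partition its ranks into maximal chain blocks of sizes $m_1,\ldots,m_b$ with $b \le n - s_\Pi(\ranking)$, each forming a chain in $P_t$; since any linear extension of $P_t$ must respect these chains, $|R_t| \le \binom{n}{m_1,\ldots,m_b}$, or equivalently $|R_t| \le n!/\ell!$ for the longest chain length $\ell$ of $P_t$.

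The main obstacle is sharpening this per-$\ranking$ bound into an averaged bound matching $\Gamma(s(\Pi)+2)/n$. When the $s_\Pi(\ranking)$ comparable adjacent pairs occur consecutively, one gets a single chain of length $s_\Pi(\ranking)+1$ and the bound $|R_t| \le n!/(s_\Pi(\ranking)+1)!$ is immediate, so that Jensen's inequality applied to the convex function $\log\Gamma$ yields the desired average bound; but when the comparable adjacent pairs are scattered across many short blocks, the naive multinomial estimate becomes too weak. Resolving this will require either exploiting relations in $P_t$ beyond the chains derived from any single $\ranking$, or augmenting the encoding with a distinguished element, which accounts for the factor of $n$ in the denominator, as the extreme case $s(\Pi)=n-1$ shows the bound $\traces(\Pi)\ge (n-1)!$ is tight whenever $\Pi$ is $n$-to-one.
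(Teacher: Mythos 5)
Your key observation in the second paragraph---that a sensitive adjacent-rank transposition forces the two affected items to be comparable in $P_t$---is exactly the insight the paper builds on (cf.\ the discussion of the DAG $C$ and paths $\ranking^{-1}(r-1) \leadsto \ranking^{-1}(r)$). However, the route you then take, attempting to bound $|R_t|$ pointwise for each $\ranking \in R_t$ by a multinomial over the chain blocks of $\ranking$, has a gap that you correctly flag and that is in fact unrepairable in that form. The per-ranking bound $|R_t| \le n \cdot n!/\Gamma(s_\Pi(\ranking)+2)$ is simply false: consider an algorithm that compares $n/2$ disjoint pairs of items. Then $|R_t| = n!/2^{n/2}$ for every trace $t$, yet $R_t$ contains rankings $\ranking$ that have the two items of each compared pair adjacent in rank, so that $s_\Pi(\ranking) \approx n/2$ and $n\cdot n!/\Gamma(n/2+2)$ is super-exponentially smaller than $|R_t|$. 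So no refinement of the chain-block estimate alone will give you the pointwise bound your entropy argument requires.

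The paper resolves this differently. Rather than bounding $|R_t|$ in terms of any single ranking's degree, it bounds, for each $d$, the number of rankings $\ranking \in R_t$ with $\deg_{\overline{H}}(\ranking) = d$ --- that is, it slices each trace class by degree. The counting is done via a topological-sort encoding: running a nondeterministic BuildRanking on the comparison DAG, a ranking of degree $d$ has exactly $d+1$ ``lucky'' items (those that are selected the moment they become available, namely $\ranking^{-1}(1)$ together with every $\ranking^{-1}(r)$ with $\ranking^{-1}(r-1)\leadsto\ranking^{-1}(r)$), and the ranking is recoverable from the \emph{set} of lucky items and the relative order of the remaining $n-d-1$ items. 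This yields at most $\binom{n}{d+1}(n-d-1)! = n!/(d+1)!$ rankings of degree $d$ in $R_t$. Assigning weight $f(d) = \frac{1}{n}\cdot\frac{(d+1)!}{n!}$ to a ranking of degree $d$ then makes the total weight of each $R_t$ at most $1$, and summing over all rankings and applying Jensen to the convex extension of $x\mapsto(x+1)!$ gives $\traces(\Pi) \ge n!\,f(s(\Pi)) = \Gamma(s(\Pi)+2)/n$. Your speculation about a distinguished element is a faint echo of the lucky-item set, but the crucial missing idea is to count within $R_t$ by degree rather than to bound $|R_t|$ itself, which lets rankings of low degree in the same class contribute small weight instead of defeating the bound.

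Your entropy framing is fine as scaffolding (it is equivalent to the Kraft-type inequality), but it does not by itself convert a per-degree count into a per-trace bound; once you have the paper's Claim on the number of rankings per degree, the weighted-sum argument is cleaner and avoids conditional entropies entirely.
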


The lower bound for general trees $T$ in Theorem~\ref{thm:main:general} and the strengthening for binary trees in Theorem~\ref{thm:main:binary} follow from corresponding lower bounds on the average sensitivity $s(\Pi_T)$. Theorem~\ref{thm:main:general} only requires a short analysis to establish the sensitivity lower bound needed for the application of the Sensitivity Lemma; this illustrates the power of the lemma and of the lower bound technique. Theorem~\ref{thm:main:binary} requires a more involved sensitivity analysis, but then yields a very tight lower bound. Owing to the average-case nature of the underlying measure, the technique also exhibits some degree of robustness. For the particular problem of inversion minimization on trees, we show that small changes to the tree $T$ do not affect the average sensitivity $s(\Pi_T)$ by much. See Section~\ref{sec:sensitivity:general} and Section~\ref{sec:sensitivity:binary}.

For sorting, counting inversions, and inversion parity, the average sensitivity reaches its maximum value of $n-1$, and Lemma~\ref{lemma:sensitivity} recovers the standard lower bounds up to a small loss. In contrast, for selection, the average sensitivity equals 1 for ranks 1 and $n$, and 2 for other ranks, so the bound from Lemma~\ref{lemma:sensitivity} is no good. This reflects that, just like in the Boolean setting, (average) sensitivity is sometimes too rough of a measure and not always capable of proving strong lower bounds. Our second technique looks at a more delicate structural aspect, which enables it to sometimes yield stronger lower bounds. 

\paragraph{Permutahedron graph.} Before introducing our second technique, we cast our first technique in graph theoretic terms. In fact, both our techniques can be expressed naturally in subgraphs of the graph with the rankings as vertices and adjacent-rank transpositions as edges. The latter graph can be viewed as the Cayley graph of the symmetric group with adjacent-rank transpositions as the generating set. It is also the edge graph of the permutahedron, the convex polytope spanned by all permutations of the vertex $(1,2,\dots,n)$ in $\RR^n$. The permutahedron resides inside the hyperplane where the sum of the coordinates equals $\binom{n}{2}$, has positive volume inside that hyperplane, and  can thus be represented naturally in dimension $n-1$; see Figure~\ref{fig:permutahedron} for a rendering of the instance with $n=4$ \cite{permutahedron-figure}. 

\begin{wrapfigure}{r}{0.5\linewidth}
\centerline{\includegraphics[scale=.3]{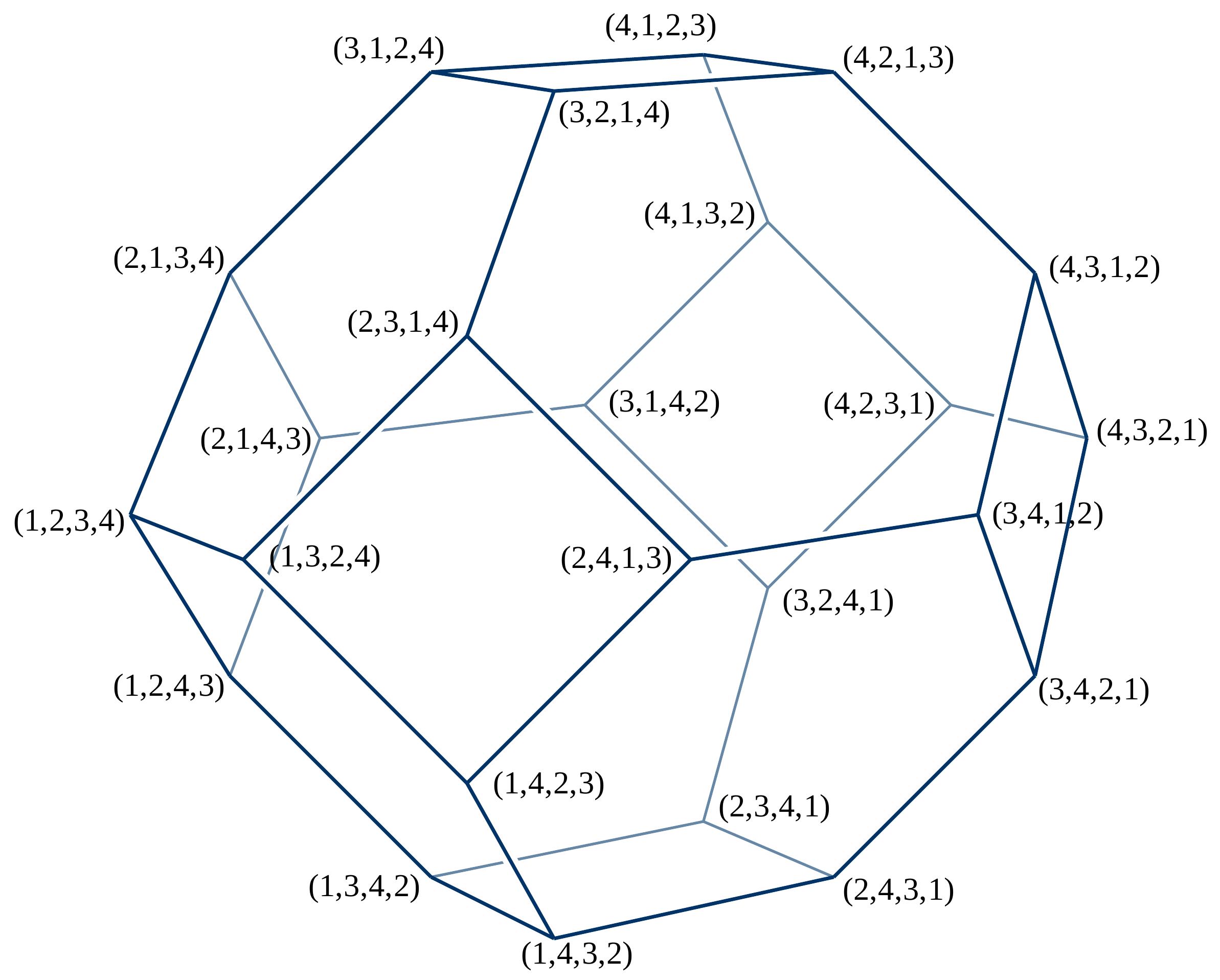}}
\caption{Permutahedron for $n=4$ items}\label{fig:permutahedron}
\end{wrapfigure}

We think of coloring the vertices of the permutahedron with their values under $\Pi$ and make use of the subgraph with the same vertex set but only containing the monochromatic edges, i.e., the edges whose end points have the same value under $\Pi$. We also consider the the complementary subgraph containing all bichromatic edges.
\begin{definition}[permutahedron graph, $G(\cdot)$, $\overline{G}(\cdot)$]
Let $\Pi$ be a computational problem in the comparison-query model on a set $X$ of items. The \emph{permutahedron graph} of $\Pi$, denoted $G(\Pi)$, has the rankings of $X$ as vertices, and an edge between two rankings $\ranking_1$ and $\ranking_2$ if $\Pi(\ranking_1) = \Pi(\ranking_2)$ and there exists an adjacent-rank transposition such that $\ranking_2 = \transposition \ranking_1$. The \emph{complementary permutahedron graph} of $\Pi$, denoted $\overline{G}(\Pi)$, is defined similarly by replacing the condition $\Pi(\ranking_1) = \Pi(\ranking_2)$ by its complement, $\Pi(\ranking_1) \ne \Pi(\ranking_2)$.
\end{definition}
Our first technique looks at degrees in the complementary permutahedron graph $\overline{G}(\Pi)$, and more specifically at the \emph{average degree} $\avgdeg(\overline{G}(\Pi)) \doteq \Expect(\deg_{\overline{G}(\Pi)}(\ranking))$, where the expectation is with respect to a uniform choice of the ranking $\ranking$. Our second technique looks at the connected components of the permutahedron graph $G(\Pi)$. 

\paragraph{Connectivity.} Our second technique is reminiscent of a result in algebraic complexity theory, where the number of execution traces of an algorithm for a problem $\Pi$ in the algebraic comparison-query model is lower bounded in terms of the number of connected components that $\Pi$ induces in its input space $\RR^n$ \cite{Ben-Or1983}. In the comparison-query setting, we obtain the following lower bound.

\begin{lemma}[Connectivity Lemma]\label{lemma:connectivity} For any problem $\Pi$ in the comparison-query model, $\traces(\Pi)$ is at least the number of connected components of $G(\Pi)$.
\end{lemma}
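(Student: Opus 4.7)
The plan is to fix an algorithm $A$ for $\Pi$ with $\traces(A)=\traces(\Pi)$ traces, partition the rankings of $X$ into the \emph{cells} corresponding to the traces of $A$, and then show that each cell is entirely contained in a single connected component of $G(\Pi)$. Since the cells cover all vertices of $G(\Pi)$, this immediately yields $\traces(\Pi) \ge$ (\# components of $G(\Pi)$).

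For a trace $t$ of $A$, let $C_t$ denote the set of rankings $\ranking$ on which $A$ produces trace $t$. Two rankings in the same cell $C_t$ receive the same output from $A$, and correctness of $A$ forces them to have the same $\Pi$-value. Thus edges between rankings in $C_t$ that come from adjacent-rank transpositions are automatically edges of $G(\Pi)$. It therefore suffices to show that any two rankings $\ranking_1,\ranking_2 \in C_t$ can be connected by a sequence of adjacent-rank transpositions that all stay inside $C_t$.

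The key observation is that $C_t$ is exactly the set of \emph{linear extensions} of a partial order on $X$. Indeed, the trace $t$ is a sequence of comparisons together with their outcomes; the outcomes impose inequalities of the form $\ranking(x) < \ranking(y)$, and $C_t$ consists of precisely those rankings that satisfy all of these inequalities (and thus, by transitivity, the partial order $P_t$ they generate). The classical fact I will invoke is that the linear extensions of any finite poset form a connected subgraph of the permutahedron under adjacent-rank transpositions. The standard proof is a short inductive swap argument: given distinct linear extensions $\ranking_1,\ranking_2$ of $P_t$, locate the smallest rank $r$ at which they disagree, let $x$ be the element with $\ranking_2(x)=r$, and perform successive adjacent-rank transpositions in $\ranking_1$ to bubble $x$ down to rank $r$. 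Each swap exchanges $x$ with the element currently immediately below it in $\ranking_1$; that element cannot be $\le x$ in $P_t$, for otherwise $\ranking_2$ (in which $x$ precedes it) would violate $P_t$. Hence each intermediate ranking is again a linear extension of $P_t$, i.e., remains in $C_t$. Iterating on the smallest disagreement rank completes the transformation.

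The only mildly delicate step is the swap argument in the previous paragraph, and it is completely standard. With that in hand, the three steps chain together cleanly: each $C_t$ is connected in $G(\Pi)$, hence lies in a single connected component, and since $\{C_t\}_t$ partitions the vertex set of $G(\Pi)$, the number of traces is at least the number of connected components, giving $\traces(\Pi) \ge$ (\# components of $G(\Pi)$).
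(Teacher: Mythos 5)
Your proposal is correct and takes essentially the same approach as the paper's proof of the Strong Connectivity Lemma (Lemma~\ref{lemma:connectivity:strong}): fix an algorithm, observe that the cells of the trace partition refine the $\Pi$-coloring, and show each cell is connected in the permutahedron by locating the first rank where two rankings disagree and bubbling the offending element into place, checking that each swap preserves membership in the cell. Your framing of the cell as the set of linear extensions of the poset $P_t$ generated by the trace's comparison outcomes is a clean abstraction, but the underlying swap argument and the induction on the first disagreement rank are the same as in the paper.
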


The Connectivity Lemma allows for a simple and unified exposition of many of the known lower bounds. For counting inversions and inversion parity the argument goes as follows. Every adjacent-rank transposition changes the number of inversions by exactly one (up or down), and therefore changes the output of $\Pi$, so all $n!$ vertices in $G(\Pi)$ are isolated. This means that any algorithm for $\Pi$ actually needs to sort and has to make at least $\log_2(n!)$ queries. See Section~\ref{sec:connectivity} for a proof of the Connectivity Lemma and more applications to classical problems, including the $\Omega(n)$ lower bound for median finding.

The Connectivity Lemma also enables us to establish strong lower bounds for inversion minimization on special types of trees $T$, namely those of Theorem~\ref{thm:main:special} and the Mann--Whitney instances in Theorem~\ref{thm:Mann-Whitney}, closely related to counting inversions. Both theorems involve an analysis of the size of the connected component of a random ranking $\ranking$ in $G(\Pi_T)$, and Theorem~\ref{thm:main:special} uses the delicate parity conditions of its statement to keep $G(\Pi_T)$ as sparse as possible. See Section~\ref{sec:connectivity:both} for more details, including a more general property that guarantees the required sparseness (the partition property from Definition~\ref{def:partition-property}) and the resulting lower bound for a generic problem $\Pi$ that satisfies the property (Lemma~\ref{lemma:connectivity:sensitive}). 

The Mann--Whitney setting illustrates well the relative power of our techniques. In the Mann--Whitney instances of inversion minimization, the leaves are naturally split between a subtree containing $a$ of them and a subtree containing $b$ of them. The argument behind Theorem~\ref{thm:main:general} yields a lower bound of $\frac{ab}{a+b}$ on the sensitivity $s(\Pi_T)$. The true sensitivity is just $O(1)$ below the one for counting cross inversions, which is $\frac{2ab}{a+b}$. The resulting lower bounds on the query complexity in case $a \le b$ are $\Theta(a \log a)$, which roughly account for sorting the smaller side but not for the $b \log_2(a)$ comparisons used in the subsequent binary searches for counting cross inversions. Our approach based on the Connectivity Lemma yields a lower bound that includes both terms. On the other hand, it is easier to estimate and obtain the lower bound via the Sensitivity Lemma than to argue the query lower bound via the Connectivity Lemma or from scratch. 

\paragraph{Other modes of computation.}
We stated our lower bounds for the standard, deterministic mode of computation. Both of our techniques provide lower bounds for the number of distinct execution traces that are needed to cover all input rankings, irrespective of whether these execution traces derive from a single algorithm. Such execution traces can be viewed as certificates or witnesses for the value of $\Pi$ on a given input $\ranking$, or as valid execution traces of a {\em nondeterministic} algorithm for $\Pi$. We define the minimum number of traces needed to cover all input rankings for a problem $\Pi$ as the nondeterministic complexity of $\Pi$ and denote it by $\NQ(\Pi)$, along the lines of the Boolean setting \cite{JRSW99}. All of our lower bounds on $\traces(\Pi)$ actually hold for $\NQ(\Pi)$. See Remark~\ref{remark:sensitivity} and Remark~\ref{remark:connectivity} for further discussion.

Since {\em randomized algorithms with zero error} are also nondeterministic algorithms, all of our lower bounds apply verbatim to the former mode of computation, as well. As for {\em randomized algorithms with bounded error}, we argue in Section~\ref{sec:random} that our lower bounds on the query complexity of inversion minimization on trees that follow from the Sensitivity Lemma carry over modulo a small loss in strength. We do so by showing generically that high average sensitivity implies high query complexity against such algorithms. 

The fact that our techniques yield lower bounds on $\NQ(\Pi)$ and not just $\traces(\Pi)$ also explains why our approaches sometimes fail. For example, for the problem $\Pi$ of finding the minimum of $n$ items, a total of $n$ certificates suffice and are needed, namely one for each possible item being the minimum. This means that our techniques cannot give a lower bound on the query complexity of $\Pi$ that is better than $\log_2(n)$. In contrast, as reviewed in Section~\ref{sec:model}, $\traces(\Pi)=2^{n-1}$ and the number of queries needed is $n-1$. 

\paragraph{Cross-inversion distribution.}
As a technical result in the sensitivity analysis for inversion minimization on binary trees (Theorem~\ref{thm:main:binary}), we need a strong upper bound on the probability that the number of cross inversions $\XInv_\ranking(A,B)$ takes on any particular value when the ranking $\ranking$ of the set $X = A \sqcup B$ is chosen uniformly at random. This is the distribution of the Mann--Whitney statistic under the null hypothesis. Mann and Whitney \cite{MannWhitney1947} argued that it converges to a normal distribution with mean $\mu=ab/2$ and variance $\sigma^2=ab(a+b+1)/12$ as $a \doteq |A|$ and $b \doteq |B|$ grow large. Since the normal distribution has a maximum density of $1/(\sqrt{2\pi}\sigma)$, their result suggests that the maximum of the underlying probability distribution is $O(1/\sigma)=O(1/\sqrt{ab(a+b+1)})$. Tak\'acs \cite{Takacs1986} managed to formally establish such a bound for all pairs $(a,b)$ with $|a-b|=O(\sqrt{a+b})$, Stanley and Zanello \cite{SZ15} for all pairs $(a,b)$ with $\min(a,b)$ bounded, and Melczer, Panova, and Pemantle \cite{MPP20} for all pairs $(a,b)$ with $|a-b|\le \alpha\cdot(a+b)$ for some constant $\alpha<1$. However, these results do not cover all regimes and leave open a single bound of the same form that applies to all pairs $(a,b)$, which is what we need for Theorem~\ref{thm:main:binary}. We establish such a bound in Section~\ref{sec:cross-inversions}. The counts of the rankings $\ranking$ with a particular value for $\XInv_\ranking(A,B)$ appear as the coefficients of the \emph{Gaussian polynomials}.
Our bound can be stated equivalently as a bound on those coefficients.

\paragraph{Organization.} 
We have organized the material so as to provide a shortest route to a full proof of Theorem~\ref{thm:main:general}. Here are the sections needed for the different main results:
\begin{itemize}
\item Theorem~\ref{thm:main:general} (lower bound for general trees): \ref{sec:sensitivity}, \ref{sec:sensitivity:general}.
\item Theorem~\ref{thm:main:binary} (lower bound for binary trees):  \ref{sec:sensitivity}, \ref{sec:sensitivity:binary}, \ref{sec:cross-inversions}.
\item Theorem~\ref{thm:main:special} (lower bound for restricted classes): \ref{sec:connectivity}, \ref{sec:connectivity:both} up to \ref{sec:connectivity:general} inclusive.
\item Theorem~\ref{thm:Mann-Whitney} (lower bound for counting cross inversions): \ref{sec:connectivity}, \ref{sec:connectivity:both} but not \ref{sec:connectivity:binary} nor \ref{sec:connectivity:general}.
\end{itemize}
In Section~\ref{sec:model}, we provide some background on known lower bounds in the comparison-query model, several of which are unified by the Sensitivity Lemma and Connectivity Lemma. In Section~\ref{sec:random}, we present our lower bounds against randomized algorithms with bounded error. The tight bound on maximum probability of the cross-inversion distribution is covered in Section~\ref{sec:cross-inversions}. For completeness, we end in Section~\ref{sec:turing} with proofs of the results we stated on the Turing complexity of inversion minimization on trees.


\section{The Comparison-Query Model}
\label{sec:model}

In this section we provide an overview of known results and techniques for lower bounds in the comparison-query model. This section can be skipped without a significant loss in continuity.

Tight bounds have been established for problems like sorting, selection, and heap construction. 
\begin{itemize}
\item We already discussed the central problem of sorting in Section~\ref{sec:overview}.
 \item In \emph{selection} we are told a rank $r$, and must identify the item with rank $r$. The query complexity is known to be $\Theta(n)$ \cite{BFPRT1973,DorZwick1995,DorZwick1996}. There is also \emph{multiple selection}, in which one is given multiple ranks $r_1,\dots,r_k$, and must identify each of the corresponding items. The query complexity of multiple selection is likewise known up to a $\Theta(n)$ gap between the upper and lower bounds \cite{KMMS2005}.
\item In \emph{heap construction} we must arrange the items as nodes in a complete binary tree such that every node has a rank no larger than its children. The query complexity is known to be $\Theta(n)$ \cite{CC92,GM86}. 
\end{itemize}

All the problems above can be cast as instantiations of a general framework known as \emph{partial order production}~\cite{Schoenhage1976}. Here, in addition to query access to the ranking $\ranking$ of the items, we are given $n$ \emph{slots} and regular access to a partial order $\slotorder$ on the slots. The objective is to put each item into a slot, one item per slot, so that whenever two slots, $s_1$ and  $s_2$, are related by $s_1 \slotorder s_2$, we also have $\ranking(s_1) < \ranking(s_2)$. Sorting coincides with the case where $\slotorder$ is a total order. In selection of rank $r$, there is a designated slot $s^*$, and there are exactly $r-1$ slots $s$ with $s \slotorder s^*$ and exactly $n-r$ slots $s$ with $s^* \slotorder s$; there are no other relations in $\slotorder$ (see the example at the end of Section~\ref{sec:connectivity} for more details).  Multiple selection is similar. For heap construction, $\slotorder$ matches the complete binary tree arrangement. 

Partial order production for a given $\slotorder$ naturally decomposes into the same problem for each of the connected components of the undirected graph underlying $\slotorder$. In the case of a single connected component, an elementary adversary argument shows that $\DQ(\Pi) \ge n-1$: Any combination of less than $n-1$ queries to $\ranking$ leaves some pair of slots in $\slotorder$ undetermined with respect to $\ranking$. Another lower bound is the information-theoretic limit. For each way of putting items into slots, the number of input rankings $\ranking$ for which that way is a correct answer is bounded by $e(\slotorder)$, the number of ways to extend $\slotorder$ to a total order. Therefore, there must be at least $n!/e(\slotorder)$ distinct execution traces. Since each execution trace is determined by the outcomes of its queries, and each query has only two outcomes, we conclude that $\lambda(\slotorder) \doteq \log_2(n!/e(\slotorder))$ queries are necessary to solve partial order production. Complementing these lower bounds there exists an upper bound of $(1 + o(1))\cdot \lambda(\slotorder) + c \cdot (n-1)$ queries for some universal constant $c$ \cite{CFJJM2008}. For a generic instance $\Pi$ with partial order $\slotorder$ it follows that $\DQ(\Pi)= \Theta(\lambda(\slotorder) + n - \gamma(\slotorder))$, where $\gamma(\slotorder)$ denotes the number of connected components of the undirected graph underlying $\slotorder$. 

Not every problem of interest in the comparison model is an instance of partial order production. Here are a few examples.
\begin{itemize}
\item In \emph{rank finding} there is a designated item $x^*$, and we have to compute its rank. The rank can be computed by comparing $x^*$ with each of the $n-1$ other items. A similar elementary adversary argument as above shows that the query complexity is at least $n-1$.
\item In \emph{counting inversions} the items are arranged in some known order $\ordering$ and the objective is to count the number of inversions of $\ordering$ with respect to $\ranking$. As we reviewed in Section~\ref{sec:overview}, counting inversions has exactly the same query complexity as sorting.
\item The problem of \emph{inversion parity} is the same as counting inversions except that one need only count the number of inversions modulo 2. This problem, as well as counting inversions modulo $m$ for any integer $m > 1$, also has exactly the same complexity as sorting.
\end{itemize}

For each of the three problems above, information theory does not provide a satisfactory lower bound. For example, in the inversion parity problem there are only two possible outputs, which yields a lower bound of $\log_2(2)=1$. It so happens that for each of the preceding three examples, the query complexity is known quite precisely; however, the known arguments are rather problem-specific.

\emph{Inversion minimization on trees} is another example that does not fit the framework of partial order generation, and for which information theory only yields a weak lower bound: $\log_2 \binom{n}{2} = 2 \log_2(n) - \Theta(1)$. In contrast to the above examples, a strong lower bound does not seem to follow from a simple ad-hoc argument nor from a literal equivalence to sorting. 
\section{Sensitivity Lemma}
\label{sec:sensitivity}

In this section we develop Lemma~\ref{lemma:sensitivity}.
We actually prove a somewhat stronger version.

\begin{lemma}[Strong Sensitivity Lemma]\label{lemma:sensitivity:strong}
Consider an algorithm $A$ in the comparison-based
model with $n$ items, color each vertex of the permutahedron with its execution trace under $A$, and let $\overline{H}$ denote the subgraph with the same vertex set but only containing the bichromatic edges. The number of distinct execution traces of $A$ is at least $g(\avgdeg(\overline{H})+1)/n$, where
$g: [1,\infty) \to \mathbb{R}$ is any convex function with $g(x) = x!$ for $x \in [n]$.
\end{lemma}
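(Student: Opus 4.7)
The plan is to establish a per-trace inequality bounding $|S_E|\cdot g(\bar d_E + 1)$ for each execution class, and then aggregate via Jensen's inequality applied to the convex function $g$. Let $S_E$ denote the set of rankings with execution trace $E$, let $L_E := |S_E|$, and let $T$ be the number of distinct traces, so $\sum_E L_E = n!$. Write $d(\ranking) := \deg_{\overline H}(\ranking)$, $\bar d := \avgdeg(\overline H) = \Expect_\ranking d(\ranking)$, and $\bar d_E := \frac{1}{L_E}\sum_{\ranking \in S_E} d(\ranking)$; note that $\bar d = \sum_E (L_E/n!)\,\bar d_E$ is a convex combination of the $\bar d_E$.

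The first observation is that $d(\ranking)$ equals exactly the number of pairs of items at consecutive ranks in $\ranking$ that $A$ directly compares on input $\ranking$. If $A$ never directly compares two items $x,y$ at consecutive ranks $r,r+1$, then swapping them leaves every other comparison's outcome unchanged—every other item has rank $\le r-1$ or $\ge r+2$, so its relative order to $x$ and to $y$ is preserved—and the deterministic algorithm therefore follows the same sequence of queries and produces the same trace. Conversely, a direct comparison of $x$ and $y$ gets flipped by the swap, forcing the trace to differ. As a corollary, for each $\ranking \in S_E$ the directly compared consecutive-rank pairs partition $[n]$ into $n - d(\ranking)$ maximal intervals, and the items within each interval form a chain in the partial order $\prec_E$ induced by the trace.

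The main technical step is the per-trace inequality
\[
L_E \cdot g(\bar d_E + 1) \;\le\; n \cdot n!.
\]
I would prove it by an efficient encoding of the rankings in $S_E$ that exploits the chain/interval structure. In the cleanest case, where $\prec_E$ is a disjoint union of chains of sizes $b_1,\ldots,b_p$, one has $L_E = n!/\prod_i b_i!$ and, by symmetry over random linear extensions, $\bar d_E = \sum_i b_i(b_i-1)/n$; the elementary estimate $\sum_i b_i^2 \le (\max_i b_i)\, n$ (which uses $\sum_i b_i = n$) then gives $\bar d_E + 1 \le \max_i b_i$, so $(\bar d_E + 1)! \le (\max_i b_i)! \le \prod_i b_i!$, yielding even the stronger $L_E \cdot g(\bar d_E + 1) \le n!$. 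For an arbitrary $\prec_E$, the comparability graph may have connected components that are not themselves chains, and the argument extends by tracking each component and augmenting the encoding with an index from $[n]$, which accounts for the extra factor of $n$ on the right.

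Summing the per-trace bound over the $T$ traces and dividing by $n!$, Jensen's inequality applied to the convex function $g$ yields
\[
g(\bar d + 1) \;=\; g\!\left(\sum_E \tfrac{L_E}{n!}(\bar d_E + 1)\right) \;\le\; \sum_E \tfrac{L_E}{n!}\, g(\bar d_E + 1) \;\le\; nT,
\]
which rearranges to $T \ge g(\bar d + 1)/n$, as desired. The hard part will be the per-trace inequality in full generality: the disjoint-chain case exhibits the bound via an elementary estimate on $\bar d_E$, but arbitrary $\prec_E$ requires a more careful combinatorial encoding that controls the contribution of non-chain comparability components of $\prec_E$.
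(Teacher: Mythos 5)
Your architecture --- establish a per-trace inequality $L_E\, g(\bar d_E+1)\le n\cdot n!$ and aggregate by Jensen on the convex $g$ --- is sound and, once the paper's weight argument is unfolded, matches the structure of its proof. Your simplification of $\deg_{\overline H}(\ranking)$ to the count of directly compared consecutive-rank pairs is also correct; it is equivalent to the paper's path formulation because any path in the comparison DAG between items of ranks $r$ and $r+1$ must be a single edge (intermediate vertices would need ranks strictly between $r$ and $r+1$). The outer Jensen step is right as written.

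The gap is the per-trace inequality itself. You establish it only when the poset $\prec_E$ is a disjoint union of chains, where $L_E$ and $\bar d_E$ have closed forms and an elementary estimate closes the bound. That case is far from generic: an adaptive comparison algorithm produces a comparison DAG with branching --- even one round of a tournament gives a forest, not a chain union --- and then neither $L_E=n!/\prod_i b_i!$ nor $\bar d_E=\sum_i b_i(b_i-1)/n$ applies. The sentence about ``tracking each component and augmenting the encoding with an index from $[n]$'' is exactly where the proof must do its work, and as stated it is not yet an argument: it neither specifies the encoding nor explains why an index from $[n]$ would suffice; in the paper the factor of $n$ does not come from components at all, but from summing a per-degree bound over the $n$ possible degree values. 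The paper closes this gap with Claim~\ref{claim:encoding}: within a fixed trace $R$, the number of rankings of degree exactly $d$ is at most $n!/(d+1)!$, proven by a compressed encoding of a nondeterministic run of BuildRanking (the set of $d+1$ ``lucky'' items plus the relative order of the non-lucky ones). That per-degree bound is precisely what your proposal still needs --- it yields your per-trace inequality via a per-trace Jensen step, $L_E\,g(\bar d_E+1)\le\sum_d c_d\,(d+1)!\le\sum_{d=0}^{n-1} n! = n\cdot n!$, where $c_d$ is the number of rankings of degree $d$ in $R_E$ --- and, crucially, it holds for arbitrary comparison DAGs, which is the case your proposal leaves open.
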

The Sensitivity Lemma follows from Lemma~\ref{lemma:sensitivity:strong} because the coloring with execution traces of an algorithm $A$ for $\Pi$ is a refinement of the coloring with $\Pi$, so every edge of the permutahedron that is bichromatic under $\Pi$ is also bichromatic under $A$, and
\[
s(\Pi) \doteq \Expect(\deg_{\overline{G}(\Pi)}(\ranking)) \le \Expect(\deg_{\overline{H}}(\ranking)) \doteq \avgdeg(\overline{H}). 
\]
Provided $g$ is nondecreasing, it follows that $\traces(\Pi) \ge g(\avgdeg(\overline{H})+1)/n \ge g(s(\Pi)+1)/n$.

In the Sensitivity Lemma we set $g(x) = \Gamma(x+1)$. An optimal (but less elegant) choice for $g$ is the piece-wise linear function that interpolates the prescribed values at the integral points in $[n]$, namely
\[    g(x) \doteq
      (x - \floor{x})\cdot(\ceil{x}!)
      +
      (1 - (x - \floor{x}))\cdot(\floor{x}!).
\]      

For the proof of Lemma~\ref{lemma:sensitivity:strong} we take intuition from a similar result in the Boolean setting \cite[Exercise~8.43]{ODonnell2014}, where the hypercube plays the role of the permutahedron in our setting.
\begin{fact}\label{fact:Boolean} 
Let $A$ be a query algorithm on binary strings of length $n$. Color each vertex of the $n$-dimensional hypercube by its execution trace under $A$, and let $\overline{H}$ denote the subgraph with the same vertex set but only containing the bichromatic edges. Then the number of distinct execution traces is at least $2^{\avgdeg(H)}$.
\end{fact}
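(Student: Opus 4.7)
The plan is to reproduce the classical entropy argument from Boolean query complexity, phrased in the notation of the excerpt. The crux is an observation that, for every input $x \in \{0,1\}^n$, the degree of $x$ in $\overline{H}$ equals exactly the number of queries $A$ makes on input $x$, i.e., the length $|T(x)|$ of its execution trace $T(x)$. Once that identification is in hand, a one-line entropy computation finishes the job. I read the statement's $\avgdeg(H)$ as $\avgdeg(\overline{H})$, since only $\overline{H}$ has been defined and this matches the standard Boolean ``sensitivity implies query complexity'' bound.

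The first step is to show, for any coordinate $i$ and input $x$, that $A$ queries bit $i$ on $x$ if and only if $T(x)\ne T(x\oplus e_i)$. If $A$ never queries bit $i$ on $x$, then by induction on the computation $A$ makes the same queries and observes the same answers on $x$ and $x\oplus e_i$, so the traces coincide. Conversely, if $A$ does query bit $i$ on $x$, then on $x\oplus e_i$ the first such query returns the flipped answer, forcing the two traces to diverge at that point. Consequently $\deg_{\overline{H}}(x)=|T(x)|$, and averaging over a uniformly random $X\in\{0,1\}^n$ yields $\avgdeg(\overline{H})=\Expect[\,|T(X)|\,]$.

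The second step uses that any trace $\tau$ of length $k$ is produced by exactly $2^{n-k}$ inputs, namely those whose coordinates at the $k$ queried positions match the answers recorded in $\tau$; the remaining $n-k$ bits are free. Hence under the uniform distribution $\Pr[T(X)=\tau] = 2^{-|\tau|}$, equivalently $|\tau| = -\log_2 \Pr[T(X)=\tau]$. Taking expectations,
\[
\Expect[\,|T(X)|\,] \;=\; \sum_\tau \Pr[T(X)=\tau]\cdot\bigl(-\log_2\Pr[T(X)=\tau]\bigr) \;=\; H(T(X)).
\]
Since $T(X)$ is supported on the set of $L$ distinct traces, the standard entropy bound gives $H(T(X))\le \log_2 L$. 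Chaining the equalities from the two steps with this inequality,
\[
\avgdeg(\overline{H}) \;=\; \Expect[\,|T(X)|\,] \;=\; H(T(X)) \;\le\; \log_2 L,
\]
which on exponentiation is the claimed bound $L \ge 2^{\avgdeg(\overline{H})}$.

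There is no real obstacle in this argument; the only point that needs care is the equivalence in the first step, which is the Boolean analog of the observation, central to the Sensitivity Lemma, that an algorithm cannot distinguish two neighbors in the ambient graph without querying the affected coordinate. In particular, the exact count $2^{n-k}$ in the second step relies on $A$ being deterministic, so that the queried positions are determined by the trace, and the entropy bound $H(T(X))\le\log_2 L$ is saturated precisely when all $L$ traces have the same length $\log_2 L$, mirroring how Kraft's inequality saturates for complete binary decision trees.
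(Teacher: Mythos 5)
Your proof is correct, and your reading of the statement's $\avgdeg(H)$ as $\avgdeg(\overline{H})$ is the intended one. It rests on the same two combinatorial facts the paper uses: that $\deg_{\overline{H}}(x)$ equals the number of coordinates $A$ queries on $x$, and that a trace of length $k$ is followed by exactly the $2^{n-k}$ inputs of a subcube. Where you diverge is in how the counting is finished. The paper assigns each input $x$ the weight $f(\deg_{\overline{H}}(x))$ with $f(\ell)=2^{\ell-n}$, notes that each trace class carries total weight at most $1$, and then applies Jensen's inequality to the convex function $f$ to get $L \ge 2^n f(\avgdeg(\overline{H})) = 2^{\avgdeg(\overline{H})}$; you instead observe that $\Pr[T(X)=\tau]=2^{-|\tau|}$, identify $\avgdeg(\overline{H})=\Expect[\,|T(X)|\,]$ with the Shannon entropy $H(T(X))$, and invoke $H(T(X))\le\log_2 L$. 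The two are morally dual (the entropy bound is itself a Jensen argument), and in the Boolean setting yours is arguably cleaner because the trace probabilities are exactly dyadic. The payoff of the paper's weight-function formulation is that it is the one that survives the passage to the permutahedron: there the degree is no longer constant on a trace class and one only has the upper bound of Claim~\ref{claim:encoding} on the number of rankings of each degree per trace, so the exact identity $\Pr[\tau]=2^{-|\tau|}$ breaks down, while the ``total weight per trace at most $1$'' plus convexity argument (with $g$ interpolating factorials) still goes through; an entropy-flavored variant does reappear in the paper, but only later and with extra care (Kraft's inequality and log-convexity in Proposition~\ref{claim:sensitivity:average}).
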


One way to argue Fact~\ref{fact:Boolean} is to think of assigning a weight $w(x)$ to each $x \in \{0,1\}^n$ so as to maximize the total weight on all inputs, subject to the constraint that the total weight on each individual execution trace is at most 1. Then the number of distinct execution traces must be at least the sum of all the weights. If the weight only depends on the degree, i.e., if we can write $w(x) = f(\deg_{\overline{H}}(x))$ for some function $f: [0,\infty) \to \mathbb{R}$, then we can lower bound the number $k$ of distinct execution traces as follows:
\begin{equation}\label{eq:Boolean}
k \ge \sum_x w(x) = \sum_x f(\deg_{\overline{H}}(x)) \ge 2^n \cdot f(\Expect(\deg(x))) = 2^n \cdot f(\avgdeg(\overline{H})),
\end{equation}
where the last inequality holds provided $f$ is convex. 

In the Boolean setting, the set $R$ of inputs $x \in \{0,1\}^n$ with a particular execution trace forms a subcube of dimension $n-\ell$, where $\ell$ denotes the length of the execution trace, i.e., the number of queries. Each $x \in R$ has degree $\ell$ in $H$; this is because a change in a single queried position results in a different execution trace, and a change in an unqueried position does not. Therefore, a natural choice for the weight of $x \in R$ is $w(x) = f(\ell)$ where $f(x) = 1/2^{n-\ell}$. It satisfies the constraint that the total weight on $R$ is (at most) one, and $f$ is convex. We conclude by \eqref{eq:Boolean} that the number of distinct execution traces is at least $2^n \cdot f(\avgdeg(\overline{H})) = 2^{\avgdeg(\overline{H})}$, as desired.

\begin{proof}[Proof of Lemma~\ref{lemma:sensitivity:strong}]
Let $k$ denote the number of distinct execution traces of $A$, and let $R_1, \dots, R_k$ denote the corresponding sets of rankings. 
Following a similar strategy, we want to find a convex function $f: [0,\infty)\to \RR$ such that the weight function $w(\ranking) = f(\deg_{\overline{H}}(\ranking))$ does not assign weight more than 1 to any one of the sets $R_i$. The following claim, to be proven later, is the crux of this.

\begin{claim}\label{claim:encoding}
Let $R$ denote the set of all rankings $\ranking$ that follow a particular execution trace on $A$, and let $d \in \{0,\dots,n-1\}$. The number of rankings $\ranking \in R$ with $\deg_{\overline{H}}(\ranking) = d$ is at most $\frac{n!}{(d+1)!}$.
\end{claim}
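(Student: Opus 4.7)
I plan to prove the claim by exhibiting an injection of $\{\ranking \in R : \deg_{\overline{H}}(\ranking) = d\}$ into a target set of cardinality $n!/(d+1)!$, for example the set of $(n-d-1)$-permutations of $X$ (sequences of $n-d-1$ distinct items).

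The key structural observation is that the $d$ adjacent-rank pairs of $\ranking$ queried by the trace---the \emph{known pairs} of $\ranking$---partition the $n$ positions of $\ranking$ into $n-d$ maximal \emph{runs} of consecutive positions. Inside a run of size $k$, the items form a chain in the partial order $\slotorder_T$ induced by the trace, and their relative order within the run is fixed by the recorded comparison outcomes. Hence, once the set of items occupying a run is identified, the arrangement within that run is automatic, which is where the ``compression'' over an arbitrary permutation of the $n$ items comes from.

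Using this, I encode $\ranking$ by recording, in order, the items at a carefully chosen set of $n-d-1$ positions; the items at the remaining $d+1$ positions are recovered as the complement set and then placed according to the order forced by $\slotorder_T$, the run structure, and the degree constraint $d(\ranking)=d$. The main obstacle is selecting the omitted positions so that the encoding is genuinely injective. A naive scheme such as always omitting positions $2,\ldots,d+2$ works in many small cases, but it can fail when the $d+1$ omitted items span several runs whose cross-run order is not pinned down by $\slotorder_T$ alone---here two distinct rankings can collide to the same encoded sequence. To overcome this, I would let the omitted positions depend on $\ranking$'s own run structure, singling out a canonical set of $d+1$ items whose relative order is fully determined by combining $\slotorder_T$, the locations of the known pairs, and a fixed tie-breaking total order on $X$. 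A case analysis organized by how the $d$ known pairs distribute relative to the omitted block then verifies that no two distinct rankings in $\{\ranking \in R : d(\ranking)=d\}$ produce the same encoding, yielding the bound $|\{\ranking \in R : d(\ranking)=d\}| \le n!/(d+1)!$.
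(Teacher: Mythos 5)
Your structural observations are sound: for a ranking $\ranking$ consistent with the trace, the degree $\deg_{\overline{H}}(\ranking)$ does coincide with the number of adjacent-rank pairs that are directly compared (any longer path in the comparison DAG between adjacent-rank items is impossible), these $d$ "known pairs" split the ranks into $n-d$ maximal runs, and the order inside a run is forced. You also land on the right target set, sequences of $n-d-1$ distinct items, which matches the paper. But the proposal has a genuine gap precisely at the step you flag as "the main obstacle," and the fix you sketch does not close it.

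The difficulty is that the $d+1$ omitted positions \emph{and} the relative order of the $d+1$ omitted items both depend on $\ranking$'s run structure, which is exactly what a decoder does not know. Your proposed remedy — single out a canonical set of $d+1$ items whose relative order is "fully determined by combining $\slotorder_T$, the locations of the known pairs, and a fixed tie-breaking total order" — is circular, because "the locations of the known pairs" is information about $\ranking$ that the decoder is trying to recover, and the partial order $\slotorder_T$ alone does not determine the interleaving of omitted items from different runs with the encoded items. A case analysis on "how the $d$ known pairs distribute relative to the omitted block" is not described concretely enough to tell whether it terminates in a well-defined decoding. The paper avoids this entirely by never trying to pre-determine the omitted items' order. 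It defines the omitted ("lucky") items as the $d+1$ items whose immediate $\ranking$-predecessor has a path to them (plus the rank-1 item), encodes only the set of lucky items together with the $\ranking$-order of the rest, and then reconstructs $\ranking$ by running a nondeterministic topological sort of the comparison DAG (Algorithm~\ref{alg:linear-order}): at each step, if a lucky item is currently available in the candidate set $S$, it is picked (there is at most one such item), and otherwise the next encoded item is picked. The crucial lemma is that a lucky item enters $S$ exactly at its intended iteration, so this greedy rule is forced. That dynamic, sequential reconstruction is the idea missing from your proposal; without it, or an equivalent injectivity argument, the bound is not established.
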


Based on Claim~\ref{claim:encoding}, a natural choice for $f$ is any convex function that satisfies $f(x) = \frac{1}{n} \frac{(x+1)!}{n!}$ for $x \in \{0,\dots,n-1\}$. The factor of $\frac{1}{n}$ comes from the fact that there are $n$ terms to sum together after the weights have been normalized. For every $i \in [k]$ we then have
\[ \sum_{\ranking \in R_i} w(\ranking) = \sum_{d=0}^{n-1} \left| \{ \ranking \in R_i \, : \, \deg_{\overline{H}}(\ranking) = d \} \right| \cdot f(d) \le \sum_{d=0}^{n-1} \frac{1}{n} \frac{(d+1)!}{n!} \cdot \frac{n!}{(d+1)!} = \sum_{d=0}^{n-1} \frac{1}{n} = 1. \]
Similar to \eqref{eq:Boolean} we conclude
\begin{equation}\label{eq:lb:k}
k \ge \sum_{i=1}^k \sum_{\ranking \in R_i} w(\ranking) 
= \sum_\ranking w(\ranking)
= \sum_\ranking f(\deg_{\overline{H}}(\ranking)) \ge n! \cdot f(\Expect(\deg_{\overline{H}}(\ranking))) = n! \cdot f(\avgdeg(\overline{H})). 
\end{equation}
Setting $f(x) = \frac{1}{n} \frac{g(x+1)}{n!}$ turns the requirements for $f$ into those for $g$ in the statement of the lemma, and yields that $k \ge n! \cdot f(\avgdeg(\overline{H})) = g(\avgdeg(\overline{H})+1)/n$. 
\end{proof}

We now turn to proving Claim~\ref{claim:encoding}. The comparisons and outcomes that constitute a particular execution trace of $A$ can be thought of as directed edges between the items in $X$. We refer to the resulting digraph on the vertex set $X$ as the comparison graph $C$. Since the outcomes of the comparisons are consistent with some underlying ranking, the digraph $C$ is acyclic. The rankings in $R$ are in one-to-one and onto correspondence with the linear orderings of the DAG $C$. For a given ranking $\ranking \in R$, the degree $\deg_{\overline{H}}(\ranking)$ equals the number of $r \in \{2,\dots,n\}$ such that swapping ranks $r-1$ and $r$ in $\ranking$ results in a ranking $\ranking' = \transposition \ranking$ that is not in $R$, where $\transposition$ denotes the adjacent-rank transposition $(r-1,r)$. The ranking $\ranking'$ not being in $R$ means that it is inconsistent with the combined comparisons and outcomes of the underlying execution trace, which happens exactly when there is a path in $C$ from the item $\ranking^{-1}(r-1)$ of rank $r-1$ in $\ranking$ to the item $\ranking^{-1}(r)$ with rank $r$ in $\ranking$. Thus, the degree $\deg_{\overline{H}}(\ranking)$ equals the number of $r \in \{2,\dots,n\}$ such that there is a path from $\ranking^{-1}(r-1)$ to $\ranking^{-1}(r)$ in $C$. See Figure~\ref{fig:encoding} for an illustration, where a squiggly edge $u \leadsto v$ denotes that there exists a path from $u$ to $v$ in $C$. We only draw squiggly edges from one position to the next, so $\deg_{\overline{H}}(\ranking)$ equals the number of squiggly edges in Figure~\ref{fig:encoding}. 

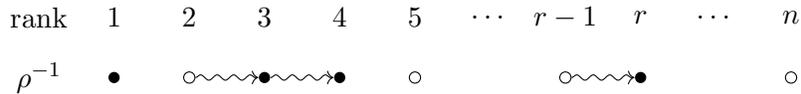
\begin{figure}[h]
    \centering
    \begin{tikzpicture}
        \node at (0,0)[]{rank};
        \node at (1,0)[]{$1$};
        \node at (2,0)[]{$2$};
        \node at (3,0)[]{$3$};
        \node at (4,0)[]{$4$};
        \node at (5,0)[]{$5$};
        \node at (6,0)[]{$\dots$};
        \node at (7,0)[]{$r-1$};
        \node at (8,0)[]{$r$};
        \node at (9,0)[]{$\dots$};
        \node at (10,0)[]{$n$};
        
        \node at (0,-0.8)[]{$\ranking^{-1}$};
        \node (1) at (1,-0.8)[circle, fill, inner sep=1.5pt]{};
        \node (2) at (2,-0.8)[circle, draw, inner sep=1.5pt]{};
        \node (3) at (3,-0.8)[circle, fill, inner sep=1.5pt]{};
        \node (4) at (4,-0.8)[circle, fill, inner sep=1.5pt]{};
        \node (5) at (5,-0.8)[circle, draw, inner sep=1.5pt]{};
        \node (r) at (7,-0.8)[circle, draw, inner sep=1.5pt]{};
        \node (r1) at (8,-0.8)[circle, fill, inner sep=1.5pt]{};
        \node (n) at (10,-0.8)[circle, draw, inner sep=1.5pt]{};
        
        \draw[->, decorate, decoration=snake, segment amplitude=1pt, segment length=2.2mm] (2)--(3);
        \draw[->, decorate, decoration=snake, segment amplitude=1pt, segment length=2.2mm] (3)--(4);
        \draw[->, decorate, decoration=snake, segment amplitude=1pt, segment length=2.2mm] (r)--(r1);

    \end{tikzpicture}
\caption{Ranking encoding}\label{fig:encoding}
\end{figure}

Our strategy is to give a \emph{compressed encoding} of the rankings in $R$ such that there is more compression as the number of squiggly edges increases. Our encoding is based on the well-known algorithm to compute a linear order of a DAG. Algorithm~\ref{alg:linear-order} provides pseudocode for
the algorithm, which we refer to as BuildRanking.

\begin{algorithm}[ht]
  \caption{BuildRanking$(C)$}\label{alg:linear-order}
  \begin{algorithmic}[1]
    \Require DAG $C$ on vertex set $X$
    \Ensure ranking of $X$ that is a linear order of $C$
    \State $T \gets \varnothing$
    \For{\(r=1\) to \(n\)}
      \State \label{alg:linear-order:step}%
      $x \gets$ arbitrary element of $S \doteq  \{ v \in X \;\mid\; \text{there is no } u \in X \setminus T \text{ with } u \leadsto v \text{ in } G \}$
      \State $\ranking^{-1}(r) \gets x$ 
      \State $T \gets T \cup \{x\}$
    \EndFor
  \end{algorithmic}
\end{algorithm}

In our formulation, BuildRanking is nondeterministic: There is a choice to make in step~\ref{alg:linear-order:step} for each $r=1,\dots,n$. The possible executions of BuildRanking are in one-to-one and onto correspondence with the linear orders of $C$, and thus with the rankings in $R$. 

Our encoding is a compressed description of how to make the decisions in BuildRanking such that the output is $\ranking$. Note that if $\ranking^{-1}(r-1) \leadsto \ranking^{-1}(r)$, then the item $x$ with rank $r$ cannot enter the set $S$ before iteration $r$. This is because before $\ranking^{-1}(r-1)$ is removed from $T$ at the end of iteration $r-1$, the edge $\ranking^{-1}(r-1) \leadsto \ranking^{-1}(r)$ prevents $x$ from being in $S$. Thus, whenever $\ranking^{-1}(r-1) \leadsto \ranking^{-1}(r)$, the item $x=\ranking^{-1}(r)$ is \emph{lucky} in the sense that it gets picked in step~\ref{alg:linear-order:step} as soon as it enters the set $S$. In fact, the lucky items with respect to a ranking $\ranking \in R$ are exactly those for which $\ranking^{-1}(r-1) \leadsto \ranking^{-1}(r)$ for some $r \in \{2,\dots,n\}$, as well as the item $\ranking^{-1}(1)$ with rank 1. In Figure~\ref{fig:encoding} the lucky items are marked black. Their number equals $\deg_{\overline{H}}(\ranking)+1$. 

In order to generate a ranking $\ranking$ using BuildRanking, it suffices to know:
\begin{itemize}
\item[(a)] the lucky items (as a set, not their relative ordering), and 
\item[(b)] the ordering of the non-lucky items (given which items they are).
\end{itemize}
This information suffices to make the correct choices in step~\ref{alg:linear-order:step} of Algorithm~\ref{alg:linear-order}:
\begin{itemize}
\item 
If the set $S$ contains a lucky item, there will be a unique lucky item in  $S$; pick it as the element $x$.
\item 
Otherwise, pick for $x$ the first item in the ordering of the non-lucky items that is not yet in $T$. Such an element will exist, and all the items that come after it in the ordering are not yet in $T$ either.
\end{itemize}
If $\ranking$ has degree $d = \deg_{\overline{H}}(\ranking)$, then there are $d+1$ lucky items, so there are at most $\binom{n}{d+1}$ choices for (a), and at most $(n-d-1)!$ choices for (b), resulting in a total of at most $\binom{n}{d+1} \cdot (n-d-1)! = \frac{n!}{(d+1)!}$ choices. This proves Claim~\ref{claim:encoding}.

\begin{remark}\label{remark:sensitivity}
Suppose we allow an algorithm $A$ to have multiple valid execution traces on a given input $\ranking$, and let $R_i$ denote the set of rankings on which the $i$-th execution trace is valid. The proof of Claim~\ref{claim:encoding} carries over as it considers individually sets $R_i$, and only depends on the DAG that the comparisons in $R_i$ induce. The rest of the proof of Lemma~\ref{lemma:sensitivity:strong} carries through modulo the first equality in \eqref{eq:lb:k}, which no longer holds as the sets $R_i$ may overlap. However, the equality can be replaced by the inequality $\ge$, which does hold and is sufficient for the argument. This means that we can replace $\traces(\Pi)$ in the statement of the Sensitivity Lemma by its nondeterministic variant $\NQ(\Pi)$.
\end{remark}

\section{Sensitivity Approach for General Trees}
\label{sec:sensitivity:general}

In this section we analyze the average sensitivity of the problem $\Pi_T$ of inversion minimization on a tree $T$ with a general shape. In Section~\ref{sec:sensitivity:subtree} we show that the existence of a subtree containing a fair fraction of the leaves implies high sensitivity. The lower bound on query complexity for $\Pi_T$ in Theorem~\ref{thm:main:general} then follows from the Sensitivity Lemma. In Section~\ref{sec:lipschitz} we prove that the average sensitivity measure is Lipschitz continuous. For the analysis, we make use of the decomposition of the objective of inversion minimization on trees mentioned earlier. We describe the decomposition in more detail in Section~\ref{sec:decomposition}; it will be helpful in later parts of this paper, as well. 

\subsection{Subtree-induced sensitivity}
\label{sec:sensitivity:subtree}

\begin{wrapfigure}{r}{0.4\linewidth}
    \centering
    \begin{tikzpicture}
        \node (top) at (0,0)[circle, draw, inner sep=2pt]{};
        \node (v) at (-0.5,-1.5)[circle, draw, inner sep=2pt, label={right:$v$}]{};
        \draw[decorate, decoration=snake, segment amplitude=1pt, segment length=2mm] (top)--(v);
        \draw[thin, gray] (-2.1,-3.1) -- (top) -- (2.1,-3.1);
        \draw[thin, gray] (-1.5,-3) -- (v) -- (0.5,-3);
        \draw (-2.1,-3.1)--(2.1,-3.1);
        \draw (-1.5,-3)--(0.5,-3);
        \node at (-0.5,-3.1)[label={$\leafset(T_v)$}]{};
        \node at (0,-3)[label={below:$\leafset(T)$}]{};
    \end{tikzpicture}
    \caption{Subtree rooted at $v$}
    \label{fig:k-ary:subtree}
\end{wrapfigure}
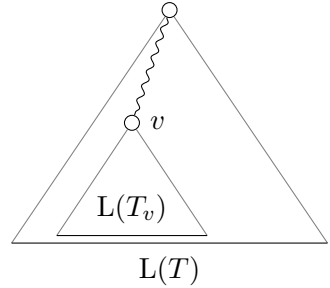

We first introduce a sensitivity bound for inversion minimization based on the size of a subtree. 

\begin{lemma}[subtree-induced sensitivity]
    \label{lemma:k-ary:subtree-sensitivity}
    Consider a tree $T$ with $n \doteq \lvert\leafset(T)\rvert$ leaves, and some node $v$ in $T$ with $\ell \doteq \lvert\leafset(T_v)\rvert$ leaves. We have
    \[s(\Pi_T) \ge \frac{\ell(n-\ell)}{n}-1. \]
\end{lemma}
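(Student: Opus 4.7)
The plan is to lower bound the average sensitivity of $\Pi_T$ by focusing on adjacent-rank transpositions whose affected items straddle the bipartition $X = A\sqcup B$, where $A \doteq \leafset(T_v)$ has size $\ell$ and $B \doteq \leafset(T)\setminus A$ has size $n-\ell$. Using the decomposition $\MInv(T,\ranking) = \sum_{w} f_w(\ranking)$ from Section~\ref{sec:decomposition}, where $w$ ranges over internal nodes and $f_w(\ranking)$ is the minimum over orderings of $w$'s children of the induced sum of pairwise cross inversions between child-leaf-sets, I will first establish two structural facts. An adjacent-rank transposition $\transposition = (r,r+1)$ affecting items $y_r \doteq \ranking^{-1}(r)$ and $y_{r+1} \doteq \ranking^{-1}(r+1)$ alters only the summand $f_u$ at $u \doteq \mathrm{LCA}(y_r, y_{r+1})$---because $\XInv_\ranking(S_1,S_2)$ between disjoint leaf sets is invariant under the swap unless one affected item sits on each side. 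Moreover $|f_u(\transposition\ranking) - f_u(\ranking)| \le 1$, since only the single pairwise term $\XInv(L_a, L_b)$ between the two children $a,b$ of $u$ containing $y_r, y_{r+1}$ shifts by $\pm 1$.

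Next, I would specialize to \emph{cross adjacencies}, those with one of $y_r, y_{r+1}$ in $A$ and the other in $B$. For these, $u$ is necessarily a proper ancestor of $v$; let $a$ be the child of $u$ on the path to $v$ (so $T_a \supseteq A$) and $b$ the child of $u$ whose subtree contains the $B$-side item. Writing $M^a_b(\ranking), M^b_a(\ranking)$ for the min cost at $u$ over orderings with $a$ before $b$ and $b$ before $a$ respectively, a short case analysis on $f_u(\transposition\ranking) = \min(M^a_b \pm 1, M^b_a \mp 1)$ shows that the swap is non-sensitive precisely when $|M^a_b - M^b_a| = 1$. By linearity of expectation, the total expected number of cross adjacencies at a uniform random $\ranking$, summed over $r \in [n-1]$, equals $(n-1)\cdot\tfrac{2\ell(n-\ell)}{n(n-1)} = \tfrac{2\ell(n-\ell)}{n}$.

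The main quantitative step---and where I expect the primary difficulty to lie---is bounding the expected number of non-sensitive cross adjacencies by $\tfrac{\ell(n-\ell)}{n} + 1$, which would yield $s(\Pi_T) \ge \tfrac{\ell(n-\ell)}{n} - 1$. I plan to use the involution $\ranking \leftrightarrow \transposition\ranking$, under which the opposite $\pm 1$ shifts of $M^a_b$ and $M^b_a$ pair non-sensitive Case~I configurations (with $y_r \in A$ and $M^a_b - M^b_a = -1$) bijectively with non-sensitive Case~II configurations (with $y_r \in B$ and $M^a_b - M^b_a = +1$). Combined with the telescoping identity $\sum_\ranking (\Pi_T(\transposition\ranking) - \Pi_T(\ranking)) = 0$, which forces the positive and negative sensitive changes to cancel in aggregate, this constrains the count of non-sensitive ties. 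The main subtlety, and likely the source of the $-1$ slack in the final estimate, arises when $u$ has more than two children: the quantities $M^a_b, M^b_a$ then involve optimization over full orderings of $u$'s children while only a single pairwise term shifts, and I anticipate handling this by conditioning on the relative order of the non-$a,b$ children and tracking how the $\pm 1$ shift propagates through the minimum.
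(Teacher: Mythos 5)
Your structural preliminaries are sound: for a cross adjacency the only summand $\MRInv(T_u,\cdot)$ that moves is at $u=\LCA(y_r,y_{r+1})$, which is a proper ancestor of $v$, and the non-sensitivity condition reduces correctly to a unit gap between the two conditional minima $M^a_b$ and $M^b_a$ (with the sign you describe in each case). But the quantitative step you flag as the main difficulty is a genuine gap, and the two tools you propose do not close it. The involution $\ranking \leftrightarrow \transposition\ranking$ only shows that non-sensitive Case~I configurations are equinumerous with non-sensitive Case~II configurations; it places no upper bound on either count, and in particular does not show that at least half of the $\frac{2\ell(n-\ell)}{n}$ expected cross adjacencies are sensitive. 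The telescoping identity $\sum_\ranking\bigl(\Pi_T(\transposition\ranking)-\Pi_T(\ranking)\bigr)=0$ is a statement about the signed total of sensitive changes across \emph{all} rankings; it holds automatically because $\transposition$ permutes the rankings and says nothing about how often the change is zero on cross adjacencies. You would need a separate argument bounding, for a typical $\ranking$, how often $M^b_a-M^a_b$ lands exactly on $\pm 1$, and that claim is neither proved nor obviously true.

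There is also a structural mismatch with what actually drives the lemma. The paper's proof never restricts to cross adjacencies: it fixes an inversion-minimizing tree ordering $\ordering_{\min}$, uses that any tree ordering maps $\leafset(T_v)$ to a contiguous block, and observes that whenever $\ranking$ places $\leafset(T)\setminus\leafset(T_v)$-items between two consecutive $\leafset(T_v)$-items, the composite $\ordering_{\min}\circ\ranking^{-1}$ must have a descent somewhere in that window. Any descent of $\ordering_{\min}$ is sensitive, since the corresponding adjacent-rank swap strictly decreases $\Inv_\ranking(\ordering_{\min})$ and hence $\MInv(T,\cdot)$. Crucially, that descent may lie between two items both outside $\leafset(T_v)$, i.e.\ it need not be a cross adjacency at all. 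By localizing at the LCA and confining attention to cross adjacencies, your argument cannot see these sensitive transpositions, and it discards the global contiguity property that the paper exploits. I would recommend abandoning the $M^a_b/M^b_a$ bookkeeping for this lemma and instead arguing directly about descents of $\ordering_{\min}$ between consecutive $\leafset(T_v)$-ranks.
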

Note that $v$ is not necessarily a direct child of the root, as shown in \cref{fig:k-ary:subtree}.

We now prove \cref{lemma:k-ary:subtree-sensitivity}. Let $\ranking$ be a ranking of the leaves of $T$, and let $\ordering_{\min}$ be a tree ordering that minimizes the number of inversions with respect to $\ranking$.

\begin{claim}
    \label{claim:k-ary:decreasing-transposition}
    $\ranking$ is sensitive to the transposition $\transposition=(r,r+1)$ if $\ordering_{\min}(\ranking^{-1}(r))>\ordering_{\min}(\ranking^{-1}(r+1))$.
\end{claim}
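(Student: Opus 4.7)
The plan is to argue directly that under the hypothesis, the transposition $\transposition$ strictly decreases the optimal inversion count. Write $a = \ranking^{-1}(r)$ and $b = \ranking^{-1}(r+1)$ for the affected leaves and $\ranking' = \transposition\ranking$. The first step is a simple bookkeeping lemma: for \emph{any} fixed tree ordering $\ordering$ of $T$, the inversion counts $\Inv_\ranking(\ordering)$ and $\Inv_{\ranking'}(\ordering)$ differ by exactly $1$. This follows by considering a pair $(x,y)$ of leaves and checking how the transposition changes its inversion status. If $\{x,y\}\cap\{a,b\}=\varnothing$, then both ranks and their order are preserved. If exactly one of $x,y$ lies in $\{a,b\}$, say $x=a$ and $y\notin\{a,b\}$, then $\ranking(y)$ is unchanged and lies outside $\{r,r+1\}$, so $\ranking(a)<\ranking(y)$ iff $\ranking'(a)<\ranking(y)$ and the inversion status is unchanged. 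The only pair affected is $\{a,b\}$ itself, whose relative rank flips, so it contributes a $\pm 1$ difference depending on whether $\ordering(a)>\ordering(b)$ or $\ordering(a)<\ordering(b)$.

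Next, I would apply this bookkeeping to the minimizing ordering $\ordering_{\min}$ for $\ranking$. The hypothesis $\ordering_{\min}(\ranking^{-1}(r))>\ordering_{\min}(\ranking^{-1}(r+1))$ says exactly that $\ordering_{\min}(a)>\ordering_{\min}(b)$, i.e., the pair $(a,b)$ is an inversion in $\ordering_{\min}$ with respect to $\ranking$. By the case analysis above this gives
\[
\Inv_{\ranking'}(\ordering_{\min}) = \Inv_\ranking(\ordering_{\min}) - 1 = \MInv(T,\ranking) - 1.
\]
Since $\MInv(T,\ranking')$ is the minimum of $\Inv_{\ranking'}(\ordering)$ over all tree orderings, we conclude
\[
\MInv(T,\ranking') \le \Inv_{\ranking'}(\ordering_{\min}) = \MInv(T,\ranking) - 1 < \MInv(T,\ranking),
\]
which in particular witnesses $\Pi_T(\ranking)\ne\Pi_T(\transposition\ranking)$, the definition of sensitivity at $\ranking$ with respect to $\transposition$.

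There is no real obstacle: the entire argument is a one-pair inversion count, and the optimality of $\ordering_{\min}$ is invoked only to upper bound $\MInv(T,\ranking')$ by plugging the same tree ordering back in. Note that we do not need a matching lower bound on $\MInv(T,\ranking')$, only inequality, so we need not worry whether $\ordering_{\min}$ remains optimal for $\ranking'$; it suffices that it is \emph{some} valid tree ordering witnessing a strictly smaller inversion count. The same template, with roles of $\ranking$ and $\ranking'$ swapped, yields the companion statement (used elsewhere in the paper) that $\ranking$ is sensitive to $\transposition$ whenever an optimal ordering for $\transposition\ranking$ places $\ranking^{-1}(r)$ to the left of $\ranking^{-1}(r+1)$.
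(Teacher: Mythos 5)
Your proof is correct and takes essentially the same route as the paper: both observe that $\Inv_{\transposition\ranking}(\ordering_{\min}) = \Inv_\ranking(\ordering_{\min}) - 1$ under the hypothesis, and then use $\MInv(T,\transposition\ranking) \le \Inv_{\transposition\ranking}(\ordering_{\min})$ to conclude strict decrease. You spell out the per-pair bookkeeping that the paper leaves implicit, but the logical skeleton is identical.
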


\begin{proof}
    If $\ordering_{\min}(\ranking^{-1}(r))>\ordering_{\min}(\ranking^{-1}(r+1))$, then $\Inv_{\transposition \ranking}(\ordering_{\min})=\Inv_{\ranking}(\ordering_{\min})-1$. Since $\Inv_{\ranking}(\ordering_{\min})=\MInv(T,\ranking)$, this means that $\MInv(T,\transposition \ranking)< \MInv(T,\ranking)$, or that $\ranking$ is sensitive to $\transposition$.
\end{proof}

In the case of general trees, a tree ordering $\ordering$ that minimizes the number of inversions with respect to $\ranking$ is difficult to find (see the discussion on NP-hardness in \cref{sec:turing}). Our strategy is to find a lower bound on the number of $r$ for which $\ordering(\ranking^{-1}(r))>\ordering(\ranking^{-1}(r+1))$ that applies regardless of $\ordering$.

\begin{claim}
    \label{claim:k-ary:transposition-to-gap}
    For any ordering $\ordering$, the number of $r$ such that $\ordering(\ranking^{-1}(r))>\ordering(\ranking^{-1}(r+1))$ is at least one less than the number of $s$ such that $\ranking^{-1}(s)\in \leafset(T_v)$ and $\ranking^{-1}(s+1)\not\in \leafset(T_v)$.
\end{claim}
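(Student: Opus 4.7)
The plan is to translate the claim into a statement about descents of a single permutation. Setting $\sigma \doteq \ordering \circ \ranking^{-1}$, the left-hand side of the claim equals the number of descents of $\sigma$, i.e., indices $r$ with $\sigma(r) > \sigma(r+1)$. Because $\ordering$ is the left-to-right ranking induced by a tree ordering of $T$, all leaves in $\leafset(T_v)$ occupy a contiguous block of positions in $\ordering$: there is some $a$ with $\{\ordering(x) : x \in \leafset(T_v)\} = [a+1, a+\ell]$. Consequently, the condition $\ranking^{-1}(s) \in \leafset(T_v)$ becomes $\sigma(s) \in [a+1, a+\ell]$, so the claim reduces to the following assertion about an arbitrary permutation $\sigma$ of $[n]$ with distinguished middle block $M \doteq [a+1, a+\ell]$: the number of descents of $\sigma$ is at least $\#\{s : \sigma(s) \in M,\ \sigma(s+1) \notin M\} - 1$.

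To prove this, classify each position $r$ by whether $\sigma(r)$ is Low (in $[1,a]$), Mid (in $M$), or High (in $[a+\ell+1,n]$). The right-hand side splits as $t = t_L + t_H$, where $t_L$ counts Mid-to-Low transitions and $t_H$ counts Mid-to-High transitions. Comparing the endpoints of the three ranges shows that Mid-to-Low, High-to-Low, and High-to-Mid are always descents, giving
\[
    \#\mathrm{descents} \;\ge\; t_L + \#(\mathrm{High}\!\to\!\mathrm{Low}) + \#(\mathrm{High}\!\to\!\mathrm{Mid}).
\]
A conservation-of-flow argument on the High region then closes the gap: the number of maximal runs of High positions in $\sigma$ equals both the number of entries into High (from Low, from Mid, or by starting at position $1$) and the number of exits (to Low, to Mid, or by ending at position $n$). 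Equating and rearranging yields
\[
    \#(\mathrm{High}\!\to\!\mathrm{Low}) + \#(\mathrm{High}\!\to\!\mathrm{Mid}) \;=\; t_H + \#(\mathrm{Low}\!\to\!\mathrm{High}) + [\sigma(1)\in\mathrm{High}] - [\sigma(n)\in\mathrm{High}] \;\ge\; t_H - 1,
\]
so $\#\mathrm{descents} \ge t_L + t_H - 1 = t - 1$, as claimed.

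Nothing in the argument is technically deep; the only points to get right are invoking the contiguity of $\leafset(T_v)$ inside $\ordering$ (which is where the assumption that $\ordering$ comes from a tree ordering of $T$ enters) and tracking the $\pm 1$ boundary terms in the flow argument. The slack of exactly one in the final bound corresponds precisely to the case where $\sigma$ ends inside the High region, preventing the last Mid-to-High entry from being matched by a subsequent exit.
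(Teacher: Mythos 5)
Your argument is correct. You exploit the same crucial structural fact as the paper --- that a tree ordering $\ordering$ maps $\leafset(T_v)$ to a contiguous block of positions --- but the accounting mechanism is genuinely different. The paper's proof is a direct injection: between any two $\ranking$-consecutive occurrences of $\leafset(T_v)$ (so, for every $s$ in the right-hand side except the last), the contiguity of the block forces at least one descent of $\sigma = \ordering \circ \ranking^{-1}$ inside the open $\ranking$-interval separating them, and distinctness of these intervals makes the matched descents distinct. You instead partition positions into Low/Mid/High according to the position of $\sigma(r)$ relative to the block, observe that Mid-to-Low, High-to-Low, and High-to-Mid transitions are automatically descents, and use a flow-conservation identity on entries and exits of the High region to convert unmatched Mid-to-High transitions into High-exit descents, losing at most one boundary term. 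Your route is slightly longer but makes the source of the $-1$ slack explicit (the case $\sigma(n)\in\mathrm{High}$ with $\sigma(1)\notin\mathrm{High}$ and no Low-to-High transition); the paper's route is more compact and directly certifies a system of distinct representatives. One minor point worth flagging: by symmetry you could also run the flow argument on the Low region, and taking the better of the two bounds occasionally saves the $-1$, but neither the paper nor the application requires that sharpening.
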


\begin{proof}
    For all except at most one value of $s$ (the maximum $s$ for which $\ranking^{-1}(s)\in \leafset(T_v)$), there exists a minimal $s'>s$ such that $\ranking^{-1}(s')\in \leafset(T_v)$. We claim that at least one value of $r=s,\dots,s'-1$ satisfies $\ordering(\ranking^{-1}(r))>\ordering(\ranking^{-1}(r+1))$. If not, then $\ordering$ would rank $\ranking^{-1}(s),\ranking^{-1}(s+1),\dots,\ranking^{-1}(s')$ in increasing order. Because $\ordering$ is a tree ordering, the leaves of $\leafset(T_v)$ must be mapped into a contiguous range by $\ordering$, as shown in \cref{fig:k-ary:transposition-to-gap-visualization}. However, we have $\ranking^{-1}(s),\ranking^{-1}(s')\in \leafset(T_v)$ but $\ranking^{-1}(s+1)\not\in \leafset(T_v)$, which violates this property since $\ordering$ ranks a leaf outside $\leafset(T_v)$ between two leaves inside $\leafset(T_v)$.
    
    Because each value of $r$ is found between consecutive pairs of values in $\leafset(T_v)$, the values of $r$ are distinct.
\end{proof}

\begin{figure}[ht]
    \centering
    \begin{subfigure}[b]{0.45\textwidth}
        \centering
        \begin{tikzpicture}
            [
            level 1/.style = {sibling distance = 4cm, level distance = 0.75cm},
            level 2/.style = {sibling distance = 1.5cm, level distance = 1cm},
            level 3/.style = {sibling distance = 1.4cm, level distance = 1cm},
            level 4/.style = {sibling distance = 0.9cm, level distance = 1cm}
            ]
            \node at (0,0)[circle, draw, inner sep=1.5pt]{}
                child
                {node [circle, draw, inner sep=1.5pt]{}
                    child
                    {node [circle, draw, inner sep=1.5pt, label={below:$2$}]{}}
                    child
                    {node [circle, draw, inner sep=1.5pt, label={below:$7$}]{}}}
                child
                {node [circle, draw, inner sep=1.5pt]{}
                    child
                    {node [circle, draw, inner sep=1.5pt, label={above:$v$}, sibling distance=3cm]{}
                        child
                        {node [circle, draw, inner sep=1.5pt]{}
                            child
                            {node [circle, fill, inner sep=1.5pt, label={below:$1$}]{}}
                            child
                            {node [circle, fill, inner sep=1.5pt, label={below:$5$}]{}}}
                        child
                        {node [circle, draw, inner sep=1.5pt]{}
                            child
                            {node [circle, fill, inner sep=1.5pt, label={below:$4$}]{}}
                            child
                            {node [circle, fill, inner sep=1.5pt, label={below:$6$}]{}}}
                        child
                        {node [circle, fill, inner sep=1.5pt, label={below:$8$}]{}}}
                    child
                    {node [circle, draw, inner sep=1.5pt, label={below:$3$}]{}}
                    child
                    {node [circle, draw, inner sep=1.5pt, label={below:$9$}]{}}};
        \end{tikzpicture}
        \vspace{0.6cm}
        \caption{Leaves in $\ordering$-order, labeled with $\ranking$-ranks}
    \end{subfigure}
    \hfill
    \begin{subfigure}[b]{0.45\textwidth}
        \centering
        \begin{tikzpicture}[scale=0.5, yscale=0.8]
            \draw[very thin, gray] (0,0) grid (10,10);
            \draw[->,thick] (0,0)--(10,0) node [label={below:$\ranking$}]{};
            \draw[->,thick] (0,0)--(0,10) node [label={left:$\ordering$}]{};
            \draw[dashed,thick] (0,3)--(10,3);
            \draw[dashed,thick] (0,7)--(10,7);
            \node at (10.1,5)[label={right:$\leafset(T_v)$}]{};
            \draw[decorate, decoration=brace] (10.2,7)--(10.2,3);
            
            \node at (3,-1.5){\small $r_1$};
            \node at (6,-1.5){\small $r_2$};
            
            \node at (1,3)[circle, fill, inner sep=1.5pt]{};
            \node at (2,1)[circle, draw, fill=white, inner sep=1.5pt]{};
            \node at (3,8)[circle, draw, fill=white, inner sep=1.5pt]{};
            \node at (4,5)[circle, fill, inner sep=1.5pt]{};
            \node at (5,4)[circle, fill, inner sep=1.5pt]{};
            \node at (6,6)[circle, fill, inner sep=1.5pt]{};
            \node at (7,2)[circle, draw, fill=white, inner sep=1.5pt]{};
            \node at (8,7)[circle, fill, inner sep=1.5pt]{};
            \node at (9,9)[circle, draw, fill=white, inner sep=1.5pt]{};
            
            \foreach \i in {1,2,3,4,5,6,7,8,9}{
                \node at (0.3,{\i})[label={left:\scriptsize {\i}}]{};
                \node at ({\i},0.3)[label={below:\scriptsize {\i}}]{};
            }
            
            \draw (1.5,-2.4)--(3.5,-2.4);
            \draw (6.5,-2.4)--(7.5,-2.4);
            \node at (1,-2.4)[]{\small $s_1$};
            \node at (4,-2.3)[]{\small $s'_1$};
            \node at (6,-2.4)[]{\small $s_2$};
            \node at (8,-2.3)[]{\small $s'_2$};
        \end{tikzpicture}
        \caption{Corresponding plot of $\ranking,\ordering$ for each leaf}
    \end{subfigure}
    \caption{$\ordering$ maps leaves of $\leafset(T_v)$ in a contiguous range.}
    \label{fig:k-ary:transposition-to-gap-visualization}
\end{figure}

\begin{claim}
    \label{claim:k-ary:gap-count}
    Over a uniformly random $\ranking$, the expected number of $s$ such that $\ranking^{-1}(s)\in \leafset(T_v)$ and $\ranking^{-1}(s+1)\not\in \leafset(T_v)$ is $\frac{\ell(n-\ell)}{n}$.
\end{claim}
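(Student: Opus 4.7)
The plan is to prove Claim~\ref{claim:k-ary:gap-count} by a straightforward linearity of expectation argument. For each $s \in \{1, \ldots, n-1\}$, let $X_s$ be the indicator random variable for the event that $\ranking^{-1}(s) \in \leafset(T_v)$ and $\ranking^{-1}(s+1) \notin \leafset(T_v)$. The quantity we want to compute is the expectation of $\sum_{s=1}^{n-1} X_s$.

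The key step is to evaluate $\Pr[X_s = 1]$ for an individual $s$. Since $\ranking$ is a uniformly random bijection from $\leafset(T)$ to $[n]$, the pair $(\ranking^{-1}(s), \ranking^{-1}(s+1))$ is a uniformly random ordered pair of distinct leaves. The probability that $\ranking^{-1}(s) \in \leafset(T_v)$ is $\ell/n$, and conditioned on this, the probability that $\ranking^{-1}(s+1) \notin \leafset(T_v)$ is $(n-\ell)/(n-1)$ since one leaf of $\leafset(T_v)$ has already been used. Therefore
\[ \Pr[X_s = 1] = \frac{\ell}{n} \cdot \frac{n-\ell}{n-1} = \frac{\ell(n-\ell)}{n(n-1)}, \]
independently of $s$.

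By linearity of expectation, the expected number of such $s$ equals
\[ \sum_{s=1}^{n-1} \Pr[X_s = 1] = (n-1) \cdot \frac{\ell(n-\ell)}{n(n-1)} = \frac{\ell(n-\ell)}{n}, \]
as claimed. There is no real obstacle here; the argument is entirely routine once the indicator decomposition is written down, and the only subtlety worth remarking on is that the events $\{X_s = 1\}$ are not independent across $s$ (which is irrelevant since we are only computing an expectation, not a variance or higher moment).
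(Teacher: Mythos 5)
Your proof is correct and matches the paper's argument exactly: both decompose the count into indicator random variables over $s \in [n-1]$, compute the per-$s$ probability as $\frac{\ell}{n}\cdot\frac{n-\ell}{n-1}$ via the same conditioning, and apply linearity of expectation.
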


\begin{proof}
    For $s=1,\dots,n-1$, the probability that $\ranking^{-1}(s)\in \leafset(T_v)$ is $\frac{\ell}{n}$, and the probability that $\ranking^{-1}(s+1)\not\in \leafset(T_v)$ given that $\ranking^{-1}(s)\in \leafset(T_v)$ is $\frac{n-\ell}{n-1}$. Using linearity of expectation on the indicator random variables for $\ranking^{-1}(s)\in \leafset(T_v)$ and $\ranking^{-1}(s+1)\not\in \leafset(T_v)$, the expected number of $s$ satisfying this property is
    \[(n-1)\left(\frac{\ell(n-\ell)}{n(n-1)}\right)=\frac{\ell(n-\ell)}{n}. \]
\end{proof}

Combining \cref{claim:k-ary:decreasing-transposition}, \cref{claim:k-ary:transposition-to-gap}, and \cref{claim:k-ary:gap-count}, we can conclude with \cref{lemma:k-ary:subtree-sensitivity}.

\paragraph{Bounded degree.}
We apply our analysis to the case of trees of degree $k$. Observe that for fixed $n$, \cref{lemma:k-ary:subtree-sensitivity} is strongest when $\ell=n/2$. Not every tree $T$ has a subtree with exactly $n/2$ leaves, but \cref{lemma:k-ary:subtree-sensitivity} still gives a useful bound for subtrees that do not contain too few or too many leaves. In the case of trees of bounded degree, there always exists a subtree $T_v$ that contains a fairly balanced fraction of the leaves. The following quantification is folklore, but we include a proof for completeness.
 
\begin{fact}
    \label{fact:k-ary:balance}
    If $T$ is a tree of degree $k$ with $n$ leaves, there exists a node $v$ in $T$ such that $\ell \doteq \lvert\leafset(T_v)\rvert = \alpha 
    \cdot n$, where $\frac{1}{k+1} \le \alpha \le \frac{k}{k+1}$.
\end{fact}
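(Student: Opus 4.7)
The plan is a standard root-to-leaf descent argument: walk from the root toward the leaves, always descending into the child whose subtree contains the most leaves, and stop at the first node whose subtree has at most $\frac{k}{k+1}n$ leaves. The key observation is that since every internal node has at most $k$ children, the pigeonhole principle guarantees that the heaviest child retains at least a $1/k$ fraction of its parent's leaves, so the leaf count cannot drop too fast.

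In more detail, I would start at the root $r$, which satisfies $|\leafset(T_r)| = n > \frac{k}{k+1}n$. At each step, if the current node $v$ satisfies $|\leafset(T_v)| > \frac{k}{k+1}n$, descend to the child $u$ of $v$ that maximizes $|\leafset(T_u)|$; by pigeonhole on the at-most-$k$ children, $|\leafset(T_u)| \ge |\leafset(T_v)|/k$. This walk must terminate at some node with subtree leaf count at most $\frac{k}{k+1}n$, because it eventually reaches an actual leaf (contributing a subtree of size $1$), and $1 \le \frac{k}{k+1}n$ whenever $n \ge 2$ (with the case $n=1$ being trivial, as the only node has $\alpha = 1$ which already lies in the range when one interprets the endpoints suitably).

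Let $v^*$ denote the last node along this path with $|\leafset(T_{v^*})| > \frac{k}{k+1}n$, and let $u^*$ be the heaviest child of $v^*$ selected at that step. By the termination criterion for the walk, $|\leafset(T_{u^*})| \le \frac{k}{k+1}n$, giving the upper bound on $\alpha$. By the pigeonhole step applied at $v^*$,
\[
|\leafset(T_{u^*})| \;\ge\; \frac{|\leafset(T_{v^*})|}{k} \;>\; \frac{1}{k}\cdot \frac{k}{k+1}n \;=\; \frac{n}{k+1},
\]
giving the lower bound. Hence $v \doteq u^*$ is the desired node, with $\alpha = |\leafset(T_v)|/n \in \bigl[\tfrac{1}{k+1},\tfrac{k}{k+1}\bigr]$.

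There is no real obstacle; this is a folklore descent argument. The only design choice is the stopping threshold, which must be exactly $\frac{k}{k+1}n$ so that the upper bound from stopping and the lower bound produced by a single pigeonhole step across the $\le k$ children match up symmetrically around $\tfrac{1}{2}$.
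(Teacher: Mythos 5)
Your proof is correct and is essentially the same argument as the paper's: both descend from the root through the heaviest child, both use the pigeonhole observation that the heaviest of $\le k$ children retains at least a $1/k$ fraction of the parent's leaves, and both conclude the leaf count cannot skip the interval $[\frac{n}{k+1}, \frac{kn}{k+1}]$. The paper phrases this as ``if every $\ell_i$ avoids the interval, some consecutive pair jumps by more than a factor of $k$,'' while you phrase it as ``the first node to fall at or below $\frac{k}{k+1}n$ cannot be below $\frac{n}{k+1}$''; these are the same reasoning read forwards versus by contradiction.
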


\begin{proof}
    Let $r$ be the root of $T$ and construct a sequence $v_1=r,v_2,v_3,\dots$ such that $v_{i}$ is a child of $v_{i-1}$ that maximizes $\ell_{i} \doteq \lvert\leafset(T_{v_{i}})\rvert$, with ties broken arbitrarily. Notice that $\{\ell_i\}$ is a decreasing sequence, and since $T$ has degree $k$, $\ell_{i} \le k\ell_{i+1}$ for all $i$. We claim that some $v_i$ in this sequence satisfies the conditions of the claim. If not, then for some $i$, $\ell_i> \frac{k}{k+1}\cdot n$ and $\ell_{i+1}< \frac{1}{k+1}\cdot n$, which contradicts the fact that $\ell_i\le k\ell_{i+1}$.
\end{proof}

By choosing a subtree satisfying \cref{fact:k-ary:balance}, we can apply \cref{lemma:k-ary:subtree-sensitivity} and conclude that $s(\Pi_T)\ge \frac{k}{(k+1)^2} \cdot n -1$. The Sensitivity Lemma then gives the ``in particular'' part of Theorem~\ref{thm:main:general}.


\subsection{Decomposition of the objective function}
\label{sec:decomposition}

For use in this section as well as later parts of the paper, we now explain how the objective of inversion minimization on trees decomposes. We introduce the notion of root inversion along the way, and observe the effect of adjacent-rank transpositions on the decomposition.

The objective $\MInv(T,\ranking)$ can be written as the sum of contributions from each of the individual nodes. A node $v$ contributes those inversions that reside in the subtree $T_v$ and go through the root $v$ of $T_v$. We refer to them as the root inversions in $T_v$. 
\begin{definition}[root inversions, $\RInv(\cdot,\cdot,\cdot)$, $\MRInv(\cdot,\cdot)$]
Given a tree $T$, a ranking $\ranking$ of the leaves of $T$, and an ordering $\ordering$ of $T$, a root inversion of $\ordering$ with respect to $\ranking$ is an inversion $(\ell_1,\ell_2)$ of $\ordering$ with respect to $\ranking$ for which the lowest common ancestor $\LCA(\ell_1,\ell_2)$ is the root of $T$. The number of root inversions of $\ordering$ with respect to $\ranking$ in $T$ is denoted by $\RInv(T,\ranking,\ordering)$. The minimum number of root inversions in $T$ with respect to $\ranking$ is denoted 
\begin{equation}\label{eq:MRInv}
\MRInv(T,\ranking) \doteq \min_\ordering \RInv(T,\ranking,\ordering),
\end{equation}
where $\ordering$ ranges over all possible orderings of $T$.
\end{definition}
The only aspect of the ordering $\ordering$ of $T_v$ that affects $\RInv(T_v,\ranking,\ordering)$ is the relative order of the children of $v$. For a node $v$ with $k$ children $u_1,\dots,u_k$, by abusing notation and using $\ordering$ to also denote the ranking of the children induced by the ordering of the tree, we have
\begin{equation}\label{eq:cost}
\RInv(T,\ranking,\ordering) \doteq \sum_{1 \le i < j \le k} \XInv_\ranking(L_{\ordering(i)},L_{\ordering(j)}),
\end{equation}
where $L_i$ is a short-hand for the leaf set $L(T_{u_i})$. The contributions of the nodes can be optimized independently:
\begin{equation}\label{eq:MInv:decomposition}
\MInv(T,\ranking) = \sum_v \MRInv(T_v,\ranking),
\end{equation}
where $v$ ranges over all nodes of $T$ with degree $\deg_T(v)>1$. 

When we apply an adjacent-rank transposition $\transposition$ to a ranking $\ranking$, at most one of terms in the decomposition \eqref{eq:MInv:decomposition} can change, and the change is at most one unit. We capture this observation for future reference
as it will be helpful in several sensitivity analyses.
\begin{proposition}\label{prop:change}
Let $\ranking$ be a ranking of the leaf set $X$ of a tree $T$, $\transposition$ an adjacent-rank transposition, and $\ell_1$ and $\ell_2$ be the affected leaves. Then
\[ \MRInv(T_v,\ranking) = \MRInv(T_v,\transposition \ranking) \]
for all nodes $v$ in $T$ except possibly $v = \LCA(\ell_1,\ell_2)$. Moreover, the difference is at most 1 in absolute value.
\end{proposition}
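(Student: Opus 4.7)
The plan is to leverage the decomposition \eqref{eq:cost} of $\RInv(T_v,\ranking,\ordering)$ as a sum of cross-inversion counts $\XInv_\ranking(L_i,L_j)$ over unordered pairs $\{i,j\}$ of children of $v$, and show that an adjacent-rank transposition $\transposition=(r,r+1)$ with affected leaves $\ell_1,\ell_2$ only perturbs these terms in a very controlled way.

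First, I would treat the non-LCA case. Let $v$ be a node with $v\ne\LCA(\ell_1,\ell_2)$. I split into three subcases according to the location of the affected leaves in $T_v$. If neither $\ell_1$ nor $\ell_2$ lies in $\leafset(T_v)$, then no leaf of $T_v$ changes rank under $\transposition$ and every term $\XInv_\ranking(L_i,L_j)$ is literally unchanged. If exactly one affected leaf, say $\ell_1$, lies in some $L_i$ and $\ell_2$ lies outside $\leafset(T_v)$, then only $\ell_1$'s rank changes (from $r$ to $r+1$); for any $x$ in another child's leaf set $L_j$, one has $\ranking(x)\notin\{r,r+1\}$ because those ranks belong to $\ell_1,\ell_2$, so the comparison between $\ell_1$ and $x$ is preserved, and no other cross-inversions involving $L_i$ are affected. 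The remaining subcase is $\ell_1,\ell_2\in\leafset(T_v)$ with $\LCA(\ell_1,\ell_2)$ a strict descendant of $v$; then both leaves sit in the same child subtree $T_{u_i}$, so they belong to $L_i$, and the same rank-gap argument shows each $\XInv_\ranking(L_i,L_j)$ for $j\ne i$ is unchanged, while terms $\XInv_\ranking(L_j,L_k)$ with $j,k\ne i$ do not involve either affected leaf at all. In every subcase, $\RInv(T_v,\ranking,\ordering)=\RInv(T_v,\transposition\ranking,\ordering)$ for every ordering $\ordering$ of the children of $v$, so $\MRInv(T_v,\ranking)=\MRInv(T_v,\transposition\ranking)$.

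Next I would handle $v=\LCA(\ell_1,\ell_2)$. By definition of LCA, $\ell_1$ and $\ell_2$ lie in distinct child subtrees, say $\ell_1\in L_i$ and $\ell_2\in L_j$. The same rank-gap observation shows that for any $x\in L_i\cup L_j$ with $x\notin\{\ell_1,\ell_2\}$, the relevant comparisons with the affected leaves are preserved, and the terms $\XInv_\ranking(L_{i'},L_{j'})$ with $\{i',j'\}\ne\{i,j\}$ are entirely unaffected. The only change is contributed by the pair $(\ell_1,\ell_2)$ itself, which flips from a non-inversion to an inversion (or vice versa) under $\transposition$. Consequently, for any fixed ordering $\ordering$, $\lvert\RInv(T_v,\ranking,\ordering)-\RInv(T_v,\transposition\ranking,\ordering)\rvert=1$.

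Finally, to pass from a per-ordering bound to a bound on the minima, I would use the standard one-line argument: letting $\ordering^*$ minimize $\RInv(T_v,\ranking,\cdot)$,
\[
\MRInv(T_v,\transposition\ranking)\le \RInv(T_v,\transposition\ranking,\ordering^*)\le \RInv(T_v,\ranking,\ordering^*)+1=\MRInv(T_v,\ranking)+1,
\]
and symmetrically in the other direction. The main (very mild) obstacle is just the bookkeeping in the case analysis of the first step; once one observes that the ranks $r$ and $r+1$ are occupied exclusively by $\ell_1$ and $\ell_2$, so no third leaf can slip between them rank-wise, the cross-inversion counts simply cannot change except through the single pair $(\ell_1,\ell_2)$ itself.
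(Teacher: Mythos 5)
Your proof is correct and uses the same core observation as the paper — that with adjacent ranks, every leaf other than $\ell_1,\ell_2$ retains its relative order to both affected leaves, so only the pair $\{\ell_1,\ell_2\}$ can change inversion status, and the LCA of that pair is the unique node whose root-inversion count could move. The paper compresses your case analysis into one sentence and leaves the min-to-min step implicit, but these are presentational differences, not a different argument.
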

\begin{proof}
Since the ranks of $\ell_1$ and $\ell_2$ under $\ranking$ are adjacent, for any leaf $\ell$ other than $\ell_1$ and $\ell_2$, the relative order of $\ell$ under $\ranking$ is the same with respect to $\ell_1$ as it is with respect to $\ell_2$. This means that the adjacent-rank transposition $\transposition$ does not affect whether a pair of leaves constitutes an inversion unless that pair equals $\{\ell_1,\ell_2\}$. As a result, the only term on the right-hand side of \eqref{eq:MInv:decomposition} that can be affected by the transposition $\transposition$ is the one corresponding to the node $v$, and it can change by at most one unit.
\end{proof}

\subsection{Lipschitz continuity}
\label{sec:lipschitz}

Average-case notions typically do not change much under small changes to the input. This is indeed the case for the average sensitivity when ``small'' is interpreted as affecting few of the subtrees. The following lemma quantifies the property and can be viewed as a form of Lipschitz continuity.
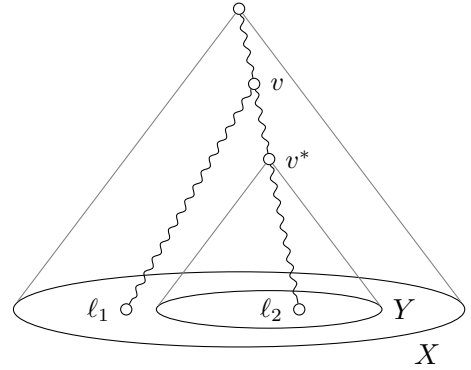
\begin{wrapfigure}{r}{0.375\linewidth}
    \centering
    \begin{tikzpicture}[]
        \node (top) at (0,0)[circle, draw, inner sep=1.5pt]{};
        \node (v) at (0.2,-1)[circle, draw, inner sep=1.5pt, label={right:\small $v$}]{};
        \node (v1) at (0.4,-2)[circle, draw, inner sep=1.5pt, label={right:\small $v^*$}]{};
        
        \node (l1) at (-1.5,-4)[circle, draw, inner sep=1.5pt, label={left:\small $\ell_1$}]{};
        \node (l2) at (0.8,-4)[circle, draw, inner sep=1.5pt, label={left:\small $\ell_2$}]{};
        
        \draw (0,-4) ellipse (3 and 0.5);
        \draw (0.4,-4) ellipse (1.5 and 0.25);
        
        \draw[thin,gray] (-2.95,-3.9)--(top)--(2.95,-3.9);
        \draw[thin,gray] (-1.075,-3.95)--(v1)--(1.875,-3.95);
        
        \draw[decorate, decoration=snake, segment amplitude=1pt, segment length=2.2mm] (top)--(v);
        \draw[decorate, decoration=snake, segment amplitude=1pt, segment length=2.2mm] (v)--(v1);
        \draw[decorate, decoration=snake, segment amplitude=1pt, segment length=2.2mm] (v)--(l1);
        \draw[decorate, decoration=snake, segment amplitude=1pt, segment length=2.2mm] (v1)--(l2);
        
        \node at (2.5,-4.6)[]{$X$};
        \node at (2.2,-4)[]{$Y$};
    \end{tikzpicture}
    \caption{Effects of changing $T_{v^*}$}
    \label{fig:lipschitz}
\end{wrapfigure}

\begin{lemma}
    \label{lemma:sensitivity:lipschitz}    
    Given a tree $T$, if a subtree $T_{v^*}$ with $\ell$ leaves is replaced with a tree $T'_{v^*}$ with the same number of leaves, resulting in the tree $T'$, then
    \[|s(\Pi_T)-s(\Pi_{T'})| \le \frac{\ell(\ell-1)}{n}. \]
\end{lemma}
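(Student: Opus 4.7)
My plan is to track, for each adjacent-rank transposition $\transposition$ with affected leaves $\ell_1,\ell_2$, whether $\transposition$ contributes to the sensitivity of $\Pi_T$ versus $\Pi_{T'}$ at a given $\ranking$, and then to argue that the contributions agree unless both affected leaves lie inside $S \doteq \leafset(T_{v^*}) = \leafset(T'_{v^*})$. Concretely, set $\Delta_T(\ranking,\transposition) \doteq \MInv(T,\ranking) - \MInv(T,\transposition\ranking)$ (and $\Delta_{T'}$ analogously). Since $\Pi_T$ is sensitive to $\transposition$ at $\ranking$ iff $\Delta_T(\ranking,\transposition) \ne 0$, it suffices to show that $\Delta_T(\ranking,\transposition) = \Delta_{T'}(\ranking,\transposition)$ whenever $\{\ell_1,\ell_2\} \not\subseteq S$, because then
\[
\lvert s_T(\ranking) - s_{T'}(\ranking)\rvert \;\le\; \bigl\lvert\{\transposition : \{\ell_1,\ell_2\}\subseteq S\}\bigr\rvert.
\]

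By Proposition~\ref{prop:change} and the decomposition \eqref{eq:MInv:decomposition}, we have $\Delta_T(\ranking,\transposition) = \MRInv(T_{v_T},\ranking) - \MRInv(T_{v_T},\transposition\ranking)$ where $v_T \doteq \LCA_T(\ell_1,\ell_2)$, and likewise for $T'$. When at most one of $\ell_1,\ell_2$ lies in $S$, the node $\LCA_T(\ell_1,\ell_2)$ is either a strict ancestor of $v^*$ or lies outside the ancestor chain of $v^*$ altogether. In both situations, the portion of $T$ outside $T_{v^*}$ is identical to the portion of $T'$ outside $T'_{v^*}$, so $v_T = v_{T'}$, and moreover the children of this common node $v$ have the same leaf sets in $T$ and $T'$ (the unique child whose subtree contains $v^*$ has leaf set $S$ in both trees; the other children are unchanged). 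Since by \eqref{eq:cost} the value $\MRInv(T_v,\ranking)$ at such a $v$ depends only on the induced ranks of the leaves inside each of these children, it coincides with $\MRInv(T'_v,\ranking)$, and therefore $\Delta_T = \Delta_{T'}$ as required.

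Finally, taking expectation over a uniformly random ranking $\ranking$,
\[
\lvert s(\Pi_T) - s(\Pi_{T'})\rvert \;\le\; \Expect_\ranking\bigl\lvert s_T(\ranking) - s_{T'}(\ranking)\bigr\rvert \;\le\; \Expect_\ranking\bigl\lvert\{r \in [n-1] : \ranking^{-1}(r),\ranking^{-1}(r+1) \in S\}\bigr\rvert.
\]
By linearity of expectation, the right-hand side equals $(n-1)\cdot \frac{\ell(\ell-1)}{n(n-1)} = \frac{\ell(\ell-1)}{n}$, proving the lemma. I expect the main obstacle to be the bookkeeping in the middle paragraph, namely verifying that the value $\MRInv(T_v,\cdot)$ at an ancestor $v$ of $v^*$ really is unchanged by the surgery on $T_{v^*}$; this relies on the observation that the leaf sets of $v$'s children are preserved, so the child-ordering optimization in \eqref{eq:cost} sees exactly the same cross-inversion quantities in $T$ as in $T'$. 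The rest is routine.
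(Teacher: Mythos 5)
Your proof is correct and follows essentially the same strategy as the paper's: use Proposition~\ref{prop:change} to localize each transposition's effect to $\MRInv$ at the LCA of the affected leaves, observe that when at most one affected leaf lies in $S$ this LCA and its children's leaf sets are identical in $T$ and $T'$ so the sensitivity contributions agree, and then bound the expected number of transpositions with both affected leaves in $S$. The only cosmetic difference is that you justify the invariance of $\MRInv$ at nodes outside $T_{v^*}$ directly from \eqref{eq:cost} rather than via the paper's correspondence of orderings and inversions (Claim~\ref{claim:root:insensitive}), but the underlying observation is the same.
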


\begin{proof}
We think of the leaf sets of $T$ and $T'$ as being the same set $X=\leafset(T)=\leafset(T')$, and fix a ranking $\ranking$ of $X$. Consider an ordering of $T$ and the ranking $\ordering$ of $X$ that it induces. Outside of $T'_{v^*}$ we can order $T'$ in the same way as $T$. Irrespective of how we order $T'$ inside $T'_{v^*}$, the induced ranking $\ordering'$ of $X$ agrees with $\ordering$ on all leaves in $X$ except possibly those in $Y \doteq \leafset(T_{v^*})=\leafset(T'_{v^*})$. Moreover, under both $\ordering$ and $\ordering'$, the set $Y$ gets mapped to the same contiguous interval. It follows that for all pairs $(\ell_1,\ell_2)$ of distinct leaves of which at least one lies outside of $Y$, $(\ell_1,\ell_2)$ constitutes an inversion of $\ordering$ with respect to $\ranking$ if and only if $(\ell_1,\ell_2)$ constitutes an inversion of $\ordering'$ with respect to $\ranking$. For any node $v$ outside of $T_{v^*}$, root inversions in $T_v$ cannot involve leaves that are both in $Y \doteq \leafset(T_{v^*})$. See Figure~\ref{fig:lipschitz} for an illustration. Thus, for such nodes $v$, $\RInv(T_v,\ranking,\ordering) = \RInv(T'_v,\ranking,\ordering')$. By taking the minimum over all orderings, we conclude:
\begin{claim}\label{claim:root:insensitive}
$\MRInv(T_v,\ranking) = \MRInv(T'_v,\ranking)$ holds for every node $v$ outside of $T_{v^*}$ (or equivalently, outside of $T'_{v^*}$).
\end{claim}

Consider a ranking $\ranking$ and an adjacent-rank transposition $\transposition=(r,r+1)$. We claim that, unless $(\ell_1,\ell_2) \doteq (\ranking^{-1}(r),\ranking^{-1}(r+1)) \in Y \times Y$, $\Pi_T$ is sensitive to $\transposition$ at $\ranking$ if and only if $\Pi_{T'}$ is sensitive to $\transposition$ at $\ranking$. This is because by Proposition~\ref{prop:change} the only term in the decomposition \eqref{eq:MInv:decomposition} of $\MInv(T,\ranking)$ that can be affected by $\transposition$ is the contribution $\MRInv(T_v,\ranking)$ for $v = \LCA(\ell_1,\ell_2)$. If at least one of $\ell_1$ or $\ell_2$ is not inside $T_{v^*}$, then $v$ is not inside $T_{v^*}$ either, so by Claim~\ref{claim:root:insensitive}, $\MRInv(T_v,\ranking) = \MRInv(T'_v,\ranking)$. By the same token, $\MRInv(T_v,\transposition \ranking) = \MRInv(T'_v,\transposition \ranking)$. It follows that $\MInv(T,\ranking) \ne \MInv(T,\transposition \ranking)$ if and only if $\MInv(T',\ranking) \ne \MInv(T',\transposition \ranking)$.

We bound the expected number of values of $r$ for which $(\ranking^{-1}(r),\ranking^{-1}(r+1)) \in Y \times Y$ with $Y \doteq \leafset(T_v)$ when $\ranking$ is chosen uniformly at random. For $r \in [n-1]$, the probability that $\ranking^{-1}(r) \in Y$ is $\frac{\ell}{n}$, and the probability that $\ranking^{-1}(r+1)\in Y$ given that $\ranking^{-1}(r)\in Y$ is $\frac{\ell-1}{n-1}$. Using linearity of expectation on the indicators, the expected number of said $r$ is 
    \begin{equation*}
        (n-1)\left(\frac{\ell(\ell-1)}{n(n-1)}\right) = \frac{\ell(\ell-1)}{n}.
    \end{equation*}
\end{proof}

Lemma~\ref{lemma:sensitivity:lipschitz} helps to extend query lower bounds based on average sensitivity to larger classes. Suppose we have established a good lower bound on the sensitivity $s(\Pi_T)$ for a class $C$ of trees. Consider a class $C'$ obtained by taking a tree $T$ in class $C$ and replacing some of the subtrees $T_v$ by other subtrees $T'_v$ on the same number of leaves. For this new class $C'$ the same lower bound on the sensitivity of inversion minimization applies modulo the Lipschitz loss. For example, Theorem~\ref{thm:main:binary} holds by virtue of a lower bound of the form $s(\Pi_T) \ge (n-1) - c \log(n)$ for every binary tree $T$ with $n$ leaves, where $c$ is a universal constant. If we allow some of the subtrees of $T$ to be replaced by, say freely arrangeable ones on the same leaves, applying Lemma~\ref{lemma:sensitivity:lipschitz} for each of the modified subtrees in sequence shows that the resulting new tree $T'$ has 
\[ s(\Pi_{T'}) \ge s(\Pi_T) - \frac{\alpha n (\alpha n - 1)}{n} 
\ge (n-1) - c \log(n) - \alpha^2 (n-1) = (1-\alpha^2)(n-1) - c \log(n), \]
where $\alpha$ denotes the fraction of leaves that belong to one of the replaced subtrees.

In fact, the notion of average sensitivity is robust with respect to the following, more refined type of surgery. From any tree $T$, let $R$ be a connected subset of $T$ that includes no leaves. Let $T'$ be the subtree rooted at the LCA of $R$ ($T'$ contains all of $R$), and let $T'_1, \dots, T'_k$ be the disjoint maximal subtrees of $T'$ that are strictly below $R$. Let $R'$ be any tree that has $k$ leaves. Replace $T'$ by $R'$, and then replace the leaves of $R'$ by $T'_1, \dots, T'_k$.

The effect is that the region $R$ has been ``reshaped" to look like $R'$, but the rest of $T$ is unaffected. The cost of such a surgery is at most $(n-1)$ times the probability that a 
uniformly random pair of distinct leaves has their LCA in $R$. The bound follows from thinking of sensitivity as $(n-1)$ times the probability that a uniformly random edge in the full permutahedron is bichromatic. Provided the LCA of 
the affected leaves is outside $R$, then we get sensitivity before the surgery if and only if we get it after the surgery. Surgeries can be iterated, and the costs accumulate additively. In combination with our strong lower bound on the average sensitivity of binary trees (\cref{lemma:sensitivity:binary}), this allows for a robust sense in which ``mostly-binary" trees have high average sensitivity.

\section{Refined Sensitivity Approach for Binary Trees}
\label{sec:sensitivity:binary}

In this section we show how to refine the sensitivity approach for lower bounds on the query complexity of the problem $\Pi_T$ of inversion minimization on trees in the important special case of binary trees $T$. In Section~\ref{sec:binary:criterion} we first develop a criterion for when a particular ranking $\ranking$ is sensitive to a particular adjacent-rank transposition $\transposition$.
We then analyze the root sensitivity of binary trees in Section~\ref{sec:sensitivity:root} and finally establish a strong lower bound on the average sensitivity in Section~\ref{sec:sensitivity:average}. An application of the Sensitivity Lemma then yields Theorem~\ref{thm:main:binary}.

\subsection{Sensitivity criterion}
\label{sec:binary:criterion}

Recall the decomposition of the objective function $\MInv(T,\ranking)$ into contributions attributed to each node $v$ of degree $\deg_T(v)>1$, as given by \eqref{eq:MInv:decomposition} in Section~\ref{sec:decomposition}. In the case of binary trees, the contribution of node $v$ can be calculated simply as
\begin{equation}\label{eq:contribution:binary:retake}
\MRInv(T_v,\ranking) = \min(\XInv_\ranking(L_1,L_2),\XInv_\ranking(L_2,L_1)),
\end{equation}
where $u_1$ and $u_2$ denote the two children of $v$, and $L_1 \doteq \leafset(T_{u_1})$ and $L_2 \doteq \leafset(T_{u_2})$ their leaf sets. This simplicity makes a precise analysis of sensitivity feasible, as we will see next. 

For a given ranking $\ranking$ of $T$ and a given adjacent-rank transposition $\transposition$, we would like to figure out the effect of $\transposition$ on the objective $\MInv(T,\cdot)$, in particular when $\MInv(T,\transposition \ranking) = \MInv(T,\ranking)$. 
Let $\ell_\lo$ and $\ell_\hi$ denote the two leaves that are affected by the transposition $\transposition$ on the ranking $\ranking$, where the subscript ``$\lo$'' indicates the lower of the two leaves with respect to $\ranking$, and ``$\hi$'' the higher of the two. Let $v$ be the lowest common ancestor $\LCA(\ell_\lo,\ell_\hi)$. We use the same subscripts ``$\lo$'' and ``$\hi$'' for the two children of $v$: $u_\lo$ denotes the child whose subtree contains $\ell_\lo$, and $u_\hi$ its sibling. Similarly, we denote by $L_\lo$ the leaf set of $T_{u_\lo}$, and by $L_\hi$ the leaf set of $T_{u_\hi}$. See Figure~\ref{fig:sensitivity:binary} for the subsequent analysis. 

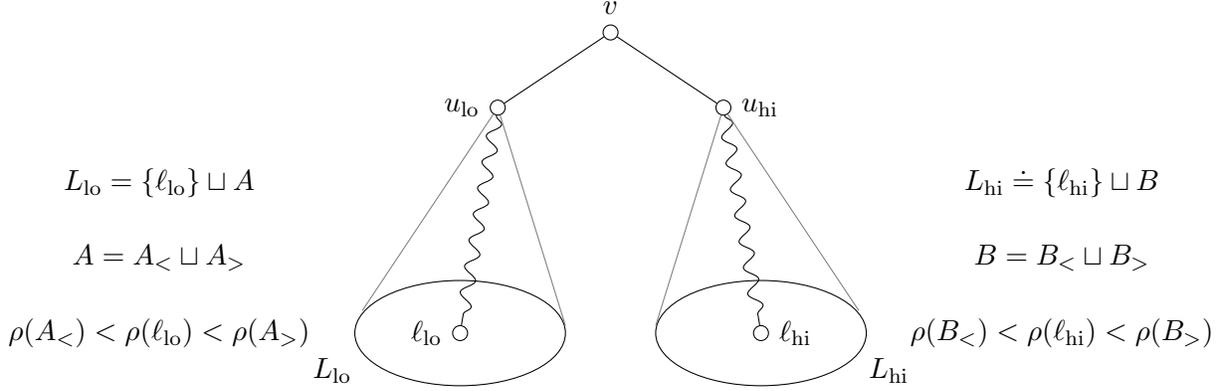
\begin{figure}[ht]
    \centering
    \begin{tikzpicture}
        \node (top) at (0,0)[circle,draw,fill=white,inner sep=2pt,label={above:$v$}]{};
        \node (ulo) at (-1.5,-1)[circle,draw,fill=white,inner sep=2pt,label={left:$u_{\lo}$}]{};
        \node (uhi) at (1.5,-1)[circle,draw,fill=white,inner sep=2pt,label={right:$u_{\hi}$}]{};
        \node (llo) at (-2,-4)[circle,draw,fill=white,inner sep=2pt,label={left:$\ell_{\lo}$}]{};
        \node (lhi) at (2,-4)[circle,draw,fill=white,inner sep=2pt,label={right:$\ell_{\hi}$}]{};
        
        \draw (ulo) -- (top) -- (uhi);
        \draw[decorate, decoration=snake] (ulo) -- (llo);
        \draw[decorate, decoration=snake] (uhi) -- (lhi);
        
        \draw (llo) ellipse (1.4 and 0.7);
        \draw (lhi) ellipse (1.4 and 0.7);
        
        \draw [thin, gray] (ulo)--(-3.3,-3.7);
        \draw [thin, gray] (ulo)--(-0.6,-3.9);
        \draw [thin, gray] (uhi)--(3.3,-3.7);
        \draw [thin, gray] (uhi)--(0.6,-3.9);
        
        \node at (-3.7,-4.5)[]{$L_\lo$};
        \node at (3.7,-4.5)[]{$L_\hi$};
        
        \node at (-6,-2)[]{$L_\lo = \{\ell_\lo\}\sqcup A$};
        \node at (-6,-3)[]{$A = A_< \sqcup A_>$};
        \node at (-6,-4)[]{$\ranking(A_<) < \ranking(\ell_\lo) < \ranking(A_>)$};
        \node at (6,-2)[]{$L_\hi \doteq \{\ell_\hi\}\sqcup B$};
        \node at (6,-3)[]{$B = B_< \sqcup B_>$};
        \node at (6,-4)[]{$\ranking(B_<) < \ranking(\ell_\hi) < \ranking(B_>)$};
    \end{tikzpicture}
    \caption{Sensitivity analysis for binary trees}
    \label{fig:sensitivity:binary}
\end{figure}

By Proposition~\ref{prop:change}, the situation before and after the application of $\transposition$ is as follows, where $x \doteq \XInv_\ranking(L_\lo,L_\hi)$ and $y \doteq \XInv_\ranking(L_\hi,L_\lo)$. 

\begin{equation*}
    \begin{array}{c|cc|c}
      \text{ranking} & \RInv(T_v,\cdot, \ordering(\ell_\lo) < \ordering(\ell_\hi)) &  \RInv(T_v,\cdot, \ordering(\ell_\hi) < \ordering(\ell_\lo)) & \MRInv(T_v,\cdot) \\
      \hline
      \ranking & x & y & \min(x,y) \\
      \transposition \ranking & y-1 & x+1 & \min(y-1,x+1)
    \end{array}
\end{equation*}

The objective function remains the same iff $\min(x,y) = \min(y-1,x+1)$, which happens iff $x-y=-1$, or equivalently iff
\begin{equation}\label{eq:criterion:derivation}
\DInv_\ranking(L_\lo,L_\hi) = \DInv_\ranking(\{\ell_\lo\},\{\ell_\hi\}),
\end{equation}
where we introduce the following short-hand: 
\begin{definition}[cross inversion difference, $\DInv_{\cdot}(\cdot,\cdot)$]
For a ranking $\ranking$ of a set $X$, and two subsets $A, B \subseteq X$,
\[ \DInv_\ranking(A,B) \doteq \XInv_\ranking(A,B) - \XInv_\ranking(B,A). \]
\end{definition}
We can split $L_\lo$ as $L_\lo = \{\ell_\lo\} \sqcup A = A_< \sqcup \{\ell_\lo\} \sqcup A_>$, where $A_<$ contains all leaves in $L_\lo$ that $\ranking$ ranks before $\ell_\lo$, and $A_>$ contains all the leaves in $L_\lo$ that $\ranking$ ranks after $\ell_\lo$. We similarly split $L_\hi$, as indicated in Figure~\ref{fig:sensitivity:binary}. We have that
\[ \DInv_\ranking(L_\lo,L_\hi) = \DInv_\ranking(\{\ell_\lo\},\{\ell_\hi\}) + \DInv_\ranking(\{\ell_\lo\},B) + \DInv_\ranking(A,\{\ell_\hi\}) + \DInv_\ranking(A,B).\]
Since the ranks of $\ell_\lo$ and $\ell_\hi$ under $\ranking$ are adjacent, we have that $\DInv_\ranking(\{\ell_\lo\},B) = |B_<| - |B_>|$ and $\DInv_\ranking(A,\{\ell_\hi\}) = |A_>| - |A_<|$. Plugging everything into \eqref{eq:criterion:derivation} we conclude:

\begin{proposition}\label{prop:sensitivity:criterion}
Let $T$ be a binary tree, $\ranking$ a ranking of the leaves of $T$, $\transposition$ an adjacent-rank transposition, $\ell_\lo$ and $\ell_\hi$ the two leaves affected by $\transposition$ under $\ranking$ such that $\ranking$ ranks $\ell_\lo$ before $\ell_\hi$. Referring to the notation in Figure~\ref{fig:sensitivity:binary}, we have that 
\begin{align}
\MInv(T,\ranking) = \MInv(T, \transposition \ranking)
&\Leftrightarrow \DInv_\ranking(L_\lo,L_\hi) = \DInv_\ranking(\{\ell_\lo\},\{\ell_\hi\}) \label{eq:criterion:binary:simple} \\
&\Leftrightarrow \DInv_\ranking(A,B) = |A_<| - |A_>| + |B_>| - |B_<|.
\label{eq:criterion:binary:simple:refined}
\end{align}
\end{proposition}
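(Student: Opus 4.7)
The plan is to assemble the derivation already laid out in the text preceding the statement and verify the two equivalences in turn. The first reduces to a short case analysis of a min-equation, and the second to a direct algebraic decomposition.

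For the first equivalence, I would start from Proposition~\ref{prop:change}, which ensures that the only term in the decomposition \eqref{eq:MInv:decomposition} that can be affected by $\transposition$ is $\MRInv(T_v, \cdot)$ for $v = \LCA(\ell_\lo, \ell_\hi)$. Hence $\MInv(T, \ranking) = \MInv(T, \transposition\ranking)$ is equivalent to $\MRInv(T_v, \ranking) = \MRInv(T_v, \transposition\ranking)$. Setting $x \doteq \XInv_\ranking(L_\lo, L_\hi)$ and $y \doteq \XInv_\ranking(L_\hi, L_\lo)$, the binary-tree formula \eqref{eq:contribution:binary:retake} turns this into the equation $\min(x, y) = \min(x+1, y-1)$. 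The shifts $+1$ and $-1$ arise because the only pair whose relative order flips under $\transposition$ is $(\ell_\lo, \ell_\hi)$, which becomes a cross inversion from $L_\lo$ to $L_\hi$ in place of one from $L_\hi$ to $L_\lo$. A short case split on the sign of $x - y$ shows that the min-equation holds iff $y = x + 1$, i.e., $\DInv_\ranking(L_\lo, L_\hi) = -1$. Evaluating directly, $\DInv_\ranking(\{\ell_\lo\}, \{\ell_\hi\}) = 0 - 1 = -1$ as well, since $\ranking(\ell_\lo) < \ranking(\ell_\hi)$. This yields \eqref{eq:criterion:binary:simple}.

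For the second equivalence, I would decompose $\DInv_\ranking(L_\lo, L_\hi)$ using the partitions $L_\lo = \{\ell_\lo\} \sqcup A$ and $L_\hi = \{\ell_\hi\} \sqcup B$. Since $\DInv$ is additive in each of its two set arguments, it expands as
\[ \DInv_\ranking(L_\lo, L_\hi) = \DInv_\ranking(\{\ell_\lo\}, \{\ell_\hi\}) + \DInv_\ranking(\{\ell_\lo\}, B) + \DInv_\ranking(A, \{\ell_\hi\}) + \DInv_\ranking(A, B). \]
Because the ranks of $\ell_\lo$ and $\ell_\hi$ are adjacent, no element of $A$ or $B$ has rank $r$ or $r+1$, and so the refined partitions $A = A_< \sqcup A_>$ and $B = B_< \sqcup B_>$ from Figure~\ref{fig:sensitivity:binary} let me read off the mixed terms directly from the definitions: $\DInv_\ranking(\{\ell_\lo\}, B) = |B_<| - |B_>|$ and $\DInv_\ranking(A, \{\ell_\hi\}) = |A_>| - |A_<|$. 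Substituting these into the expansion and cancelling $\DInv_\ranking(\{\ell_\lo\}, \{\ell_\hi\})$ from both sides of \eqref{eq:criterion:binary:simple} yields \eqref{eq:criterion:binary:simple:refined}.

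I do not foresee any real obstacle. The first equivalence turns on a two-case verification of a min-equation, and the second is a direct algebraic rearrangement enabled by the adjacency of the swapped ranks; the binary structure of $T_v$ is precisely what makes the local contribution collapse to a simple minimum of two cross-inversion counts.
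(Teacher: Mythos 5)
Your proof is correct and follows essentially the same route as the paper: reduce via Proposition~\ref{prop:change} to the equality of $\MRInv(T_v,\cdot)$, analyze the min-equation to get the criterion $\DInv_\ranking(L_\lo,L_\hi)=-1$, and then expand $\DInv_\ranking(L_\lo,L_\hi)$ over the partitions of $L_\lo$ and $L_\hi$ to obtain the refined form.
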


\subsection{Root sensitivity}
\label{sec:sensitivity:root}

Given a ranking $\ranking$ and an adjacent-rank transposition $\transposition$, we know by Proposition~\ref{prop:change} that at most one of the terms in the decomposition \eqref{eq:MInv:decomposition} of $\MInv(T,\ranking)$ is affected by the transposition, namely $\MRInv(T_v,\ranking)$ where $v$ is the lowest common ancestor of the affected leaves $\ell_\lo$ and $\ell_\hi$. It follows that we can write the average sensitivity of $\Pi_T \doteq \MInv(T,\cdot)$ as the following convex combination:
\begin{align}\label{eq:sensitivity:combination}
s(\Pi_T) & = (n-1) \cdot \Pr[ \MInv(T,\ranking) \ne \MInv(T,\transposition \ranking)] \nonumber \\
& = (n-1) \sum_v \Pr[ v = \LCA(\ell_\lo,\ell_\hi) ] \cdot \Pr[ \MRInv(T_v,\ranking) \ne \MRInv(T_v,\transposition \ranking) \, | \, 
v = \LCA(\ell_\lo,\ell_\hi)], 
\end{align}
where the probability is over a uniformly random choice of the ranking $\ranking$ and the adjacent-rank transposition $\transposition$, and $\ell_\lo$ and $\ell_\hi$ denote the affected leaves. The conditional probability on the right-hand side of \eqref{eq:sensitivity:combination} only depends on the subtree $T_v$. The ranking $\ranking$ of all leaves induces a ranking $\ranking'$ of the leaves of $T_v$ that is uniform under the conditioning. Similarly, the adjacent-rank transposition $\transposition$ for $\ranking$ induces an adjacent-rank transposition $\transposition'$ for $\ranking'$; the distribution of $\transposition'$ under the conditioning is independent of $\ranking'$ and uniform among all adjacent-rank transpositions such that the affected leaves live in subtrees of different children of $v$. Thus, the probability on the right-hand side of \eqref{eq:sensitivity:combination} coincides with the following notion for the subtree $T_v$.
\begin{definition}[root sensitivity]\label{def:root-sensitivity}
Let $T$ be a tree. The \emph{root sensitivity} of $T$ is the probability that $\MRInv(T_v,\ranking) \ne \MRInv(T_v,\transposition \ranking)$ when $\ranking$ is a uniform random  ranking of $\leafset(T)$, and $\transposition$ a uniform random adjacent transposition with the condition that the affected leaves are in subtrees of different children of the root of $T$.
\end{definition}
Note that the only nodes $v$ that need to be considered in the sum  on the right-hand side of \eqref{eq:sensitivity:combination} are those that can appear as the lowest common ancestor of two leaves, and such that $T_v$ is not freely arrangeable. In the case of binary trees, this means that we only need to consider nodes $v$ of degree 2 such that $T_v$ contains more than 2 leaves. In this section we prove a strong lower bound on the root sensitivity of such trees $T_v$.

Consider the binary tree $T$ with root $v$ in Figure~\ref{fig:sensitivity:binary}. The distribution underlying Definition~\ref{def:root-sensitivity} can be generated as follows: Pick a leaf on each side of the root $v$ uniformly at random, and let $\ranking$ be a ranking of the leaves of $T$ that is uniformly random on the condition that the selected leaves receive adjacent ranks; $\transposition$ then is the adjacent-rank transposition that swaps the two selected leaves. The root sensitivity of $T$ is the complement of the probability that the right-hand side of \eqref{eq:criterion:binary:simple:refined} holds under this distribution. Let us analyze the left-hand side of \eqref{eq:criterion:binary:simple:refined} further. As $A = A_< \sqcup A_>$ and $B = B_< \sqcup B_>$, we have that
\[ \DInv_\ranking(A,B) = \DInv_\ranking(A_<,B_<) + \DInv_\ranking(A_<,B_>) + \DInv_\ranking(A_>,B_<) + \DInv_\ranking(A_>,B_>). \]
By the defining properties of the sets involved (see Figure~\ref{fig:sensitivity:binary}), we know that $\DInv_\ranking(A_<,B_>) = - a_< \, b_>$ and $\DInv_\ranking(A_>,B_<) = a_> \, b_<$, where $a_< \doteq |A_<|$, $a_> \doteq |A_>|$, $b_< \doteq |B_<|$, and $b_> \doteq |B_>|$. Thus, we can rewrite criterion \eqref{eq:criterion:binary:simple:refined} as:
\begin{equation}\label{eq:criterion:expanded}
\DInv_\ranking(A_<,B_<) + \DInv_\ranking(A_>,B_>) = a_< \, b_> - a_> \, b_< + a_< - a_> + b_> - b_<.
\end{equation}
A critical observation that helps us to bound the probability of \eqref{eq:criterion:expanded} is that, conditioned on all four values $a_{\cdot}$ and $b_{\cdot}$, the right-hand side of \eqref{eq:criterion:expanded} is fixed, but the left-hand side still contains a lot of randomness. In fact, under the conditioning stated, the ranking that $\ranking$ induces on $A_< \sqcup B_<$ is still distributed uniformly at random, the same holds for the ranking that $\ranking$ induces on $A_> \sqcup B_>$, and both distributions are independent. This means that, under the same conditioning, the left-hand side of \eqref{eq:criterion:expanded} has the same distribution as the sum $X_{a_<,b_<} + X_{a_>,b_>}$ of two independent random variables of the following type:
\begin{definition}[cross inversion distribution, $\XInvD_{\cdot,\cdot}$]
For nonnegative integers $a$ and $b$, $\XInvD_{a,b}$ denotes the random variable $\XInv(A,B)$ that counts the number of cross inversions from $A$ to $B$, where $A$ is an array of length $a$, $B$ an array of length $b$, and the concatenation $AB$ is a uniformly random permutation of $[a+b]$.
\end{definition}
In Section~\ref{sec:cross-inversions} we establish the following upper bound on the probability that the number of cross inversions takes on any specific value.
\begin{lemma}\label{thm:gaussian:main}
There exists a constant $C$ such that for all integers $a,b\ge 1$ and $0\le k\le ab$, 
\begin{equation}\label{eq:gaussian:main}
\Pr[\XInvD_{a,b}=k]\le \frac{C}{\sqrt{ab(a+b)}}. 
\end{equation}
\end{lemma}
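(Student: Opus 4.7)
The plan is to bound $p_k \doteq \Pr[\XInvD_{a,b}=k]$ uniformly in $k$ via Fourier inversion applied to the probability generating function of $\XInvD_{a,b}$, which coincides with the normalized Gaussian polynomial $\phi(q) \doteq \binom{a+b}{a}_q / \binom{a+b}{a}$. By Fourier inversion on the unit circle $q = e^{i\theta}$,
\[
p_k = \frac{1}{2\pi}\int_{-\pi}^{\pi} e^{-ik\theta}\, \phi(e^{i\theta})\, d\theta,
\]
so $p_k \le \frac{1}{2\pi}\int_{-\pi}^{\pi} |\phi(e^{i\theta})|\, d\theta$ for every $k$, and it suffices to bound the latter integral by $O(1/\sqrt{ab(a+b)})$. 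The classical product representation $\binom{a+b}{a}_q = \prod_{i=1}^{a}(1 - q^{b+i})/(1 - q^i)$ yields the concrete formula
\[
|\phi(e^{i\theta})| = \prod_{i=1}^{a} \frac{i\, |\sin((b+i)\theta/2)|}{(b+i)\, |\sin(i\theta/2)|}.
\]

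I would split the integral at scale $|\theta| \approx 1/\sigma$, where $\sigma \doteq \sqrt{ab(a+b+1)/12}$ is the standard deviation of $\XInvD_{a,b}$ (computed either directly from the decomposition $\XInvD_{a,b} = \sum_{j=1}^{a}(\pi_j - j)$ using covariance formulas for order statistics of a uniform random $a$-subset of $[a+b]$, or by differentiating $\phi$ twice at $q=1$). In the central region $|\theta| \le C/\sigma$ for a suitable constant $C$, a second-order Taylor expansion of $\log |\phi(e^{i\theta})|$ around $\theta=0$ yields $|\phi(e^{i\theta})| \le \exp(-c\,\sigma^2\theta^2)$ for a universal $c>0$, so the central piece contributes $O(1/\sigma) = O(1/\sqrt{ab(a+b)})$ to the integral, matching the target bound.

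The main obstacle is the tail regime $|\theta| \ge C/\sigma$. The reason the existing results of Tak\'acs, Stanley--Zanello, and Melczer--Panova--Pemantle each cover only specific sub-regimes of $(a,b)$ is that a uniform tail bound requires delicate control over the above trigonometric product as $a$, $b$, and $\theta$ vary. I would partition $[C/\sigma,\pi]$ according to how $|\theta|$ compares with the natural scales $1/a$, $1/b$, and $1/(a+b)$, and in each sub-range argue either (i) that sufficiently many individual ratios $i\,|\sin((b+i)\theta/2)|/((b+i)\,|\sin(i\theta/2)|)$ are bounded away from $1$ to force overall decay, or (ii) that $\log|\phi(e^{i\theta})|$ is well approximated by a Riemann sum for an integral of $\log|\sin|$ that evaluates in closed form. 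Care is needed in the unbalanced regime $|a-b| \approx a+b$, where both $a$ and $b$ must be used together to get enough decay. Combining the central and tail estimates and taking the minimum with the trivial bound $p_k \le 1$ for tiny $ab(a+b)$ then yields the stated bound uniformly in $(a,b,k)$.
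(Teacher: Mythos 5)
Your overall framework — Fourier inversion on the probability generating function / characteristic function, the product formula over sine ratios, and a center/tail split of the integral — is the same as the paper's (Section~\ref{sec:cross-inversions}, Proposition~\ref{prop:characteristicfunction}, Lemma~\ref{lemma:center}, Lemma~\ref{lemma:far}). The center piece is handled similarly in both (a Gaussian-type bound on each factor of the product over a small neighborhood of $0$, integrated against the Gaussian tail). But your proposed treatment of the tail regime has a genuine gap, and it is exactly the point the paper's key technical lemma was built to handle.

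The issue is that the individual factors $\frac{i\,|\sin((b+i)\theta/2)|}{(b+i)\,|\sin(i\theta/2)|}$ are not bounded near $1$: when $i\theta/2$ is close to a multiple of $\pi$, the factor blows up, and away from the central region there are always such $i$. Your proposal~(i) — showing that "sufficiently many individual ratios are bounded away from $1$" — does not address what to do with the factors that are much \emph{larger} than $1$; the excess and the decay have to be paired up carefully, and it is not enough to observe that some factors are small. Your proposal~(ii) — approximating $\sum_i \log|\sin(i\theta/2)|$ by a Riemann sum for $\int \log|\sin|$ — fails for the same reason: $\log|\sin|$ has logarithmic singularities, the sum is not uniformly close to the integral, and the error is large precisely when some $i\theta/2$ lands near a pole. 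Neither device, as stated, controls the poles of the denominator. This pole problem is why the prior results cover only restricted regimes, and waving at it does not close the proof.

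The paper's resolution is a combinatorial \emph{pole reduction lemma} (Lemma~\ref{lemma:bijection}): for every $t$ there is a $t$-dependent bijection $\bij_t:\{1,\dots,a\}\to\{b+1,\dots,b+a\}$ with $\bigl|\tfrac{\sin(kt)}{k}\bigr|\ge\bigl|\tfrac{\sin(\bij_t(k)t)}{\bij_t(k)}\bigr|$ for every $k$. This re-matches each troublesome denominator factor with a numerator factor that vanishes at least as fast, so every matched ratio is $\le 1$. Applied selectively to a subset of the indices, it removes all poles from the peripheral integrand while leaving only a few well-behaved $\sin(kt)$ terms in the denominator. The bijection itself is nontrivial: it comes from an interval-enclosure argument (Lemma~\ref{lemma:intervalmatching}) proven via Hall's marriage theorem and a mediant inequality (Claim~\ref{claim:sumcondition}). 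Without something playing the role of this lemma, your tail estimate does not go through; the rest of your plan is sound but rests on this missing piece.
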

Using Lemma~\ref{thm:gaussian:main} we can establish an upper bound of the same form as the right-hand side of \eqref{eq:gaussian:main} for the probability that \eqref{eq:criterion:expanded} holds: For some constant $C'$
\begin{equation}\label{eq:sum:goal}
\Pr[ X_{a_<,b_<} + X_{a_>,b_>} = a_< \, b_> - a_> \, b_< + a_< - a_> + b_> - b_<] \le \frac{C'}{\sqrt{ab(a+b)}}, 
\end{equation}
where $a \doteq a_< + a_> \ge 1$ and $b \doteq b_< + b_> \ge 1$.
We consider several cases based on the relative sizes of $a_<$ vs $a_>$, and $b_<$ vs $b_>$.
\begin{itemize}
\item[(i)] In case both $a_< \ge \eta a$ and $b_< \ge \eta b$, the bound \eqref{eq:sum:goal} follows from \eqref{eq:gaussian:main} as long as $C' \ge C / \eta^{3/2}$. 
\item[(ii)] By switching the order in (i), the same holds true in case both $a_> \ge \eta a$ and $b_> \ge \eta b$.
\item[(iii)] In case $a_> \le \eta a$ and $b_< \le \eta b$, the left-hand side of \eqref{eq:criterion:expanded} is at most $2 \eta ab$, whereas the right-hand side is at least
\[ (1-\eta)^2 ab - \eta^2 ab + (1-\eta)a - \eta a + (1-\eta)b - \eta b 
= (1-2\eta)(ab+a+b). \]
As long as $2 \eta \le 1-2\eta$, or equivalently, $\eta \le 1/4$, this case cannot occur.
\item[(iv)] By switching the roles of $A$ and $B$ in (iii), the same holds true in case $b_> \le \eta b$ and $a_< \le \eta a$.
\end{itemize}
As long as $\eta \le 1/2$, it holds that either $a_< \ge \eta a$ or $a_> \ge \eta a$, and either $b_< \ge \eta b$ or $b_> \ge \eta b$. 
Distributing the “and” over the “or”, we obtain the four cases we considered, which are therefore exhaustive. We conclude that \eqref{eq:sum:goal} holds for $C' = 4^{3/2} C$ whenever $a \doteq |A| \ge 1$ and $b \doteq |B| \ge 1$. 

In the case where $a = 0$ and $b \ge 1$, the right-hand side of
\eqref{eq:criterion:binary:simple:refined} vanishes, as do $|A_<|$ and $|A_>|$, so \eqref{eq:criterion:binary:simple:refined} holds if and only if $|B_<| = |B_>|$, or equivalently, the leaf $\ell_\hi$ is ranked exactly in the middle of the leaf set $L_\hi$. As the ranking $\ranking$ is chosen uniformly at random, this happens with probability $1/(b+1)$ where $b \doteq |B| = |L_\hi|-1$. The case where $a \ge 1$ and $b=0$ is symmetric. The remaining case, $a=b=0$, is one we do not need to consider as the tree $T$ then only has two leaves. In all other cases we obtain a strong upper bound on the probability that \eqref{eq:criterion:binary:simple:refined} holds, and by complementation a strong lower bound on the root sensitivity. We capture the lower bound in the following single expression that holds for all cases under consideration.
\begin{lemma}\label{lemma:sensitivity:root}
There exists a constant $c$ such that for every binary tree $T$ with at leaves 3 leaves and a root of degree 2, the root sensitivity of $T$ is at least 
\begin{equation}\label{eq:root:bound}
1 - \frac{c}{\sqrt{n_1 n_2 (n_1+n_2)}},
\end{equation}
where $n_1$ and $n_2$ denote the number of leaves in the subtrees rooted by the two children of the root.
\end{lemma}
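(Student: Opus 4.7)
The plan is to prove Lemma~\ref{lemma:sensitivity:root} by bounding the complementary event — namely the probability that $\MInv(T,\ranking) = \MInv(T,\transposition\ranking)$ — under the distribution that defines root sensitivity, and showing that it is at most $c/\sqrt{n_1 n_2(n_1+n_2)}$. I would leverage the sensitivity criterion from Proposition~\ref{prop:sensitivity:criterion} together with the tight pointwise bound on the cross-inversion distribution from Lemma~\ref{thm:gaussian:main}, essentially consolidating the case analysis already developed in the paragraphs preceding the lemma.

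First I would recall from \eqref{eq:criterion:binary:simple:refined} that the event in question is equivalent to the identity \eqref{eq:criterion:expanded}. Fixing the four block sizes $a_<, a_>, b_<, b_>$ (which in particular determine $|L_\lo| = a+1$ and $|L_\hi| = b+1$, hence $\{n_1, n_2\}$), the right-hand side of \eqref{eq:criterion:expanded} becomes a constant, while the left-hand side is distributed as $X_{a_<,b_<} + X_{a_>,b_>}$ where the two summands are \emph{independent} samples of $\XInvD_{\cdot,\cdot}$; this independence follows because, once the block sizes are fixed, the ranking induced on $A_< \sqcup B_<$ and the ranking induced on $A_> \sqcup B_>$ are independent and each uniform. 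Applying Lemma~\ref{thm:gaussian:main} to whichever summand has both parameters $\Omega(a)$ and $\Omega(b)$, and conditioning on the other, yields a pointwise probability bound of $O(1/\sqrt{ab(a+b)})$, provided such a summand exists.

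To ensure such a summand exists, I would carry out the four-case split (i)--(iv) from the text with threshold $\eta = 1/4$: cases (iii) and (iv) are ruled out directly because the right-hand side of \eqref{eq:criterion:expanded} then dominates the left-hand side in absolute value, while cases (i) and (ii) hand us the desired good summand. This handles the regime $a, b \ge 1$. For the boundary case $a = 0$ (so $L_\lo$ is a single leaf), \eqref{eq:criterion:binary:simple:refined} collapses to $|B_<| = |B_>|$, which holds with probability $1/(b+1)$ for a uniform ranking; the case $b = 0$ is symmetric, and $a = b = 0$ is excluded by the hypothesis that $T$ has at least $3$ leaves.

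Finally, I would unify the three regimes into a single bound. In the main regime, $ab(a+b) = \Theta((a+1)(b+1)(a+b+2)) = \Theta(n_1 n_2(n_1+n_2))$, so the constant from Lemma~\ref{thm:gaussian:main} is inflated only by a factor of $O(1)$. In the degenerate regime, $1/(b+1)$ is at most a constant times $1/\sqrt{n_1 n_2(n_1+n_2)} = 1/\sqrt{(b+1)(b+2)}$. Taking $c$ to be the maximum of the constants arising in the three cases, complementing the probability bound, yields \eqref{eq:root:bound}. I do not anticipate a conceptual obstacle, since Lemma~\ref{thm:gaussian:main} does the real analytic work; the main bookkeeping difficulty is lining up constants uniformly across the main and degenerate regimes and verifying that the case split (i)--(iv) is exhaustive for $\eta = 1/4$.
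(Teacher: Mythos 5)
Your proposal is correct and follows essentially the same route as the paper: the paper's proof of Lemma~\ref{lemma:sensitivity:root} is precisely the analysis in the paragraphs preceding the lemma statement, reducing via Proposition~\ref{prop:sensitivity:criterion} to \eqref{eq:criterion:expanded}, conditioning on $(a_<,a_>,b_<,b_>)$, invoking Lemma~\ref{thm:gaussian:main} through the four-case split (i)--(iv) with $\eta = 1/4$, handling the $a=0$ or $b=0$ boundary separately via the uniform $1/(b+1)$ (resp.\ $1/(a+1)$) probability, and absorbing the $n_i = a+1$, $b+1$ shift into the constant. No gaps; this is the paper's argument.
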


\subsection{Average sensitivity}
\label{sec:sensitivity:average}

We are now ready to establish that, except for trivial cases, the average sensitivity of a binary tree is close to maximal. The trivial cases are those where the tree has at most two leaves, in which case the sensitivity is zero. 
\begin{lemma}
\label{lemma:sensitivity:binary}
The average sensitivity of $\Pi_T$ for binary trees $T$ with $n \ge 2$ leaves is $n-O(1)$.
\end{lemma}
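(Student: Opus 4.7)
The plan is to combine the decomposition~\eqref{eq:sensitivity:combination} with the root-sensitivity lower bound from \cref{lemma:sensitivity:root} and reduce the problem to a combinatorial estimate on the shape of $T$. Since the pair $(\ell_{\lo},\ell_{\hi})$ induced by a uniform ranking and uniform adjacent-rank transposition is uniform over ordered pairs of distinct leaves, one has $\Pr[v = \LCA(\ell_{\lo},\ell_{\hi})] = 2 n_1(v) n_2(v)/(n(n-1))$, where $n_1(v)$ and $n_2(v)$ denote the leaf counts of the two subtrees at the children of $v$. The inner conditional probability in~\eqref{eq:sensitivity:combination} is the root sensitivity of $T_v$, which vanishes on cherries ($\lvert\leafset(T_v)\rvert=2$, where $\MRInv(T_v,\cdot)\equiv 0$) and is at least $1 - c/\sqrt{n_1(v) n_2(v)(n_1(v)+n_2(v))}$ whenever $\lvert\leafset(T_v)\rvert \ge 3$ by \cref{lemma:sensitivity:root}. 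Substituting yields
\begin{equation*}
s(\Pi_T) \;\ge\; \frac{2}{n}\sum_{v\,:\,\lvert\leafset(T_v)\rvert\ge 3} n_1(v) n_2(v) \;-\; \frac{2c}{n}\sum_{v\,:\,\lvert\leafset(T_v)\rvert\ge 3} \sqrt{\frac{n_1(v) n_2(v)}{n_1(v)+n_2(v)}}.
\end{equation*}

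For the leading sum I would invoke the standard identity $\sum_v n_1(v) n_2(v) = \binom{n}{2}$ (each pair of leaves is counted once, at its LCA) and deduct the contribution of $1$ from each cherry; since the number of cherries is at most $n/2$, the first term is at least $(n-1)-1$. Combined with the trivial upper bound $s(\Pi_T)\le n-1$, it then suffices to show that the correction term is $O(1)$, i.e., $\sum_v \sqrt{n_1(v) n_2(v)/(n_1(v)+n_2(v))} = O(n)$.

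Using $n_1 n_2/(n_1+n_2) \le \min(n_1,n_2) =: s(v)$, the task reduces to $\sum_v \sqrt{s(v)} = O(n)$, which I plan to establish by an amortized heavy-path charge: distribute each internal node's contribution $\sqrt{s(v)}$ uniformly across the $s(v)$ leaves in its smaller-side subtree, at a cost of $1/\sqrt{s(v)}$ per such leaf. For a fixed leaf $\ell$, let $v_1,v_2,\ldots$ denote, from bottom to top, the ancestors of $\ell$ on whose smaller side $\ell$ lies. Since the smaller-side subtree of $v_{i+1}$ contains all of $T_{v_i}$, we have $s(v_{i+1}) \ge \lvert\leafset(T_{v_i})\rvert \ge 2\, s(v_i)$, hence $s(v_i) \ge 2^{i-1}$; the charges to $\ell$ therefore form a geometric series with ratio at most $1/\sqrt{2}$ and sum to $O(1)$. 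Summing over the $n$ leaves would then give the desired bound. The main obstacle is precisely this amortization: a naive Cauchy--Schwarz estimate via $\sum_v s(v) \le n \log_2 n$ yields only $n\sqrt{\log n}$, so the geometric decay along a leaf's smaller-side ancestors is essential for avoiding a spurious logarithmic loss.
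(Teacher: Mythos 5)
Your proposal is correct and shares the paper's overall framework — the decomposition~\eqref{eq:sensitivity:combination} combined with the root-sensitivity bound of Lemma~\ref{lemma:sensitivity:root} — but you establish the key estimate $\sum_v \sqrt{n_{v,1}n_{v,2}/(n_{v,1}+n_{v,2})} = O(n)$ (Claim~\ref{claim:sensitivity:sum} in the paper) by a genuinely different argument. The paper proceeds by structural induction on $T$ with the potential function $c'n - d'\sqrt{n}$, reducing the inductive step to a one-variable inequality and pinning down $d' = (\sqrt{2}+1)/2$ via calculus. You instead bound $\sqrt{n_{v,1}n_{v,2}/(n_{v,1}+n_{v,2})} \le \sqrt{s(v)}$ with $s(v)\doteq\min(n_{v,1},n_{v,2})$, and then amortize by a ``small-to-large'' charge: each node's $\sqrt{s(v)}$ is spread over the $s(v)$ leaves on its smaller side at unit cost $1/\sqrt{s(v)}$, and for any fixed leaf the sequence of smaller-side ancestors has $s(v_{i+1})\ge 2 s(v_i)$, so the per-leaf charge is a geometric series summing to $O(1)$. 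The charging argument is more elementary (no calculus, no inductive potential tuning), at the cost of not yielding a sharp constant; the paper's induction gives an explicit $d'$ but requires the analytic verification. Your handling of the leading term is a cosmetically different bookkeeping: you exclude cherries from the sum and use $\sum_v n_{v,1}n_{v,2} = \binom{n}{2}$ minus a cherry correction, whereas the paper includes cherries in the sum (their root sensitivity is $0$, which is still $\ge 1 - c/\sqrt{2}$ once $c\ge\sqrt{2}$), so its leading term is exactly $n-1$; both give $n-O(1)$.
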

\begin{proof}
We use the expression \eqref{eq:sensitivity:combination} for the average sensitivity of $\Pi_T$, where $v$ ranges over all nodes of degree 2 such that $T_v$ contains as least two leaves.
Consider a node $v$ of degree 2 such that $T_v$ contains $n_{v,1}$ leaves one one side and $n_{v,2}$ leaves on the other side, where $n_{v,1}+n_{v,2} \ge 3$. If we choose the ranking $\ranking$ and the adjacent-rank transposition $\transposition$ uniformly at random, each of the $\binom{n}{2}$ pairs of leaves are equally likely to be the affected pair. As there are $n_{v,1} \cdot n_{v,2}$ choices that result in $v$ as their lowest common ancestor, we have that
$\Pr[ v = \LCA(\ell_\lo,\ell_\hi) ] = \frac{2 n_{v,1} n_{v,2}}{n(n-1)}$. Combining this with the root sensitivity lower bound given by \eqref{eq:root:bound}, we have that
\begin{align*} 
s(\Pi_T) & \ge (n-1) \sum_v \frac{2 n_{v,1} n_{v,2}}{n(n-1)} \cdot \left( 1 - \frac{c}{\sqrt{n_{v,1} n_{v,2} (n_{v,1}+n_{v,2})}} \right)  \\
& = (n-1) - \frac{2c}{n} \sum_v \sqrt{\frac{n_{v,1} n_{v,2}}{n_{v,1}+n_{v,2}}}. 
\end{align*}
The following claim then completes the proof.
\end{proof}

\begin{claim}
    \label{claim:sensitivity:sum}
    There is a constant $c'$ such that for all binary trees $T$ with $n$ leaves 
    \begin{equation}
      \sum_v \sqrt{\frac{n_{v,1}n_{v,2}}{n_{v,1}+n_{v,2}}} \le c'n,
    \end{equation}
    where the sum ranges over all nodes $v$ of degree 2 such that $T_v$ contains at least 3 leaves.
  \end{claim}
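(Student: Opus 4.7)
The plan is first to reduce the claim to the simpler bound $\sum_v \sqrt{m_v} = O(n)$, where $m_v \doteq \min(n_{v,1}, n_{v,2})$. This reduction is immediate from the elementary inequality $\frac{n_{v,1} n_{v,2}}{n_{v,1}+n_{v,2}} \le \min(n_{v,1}, n_{v,2})$, and since the new sum only grows if we include the degree-$2$ nodes whose subtrees have exactly two leaves, I can take the sum over \emph{all} degree-$2$ internal nodes of $T$.

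Next I would recast the sum through a leaf-charging identity. Letting $L_v$ denote the leaf set of the lighter child's subtree at $v$, the identity $\sqrt{m_v} = \sum_{\ell \in L_v} 1/\sqrt{m_v}$ lets me swap the order of summation to obtain $\sum_v \sqrt{m_v} = \sum_\ell \sum_{v \in \mathcal{A}(\ell)} 1/\sqrt{m_v}$, where $\mathcal{A}(\ell)$ is the set of ancestors of $\ell$ at which $\ell$ lies in the lighter subtree. It then suffices to show the inner sum is $O(1)$ uniformly in $\ell$; summing over the $n$ leaves yields the desired $O(n)$ bound.

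The heart of the argument is to show that along the chain $v_1, v_2, \ldots, v_k$ of ancestors in $\mathcal{A}(\ell)$, ordered from closest to farthest, the light-side sizes $s_i \doteq m_{v_i}$ grow at least geometrically, i.e., $s_{i+1} \ge 2 s_i$. Given this, $s_i \ge 2^{i-1}$ and the inner sum is bounded by the convergent series $\sum_{i \ge 1} 2^{-(i-1)/2} = 1/(1-2^{-1/2})$. The key observation for the geometric growth is that both $v_i$ and the lighter child $u$ of $v_{i+1}$ are ancestors of $\ell$ strictly below $v_{i+1}$, so they are comparable in the ancestor order; since $\ell$ lies in $T_u$, the node $v_i$ must also lie in $T_u$, and hence $T_{v_i}\subseteq T_u$. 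Combining this containment with the fact that $T_{v_i}$ has at least $2s_i$ leaves (because $\ell$ sits in a lighter subtree of size $s_i$ inside it) gives $s_{i+1} = |\leafset(T_u)| \ge |\leafset(T_{v_i})| \ge 2 s_i$.

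The hard part will be verifying the containment step cleanly, but it is ultimately a straightforward consequence of how the ancestors of $\ell$ are nested in a binary tree. Everything else is elementary: an arithmetic inequality, a reindexing of summations, and a geometric series.
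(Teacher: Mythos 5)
Your proof is correct, and it takes a genuinely different route from the paper's. The paper strengthens the claim to $\sum_v \sqrt{\tfrac{n_{v,1}n_{v,2}}{n_{v,1}+n_{v,2}}} \le c'n - d'\sqrt{n}$ and proceeds by structural induction on $T$; the inductive step reduces to verifying that $F(\alpha) \doteq d'(\sqrt{\alpha}+\sqrt{1-\alpha}-1)-\sqrt{\alpha(1-\alpha)}$ is nonnegative on $[0,1]$, which requires a small calculus exercise and produces explicit constants $c' = d' = (\sqrt 2 + 1)/2$. Your argument instead first relaxes the summand via $\tfrac{n_{v,1}n_{v,2}}{n_{v,1}+n_{v,2}} \le m_v \doteq \min(n_{v,1},n_{v,2})$, then charges each $\sqrt{m_v}$ to the $m_v$ leaves of the lighter subtree, and swaps the order of summation; the crucial point is that along the chain of ancestors $v$ at which a fixed leaf $\ell$ sits on the lighter side, the light-side sizes grow at least geometrically, so each leaf's total charge is bounded by a convergent geometric series. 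The containment step you flag as the "hard part" in fact holds cleanly: the ancestors of $\ell$ form a chain, so $v_i$, being a proper descendant of $v_{i+1}$ on that chain, lies in the subtree of whichever child of $v_{i+1}$ contains $\ell$, and that child is the light one. Your route avoids the induction-with-potential bookkeeping and the calculus lemma, replacing them with a standard amortized (double-counting) argument; it is arguably more transparent, at the modest cost of a slightly larger constant ($1/(1-2^{-1/2}) \approx 3.41$ versus the paper's $(\sqrt 2+1)/2 \approx 1.21$), which is irrelevant for the claim as stated.
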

\begin{proof}[Proof of Claim~\ref{claim:sensitivity:sum}]
We use structural induction to prove a somewhat stronger claim, namely that
\begin{equation}\label{CS.tree-mininvs.sens-thm-conj.claim.eq-induction}
  \sum_v \sqrt{\frac{n_{v,1}n_{v,2}}{n_{v,1}+n_{v,2}}} \le c' n - d' \sqrt{n}
\end{equation}
for some constants $c'$ and $d'$ to be determined.
As the base case we consider binary trees $T$ with at most two leaves. In this case, the left-hand side of \eqref{CS.tree-mininvs.sens-thm-conj.claim.eq-induction} is zero and the right-hand side is non-negative provided $c' \ge d'$, so \eqref{CS.tree-mininvs.sens-thm-conj.claim.eq-induction} holds. 

For the inductive step, the case where the root of $T$ has degree 1 immediately follows from the inductive hypothesis for the subtree $T_u$ rooted by the child $u$ of the root of $T$. The remaining case is where the root of $T$ has degree 2. Let $u_1$ and $u_2$ be the two children for the root, $n_1 = \lvert\leafset(T_{u_1})\rvert$, and 
$n_2 = \lvert\leafset(T_{u_2})\rvert$. The sum on the left-hand side of \eqref{CS.tree-mininvs.sens-thm-conj.claim.eq-induction} has three contributions: $\sqrt{\frac{n_1 n_2}{n_1+n_2}}$ from the root, and the contributions from $T_{u_1}$ and $T_{u_2}$, to which we can individually apply the inductive hypothesis. This gives us an upper bound of
\[ \sqrt{\frac{n_1 n_2}{n_1+n_2}} + (c' n_1 - d' \sqrt{n_1}) + (c' n_2 - d' \sqrt{n_2})
=  \sqrt{\frac{n_1 n_2}{n_1+n_2}} + c' (n_1 + n_2) - d' (\sqrt{n_1} + \sqrt{n_2}), \]
which we want to upper bound by
\[ c' n - d' \sqrt{n} = c' (n_1 + n_2) - d' \sqrt{n_1+n_2}. \]
Writing $n_1 = \alpha n$ for some $\alpha \in [0,1]$ and rearranging terms, the upper bound holds if and only if
\[ \sqrt{\alpha(1-\alpha)} \le d' (\sqrt{\alpha} + \sqrt{1-\alpha} -1). \]
We claim that the upper bound holds for $d'= (\sqrt{2}+1)/2$. Let 
\[F(\alpha) \doteq d'(\sqrt{\alpha}+\sqrt{1-\alpha}-1)-\sqrt{\alpha(1-\alpha)}.\]
It suffices to show that $F(\alpha)\ge 0$. Since $F$ is continuous on $[0,1]$, it attains a minimum on $[0,1]$. On $(0,1)$, $F$ is differentiable. It can be verified that $F'$ has a unique zero in $(0,1/2)$, which needs to be a maximum as $F$ is increasing at $\alpha=0$. By the symmetry $F(\alpha)=F(1-\alpha)$, it follows that the minimum of $F$ on $[0,1]$ is attained at the midpoint $\alpha=1/2$ or at one of the endpoint $\alpha=0$ or $\alpha=1$. At all three points $F(\alpha)=0$. 
We conclude that \eqref{CS.tree-mininvs.sens-thm-conj.claim.eq-induction} holds for any constants $d' \ge (\sqrt{2}+1)/2$ and $c' \ge d'$.

\end{proof}

\section{Sensitivity Approach for Bounded Error}
\label{sec:random}

In this section, we apply the sensitivity approach to obtain lower bounds on the query complexity of problems in the comparison-query model against randomized algorithms with bounded error. We derive a generic result that query lower bounds against deterministic algorithms that are based on the Sensitivity Lemma, also hold against bounded-error randomized algorithms with a small loss in strength. The approach works particularly well when we have linear lower bounds on the average sensitivity, in which case there is only a constant-factor loss in the strength of the query lower bound. Among others, this applies to the $\Omega(n \log n)$ query lower bound for inversion minimization on trees of bounded degree.

\paragraph{Generic lower bound.}

Our approach is based on Yao's minimax principle \cite{Yao77}, which lower bounds worst-case complexity against randomized algorithms with bounded error by average-case complexity against deterministic algorithms with bounded distributional error. We view a deterministic algorithm with small distributional error for a problem $\Pi$ as an exact deterministic algorithm for a slightly modified problem $\Pi'$. The idea is to then apply the sensitivity approach to $\Pi'$, and capitalize on the closeness of the average sensitivities of $\Pi$ and $\Pi'$ to obtain a lower bound in terms of the sensitivity of $\Pi$. By using the Sensitivity Lemma as a black-box, the approach yields a lower bound on the query complexity of bounded-error algorithms that is worst-case with respect to the input and with respect to the randomness, i.e., the lower bound holds for some input and some computation path on that input. By delving into the proof of the Sensitivity Lemma, we are able to obtain a lower bound that is worst-case with respect to the input but average-case with respect to the randomness, i.e., the lower bound holds for the expected number of queries on some input.\footnote{In fact, the approach yields a lower bound that is average-case with respect to the input (chosen uniformly at random) as well as the randomness. This follows because the proof of Yao's minimax principle allows us to replace the left-hand side of \eqref{eq:yao} by the average of the expected number of queries with respect to the distribution $\dist$, which we pick to be uniform in our application of the principle.}

We first define the notions of randomized complexity and distributional complexity.

\begin{definition}[randomized query complexity, $\RQ_{\cdot}(\cdot)$, and distributional query complexity, $\DistQ_\cdot(\cdot, \cdot)$]
Let $\Pi$ be a problem in the comparison-query model and $\eps \in [0,1]$.

A randomized algorithm $R$ for $\Pi$ is said to have error $\eps$ if on every input $\ranking$, the algorithm outputs $\Pi(\ranking)$ with probability at least $1-\eps$. The query complexity of $R$ is the maximum, over all inputs $\ranking$, of the expected number of queries that $R$ makes on input $\ranking$. The \emph{$\eps$-error randomized query complexity} of $\Pi$, denoted $\RQ_{\eps}(\Pi)$, is the minimum query complexity of $R$ over all $\eps$-error randomized algorithms $R$ for $\Pi$.

Let $\dist$ be a probability distribution on the inputs $\ranking$. A deterministic algorithm $A$ for $\Pi$ has error $\eps$ with respect to $\dist$ if the probability that $A(\ranking)=\Pi(\ranking)$ is at least $1-\eps$ where the input $\ranking$ is chosen according to $\dist$. The query complexity of $A$ with respect to $\dist$ is the expected number of queries that $A$ makes on input $\ranking$ when $\ranking$ is chosen according to $\dist$. The \emph{$\eps$-error distributional query complexity} of $\Pi$ with respect to $\dist$, denoted $\DistQ_{\eps}(\Pi,\dist)$, is the minimum query complexity of $A$ with respect to $\dist$ over all deterministic algorithms $A$ for $\Pi$ that have error $\eps$ with respect to $\dist$.
\end{definition}

The relationship between randomized complexity and distributional complexity is described by Yao's principle. 

\begin{lemma}[Yao's minimax principle \cite{Yao77}]
	\label{lemma:yaos-principle}
	Let $\Pi$ be a problem in the comparison-query model, $\eps \in [0,1/2]$, and $\dist$ a distribution on the inputs $\ranking$.
	\begin{equation}\label{eq:yao}
	\RQ_{\eps}(\Pi) \ge \frac{1}{2}\DistQ_{2\eps}(\Pi, \dist). 
	\end{equation}
\end{lemma}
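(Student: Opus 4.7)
The plan is to interpret any $\eps$-error randomized algorithm $R$ as a probability distribution $\mathcal{R}$ over deterministic algorithms $\{A_r\}$, and to extract via an averaging argument a single $A_{r^*}$ that is simultaneously good in both the error (under $\dist$) and the expected query count (under $\dist$), losing only a factor of two in each. Fix $T \doteq \RQ_\eps(\Pi)$ and an $R$ achieving (or approaching) $T$. For each realization $r$ of the randomness, let $e(r) \doteq \Pr_{\ranking \sim \dist}[A_r(\ranking) \ne \Pi(\ranking)]$ and $q(r) \doteq \Expect_{\ranking \sim \dist}[\text{number of queries made by }A_r\text{ on }\ranking]$.

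Next I would swap expectations to bound the two averages over $\mathcal{R}$. Since $R$ has error at most $\eps$ on every individual input, $\Pr_r[A_r(\ranking) \ne \Pi(\ranking)] \le \eps$ for each $\ranking$; averaging over $\ranking \sim \dist$ gives $\Expect_r[e(r)] \le \eps$. Similarly, the definition of randomized query complexity gives $\Expect_r[\text{queries of }A_r(\ranking)] \le T$ for each $\ranking$, and averaging yields $\Expect_r[q(r)] \le T$. To find a single $r^*$ meeting both targets at once, I would use a weighted averaging trick: the nonnegative quantity $f(r) \doteq e(r)/(2\eps) + q(r)/(2T)$ satisfies $\Expect_r[f(r)] \le \tfrac{1}{2} + \tfrac{1}{2} = 1$, so some $r^*$ has $f(r^*) \le 1$, forcing both $e(r^*) \le 2\eps$ and $q(r^*) \le 2T$ since the two summands are individually nonnegative. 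The deterministic algorithm $A_{r^*}$ then has error at most $2\eps$ under $\dist$ and expected query count at most $2T$ under $\dist$, witnessing $\DistQ_{2\eps}(\Pi, \dist) \le 2\RQ_\eps(\Pi)$, which is \eqref{eq:yao} rearranged.

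The main obstacle is precisely the simultaneity of the two conditions: two independent Markov bounds each leave a $1/2$-probability of failure, and a naive union bound over them is vacuous; the weighted averaging above circumvents this cleanly by collapsing the two normalized deviations into one scalar with expectation at most $1$. Two minor issues remain. First, when $T$ is realized only as an infimum rather than a minimum, I would apply the argument to a sequence of $\eps$-error algorithms approaching $T$ and pass to the limit. Second, in the edge case $\eps = 0$ the weight $1/(2\eps)$ is undefined, but then every $A_r$ in the support of $R$ must be exact on every input, so a direct Markov bound on $q(r)$ alone yields a deterministic exact algorithm of expected cost at most $2T$ under $\dist$.
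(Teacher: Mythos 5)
The paper cites this lemma as Yao's minimax principle \cite{Yao77} without supplying a proof, so there is no in-paper argument to compare against. Your derivation is correct and is a clean version of the standard argument: you view the randomized algorithm as a mixture $\{A_r\}$ of deterministic ones, swap the order of averaging to get $\Expect_r[e(r)] \le \eps$ and $\Expect_r[q(r)] \le T$, and then use the scalar score $e(r)/(2\eps) + q(r)/(2T)$ (expectation at most $1$, both summands nonnegative) to extract a single $r^*$ with $e(r^*) \le 2\eps$ and $q(r^*) \le 2T$ simultaneously. That weighted-averaging step is the right way to beat the vacuous union of two Markov bounds; an equivalent and perhaps more commonly seen variant applies Markov to $e(r)$ alone to retain probability mass $\ge 1/2$, then bounds $\Expect_r[q(r) \mid e(r) \le 2\eps]$ by $\Expect_r[q(r)]/\Pr_r[e(r)\le 2\eps] \le 2T$ — same constants, same conclusion. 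Your edge-case handling ($\eps=0$ and the infimum) is appropriate; for completeness you could also note the degenerate case $T=0$, which is trivial since every $A_r$ then makes no queries.
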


We now prove lower bounds on the distributional query complexity, and thus on randomized query complexity, of comparison-query problems $\Pi$ based on average sensitivity bounds. For these bounds, we always set $\dist$ to be the uniform distribution, the distribution underlying the notion of average sensitivity. 

We start by studying average-case query complexity, i.e., zero-error distributional query complexity, and its relationship to the average sensitivity. We follow a strategy similar to the one in the proof of the Sensitivity Lemma. Whereas a bound on deterministic complexity $\DQ$ follows purely from the number of execution traces $\traces$, here, the execution traces are weighted by their depth and their probability of occurring.

Recall that $g$ in the statement of the Strong Sensitivity Lemma denotes any convex function $g: [1,\infty) \to \mathbb{R}$ with $g(x) = x!$ for $x \in [n]$; for deriving the Sensitivity Lemma from the Strong Sensitivity Lemma we also need $g$ to be nondecreasing. One such function is $g(x) = \Gamma(x+1)$. To prove a lower bound on the zero-error distributional complexity, we need the function $g$ to be not only convex, but \emph{log-convex}, i.e., $\log_2 g(x)$ needs to be convex. The function $g(x) = \Gamma(x+1)$ satisfies this constraint as well.

\begin{proposition}
	\label{claim:sensitivity:average}
Let $\Pi$ be a problem in the comparison-query model with $n$ items, $\dist$ the uniform distribution on the inputs $\ranking$, and $g: [1,\infty) \to \mathbb{R}$ a nondecreasing log-convex function with $g(x) = x!$ for $x \in [n]$.
\[\DistQ_0(\Pi,\dist)\ge \log_2 (g(s(\Pi)+1)/n) \]
\end{proposition}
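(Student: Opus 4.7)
The plan is to upgrade the proof of the Strong Sensitivity Lemma (Lemma~\ref{lemma:sensitivity:strong}) into a source-coding argument that bounds the expected query count of any zero-error deterministic algorithm against the uniform distribution. Fix such an algorithm $A$, and let $R_1,\dots,R_k$ be its execution-trace equivalence classes with corresponding query counts $q_1,\dots,q_k$. Under uniform $\dist$, the quantity $\DistQ_0(\Pi,\dist)$ coincides with the expected number of queries $\bar q \doteq \frac{1}{n!}\sum_i |R_i|\, q_i$, so the task is to show $\bar q \ge \log_2(g(s(\Pi)+1)/n)$.

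First I would construct a Kraft-satisfying global length function on rankings by combining the execution-trace encoding with an inner encoding inside each $R_i$. Define the inner length
\[
  \ell_{\mathrm{in}}(\ranking) \doteq \log_2\!\bigl(n\cdot n!/(d(\ranking)+1)!\bigr),
\]
where $d(\ranking) \doteq \deg_{\overline H}(\ranking)$ refers to the bichromatic subgraph under the trace coloring, as in Lemma~\ref{lemma:sensitivity:strong}. Claim~\ref{claim:encoding}, which bounds the number of degree-$d$ rankings in any $R_i$ by $n!/(d+1)!$, translates directly into $\sum_{\ranking\in R_i} 2^{-\ell_{\mathrm{in}}(\ranking)} \le 1$. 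Combined with the decision-tree Kraft inequality $\sum_i 2^{-q_i} \le 1$, this yields
\[
  \sum_\ranking 2^{-(q_{i(\ranking)}+\ell_{\mathrm{in}}(\ranking))} \;=\; \sum_i 2^{-q_i} \sum_{\ranking \in R_i} 2^{-\ell_{\mathrm{in}}(\ranking)} \;\le\; 1.
\]

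Next I would invoke Shannon's source-coding lower bound (entropy lower bound for any Kraft-satisfying length function) with the uniform distribution on rankings, whose entropy is $\log_2 n!$:
\[
  \bar q + \Expect_\ranking[\ell_{\mathrm{in}}(\ranking)] \;\ge\; \log_2 n!.
\]
Expanding $\ell_{\mathrm{in}}(\ranking) = \log_2(n\cdot n!) - \log_2((d(\ranking)+1)!)$ and using $g(m) = m!$ for integers $m\in[n]$, this rearranges to
\[
  \bar q \;\ge\; -\log_2 n + \Expect_\ranking[\log_2 g(d(\ranking)+1)].
\]

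Finally I would close with Jensen's inequality applied to the convex function $\log_2 g$ (this is precisely where the log-convexity of $g$ is essential, rather than ordinary convexity as in Lemma~\ref{lemma:sensitivity:strong}), combined with the monotonicity of $g$ and the inequality $\Expect_\ranking[d(\ranking)] = \avgdeg(\overline H) \ge s(\Pi)$ coming from the fact that the trace coloring refines the $\Pi$-coloring:
\[
  \Expect_\ranking[\log_2 g(d(\ranking)+1)] \;\ge\; \log_2 g(\avgdeg(\overline H)+1) \;\ge\; \log_2 g(s(\Pi)+1).
\]
Substituting gives $\bar q \ge \log_2(g(s(\Pi)+1)/n)$, as claimed. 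The main conceptual step is recognizing that the weights $f(d) = (d+1)!/(n\cdot n!)$ already used in the Strong Sensitivity Lemma are exactly $2^{-\ell_{\mathrm{in}}}$, turning the within-trace weight constraint into a Kraft bound that composes cleanly with the decision-tree Kraft bound; log-convexity is then precisely what is needed to push the ensuing Jensen step in the correct direction.
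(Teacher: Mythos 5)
Your proof is correct and follows essentially the same route as the paper: Claim~\ref{claim:encoding} supplies the within-trace Kraft bound, which is composed with the decision-tree Kraft bound, and log-convexity of $g$ drives the Jensen step. Packaging the normalization-by-$n!$ and concavity-of-log steps as a single invocation of Shannon's source-coding lower bound is a clean but cosmetic reorganization, since that bound is itself proven by precisely the Jensen argument the paper writes out explicitly.
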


\begin{proof}
	Let $k$ be the number of distinct execution traces of a deterministic algorithm $A$ for $\Pi$, and let $R_1,\dots,R_k$ denote the corresponding sets of rankings. Interpreting $A$ as a binary decision tree, let $\depth(R_i)$ be the depth of the execution trace corresponding to $R_i$. By Kraft's inequality,
	\[\sum_{i=1}^{k}2^{-\depth(R_i)}\le 1. \]
	Let $f(x)=\frac{1}{n}\frac{g(x+1)}{n!}$ and define the weight function $w(\ranking)=f(\deg_{\overline{H}}(\ranking))$, where $\overline{H}$ refers to the notation of the Strong Sensitivity Lemma: $\overline{H}$ denotes the subgraph of the full permutahedron that only consists of the bichromatic edges when the vertices are colored with their execution trace under $A$. By Claim~\ref{claim:encoding}, the sum of the weights of all  rankings $\ranking$ in $R_i$ is at most $1$. Therefore,
	\[\sum_{\ranking} 2^{-\depth(\ranking)} w(\ranking) \le 1. \]
	Dividing both sides by $n!$ and taking the logarithm of both sides, we get that
	\begin{equation}
		\label{eq:average-case-weight}
		\log_2 \Expect\left[2^{-\depth(\ranking)}w(\ranking)\right] \le \log_2(1/n!),
	\end{equation}
	where the expectation is with respect to a uniform distribution over the inputs $\ranking$.
	By Jensen's inequality, since $\log$ is concave, we get 
	\[\log_2 \Expect\left[2^{-\depth(\ranking)}w(\ranking)\right] \ge \Expect\left[\log_2 \left(2^{-\depth(\ranking)}w(\ranking) \right)\right]= \Expect[-\depth(\ranking)]+\Expect[\log_2 w(\ranking)], \]
	which, in combination with (\ref{eq:average-case-weight}), implies
	\[\Expect[\log_2 w(\ranking)] \le \Expect[\depth(\ranking)]+\log_2(1/n!). \]
	Note that since $g$ is log-convex, so is $f$. By applying Jensen's inequality again,  
	\[\Expect[\log_2 w(\ranking)]=\Expect[\log_2 f(\deg_{\overline{H}}(\ranking))]\ge \log_2 f(\Expect[\deg_{\overline{H}}(\ranking)]),\]
	implying 
	\[ \log_2\left(\frac{1}{n}\cdot \frac{g(\Expect[\deg_{\overline{H}}(\ranking)]+1)}{n!}\right) \le \Expect[\depth(\ranking)]+\log_2(1/n!), \]
	or equivalently,
	\[ \Expect[\depth(\ranking)] \ge \log_2(g(\Expect[\deg_{\overline{H}}(\ranking)]+1)/n). \]
	The result follows since $A$ is an arbitrary deterministic algorithm for $\Pi$, $\Expect[\depth(\ranking)]$ equals the query complexity of $A$ with respect to the uniform distribution $\dist$, $\Expect[\deg_{\overline{H}}(\ranking)] \ge \Expect[\deg_{\overline{G}(\Pi)}(\ranking)] = s(\Pi)$, and $g$ is nondecreasing. 
\end{proof}

Proposition~\ref{claim:sensitivity:average} allows us to prove a lower bound on the $\eps$-error distributional query complexity of $\Pi$ with respect to the uniform distribution. In order to do so, we view a deterministic algorithm with distributional error $\eps$ for $\Pi$ as an exact deterministic algorithm for a modified problem $\Pi'$, apply Proposition~\ref{claim:sensitivity:average}, and lower bound the sensitivity of $\Pi'$ in terms of the sensitivity of $\Pi$.

\begin{proposition}
	\label{claim:sensitivity:partial}
	Let $\Pi$ be a problem in the comparison-query model with $n$ items, $\dist$ the uniform distribution on the inputs $\ranking$, $\eps\in [0,1]$, and $g: [1,\infty) \to \mathbb{R}$ a nondecreasing log-convex function with $g(x) = x!$ for $x \in [n]$.
	\begin{equation}\label{eq:prop:sensitivity:randomized}
	\DistQ_{\eps}(\Pi,\dist)\ge \log_2 \left(\frac{g(s(\Pi)+1-2(n-1)\eps)}{n}\right)
	\end{equation}
\end{proposition}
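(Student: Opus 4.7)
The plan is to reduce to Proposition~\ref{claim:sensitivity:average} by viewing any deterministic algorithm $A$ with $\eps$-error with respect to $\dist$ as an \emph{exact} algorithm for a closely related problem $\Pi'$. Given such an algorithm $A$, define $\Pi'(\ranking) \doteq A(\ranking)$ for every ranking $\ranking$, so that $A$ is exact for $\Pi'$ and the query complexity of $A$ with respect to $\dist$ equals its query complexity as an exact algorithm for $\Pi'$. The hypothesis that $A$ has $\eps$-error for $\Pi$ with respect to the uniform distribution $\dist$ says precisely that $\Pi'$ and $\Pi$ disagree on a set $D$ of rankings with $|D| \le \eps \cdot n!$. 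Applying Proposition~\ref{claim:sensitivity:average} directly to $\Pi'$ yields
\[
  \text{(query complexity of } A \text{ w.r.t.\ } \dist\text{)} \;\ge\; \log_2\!\bigl(g(s(\Pi')+1)/n\bigr).
\]

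The crux is thus to lower bound $s(\Pi')$ in terms of $s(\Pi)$. Recall that average sensitivity can be expressed as $s(\Pi)=(n-1)\cdot p(\Pi)$, where $p(\Pi)$ is the probability that a uniformly random edge of the full permutahedron is bichromatic under $\Pi$. Any edge whose bichromaticity status differs between $\Pi$ and $\Pi'$ must have at least one endpoint in the bad set $D$. Since every vertex of the permutahedron has degree $n-1$ and the total number of edges is $n!(n-1)/2$, the fraction of edges incident to $D$ is at most $|D|(n-1)/\bigl(n!(n-1)/2\bigr) \le 2\eps$. Therefore $|p(\Pi)-p(\Pi')|\le 2\eps$, which gives $s(\Pi') \ge s(\Pi)-2(n-1)\eps$.

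Combining the two observations with the assumption that $g$ is nondecreasing yields
\[
  \text{(query complexity of } A \text{ w.r.t.\ } \dist\text{)}
  \;\ge\; \log_2\!\left(\frac{g\bigl(s(\Pi)+1-2(n-1)\eps\bigr)}{n}\right),
\]
and taking the minimum over all admissible $A$ delivers the bound~\eqref{eq:prop:sensitivity:randomized}. The only potential obstacle is verifying that Proposition~\ref{claim:sensitivity:average} can genuinely be applied to $\Pi'$, i.e., that $\Pi'$ is a well-defined total function in the comparison-query model; this is immediate because $A$ produces a deterministic output on every input $\ranking$. Note that the argument is agnostic to the structure of the error set $D$ and loses only a $2(n-1)\eps$ additive term in the sensitivity parameter, which, when $s(\Pi)=\Theta(n)$, translates to a constant-factor loss in the resulting query lower bound after combining with Lemma~\ref{lemma:yaos-principle}.
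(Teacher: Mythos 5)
Your proof is correct and follows essentially the same approach as the paper: view the $\eps$-error algorithm $A$ as an exact algorithm for the modified problem $\Pi' = A$, apply Proposition~\ref{claim:sensitivity:average} to $\Pi'$, and lower bound $s(\Pi')$ by $s(\Pi) - 2(n-1)\eps$ via the observation that a uniformly random permutahedron edge changes bichromaticity status only if one of its endpoints is misclassified. The only cosmetic difference is that you count edges incident to the bad set $D$ directly, while the paper uses a union bound over the two (uniform) endpoint marginals; both give the identical $2\eps$ factor.
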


\begin{proof}
	Consider any algorithm $A$ with error $\eps$ for $\Pi$, or in other words, $\Pr[A(\ranking)\neq\Pi(\ranking)] \le \eps$. Let $\Pi_A$ be the problem of determining the output of $A$. We prove that 
	\[ s(\Pi_A)\ge s(\Pi)-2(n-1)\eps,\]
	which implies the desired result by Proposition~\ref{claim:sensitivity:average}, since $A$ is a deterministic algorithm for $\Pi_A$ and $g$ is nondecreasing.
	
	Let $G$ denote the full permutahedron graph for $n$ items. We use the fact that $s(\Pi)=(n-1)\cdot \Pr_{e\in G}[e \in G(\Pi)]$, and similarly, $s(\Pi_A)=(n-1)\cdot \Pr_{e\in G}[e \in G(\Pi_A)]$, where all the underlying distributions are uniform. Suppose the endpoints of $e$ are $\ranking_1$ and $\ranking_2$. Note that if $e \in G$ is picked uniformly at random, then the marginal distributions of both $\ranking_1$ and $\ranking_2$ are also uniform. If $A(\ranking_1)=\Pi(\ranking_1)$, $A(\ranking_2)=\Pi(\ranking_2)$, and $e\in G(\Pi_A)$, then $e\in G(\Pi)$, as well. By a union bound, the probability that $A(\ranking_1) \neq \Pi(\ranking_1)$ or $A(\ranking_2) \neq \Pi(\ranking_2)$ is at most $2\eps$. 
	\begin{align*}
		\Pr_{e\in G}[e\in G(\Pi_A)] &\ge \Pr_{e\in G}[e\in G(\Pi)] - \Pr_{e\in G}[A(\ranking_1)\neq \Pi(\ranking_1)\text{ or }A(\ranking_2)\neq \Pi(\ranking_2)]\\
		&\ge \Pr_{e\in G}[e\in G(\Pi)] - 2\eps.
	\end{align*}
	Multiplying both sides by $n-1$ gives $s(\Pi_A)\ge s(\Pi)-2(n-1)\eps$.
\end{proof}

Since $s(\Pi) \le n-1$, Proposition~\ref{claim:sensitivity:partial} only yields nontrivial lower bounds for small $\eps$. In order to establish lower bounds for the standard $\eps=1/3$, we first reduce the error using standard techniques. Doing so such that the argument of $g$ on the right-hand side of \eqref{eq:prop:sensitivity:randomized} remains $\Omega(s(\Pi))$, and picking $g(x)=\Gamma(x+1)$, we conclude:

\begin{lemma}[Bounded-Error Sensitivity Lemma]
\label{lemma:randomized-sensitivity}
For any problem $\Pi$ in the comparison-query model with $n$ items, 
\[\RQ_{1/3}(\Pi) = \Omega \left(\frac{s \log s}{\log(2n/s)} \right), \]
where $s \doteq s(\Pi)$.
\end{lemma}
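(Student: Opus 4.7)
My plan is to follow the strategy sketched immediately before the lemma: amplify the success probability of a $1/3$-error algorithm to land in the low-error regime where Proposition~\ref{claim:sensitivity:partial} is nontrivial, apply that proposition to the uniform distribution, and transfer the resulting distributional lower bound back to randomized complexity via Yao's minimax principle (Lemma~\ref{lemma:yaos-principle}).

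Concretely, starting from any randomized algorithm for $\Pi$ with error $1/3$ and expected query complexity $q \doteq \RQ_{1/3}(\Pi)$ on each input, I run it $k$ times independently and output the majority vote. By a standard Chernoff bound the resulting error is at most $\eps' \le 2^{-ck}$ for some absolute constant $c > 0$, and by linearity of expectation the expected number of queries on each input is at most $kq$, so $\RQ_{\eps'}(\Pi) \le k \cdot \RQ_{1/3}(\Pi)$. Applying Lemma~\ref{lemma:yaos-principle} with $\dist$ the uniform distribution yields
\[ k \cdot \RQ_{1/3}(\Pi) \;\ge\; \RQ_{\eps'}(\Pi) \;\ge\; \tfrac{1}{2}\, \DistQ_{2\eps'}(\Pi, \dist). \]

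I then invoke Proposition~\ref{claim:sensitivity:partial} with $g(x) = \Gamma(x+1)$, which is nondecreasing and log-convex, to obtain
\[ \DistQ_{2\eps'}(\Pi, \dist) \;\ge\; \log_2\!\bigl(\Gamma(s + 2 - 4(n-1)\eps')/n\bigr). \]
The key parameter choice is to pick $\eps'$ just small enough that the sensitivity term stays $\Theta(s)$, for instance $\eps' \doteq s/(8n)$, which forces $4(n-1)\eps' \le s/2$ and keeps the argument of $\Gamma$ at least $s/2 + 2$. By Stirling, $\log_2(\Gamma(s/2+2)/n) = \Omega(s \log s) - O(\log n)$; this dominates the $O(\log n)$ term in the only regime where the target bound is nontrivial, since $\frac{s\log s}{\log(2n/s)} = O(1)$ whenever $s\log s = O(\log n)$. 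With this choice of $\eps'$, the amplification uses $k = O(\log(1/\eps')) = O(\log(2n/s))$ repetitions, where the factor of $2$ inside the logarithm is included only to keep the expression well defined and positive when $s$ approaches $n$.

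Combining the three displays gives
\[ \RQ_{1/3}(\Pi) \;\ge\; \frac{1}{2k}\, \DistQ_{2\eps'}(\Pi, \dist) \;=\; \Omega\!\left(\frac{s \log s}{\log(2n/s)}\right), \]
as claimed. I do not expect a serious obstacle here; the main subtlety is just the parameter bookkeeping: verifying that the expected (not worst-case) query complexity scales linearly under independent repetition, and picking $\eps'$ small enough to tame the $-2(n-1)\eps$ loss in Proposition~\ref{claim:sensitivity:partial} without inflating $k$ beyond $O(\log(2n/s))$. The boundary regimes ($s$ constant, or $s$ near $n$) are painlessly absorbed by the $\log(2n/s)$ denominator.
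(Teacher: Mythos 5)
Your proposal is correct and follows essentially the same route as the paper: amplify error via majority vote with Chernoff, apply Yao's principle to the uniform distribution, invoke Proposition~\ref{claim:sensitivity:partial} with $g(x)=\Gamma(x+1)$, and set $\eps' = s/(8n)$ (exactly the paper's choice $4n\eps = s/2$) so that the Stirling estimate of $\log_2(\Gamma(s/2+2)/n)$ divided by $k = O(\log(2n/s))$ yields the claimed bound. The parameter bookkeeping, including the linearity-of-expectation argument for repeated runs and the observation that small-$s$ regimes are absorbed by the denominator, matches the paper's reasoning.
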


\begin{proof}
By taking the majority vote of multiple independent runs and a standard analysis, e.g, based on Chernoff bounds, we have that $\RQ_\eps(\Pi) = O(\log(1/\eps)) \RQ_{1/3}(\Pi)$ for any $\eps \le 1/3$. Combining this with Lemma~\ref{lemma:yaos-principle} and Proposition~\ref{claim:sensitivity:partial}, we have:
\[\RQ_{1/3}(\Pi) = \Omega \left( \frac{\RQ_{\eps}(\Pi)}{\log(1/\eps)} \right) = \Omega \left(  \frac{\DistQ_{2\eps}(\Pi)}{\log(1/\eps)} \right) =  \Omega \left( \frac{\log(g(s+1-4(n-1)\eps)/n)}{\log(1/\eps)}\right). \]
Setting $\eps$ such that $4n\eps = s/2$ yields
\[\RQ_{1/3}(\Pi) =  \Omega \left( \frac{\log(g(s/2+1)/n)}{\log(8n/s)}\right).
\]
Picking $g(x)=\Gamma(x+1)$ and using the fact that $\Gamma(x) \ge \sqrt{2\pi x} \left(\frac{x}{e}\right)^x$, we obtain
\[\RQ_{1/3}(\Pi) =  \Omega \left( \frac{(s/2)\log(s/(2e)) - \log n}{\log(8n/s)}\right) = \Omega \left( \frac{s\log s }{\log(2n/s)}\right),
\]
where the simplification can be verified by considering the cases of large $s$ (say $s \ge \sqrt{n}$) and small $s$ separately. 
\end{proof}

We can apply Lemma~\ref{lemma:randomized-sensitivity} to the sensitivity lower bounds of Lemma~\ref{lemma:k-ary:subtree-sensitivity} and produce randomized lower bounds for inversion minimization on bounded-degree trees. Using \cref{fact:k-ary:balance} we obtain: 

\begin{theorem}[lower bound against bounded-error 
for inversion minimization on trees]
	Let $T$ be a tree with $\deg(T)\le k$. The query complexity of $\Pi_T$ for bounded-error randomized algorithms is $\Omega(\frac{n\log(n/k)}{k\log(k)})$.
\end{theorem}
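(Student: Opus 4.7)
The plan is to combine the already-established sensitivity lower bound for bounded-degree trees with the Bounded-Error Sensitivity Lemma (Lemma~\ref{lemma:randomized-sensitivity}). The only real work is to track how the two bounds compose and to verify that the resulting expression simplifies to the claimed $\Omega\!\left(\frac{n\log(n/k)}{k\log k}\right)$; there is no new combinatorial ingredient needed.

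First I would apply Fact~\ref{fact:k-ary:balance} to the tree $T$ of degree at most $k$ to produce a node $v$ whose subtree contains $\ell$ leaves with $\frac{n}{k+1}\le \ell \le \frac{kn}{k+1}$. Feeding this $\ell$ into Lemma~\ref{lemma:k-ary:subtree-sensitivity} yields
\[
  s(\Pi_T) \;\ge\; \frac{\ell(n-\ell)}{n}-1 \;\ge\; \frac{k}{(k+1)^2}\,n - 1 \;=\; \Omega\!\left(\frac{n}{k}\right),
\]
exactly as in the derivation of the ``in particular'' part of Theorem~\ref{thm:main:general}. Call this lower bound $s \doteq s(\Pi_T) = \Omega(n/k)$.

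Next I would plug $s$ into Lemma~\ref{lemma:randomized-sensitivity}, which gives
\[
  \RQ_{1/3}(\Pi_T) \;=\; \Omega\!\left(\frac{s\log s}{\log(2n/s)}\right).
\]
Since $s = \Theta(n/k)$ (taking the tight form of the lower bound up to constants), we have $\log s = \Theta(\log(n/k))$ and $\log(2n/s) = \Theta(\log k)$, so the bound becomes
\[
  \RQ_{1/3}(\Pi_T) \;=\; \Omega\!\left(\frac{(n/k)\log(n/k)}{\log k}\right) \;=\; \Omega\!\left(\frac{n\log(n/k)}{k\log k}\right),
\]
which is the claim.

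The only subtlety is handling degenerate regimes. When $n/k = O(1)$ the bound collapses to a trivial statement, so I would note the bound is understood as vacuous (or as $\Omega(1)$) in that regime; when $k$ is constant it recovers $\Omega(n\log n)$, consistent with the binary-tree lower bound of Theorem~\ref{thm:main:binary}. Since the calculation of $\log(2n/s)$ uses $s \le n$, the denominator stays bounded below by a positive constant, so no division-by-zero issue arises. I do not anticipate a genuine obstacle: the main care is simply to invoke the right form of the sensitivity bound (the clean $\Omega(n/k)$ consequence of Lemma~\ref{lemma:k-ary:subtree-sensitivity} via Fact~\ref{fact:k-ary:balance}) and to simplify the $\log$'s correctly.
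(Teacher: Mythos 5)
Your proposal is correct and follows essentially the same route as the paper: apply Fact~\ref{fact:k-ary:balance} to locate a balanced subtree, use Lemma~\ref{lemma:k-ary:subtree-sensitivity} to obtain $s(\Pi_T)=\Omega(n/k)$, and feed this into the Bounded-Error Sensitivity Lemma (Lemma~\ref{lemma:randomized-sensitivity}); the paper leaves exactly this composition to the reader, and your log-simplifications are the intended ones. One small point worth making explicit (it is implicit in your write-up) is that $\frac{s\log s}{\log(2n/s)}$ is increasing in $s$ on $[1,n-1]$, which justifies substituting a lower bound for $s$ into the expression.
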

\section{Connectivity Lemma}
\label{sec:connectivity}

In this section we establish Lemma~\ref{lemma:connectivity} and use it to present some of the known lower bounds in a unified framework. We actually prove the following somewhat stronger result.

\begin{lemma}[Strong Connectivity Lemma]\label{lemma:connectivity:strong}
Consider an algorithm $A$ in the comparison-based model, color each vertex of the permutahedron with its execution trace under $A$, and let $H$ denote the subgraph with the same vertex set but only containing the monochromatic edges. The number of distinct execution traces of $A$ equals the number of connected components of $H$.
\end{lemma}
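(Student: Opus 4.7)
The plan is to prove equality by establishing both inequalities. Denote by $R_1, \dots, R_k$ the color classes of the trace coloring: $R_i$ is the set of rankings on which $A$ produces the $i$-th execution trace. By definition of $H$ every edge is monochromatic, so each connected component of $H$ lies inside a single $R_i$. Since each $R_i$ is nonempty (we only count traces that are realized by some input), this already yields that $H$ has at least $k$ connected components. The crux is the reverse direction: each $R_i$ is itself connected in $H$, and so contributes exactly one component.

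To establish the connectedness of $R_i$, I would import the structural viewpoint developed in the proof of Claim~\ref{claim:encoding}: the $i$-th execution trace is captured by an acyclic comparison digraph $C_i$ on vertex set $X$, whose directed arcs are the comparisons performed along that trace oriented according to the recorded outcomes. Then $R_i$ is precisely the set of linear extensions of $C_i$, and an adjacent-rank transposition of ranks $r$ and $r+1$ applied to $\ranking \in R_i$ stays inside $R_i$ if and only if $\ranking^{-1}(r)$ and $\ranking^{-1}(r+1)$ are incomparable in $C_i$ (no directed path between them in either direction — one direction is already ruled out by $\ranking \in R_i$). Thus, proving connectedness of $R_i$ in $H$ reduces to the purely combinatorial assertion that the set of linear extensions of any DAG is connected under adjacent-rank transpositions of incomparable pairs.

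I would prove this assertion by a descent / bubble-sort argument. Given two distinct linear extensions $\sigma, \tau$ of $C_i$, the composition $\pi \doteq \tau \circ \sigma^{-1}$ is a non-identity permutation of $[n]$ and therefore has a descent at some position $r$, i.e., $\tau(\sigma^{-1}(r)) > \tau(\sigma^{-1}(r+1))$. Set $x \doteq \sigma^{-1}(r)$ and $y \doteq \sigma^{-1}(r+1)$; then $x$ and $y$ are adjacent in $\sigma$ but appear in the opposite relative order in $\tau$. Because both $\sigma$ and $\tau$ extend $C_i$, this forces $x$ and $y$ to be incomparable in $C_i$. The adjacent-rank transposition $(r, r+1)$ therefore sends $\sigma$ to another linear extension $\sigma'$ of $C_i$ which differs from $\tau$ in one fewer pair. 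Iterating yields a path in $H$ from $\sigma$ to $\tau$, completing the proof that $R_i$ is connected.

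I do not anticipate a substantive obstacle; the main ingredients — the DAG description of a trace and the incomparability criterion for an adjacent-rank transposition to preserve trace-membership — have already been spelled out in the proof of the Strong Sensitivity Lemma, and the descent argument for linear extensions is standard. The only care needed is the bookkeeping to verify that the incomparability condition is precisely right: since $\ranking$ itself is consistent with $C_i$, a path in $C_i$ from $\ranking^{-1}(r+1)$ to $\ranking^{-1}(r)$ cannot exist, so ruling out a path in the other direction is indeed equivalent to full incomparability. With both inequalities in hand, the number of connected components of $H$ equals $k$, the number of distinct execution traces of $A$.
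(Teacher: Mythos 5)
Your proof is correct and takes essentially the same approach as the paper's: both argue the nontrivial direction (each trace class $R_i$ is connected in $H$) by a bubble-sort induction along adjacent-rank transpositions of discordant pairs, the key observation in each case being that a discordant pair was never compared by $A$ (equivalently, is incomparable in the comparison DAG) and hence the swap yields a monochromatic edge. The only cosmetic differences are that you make the DAG/linear-extension formalism from Claim~\ref{claim:encoding} explicit and measure progress by the Kendall tau distance to $\tau$, whereas the paper works directly with the trace and inducts on the lexicographic measure $(r,n-s)$ obtained by bubbling the first element at which the two rankings disagree into its target position.
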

The Connectivity Lemma follows from Lemma~\ref{lemma:connectivity:strong} because the coloring with execution traces of an algorithm $A$ for $\Pi$ is a refinement of the coloring with $\Pi$. Note that the counterpart of Lemma~\ref{lemma:connectivity:strong} in the Boolean setting is trivial. This is because an execution trace in the Boolean setting is specified by values for a subset of the input bits, so the set of inputs that follow a particular execution trace form a subcube of the hypercube, the Boolean counterpart of the permutahedron. Subcubes are trivially connected inside the hypercube. In the comparison-query model, the sets of inputs that follow a particular execution trace can be more complicated, and their connectedness is no longer trivial but still holds.

\begin{proof}[Proof of Lemma~\ref{lemma:connectivity:strong}]
Two rankings $\ranking_1$ and $\ranking_2$ that have distinct execution traces under $A$ cannot be connected because any path between them needs to contain at least one bichromatic edge. For the remainder of the proof, we consider two rankings $\ranking_1$ and $\ranking_2$ that have the same execution trace under $A$, and construct a path from $\ranking_1$ to $\ranking_2$ in $H$.

If $\ranking_1 = \ranking_2$, we do not need to make any move and use an empty path. 

Otherwise, there exists a rank $r<n$ such that $\ranking_1$ and $\ranking_2$ agree on ranks less than $r$ and disagree on rank $r$. We have the following situation, where the item $y_r$ with rank $r$ under $\ranking_2$, has rank $s>r$ under $\ranking_1$. 

\begin{equation*}
    \begin{array}{r|ccccccccc}
      \text{rank} & 1 & \cdots &  r-1 &   r & \cdots & s-1 & s & \cdots &   n \\
      \hline
      \ranking_1^{-1} & x_1 & \cdots & x_{r-1} &   x_r  & \cdots & x_{s-1} & x_s = y_r & \cdots &     \\
       & {\scriptstyle =}  & {\scriptstyle =}  & {\scriptstyle =}  &  {\scriptstyle \ne}  &    &    &  &   &    \\
      \ranking_2^{-1} & y_1 & \cdots & y_{r-1} & y_r & \cdots &     &     & \cdots &    
    \end{array}
\end{equation*}

Considering ranking $\ranking_1$, we have that $\ranking_1(x_{s-1}) = s-1 < s = \ranking_1(x_s)$. Considering ranking $\ranking_2$, since $x_{s-1}$ differs from $y_i=x_i$ for every $i \in [r-1]$ and also differs from $y_r$, we have that $\ranking_2(x_{s-1}) > r = \ranking_2(y_r) = \ranking_2(x_s)$. Thus, the relative ranks of $x_{s-1}$ and $x_s$ under $\ranking_1$ and $\ranking_2$ differ. As $\ranking_1$ and $\ranking_2$ have the same execution trace, this means that the algorithm does not compare $x_{s-1}$ and $x_s$ on either input, and on $\ranking_1$ in particular. Let $\ranking'_1$ be the ranking obtained from ranking $\ranking_1$ by applying the adjacent-rank transposition $\transposition = (s-1,s)$. Since the algorithm does not compare the affected items, the execution trace for $\ranking'_1$ and $\ranking_1$ are the same, so the edge from $\ranking'_1$ to $\ranking_1$ is monochromatic and in $H$. We use this edge as the first on the path from $\ranking_1$ to $\ranking_2$ in $H$. What remains is to find a path from $\ranking'_1$ to $\ranking_2$ in $H$. The situation is the same as the one depicted above but with $r$ increased by one in case $s=r+1$, and with the same $r$ and $s$ decreased by one, otherwise. The proof then follows by induction on the ordered pair $(r,n-s)$. 
\end{proof}

\begin{remark}\label{remark:connectivity}
Suppose we allow an algorithm $A$ to have multiple valid execution traces on a given input $\ranking$, and let $R$ denote the set of rankings on which a particular execution trace is valid. The construction in the proof of Lemma~\ref{lemma:connectivity:strong} yields a path in the permutahedron between any two rankings in $R$ such that the path entirely stays within $R$. This means that we can replace $\traces(\Pi)$ in the statement of the Connectivity Lemma by its nondeterministic variant $\NQ(\Pi)$.
\end{remark}

The Connectivity Lemma captures all the prior lower bounds stated in Section~\ref{sec:model} except the elementary adversary argument (which is also based on connectivity considerations, but in an undirected graph other than $H$, namely $(X,E)$ where $E$ denotes the queries the algorithm makes on a given input ranking $\ranking$). It captures the generic information-theoretic lower bound because input rankings with different outputs cannot belong to the same connected component of $H$. We already explained in Section~\ref{sec:overview} how the Connectivity Lemma shows that counting inversions and inversion parity amount to sorting, and require at least $\log(n!)$ queries. We now illustrate its use for a classical problem that is easier than sorting, namely median finding.

Let $\Pi$ denote the selection problem with rank $r = \ceil{n/2}$. For any ranking, the adjacent-rank transpositions $\transposition$ that change the item with rank $r$ are the two that involve rank $r$: $\transposition = (r-1,r)$ and $\transposition = (r,r+1)$. Those transpositions are the ones that correspond to missing edges in the permutahedron graph $G(\Pi)$. As a result, for any two rankings, there exists a path between them in $G(\Pi)$ if and only if they have the same median as well as the same set of items with rank less than $r$ (and also the same set of items with rank greater than $r$). As there are $n$ possibilities for the median and, for each median, $\binom{n-1}{r-1}$ possibilities for the set of items that have rank less than $r$, $G(\Pi)$ has $n \cdot \binom{n-1}{r-1}$ connected components. It follows that any algorithm for $\Pi$ has at least $n \cdot \binom{n-1}{r-1} = \Omega(\sqrt{n}\cdot 2^n)$ distinct execution paths, and therefore needs to make at least $n + \frac{1}{2} \log(n) - O(1)$ queries. 

As a side note, this example clarifies a subtlety in the equivalence between ordinary selection and the instantiation of partial order production that is considered equivalent to selection. Whereas selection of rank $r$ ordinarily requires outputting only the item
of rank $r$, the instantiation of partial order production additionally requires partitioning the remaining items according to whether their ranks are less than or greater than $r$. The above analysis implies that it is impossible for the algorithm to know the item of rank $r$ without also knowing how to partition the remaining items into those of rank less than and greater than $r$. It follows that, in the comparison-based model, ordinary selection and the instantiation of partial order production are equivalent. 

\section{Connectivity Approach}
\label{sec:connectivity:both}

This section covers the connectivity approach for obtaining query lower bounds in the comparison-query model. Our main focus is the problem $\Pi_T$ of inversion minimization on a fixed tree $T$, for which we derive very strong query lower bounds in the case of the special types of trees in Theorem~\ref{thm:main:special}. Some parts of the analysis carry through for a broader class of problems $\Pi$, namely those that satisfy a certain partition property. We first develop the property and apply the Connectivity Lemma to a generic problem $\Pi$ with the property. We then present sufficient conditions for the problem $\Pi_T$ to have the property and perform a detailed analysis, leading to Theorem~\ref{thm:main:special}. Finally, we apply the same ideas to the problem of counting cross inversions, for which we obtain the query lower bound of Theorem~\ref{thm:Mann-Whitney}, as well as to the closely related problem of inversion minimization on the Mann--Whitney trees of Figure~\ref{fig:Mann-Whitney}.

\subsection{Partition property}

In order to obtain good lower bounds on $\traces(\Pi)$ using the Connectivity Lemma, it is sufficient to find good upper bounds on the size of the connected components of a typical vertex in $G(\Pi)$. For the problem $\Pi_T$, we can assume without loss of generality that $T$ has no internal nodes of degree 1, i.e., no nodes with exactly one child. With that assumption, $\Pi_T$ is insensitive to any adjacent-rank transposition $\transposition$ at a ranking $\ranking$ for which the affected leaves are siblings in $T$. Thus, the corresponding edges from the permutahedron are always present in $G(\Pi_T)$. From the perspective of ensuring small connected components in $G(\Pi_T)$, the ideal situation would be if there were no other edges in $G(\Pi_T)$. That is to say, $\Pi_T$ is \emph{sensitive} at $\ranking$ to \emph{every} adjacent-rank transposition $\transposition$ \emph{except} when the affected leaves are siblings. We will investigate conditions on $T$ that guarantee this situation in the next two subsections. In this subsection, we analyze the size of the connected components of $G(\Pi_T)$ when $T$ is of the desired type, and use it obtain a query lower bounds via the Connectivity Lemma. Our analysis applies more generally to any problem $\Pi$ with the following property.
\begin{definition}[partition property]\label{def:partition-property}
A computational problem $\Pi$ in the comparison-query model on a set $X$ of $n$ items has the \emph{partition property} if the set $X$ can be partitioned into sets $X_i$ such that for any ranking $\ranking$ of $X$ and adjacent-rank transposition $\transposition = (r, r+1)$ with $r \in [n-1]$, $\Pi(\ranking) = \Pi(\transposition \ranking)$ if and only if $\ranking^{-1}(r)$ and $\ranking^{-1}(r+1)$ belong to the same partition class $X_i$. If every partition class $X_i$ has size at most $k$, we say that $\Pi$ has the partition property with class size at most $k$.
\end{definition}
In other words, a problem $\Pi$ has the partition property if the underlying universe can be partitioned in such a way that 
adjacent-rank transpositions that do not change the answer are exactly those whose affected items fall within the same partition class. In the case of the problem $\Pi_T$, the partition classes $X_i$ correspond to the leaf child sets $\leafchildset(v)$ from Definition~\ref{def:leaf-child-set}, where $v$ ranges over the leaf parents.

Let us investigate the size of the connected components of $G(\Pi)$ when $\Pi$ satisfies the partition property. Consider a walk in $G(\Pi)$. As the only steps we can take correspond to adjacent-rank transpositions $\transposition$ that swap elements in the same partition class, the sets $X_i$ remain invariant, irrespective of the ranking $\ranking$ we start from. Depending on $\ranking$, there may be more structure inside each partition class $X_i$; the set $X_i$ may be broken up into smaller subsets that are each invariant. For our analysis, we list the elements of each partition class in order of increasing rank under $\ranking$, and include an edge between elements that have successive ranks. We introduce the term ``successor graph" to capture this structure, viewed as a graph with the ranks as vertices.
\begin{definition}[successor graph, $S(\cdot,\cdot)$]
Let $\Pi$ be a computational problem in the comparison-query model on a set $X$ of $n$ items, and $\ranking$ a ranking of $X$. The successor graph of $\Pi$ on $\ranking$, denoted $S(\Pi,\ranking)$, has vertex set $[n]$ and contains all edges of the form $(r,r+1)$ with $r \in [n-1]$ such that $\Pi(\ranking) = \Pi(\transposition \ranking)$, where $\transposition$ denotes the adjacent-rank transposition $(r,r+1)$.
\end{definition}

We have the following connection.
\begin{proposition}\label{prop:successor:components}
Let $\Pi$ be a computational problem in the comparison-query model on the set $X$, and let $\ranking$ be a ranking of $X$. If $\Pi$ has the partition property, then the connected component of $\ranking$ in $G(\Pi)$ has size $\prod_j (n_j!)$, where the $n_j$'s denote the sizes of the connected components of $S(\Pi,\ranking)$.
\end{proposition}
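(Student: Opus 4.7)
The first step is to read off the structure of $S(\Pi,\ranking)$ from the partition property: an edge $(r,r+1)$ is present exactly when $\ranking^{-1}(r)$ and $\ranking^{-1}(r+1)$ lie in a common partition class $X_i$. Consequently $S(\Pi,\ranking)$ is a disjoint union of path graphs whose vertex sets are maximal intervals $I_1,\dots,I_m\subseteq [n]$ of consecutive ranks whose items all belong to a single class. Set $n_j\doteq|I_j|$ and $Y_j\doteq\ranking^{-1}(I_j)$; each $Y_j$ is contained in a single class $X_{i(j)}$, and maximality forces $X_{i(j)}\ne X_{i(j+1)}$ for $j<m$. Say that $\ranking'$ is \emph{interval-equivalent} to $\ranking$ if $\ranking'^{-1}(I_j)=Y_j$ for every $j$; the set $R$ of rankings interval-equivalent to $\ranking$ has cardinality exactly $\prod_j n_j!$, since within each $I_j$ the $n_j$ items of $Y_j$ can be arranged in arbitrary order.

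The plan is then to show that the connected component $C$ of $\ranking$ in $G(\Pi)$ coincides with $R$. For $R\subseteq C$, any $\ranking'\in R$ is obtained from $\ranking$ by permuting, independently for each $j$, the items of $Y_j$ among the ranks of $I_j$. Each such permutation decomposes into a product of adjacent-rank transpositions supported inside $I_j$, and each of those transpositions swaps two items of the same class $X_{i(j)}$, so by the partition property it is an edge of $G(\Pi)$. Concatenating these walks over all $j$ yields a walk in $G(\Pi)$ from $\ranking$ to $\ranking'$. For $C\subseteq R$, I would argue by induction on walk length, starting from $\ranking\in R$: given any edge of $G(\Pi)$ from some $\ranking'\in R$ to $\transposition\ranking'$ with $\transposition=(r,r+1)$, the partition property forces $\ranking'^{-1}(r)$ and $\ranking'^{-1}(r+1)$ to lie in a common class, while $Y_j$ and $Y_{j+1}$ lie in distinct classes; hence $r$ and $r+1$ fall inside the same interval $I_j$, the swap only permutes items of $Y_j$, and $\transposition\ranking'\in R$.

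The one nontrivial ingredient is the observation that consecutive intervals $I_j,I_{j+1}$ belong to different partition classes, which is what forbids $G(\Pi)$-edges from bridging interval boundaries and what keeps walks confined to $R$. This is immediate from the definition of the intervals as \emph{maximal} monochromatic runs in $S(\Pi,\ranking)$, so no real obstacle arises; the rest of the argument is routine bookkeeping on adjacent-rank transpositions.
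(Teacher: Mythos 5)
Your proof is correct and takes essentially the same approach as the paper: identify the connected components of $S(\Pi,\ranking)$ with maximal monochromatic runs of ranks, argue the two inclusions between the connected component of $\ranking$ in $G(\Pi)$ and the set of rankings sharing those runs, and count. The paper's version is terser and phrases the $C\subseteq R$ direction as invariance of the successor graph under $G(\Pi)$-edges, whereas you phrase it as consecutive intervals lying in distinct classes so that boundary transpositions are forbidden; these are the same observation.
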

\begin{proof}
The connected components of the successor graph $S(\Pi,\ranking)$ correspond to subsets of the classes $X_i$ that each remain invariant under walks in $G(\Pi)$. Within each of the subsets, independently for each subset, every possible ordering can be realized by such walks. This is because for any adjacent-rank transposition $\transposition$, the successor graphs $S(\Pi,\ranking)$ and $S(\Pi,\transposition \ranking)$ are the same, and every ordering can be realized by a sequence of swaps of adjacent elements. It follows that the number of rankings that can be reached from $\ranking$ in $G(\Pi)$ equals the product over all connected components of $S(\Pi,\ranking)$ of the number of possible orderings of the elements in the connected component. 
\end{proof}

Figure~\ref{fig:components} depicts an example for a problem of type $\Pi_T$ and a partition consisting of 4 classes, namely the leaf child sets $\leafchildset_1, \leafchildset_2, \leafchildset_3$ and $\leafchildset_4$. The tree $T$ and ranking $\ranking$ are represented in Figure~\ref{fig:sibling:leaves}. Figure~\ref{fig:invariant:sets} represents the part of the successor graph $S(\Pi_T,\ranking)$ involving the leaf child set $\leafchildset_3$ and illustrates the subpartitioning into invariant subsets.

\begin{figure}[ht]
\centering
\begin{subfigure}[c]{0.45\textwidth}
\centering
\begin{tikzpicture}
[
scale = 0.95,
every node/.style = {circle, inner sep=2pt},
level 2/.style = {level distance = 1cm}
]
\def\dist{0.06cm};
\def\dista{0.01cm};
\node [draw]{} [level distance=0.75cm]
    child
    {node [draw, xshift=-0.2cm]{} [sibling distance=0.7cm]
        child
        {node (n4) [draw, label={[label distance = {\dist}]below:\scriptsize $4$}]{}}
        child
        {node [draw, label={[label distance = {\dista}]below:\scriptsize $12$}]{}}
        child
        {node (n6) [draw, label={[label distance = {\dist}]below:\scriptsize $6$}]{}}}
    child
    {node [draw, xshift = 0.5cm]{} [sibling distance = 0.8cm]
        child
        {node (n5) [draw, label={[label distance = {\dist}]below:\scriptsize $5$}]{}}
        child
        {node (n10) [draw, label={[label distance = {\dista}]below:\scriptsize $10$}]{}}
        child
        {node [draw, xshift = 1cm]{} [sibling distance = 0.5cm]
            child
            {node (n11) [fill, label={[label distance = {\dista}]below:\scriptsize $11$}]{}}
            child
            {node [fill, label={[label distance = {\dist}]below:\scriptsize $2$}]{}}
            child
            {node [fill, label={[label distance = {\dist}]below:\scriptsize $3$}]{}}
            child
            {node [fill, label={[label distance = {\dist}]below:\scriptsize $9$}]{}}
            child
            {node [fill, label={[label distance = {\dista}]below:\scriptsize $14$}]{}}
            child
            {node [fill, label={[label distance = {\dist}]below:\scriptsize $7$}]{}}
            child
            {node (n8) [fill, label={[label distance = {\dist}]below:\scriptsize $8$}]{}}}}
    child
    {node (n1) [draw, xshift = 1cm, label={[label distance = {\dist}]below:\scriptsize $1$}]{}}
    child
    {node (n13) [draw, xshift = 0.5cm, label={[label distance = {\dista}]below:\scriptsize $13$}]{}};

\draw[rounded corners, gray] ($(n4)+(-0.2,0.2)$) rectangle ($(n6)+(0.2,-0.6)$) {};
\draw[rounded corners, gray] ($(n5)+(-0.2,0.2)$) rectangle ($(n10)+(0.2,-0.6)$) {};
\draw[rounded corners, gray] ($(n11)+(-0.2,0.2)$) rectangle ($(n8)+(0.2,-0.6)$) {};
\draw[rounded corners, gray] ($(n1)+(-0.2,0.2)$) rectangle ($(n13)+(0.2,-0.6)$) {};

\node at ($(n4)+(-0.6,-0.2)$){\small $\leafchildset_1$};
\node at ($(n10)+(0.6,-0.2)$){\small $\leafchildset_2$};
\node at ($(n11)+(-0.6,-0.2)$){\small $\leafchildset_3$};
\node at ($(n13)+(0.6,-0.2)$){\small $\leafchildset_4$};
\end{tikzpicture}
\caption{Leaf child sets and leaf ranks under $\ranking$}\label{fig:sibling:leaves}
\end{subfigure}
\hfill
\begin{subfigure}[c]{0.45\textwidth}
\centering
\vspace{0.85cm}
\begin{tikzpicture}[scale=0.9]
\node at (0,0)[]{$r$};
\node at (1,0)[]{$2$};
\node at (2,0)[]{$3$};
\node at (3,0)[]{$7$};
\node at (4,0)[]{$8$};
\node at (5,0)[]{$9$};
\node at (6,0)[]{$11$};
\node at (7,0)[]{$14$};

\node at (0,-1)[]{$\ranking(r)$};
\node (2) at (1,-1)[circle, fill, inner sep=1.5pt]{};
\node (3) at (2,-1)[circle, fill, inner sep=1.5pt]{};
\node (7) at (3,-1)[circle, fill, inner sep=1.5pt]{};
\node (8) at (4,-1)[circle, fill, inner sep=1.5pt]{};
\node (9) at (5,-1)[circle, fill, inner sep=1.5pt]{};
\node (11) at (6,-1)[circle, fill, inner sep=1.5pt]{};
\node (14) at (7,-1)[circle, fill, inner sep=1.5pt]{};
\draw (2)--(3);
\draw (7)--(8)--(9);
\end{tikzpicture}
\vspace{0.85cm}
\caption{Inside leaf child set $\leafchildset_3$}\label{fig:invariant:sets}
\end{subfigure}
\caption{Connected component analysis}\label{fig:components}
\end{figure}
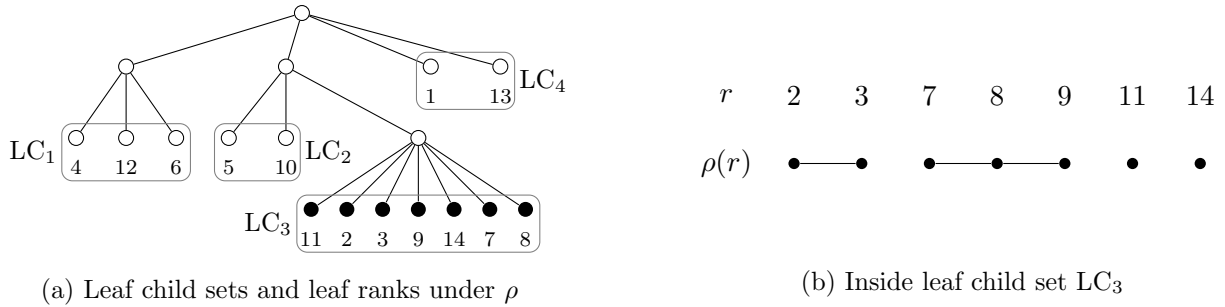

If each of the partition classes $X_i$ has size at most $k$, then each of the connected components of $S(\Pi,\ranking)$ has size $n_j \le k$, irrespective of $\ranking$. The maximum value that $\prod_j (n_j!)$ can take under the constraints $\sum_j n_j = n$ and $n_j \le k$ is no more than $(k!)^{n/k}$. By the Connectivity Lemma, we conclude that $\traces(\Pi) \ge n! / (k!)^{n/k}$, and that the query complexity is at least $\log_2(n!) - O(n \log(k))$. 

We can do better by observing that, for a random ranking $\ranking$, the number of adjacent-rank transpositions $\transposition$ that do not jump from one partition class to another is not much larger than the average size of the partition classes.
\begin{lemma}[lower bound for problems with the partition property]\label{lemma:connectivity:sensitive}
Let $\Pi$ be a computational problem in the comparison-query model on a set of size $n$. If $\Pi$ satisfies the partition property with class size at most $k$, then then $\traces(\Pi) \ge n!/(2(k!)^2)$.
\end{lemma}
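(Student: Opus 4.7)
The plan is to combine the Connectivity Lemma with Proposition~\ref{prop:successor:components} and a Markov-style argument on the number of edges of the successor graph. By Proposition~\ref{prop:successor:components}, for each ranking $\ranking$ the size of the connected component of $\ranking$ in $G(\Pi)$ equals $\prod_j n_j(\ranking)!$, where the $n_j(\ranking)$'s are the sizes of the connected components of $S(\Pi,\ranking)$. Each $n_j$ is bounded by the maximum partition class size $k$. Writing the number of connected components of $G(\Pi)$ as $\sum_{\ranking} 1/\prod_j n_j(\ranking)!$, the Connectivity Lemma reduces the task to showing that for at least half of the rankings, this component size is at most $(k!)^2$.

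The key quantity is the number of edges $E(\ranking)$ of $S(\Pi,\ranking)$. First, I would compute its expectation under a uniformly random $\ranking$: the probability that items of ranks $r$ and $r+1$ lie in the same class is $\sum_i m_i(m_i-1)/(n(n-1))$, where $m_i \doteq |X_i|$. Since each $m_i \le k$, we have $\sum_i m_i(m_i-1) \le (k-1)\sum_i m_i = (k-1)n$, and therefore
\[
\Expect[E(\ranking)] \;=\; \frac{\sum_i m_i(m_i-1)}{n} \;\le\; k-1.
\]
By Markov's inequality, $\Pr[E(\ranking) \ge 2k-1] \le (k-1)/(2k-1) < 1/2$, so the set $R^* \doteq \{\ranking : E(\ranking) \le 2(k-1)\}$ satisfies $|R^*| \ge n!/2$.

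For $\ranking \in R^*$, I would upper bound the component size as follows. The connected components of $S(\Pi,\ranking)$ satisfy $\sum_j (n_j - 1) = E(\ranking) \le 2(k-1)$ and $n_j \le k$. Using the fact that the function $n \mapsto \log(n!)/(n-1)$ is nondecreasing (since it is the average of $\log 2, \log 3, \dots, \log n$), we get $\log(n_j!) \le (n_j - 1)\log(k!)/(k-1)$ for each $j$, and summing yields $\prod_j n_j! \le (k!)^{E(\ranking)/(k-1)} \le (k!)^2$. Combining the pieces,
\[
\traces(\Pi) \;\ge\; \#\{\text{components of }G(\Pi)\} \;=\; \sum_{\ranking} \frac{1}{\prod_j n_j(\ranking)!} \;\ge\; \sum_{\ranking \in R^*} \frac{1}{(k!)^2} \;\ge\; \frac{n!}{2(k!)^2},
\]
which is the desired bound. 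The only mildly subtle step is the monotonicity argument used to upper bound $\prod_j n_j!$ in terms of $E$, but this follows from an elementary observation about averages of $\log$; everything else is routine once the successor-graph machinery is in place.
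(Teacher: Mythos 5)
Your proof is correct and follows essentially the same approach as the paper: a Markov bound on the expected number of edges in the successor graph shows that at least half the rankings satisfy $\sum_j(n_j-1)\le 2(k-1)$, and then the Connectivity Lemma is applied via Proposition~\ref{prop:successor:components}. The one place you go beyond the paper is in bounding $\prod_j n_j!\le (k!)^2$: the paper simply asserts that the maximum is attained when two of the $n_j$'s equal $k$ and the rest equal $1$, whereas your monotonicity-of-$\log(n!)/(n-1)$ argument proves the bound cleanly and rigorously.
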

\begin{proof}
For any rank $r \in [n-1]$, the probability that $\ranking^{-1}(r)$ and $\ranking^{-1}(r+1)$ belong to the same partition class equals $\sum_i \frac{|X_i|}{n} \frac{|X_i|-1}{n-1}$, which is at most $\frac{k-1}{n-1}$ provided each partition class $X_i$ has size at most $k$. It follows that the expected number of adjacent-rank transpositions $\transposition$ that do not change partition class, is at most $k-1$, so for a fraction at least half of the rankings $\ranking$ the number is at most $2(k-1)$.

The number of adjacent-rank transpositions $\transposition$ that do not change partition class for a given ranking $\ranking$ equals the number of edges in the successor graph $S(\Pi,\ranking)$. In terms of the sizes $n_j$, the number equals $\sum_j (n_j-1)$. We are considering rankings $\ranking$ for which the sum is at most $2(k-1)$. The maximum of $\prod_j (n_j!)$ under the constraints that $\sum_j (n_j-1) \le 2(k-1)$ and that each individual $n_j \le k$, is reached when two of the $n_j$'s equal $k$ and the rest are 1. Thus, if each of the partition classes $X_i$ are of size at most $k$, for a fraction at least half of the rankings $\ranking$, the size of the connected component of $\ranking$ in $G(\Pi)$ is at most $(k!)^2$. It follows that the number of connected components of $G(\Pi)$ is at least $n!/(2(k!)^2)$. The Connectivity Lemma then yields the claimed lower bound on $\traces(\Pi)$.
\end{proof}

Lemma~\ref{lemma:connectivity:sensitive} yields a lower bound of $\log(n!) - O(k \log(k))$ on the query complexity of $\Pi$ whenever $\Pi$ satisfies the partition property with class size at most $k$. 

Next we turn to sufficient conditions on the tree $T$ that guarantee the partition property for $\Pi_T$. For didactic reasons we first develop the conditions for binary trees, and then generalize them to arbitrary trees.

\subsection{Binary trees}
\label{sec:connectivity:binary}

In the case of binary trees $T$, the sensitivity analysis of Section~\ref{sec:binary:criterion} leads to a simple sufficient condition for the partition property to hold for $\Pi_T$.
Recall that we are assuming without loss of generality that $T$ has no internal nodes of degree 1, which in the case of binary trees is equivalent to saying that the tree is full: Every internal node has the maximum degree of 2. 

Consider criterion~\ref{eq:criterion:binary:simple} in Proposition~\ref{prop:sensitivity:criterion}. The right-hand side is always $-1$. As for the left-hand side, we know the following.
\begin{fact}\label{fact:DInv:parity}
For all disjoint sets $A, B \subseteq X$ and any ranking $\ranking$ of $X$, $\DInv_\ranking(A,B) = |A| \cdot |B| \bmod 2$. 
\end{fact}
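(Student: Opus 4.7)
The plan is to observe that since $A$ and $B$ are disjoint, every ordered pair in $A \times B$ is accounted for by exactly one of $\XInv_\ranking(A,B)$ or $\XInv_\ranking(B,A)$. Specifically, because $\ranking$ is a bijection and $A \cap B = \emptyset$, every $(a,b) \in A \times B$ satisfies either $\ranking(a) > \ranking(b)$, in which case $(a,b)$ is counted in $\XInv_\ranking(A,B)$, or $\ranking(a) < \ranking(b)$, in which case $(b,a)$ is counted in $\XInv_\ranking(B,A)$. These cases partition $A \times B$, so I would deduce the identity
\[
\XInv_\ranking(A,B) + \XInv_\ranking(B,A) = |A|\cdot|B|.
\]

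From the definition $\DInv_\ranking(A,B) \doteq \XInv_\ranking(A,B) - \XInv_\ranking(B,A)$, I would then eliminate $\XInv_\ranking(B,A)$ using the above identity to get
\[
\DInv_\ranking(A,B) = 2\,\XInv_\ranking(A,B) - |A|\cdot|B|,
\]
and reduce modulo $2$ to conclude $\DInv_\ranking(A,B) \equiv |A|\cdot|B| \pmod{2}$, as claimed.

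There is no real obstacle here; the entire content of the fact is the pairing argument above, which hinges only on disjointness of $A$ and $B$ and the fact that $\ranking$ assigns distinct ranks. The only mild subtlety worth flagging is the direction of the pairing: a pair $(a,b) \in A \times B$ that is \emph{not} a cross inversion from $A$ to $B$ contributes via its reverse $(b,a) \in B \times A$ to $\XInv_\ranking(B,A)$, so one must check the roles of the ordered tuples carefully when invoking the definition of $\XInv_\ranking(\cdot,\cdot)$.
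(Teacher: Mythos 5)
Your proof is correct and follows essentially the same route as the paper: establish the pairing identity $\XInv_\ranking(A,B) + \XInv_\ranking(B,A) = |A|\cdot|B|$ and then eliminate one of the two cross-inversion counts from $\DInv_\ranking(A,B)$ to expose the factor of $2$. The only cosmetic difference is that the paper writes $\DInv_\ranking(A,B) = |A|\cdot|B| - 2\,\XInv_\ranking(B,A)$ while you write $\DInv_\ranking(A,B) = 2\,\XInv_\ranking(A,B) - |A|\cdot|B|$; these are equivalent.
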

\begin{proof}
As every pair in $\A \times B$ constitutes a cross-inversion for either $A$ to $B$, or $B$ to $A$, we have 
$\XInv_\ranking(A,B) + \XInv_\ranking(B,A) = |A| \cdot |B|$.
Thus,
\begin{align}
\DInv_\ranking(A,B) 
& \doteq  \XInv_\ranking(A,B) - \XInv_\ranking(B,A) \nonumber \\
& = (\XInv_\ranking(A,B) + \XInv_\ranking(B,A)) - 2 \XInv_\ranking(B,A) \nonumber \\
& = |A| \cdot |B| - 2 \XInv_\ranking(B,A). \label{eq:DInv:as:XInv}
\end{align}
As $\XInv_\ranking(B,A)$ in an integer, the claim follows.
\end{proof}
Fact~\ref{fact:DInv:parity} implies that whenever at least one of the leaf sets $L_\lo$ or $L_\hi$ is of even cardinality, then \eqref{eq:criterion:binary:simple} fails to hold, and $\Pi_T$ is sensitive to the underlying $\transposition$ at $\ranking$. Thus, we can guarantee that $\Pi_T$ satisfies the partition property provided that for any two siblings $u_1$ and $u_2$ in $T$ that are not both leaves, at least one of $\lvert\leafset(T_{u_1})\rvert$ or $\lvert\leafset(T_{u_2})\rvert$ is even. We refer to the latter condition as the \emph{product condition}. In trees without nodes of degree 1, the product condition can be expressed alternately in terms of the leaf child sets. We state and prove the result for arbitrary trees as it will help us in the next subsection to generalize the analysis.
\begin{proposition}\label{prop:connectivity:equivalence}
Let $T$ be a tree without nodes of degree 1. The following two conditions are equivalent:
\begin{itemize}
\item[(a)] For any two siblings $u_1$ and $u_2$ that are not both leaves, at least one of $\lvert\leafset(T_{u_1})\rvert$ or $\lvert\leafset(T_{u_2})\rvert$ is even.
\item[(b)] At most one leaf child set is odd, and if there exists a node $v^*$ with an odd leaf child set $\leafchildset(v^*)$, then all ancestors of $v^*$ have an empty leaf child set.
\end{itemize}
In the case of binary trees, (b) can be simplified to: At most one leaf has a non-leaf sibling.
\end{proposition}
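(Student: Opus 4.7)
The plan is to prove (a) $\Leftrightarrow$ (b) via a parity lemma linking $\ell(v) \doteq \lvert\leafset(T_v)\rvert$ to $\lvert\leafchildset(v)\rvert$, and then to propagate parity information through the tree. The key lemma asserts that, under (a), for every internal node $v$ with $\leafchildset(v) \ne \emptyset$, one has $\ell(v) \equiv \lvert\leafchildset(v)\rvert \pmod 2$. This falls out from the decomposition $\ell(v) = \lvert\leafchildset(v)\rvert + \sum_{u} \ell(u)$ with the sum over non-leaf children $u$ of $v$, combined with the observation that every such $u$ is paired in (a) with one of the existing leaf children (of $\ell$-value $1$), forcing $\ell(u)$ to be even.

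For (a) $\Rightarrow$ (b), I would show by upward induction that if $v^*$ has an odd leaf child set, then every ancestor $w$ of $v^*$ satisfies $\leafchildset(w) = \emptyset$ and $\ell(w)$ odd. The inductive step observes that a leaf child of $w$ would pair with the odd-$\ell$ node just below $w$ and violate (a); once $\leafchildset(w) = \emptyset$ is established, applying (a) to pairs of non-leaf children of $w$ forces every sibling of the below node to have even $\ell$, yielding $\ell(w)$ odd. Uniqueness of $v^*$ then follows by LCA analysis: if a second candidate $v^{**}$ is a descendant of $v^*$ (or vice versa), the ancestor clause just proved gives an immediate contradiction; otherwise, at the strict LCA $p$, the two children of $p$ on the paths down to $v^*$ and $v^{**}$ both have odd $\ell$-values (by the same upward propagation) and contradict (a).

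For (b) $\Rightarrow$ (a), I would first prove the dual lemma by structural induction: if no node of $T_v$ has an odd leaf child set, then every internal node of $T_v$ has even $\ell$. Given siblings $u_1, u_2$ not both leaves with parent $w$, say with $u_1$ non-leaf, I would case-split on the location of the unique problematic node $v^*$ (if any): if $v^* \notin T_w$ or $v^* = w$, the subtree $T_{u_1}$ contains no odd-leaf-child-set node and $\ell(u_1)$ is even by the dual lemma; if $w$ is a strict ancestor of $v^*$, the ancestor clause of (b) gives $\leafchildset(w) = \emptyset$ so both $u_1, u_2$ are non-leaves, $v^*$ lies in exactly one of $T_{u_1}, T_{u_2}$, and the other subtree is clean with its root's $\ell$ even. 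The binary tree simplification follows because a leaf child set in a no-degree-1 binary tree is odd iff it has size $1$, equivalently a single leaf with a non-leaf sibling; the ancestor clause of (b) becomes automatic, since a sibling leaf of any node on the path from the root to $v^*$ would be a second leaf with a non-leaf sibling. The main obstacle I anticipate is the bookkeeping in the uniqueness LCA case and in exhausting the subcases of (b) $\Rightarrow$ (a); each subcase itself reduces to a single parity calculation once the parity lemma and its dual are in hand.
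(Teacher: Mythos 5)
Your proof is correct, but it takes a genuinely different route from the paper's in the $(a) \Rightarrow (b)$ direction. The paper argues the contrapositive, descending from the root: starting from the root as $v$, it repeatedly moves to a child whose subtree still contains at least two nodes with odd leaf child set, and when the descent stalls it locates a violating sibling pair --- either two non-leaf siblings whose subtrees each contain a unique node with odd leaf child set (hence each has odd leaf count), or one such non-leaf sibling paired with a leaf child of $v$. You instead argue directly, propagating parity upward from a fixed node $v^*$ with odd leaf child set: your key lemma gives $\lvert\leafset(T_{v^*})\rvert$ odd, and the upward induction forces every ancestor $w$ to satisfy $\leafchildset(w)=\emptyset$ and $\lvert\leafset(T_w)\rvert$ odd, after which uniqueness of $v^*$ falls out of an LCA case analysis. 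Both arguments hinge on the same parity identity $\lvert\leafset(T_u)\rvert = \sum_w \lvert\leafchildset(w)\rvert$ (sum over internal $w$ in $T_u$), but your bottom-up formulation isolates it in explicit lemmas, and the LCA argument is a clean alternative to the paper's iterative descent. The $(b) \Rightarrow (a)$ direction and the binary simplification are essentially the same in both. One small imprecision in your $(b) \Rightarrow (a)$ case split: when $w$ is a strict ancestor of $v^*$, you write that ``$v^*$ lies in exactly one of $T_{u_1}, T_{u_2}$,'' but if $w$ has a third child then $v^*$ may lie in neither; ``at most one'' is what you need, and the argument goes through unchanged since at least one of the two subtrees is then free of $v^*$ and therefore has even leaf count by your dual lemma.
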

\begin{proof}
We establish the two directions of implication separately. 

\medskip

\noindent
$\Rightarrow$: We argue the contrapositive. Suppose that at least two of the leaf child sets are odd. Start with the root of $T$ as the node $v$, and iterate the following: If $v$ has a child $u$ such that $T_u$ contains at least two nodes with an odd leaf child set, replace $v$ by such a child $u$. When the process ends, one of the following situations applies:
\begin{itemize}
\item There are two distinct children $u_1$ and $u_2$ of $v$ that are not leaves and each contain a single node with an odd leaf child set. In this case both $T_{u_1}$ and $T_{u_2}$ contain an odd number of leaves, violating (a).
\item There exists a unique child $u_1$ of $v$ that is not a leaf and contains a single node with an odd leaf child set, and $v$ itself has an odd number of leaf children. In this case, setting $u_2$ to any one leaf child of $v$ (which exists as their number is odd), leads to a violation of (a).
\end{itemize} 
Next, suppose that there exists a unique node $v^*$ that has an odd leaf child set, and that an ancestor $v$ of $v^*$ has a leaf child $u_1$. Setting $u_2$ to the child of $v$ that contains $v^*$ in its subtree, yields a violation of (a) as $T_{u_2}$ contains an odd number of leaves.

\medskip

\noindent
$\Leftarrow$: If neither $u_1$ nor $u_2$ are leaves, the first condition of (b) guarantees that at most one of $T_{u_1}$ or $T_{u_2}$ contains an odd number of leaves. If $u_1$ is a leaf and $u_2$ is not, then the second condition of (b) implies that $T_{u_2}$ cannot contain a node with an odd leaf child set, and therefore has an even number of leaves.

\medskip

In the case of binary trees, the first condition of (b) implies the second one, which can therefore be dropped from the equivalence statement. Moreover, for binary trees the first condition of (b) can be expressed as: At most one leaf has a non-leaf sibling.
\end{proof}

\begin{wrapfigure}{r}{0.4\linewidth}
    \centering
    \begin{tikzpicture}
        [
        level 1/.style = {sibling distance = 3cm, level distance = 0.8cm},
        level 2/.style = {sibling distance = 1.5cm, level distance = 0.8cm},
        level 3/.style = {sibling distance = 1.2cm, level distance = 0.8cm}
        ]

        \node [circle,draw,inner sep=2pt]{}
            child
            {node [circle,draw,inner sep=2pt]{}
                child
                {node [circle,draw,inner sep=2pt, label={below:\small $1$}]{}}
                child
                {node [circle,draw,inner sep=2pt]{}
                    child
                    {node [circle,draw,inner sep=2pt, label={below:\small $3$}]{}}
                    child
                    {node [circle,draw,inner sep=2pt, label={below:\small $6$}]{}}}}
            child
            {node [circle,draw,inner sep=2pt]{}
                child
                {node [circle,draw,inner sep=2pt, label={below:\small $2$}]{}}
                child
                {node [circle,draw,inner sep=2pt]{}
                    child
                    {node [circle,draw,inner sep=2pt, label={below:\small $4$}]{}}
                    child
                    {node [circle,draw,inner sep=2pt, label={below:\small $5$}]{}}}};
    \end{tikzpicture}
    \caption{Tree insensitive to non-sibling transposition}
    \label{fig:connectivity:insensitive-binary}
\end{wrapfigure}
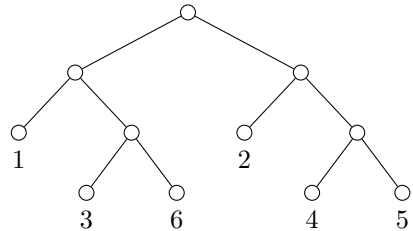

By Proposition~\ref{prop:connectivity:equivalence}, in a binary tree the condition that at most one leaf has no sibling is equivalent to the product condition, which implies the partition property of $\Pi_T$, so the lower bound of Lemma~\ref{lemma:connectivity:sensitive} applies. This establishes Theorem~\ref{thm:main:special} in the case of binary trees. 

As a side note, Fig.~\ref{fig:connectivity:insensitive-binary} shows the simplest example of a full binary tree $T$ and a ranking $\ranking$ for which there exists an adjacent-rank transposition $\transposition$ to which $\Pi_T$ is insensitive at $\ranking$ while the affected leaves are not siblings. The tree has two leaves without siblings, namely 1 and 2. The adjacent-rank transposition $(3,4)$ acts on nodes that are not siblings, but leaves the minimum number of inversions at 4.

\subsection{General trees}
\label{sec:connectivity:general}

For general trees $T$ the sensitivity analysis of $\Pi_T$ becomes more complicated than for binary trees, and we do not know of a simple sensitivity criterion like Proposition~\ref{prop:sensitivity:criterion}, but we can nevertheless extend the result for binary trees to arbitrary trees with similar constraints. 
For a given ranking $\ranking$ of $\leafset(T)$ and a given adjacent-rank transposition $\transposition$, we would like to figure out the effect of $\transposition$ on the objective $\MInv(T,\cdot)$, in particular when $\MInv(T,\transposition \ranking) = \MInv(T,\ranking)$. Recall the decomposition \eqref{eq:MInv:decomposition} of $\MInv(T,\cdot)$ from Section~\ref{sec:decomposition}. By Proposition~\ref{prop:change} the only term on the right-hand side of \eqref{eq:MInv:decomposition} that can be affected by the transposition $\transposition$ is
\[ \MRInv(T_v,\ranking) \doteq \min_\ordering \RInv(T,\ranking,\ordering) \]
corresponding to the node $v$ that is the least common ancestor $\LCA(\ell_\lo,\ell_\hi)$ of the two leaves $\ell_\lo$ and $\ell_\hi$ that are affected by $\transposition$ under $\ranking$. In Section~\ref{sec:sensitivity:binary} we considered the two possible relative orderings $\ordering_1$ and $\ordering_2$ of the children of $v$, and derived a criterion for when the lowest cost does not change under $\transposition$. More precisely, when 
\begin{equation}\label{eq:pairwise}
\min(\RInv(T,\ranking,\ordering_1), \RInv(T,\ranking,\ordering_2)) = \min(\RInv(T,\transposition \ranking,\ordering_1), \RInv(T,\transposition \ranking,\ordering_2)). 
\end{equation}
There are two complications in generalizing this approach from binary to general trees.
\begin{itemize}
\item 
The expression \eqref{eq:cost} for $\RInv(T,\ranking,\ordering)$ involves multiple terms instead of just one as in \eqref{eq:contribution:binary:retake}. This complicates probabilistic analyses like the one we did in Section~\ref{sec:sensitivity:binary} because the difference in cost of the two relative orderings of two children is also affected by parts of the tree outside of their combined subtrees. The issue did not 
matter for the analysis in Section~\ref{sec:sensitivity:general}. We will be able to manage it here, as well.
\item There now are not just two but multiple possible orderings $\ordering$, and it is not clear what pairs $(\ordering_1,\ordering_2)$ we need to impose \eqref{eq:pairwise} on in order to guarantee that $\MRInv(T_v,\ranking) = \MRInv(T_v,\transposition \ranking)$ but no more.
\end{itemize}
In Section~\ref{sec:sensitivity:general} we circumvented the second issue by only considering sensitivities that decrease the objective function, and establishing a lower bound on their occurrence independent of the ordering $\ordering$. Here we are also able to handle the second issue by shooting for a sufficient condition for sensitivity rather than a criterion. We do so by requiring that for no pair of distinct orderings $\ordering_1$ and $\ordering_2$, condition~\eqref{eq:pairwise} holds (unless the two affected leaves are siblings). Similar to the case of binary trees, we guarantee that \eqref{eq:pairwise} fails based on parity considerations given the product condition. 

For the analysis we again assume without loss of generality that $T$ has no nodes of degree 1. We use the same notation as in Section~\ref{sec:sensitivity:binary}: Let  $\ell_\lo$ denote the affected leaf that is smaller with respect to $\ranking$, and $\ell_\hi$ the other affected leaf. Let $L_i$ denote the leaf set $L_i \doteq \leafset(T_{u_i})$, where $u_1, \dots, u_k$ are the children of $v = \LCA(\ell_\lo,\ell_\hi)$. We also write $u_\lo$ for the child of $v$ that contains $\ell_\lo$ in its subtree, and $L_\lo$ for the leaf set of the subtree rooted at $u_\lo$, and define $\ell_\hi$, $u_\hi$, and $L_\hi$ similarly. See Figure~\ref{fig:sensitivity-general} for a sketch of the setting.

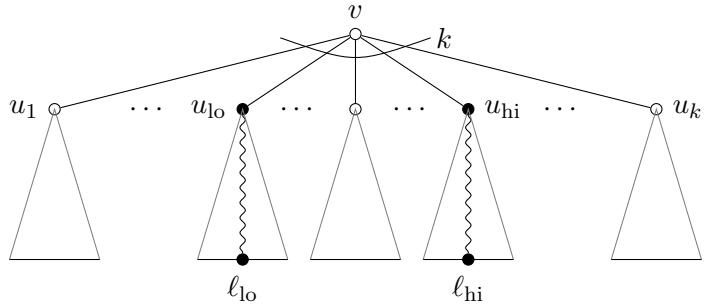
\begin{figure}[ht]
    \centering
    \begin{tikzpicture}
        \node (top) at (0,0)[circle,draw,inner sep=1.5pt,label={above:$v$}]{};
        \node (u1) at (-4,-1)[circle,draw,inner sep=1.5pt, label={left:$u_1$}]{};
        \node at (-2.75,-1){$\cdots$};
        \node (ulo) at (-1.5,-1)[circle,draw,fill=black,inner sep=1.5pt,label={left:$u_{\lo}$}]{};
        \node at (-0.75,-1){$\cdots$};
        \node (u2) at (0,-1)[circle,draw,inner sep=1.5pt]{};
        \node at (0.75,-1){$\cdots$};
        \node (uhi) at (1.5,-1)[circle,draw,fill=black,inner sep=1.5pt,label={right:$u_{\hi}$}]{};
        \node at (2.75,-1){$\cdots$};
        \node (u3) at (4,-1)[circle,draw,inner sep=1.5pt, label={right:$u_k$}]{};

        \node (l1) at (-4,-3){};
        \node (llo) at (-1.5,-3)[circle,draw,fill=black,inner sep=1.5pt,label={below:$\ell_{\lo}$}]{};
        \draw[decorate, decoration=snake, segment amplitude=1pt, segment length=2mm] (ulo)--(llo);
        \node (l2) at (0,-3){};
        \node (lhi) at (1.5,-3)[circle,draw,fill=black,inner sep=1.5pt,label={below:$\ell_{\hi}$}]{};
        \draw[decorate, decoration=snake, segment amplitude=1pt, segment length=2mm] (uhi)--(lhi);
        \node (l3) at (4,-3){};
        
        \def\gap{0.6}
        \def\drop{2}
        \draw[thin, gray] (u1) ++({-\gap},{-\drop}) -- +({\gap},{\drop}) -- +({2*\gap},0);
        \draw (l1) ++({-\gap},0) -- +({2*\gap},0);
        \draw[thin, gray] (ulo) ++({-\gap},{-\drop}) -- +({\gap},{\drop}) -- +({2*\gap},0);
        \draw (llo) ++({-\gap},0) -- +({2*\gap},0);
        \draw[thin, gray] (u2) ++({-\gap},{-\drop}) -- +({\gap},{\drop}) -- +({2*\gap},0);
        \draw (l2) ++({-\gap},0) -- +({2*\gap},0);
        \draw[thin, gray] (uhi) ++({-\gap},{-\drop}) -- +({\gap},{\drop}) -- +({2*\gap},0);
        \draw (lhi) ++({-\gap},0) -- +({2*\gap},0);
        \draw[thin, gray] (u3) ++({-\gap},{-\drop}) -- +({\gap},{\drop}) -- +({2*\gap},0);
        \draw (l3) ++({-\gap},0) -- +({2*\gap},0);
        
        \draw (top)--(u1);
        \draw (top)--(ulo);
        \draw (top)--(u2);
        \draw (top)--(uhi);
        \draw (top)--(u3);
        
        \draw (-1,-0.05) .. controls (0,-0.4) .. (1,-0.05);
        \node at (0.8, -0.05)[label={right:$k$}]{};
    \end{tikzpicture}
    \caption{Sensitivity for general trees}
    \label{fig:sensitivity-general}
\end{figure}

We slightly abuse notation and  let $\ordering$ denote both the ordering of the entire tree $T$ as well as the ranking of the children of $v$. By the analysis of Section~\ref{sec:sensitivity:binary}, we have that for two orderings $\ordering_1$ and $\ordering_2$ the situation \eqref{eq:pairwise} can only occur if $\ordering_1(u_\lo) < \ordering_1(u_\hi)$, $\ordering_2(u_\lo) > \ordering_2(u_\hi)$, and 
\begin{equation}\label{eq:RInv:difference}
\RInv_\ranking(T_v,\ranking,\ordering_1) -
\RInv_\ranking(T_v,\ranking,\ordering_2) = -1 = \DInv_\ranking(\{\ell_\lo\},\{\ell_\hi\}).
\end{equation}
For any two disjoint sets of leaves, \eqref{eq:DInv:as:XInv} lets us write 
\begin{equation}\label{eq:XInv:DInv}
\XInv_\ranking(A,B) =  \frac{1}{2} \DInv_\ranking(A,B) + \frac{1}{2} |A| \cdot |B|.
\end{equation}
Applying \eqref{eq:XInv:DInv} to all the terms involved in \eqref{eq:cost}, we have
\begin{align*}
\RInv_\ranking(T_v,\ranking,\ordering) 
& = \sum_{1 \le i < j \le k} \XInv_\ranking(L_i,L_j) \cdot \Indicator[\ordering(i) < \ordering(j)] \\
& \;\;\;\;\; + \sum_{1 \le i < j \le k} \XInv_\ranking(L_j,L_i) \cdot \Indicator[\ordering(i) > \ordering(j)] \\
& = \frac{1}{2} \sum_{1 \le i < j \le k} \left( \DInv_\ranking(L_i,L_j) \cdot (-1)^{\Indicator[\sigma(i > \sigma(j)]} + |L_1| \cdot |L_2| \right) \\
\RInv_\ranking(T_v,\ranking,\ordering_1) -
\RInv_\ranking(T_v,\ranking,\ordering_2)
& = 
\sum\limits_{\substack{1 \le i < j \le k \\ \ordering_1(i) < \ordering_2(j) \\ \ordering_2(i) > \ordering_2(j)}}\DInv_\ranking(L_i,L_j)
+ \sum\limits_{\substack{1 \le i < j \le k \\ \ordering_1(i) > \ordering_2(j) \\ \ordering_2(i) < \ordering_2(j)}} \DInv_\ranking(L_i,L_j)
\end{align*}
By combining the last equation with \eqref{eq:RInv:difference} and separating out the term for $(i,j)=(\lo,\hi)$, we obtain the following necessary condition for \eqref{eq:pairwise} to hold:
\begin{equation}\label{eq:necessary}
\DInv_\ranking(L_\lo,L_\hi) - \DInv_\ranking(\{\ell_\lo\},\{\ell_\hi\})
= -\sum\limits_{\substack{1 \le i < j \le k \\ \ordering_1(i) < \ordering_2(j) \\ \ordering_2(i) > \ordering_2(j) \\ (i,j) \ne (\lo,\hi)}} \DInv_\ranking(L_i,L_j)
+ \sum\limits_{\substack{1 \le i < j \le k \\ \ordering_1(i) > \ordering_2(j) \\ \ordering_2(i) < \ordering_2(j)}} \DInv_\ranking(L_i,L_j).
\end{equation} 

In order for $\Pi_T$ to have the partition property, it suffices to ensure that \eqref{eq:necessary} fails whenever $u_\lo$ and $u_\hi$ are not both leaves. By \eqref{eq:DInv:as:XInv} each of the terms $\DInv_\ranking(L_i,L_j)$ in \eqref{eq:necessary} has the same parity as $|L_i| \cdot |L_j|$. Since $\DInv_\ranking(\{\ell_\lo\},\{\ell_\hi\})$ is odd, it follows that \eqref{eq:necessary} fails whenever at most one of the leaf sets $L_i$ involved is odd, which is condition (a) in Proposition~\ref{prop:connectivity:equivalence}. Switching to the equivalent condition (b) from Proposition~\ref{prop:connectivity:equivalence} allows us to conclude via Lemma~\ref{lemma:connectivity:sensitive}:
\begin{theorem}
Let $T$ be a tree without nodes of degree 1 such that the leaf child sets have size at most $k$, at most one of them is odd, and if there exists an odd one, say $\leafchildset(v^*)$, then all ancestors of $v^*$ have empty leaf child sets. Then $\traces(\Pi_T) \ge n!/(2(k!)^2)$.
\end{theorem}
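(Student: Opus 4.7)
The approach is to verify that $\Pi_T$ satisfies the partition property of Definition~\ref{def:partition-property} with class size at most $k$, and then invoke Lemma~\ref{lemma:connectivity:sensitive} to conclude $\traces(\Pi_T) \ge n!/(2(k!)^2)$. The partition classes will be the leaf child sets $\leafchildset(v)$ as $v$ ranges over the leaf-parents of $T$; since $T$ has no degree-1 nodes, every leaf belongs to a unique such class, and by hypothesis each class has size at most $k$.

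Verifying the partition property amounts to showing, for every ranking $\ranking$ of $\leafset(T)$ and every adjacent-rank transposition $\transposition$ with affected leaves $\ell_\lo$ and $\ell_\hi$, the equivalence: $\MInv(T, \ranking) = \MInv(T, \transposition \ranking)$ if and only if $\ell_\lo$ and $\ell_\hi$ are leaf children of a common parent. The ``if'' direction is immediate from the decomposition~\eqref{eq:MInv:decomposition} and Proposition~\ref{prop:change}: only the $\MRInv(T_v, \cdot)$ term for $v \doteq \LCA(\ell_\lo, \ell_\hi)$ can change, and when $\ell_\lo, \ell_\hi$ are both leaf children of $v$, swapping their positions in any ordering of $v$'s children induces a bijection between orderings for $\ranking$ and orderings for $\transposition \ranking$ of equal cost, so the two minima coincide.

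For the ``only if'' direction, i.e., Case B where at least one of $u_\lo, u_\hi$ is not a leaf, my target is to show that equation~\eqref{eq:necessary} fails for every pair of distinct orderings $\ordering_1, \ordering_2$ of $v$'s children satisfying $\ordering_1(u_\lo) < \ordering_1(u_\hi)$ and $\ordering_2(u_\lo) > \ordering_2(u_\hi)$; as established in the analysis leading up to \eqref{eq:necessary}, such a failure suffices for sensitivity. The mechanism is a parity computation based on Fact~\ref{fact:DInv:parity}, which tells us each $\DInv_\ranking(L_i, L_j)$ has the same parity as $|L_i| \cdot |L_j|$, together with the observation that $\DInv_\ranking(\{\ell_\lo\}, \{\ell_\hi\}) = -1$ is odd. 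Under condition (a) of Proposition~\ref{prop:connectivity:equivalence}, which is equivalent to the theorem's hypothesis, the sibling pair $(u_\lo, u_\hi)$ in Case B forces $|L_\lo| \cdot |L_\hi|$ to be even, so the left-hand side of \eqref{eq:necessary} is odd. Applying (a) to all sibling pairs at $v$ makes every remaining term $\DInv_\ranking(L_i, L_j)$ on the right-hand side even, whence the right-hand side is even and cannot equal the left-hand side.

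The main obstacle is the parity accounting on the right-hand side when $v$ has multiple leaf children, each of which individually carries odd $|L| = 1$: in that regime the equivalence of conditions (a) and (b) in Proposition~\ref{prop:connectivity:equivalence} must be used delicately to constrain the configuration around $v$ so that the parity argument closes. Unpacking this case analysis precisely, checking that the odd-sized leaf child set (if any) sits exactly where the hypothesis puts it, is the technical heart of the proof; with the partition property in hand, the numerical bound then drops out of Lemma~\ref{lemma:connectivity:sensitive}.
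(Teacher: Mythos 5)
Your high-level plan is correct and matches the paper's: verify the partition property and invoke Lemma~\ref{lemma:connectivity:sensitive}; your handling of the ``if'' direction is fine. The gap is in the parity step of the ``only if'' direction. You assert that ``applying (a) to all sibling pairs at $v$ makes every remaining term $\DInv_\ranking(L_i, L_j)$ on the right-hand side even.'' This is false when $u_i$ and $u_j$ are \emph{both} leaves: condition (a) constrains only sibling pairs that are not both leaves, and for two leaf children $|L_i|=|L_j|=1$, so $\DInv_\ranking(L_i,L_j)$ is odd. Indeed the stated target, that \eqref{eq:necessary} fail for \emph{every} pair $(\ordering_1,\ordering_2)$, is unattainable. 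For a concrete witness, let $v$ have one non-leaf child $u_1$ whose subtree has two leaves ranked $1$ and $4$, and two leaf children $u_2,u_3$ ranked $2$ and $3$ (so every leaf child set has size $2$ and the hypothesis holds with $k=2$). With $\transposition=(1,2)$, $u_\lo=u_1$, $u_\hi=u_2$, and the pair $(\ordering_1,\ordering_2)=(u_1u_2u_3,\,u_3u_2u_1)$, one computes $\RInv(T_v,\ranking,\ordering_1)-\RInv(T_v,\ranking,\ordering_2)=2-3=-1$, so \eqref{eq:RInv:difference} and hence \eqref{eq:necessary} do hold for that pair. Your final paragraph flags exactly this obstruction, but the proposed remedy, that the equivalence of (a) and (b) ``constrains the configuration so the parity closes,'' does not work: condition (b) freely allows $v$ to have two or more leaf children, and the parity obstruction persists in all such cases.

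The missing idea is a reduction to \emph{optimal} orderings. Sensitivity at $\ranking$ to $\transposition$ fails iff the minima over the two classes $\{\ordering:\ordering(u_\lo)<\ordering(u_\hi)\}$ and $\{\ordering:\ordering(u_\lo)>\ordering(u_\hi)\}$ differ by exactly one, so it suffices to rule out \eqref{eq:necessary} for $\ordering_1,\ordering_2$ optimal within their respective classes. A short exchange argument shows that in any cost-minimizing ordering of $v$'s children, the leaf children of $v$ occur in $\ranking$-rank order: if two leaf children are out of rank order, swapping their positions changes the pair term by $-1$ and every term involving a child between them by a nonpositive amount, so the cost strictly decreases. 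Consequently the leaf children have the same relative order in $\ordering_1$ and in $\ordering_2$, no leaf--leaf pair is ``swapped,'' every $\DInv_\ranking(L_i,L_j)$ that survives on the right-hand side of \eqref{eq:necessary} comes from a sibling pair that is not both leaves, and the parity mismatch then genuinely closes the argument. (The paper's own prose is terse here; the clause ``at most one of the leaf sets $L_i$ involved is odd'' is not literally implied by condition (a) once $v$ has two or more leaf children, and the reduction to optimal orderings is the load-bearing step.)
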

Theorem~\ref{thm:main:special} follows by taking the base-2 logarithm of the bound.

\subsection{Counting cross inversions and evaluating the Mann--Whitney statistic}
\label{sec:connectivity:cross-inversions}

We now apply the connectivity approach that we captured in Proposition~\ref{prop:successor:components} to the problem $\Pi_{\XInv}$ of computing the number of cross inversions between two disjoint sets $A$ and $B$ with respect to a ranking $\ranking$ of $X = A \sqcup B$. Note that this problem is a refinement of evaluating the Mann--Whitney statistic, or equivalently, of inversion minimization on the tree $T$ of Figure~\ref{fig:Mann-Whitney}: Any algorithm that solves $\Pi_{\XInv}$ with $q$ queries, can be transformed into an algorithm for $\Pi_T$ with $q$ queries, namely by transforming the output $y$ of the algorithm for $\Pi_{\XInv}$ to $\min(y,|A| \cdot |B|-y)$. Viewed in the contrapositive, a lower bound for $\Pi_{\XInv}$ is easier to obtain than one for $\Pi_T$ on the tree $T$ of Figure~\ref{fig:Mann-Whitney}. We first establish a lower bound for $\Pi_{\XInv}$ and then see how it extends to $\Pi_T$.

One can think of $\Pi_{\XInv}$ as inversion minimization on the Mann--Whitney tree without allowing swapping the two children of the root. As a result, the problem $\Pi_{\XInv}$ is sensitive to \emph{every} adjacent-rank transposition between non-siblings, and therefore automatically satisfies the partition property (with $A$ and $B$ being the partition classes), so Proposition~\ref{prop:successor:components} applies. In contrast, the problem $\Pi_T$ of minimizing inversions on the Mann--Whitney tree may not have the partition property. This is why analyzing $\Pi_{\XInv}$ is a bit simpler, and why we handle it first.

Let $a\doteq |A|$ and $b\doteq |B|$. By the partition property, the average sensitivity of $\Pi_{\XInv}$ equals $\frac{2ab}{a+b}$, which via the Sensitivity Lemma yields a query lower bound of $\Omega(a \log(a))$ for $a \le b$. To obtain the stronger lower bound of $\Omega((a+b)\log(a))$ we need a more detailed analysis of the connectivity of the permutahedron graph $G(\Pi_{\XInv})$. 

For a given ranking $\ranking$, let $x_1,\dots,x_{a}$ be the elements of $A$ listed in increasing order, and similarly for $y_1,\dots,y_b$ for the elements of $B$. We define $m_1,\dots,m_{b+1}$ such that for each $i$, $m_i$ is the number of elements of $A$ between $y_{i-1}$ and $y_i$. (Here, $y_0$ and $y_{b+1}$ serve as sentinels with an infinitely low and infinitely high rank.) Similarly, we define $n_1,\dots,n_{a+1}$ as the number of elements in $B$ between successive elements of $A$. The numbers $m_i$ and $n_i$ are the sizes of the connected components of the successor graph $S(\Pi_{\XInv},\ranking)$ (possibly with some additional zeroes). By Proposition~\ref{prop:successor:components}, the connected component of $\ranking$ in $G(\Pi_{\XInv})$ has size
\begin{equation}
    \label{eq:mann-whitney:cc-size}
    (m_1)!\cdots (m_{b+1})!(n_1)!\cdots (n_{a+1})!.
\end{equation}
Depending on the values of $m_1,\dots,m_{b+1},n_1,\dots,n_{a+1}$, some connected components may be much larger than others. We apply the Connectivity Lemma in a similar way as in Lemma~\ref{lemma:connectivity:sensitive} and only count the rankings $\ranking$ that are in small connected components, which are the rankings for which $m_1,\dots,m_{b+1},n_1,\dots,n_{a+1}$ are bounded. Let $m^*$ and $n^*$ be the minimum integers for which 
\begin{equation*}
    \Pr[m_1,\dots,m_{b+1} \le m^*] \ge \frac{3}{4}\quad\text{and}\quad
    \Pr[n_1,\dots,n_{a+1} \le n^*] \ge \frac{3}{4}.
\end{equation*}
By a union bound, the probability that both of these events hold is at least $1/2$. In other words, there are least $(a+b)!/2$ rankings $\ranking$ for which $m_1,\dots,m_{b+1}\le m^*$ and $n_1,\dots,n_{a+1}\le n^*$.
\begin{proposition}\label{prop:mann-whitney:cc-bound}
    \begin{equation}\label{eq:mann-whitney:cc-bound}
        \traces(\Pi_T) \ge \frac{(a+b)!}{2(m^*)!^{a/m^*}(n^*)!^{b/n^*}}.
    \end{equation}
\end{proposition}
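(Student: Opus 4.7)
The plan is to lower bound the number of connected components of $G(\Pi_{\XInv})$ and then invoke the Connectivity Lemma. Since any adjacent-rank transposition preserves $\XInv$ exactly when its two affected items lie in the same class of the partition $\{A,B\}$, the problem $\Pi_{\XInv}$ satisfies the partition property, and so Proposition~\ref{prop:successor:components} gives that the connected component of any ranking $\ranking$ in $G(\Pi_{\XInv})$ has size exactly $\prod_i m_i!\cdot\prod_i n_i!$ as in~\eqref{eq:mann-whitney:cc-size}.

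To bound the number of components from below, I use the elementary observation that the number of components is at least the count of vertices lying in ``small'' components, divided by the maximum size of such a small component. Call $\ranking$ small if $m_i \le m^*$ for all $i$ and $n_i \le n^*$ for all $i$. By the definitions of $m^*$ and $n^*$ together with the union bound already recorded immediately before the proposition, at least $(a+b)!/2$ rankings are small. The main technical step is to show that every small ranking lies in a component of size at most $M \doteq (m^*!)^{a/m^*}\,(n^*!)^{b/n^*}$.

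For this size bound I plan to invoke log-convexity of $k\mapsto \log k!$. Write $a = q m^* + r$ with $0 \le r < m^*$; under the constraints $\sum_i m_i = a$ and $m_i \le m^*$, the product $\prod_i m_i!$ is maximized by taking $q$ entries equal to $m^*$ and a single entry equal to $r$ (with the rest zero), giving at most $(m^*!)^q \cdot r!$. Log-convexity applied between the endpoints $0$ and $m^*$ yields $r! \le (m^*!)^{r/m^*}$, so $\prod_i m_i! \le (m^*!)^{a/m^*}$; an identical argument gives $\prod_i n_i! \le (n^*!)^{b/n^*}$. Multiplying these bounds produces the claimed $M$, so $G(\Pi_{\XInv})$ has at least $(a+b)!/(2M)$ connected components, and the Connectivity Lemma converts this into the claimed lower bound on $\traces$. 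The main obstacle is the packing-by-log-convexity step; the rest is bookkeeping around the Connectivity Lemma and the union-bound-based selection of small rankings.
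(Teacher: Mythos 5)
Your overall architecture matches the paper's: use the partition property and Proposition~\ref{prop:successor:components} to get the exact component size \eqref{eq:mann-whitney:cc-size}, restrict to the at least $(a+b)!/2$ ``small'' rankings via the union bound, show that small rankings lie in components of size at most $(m^*!)^{a/m^*}(n^*!)^{b/n^*}$, and finish with the Connectivity Lemma. The only place you diverge is the bound on $\prod_i m_i!$. You first argue that the maximizer under the constraints $\sum_i m_i=a$, $0\le m_i\le m^*$ concentrates mass as $q$ copies of $m^*$ plus a remainder $r$, and then apply convexity of $k\mapsto\log(k!)$ to interpolate $r!\le(m^*!)^{r/m^*}$. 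That detour is correct (the maximizer claim follows from convexity plus the vertex structure of a box-truncated simplex), but the paper avoids it entirely: it observes that $n!^{1/n}$ is nondecreasing in $n$ (since $\tfrac1n\log n!$ is the average of $\log 1,\dots,\log n$), so \emph{each individual} factor satisfies $m_i!\le(m^*!)^{m_i/m^*}$, and multiplying and using $\sum_i m_i=a$ gives the product bound directly without ever identifying an extremizer. The paper's route is shorter and dodges the slot-counting you would otherwise need to verify (that $b+1$ slots suffice to hold $q$ copies of $m^*$ and a leftover). Your argument works, but if you keep the packing step you should at least state explicitly why the maximum is attained at the configuration you describe.
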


\begin{proof}
We consider the rankings for which $m_1,\dots,m_{b+1}\le m^*$ and $n_1,\dots,n_{a+1}\le n^*$. We first argue the following upper bound on the size \eqref{eq:mann-whitney:cc-size} of the connected component in $G(\Pi_{\XInv})$ of any such ranking $\ranking$:
     \begin{equation}\label{eq:bound:product}
        (m_1)!\cdots (m_{b+1})!(n_1)!\cdots (n_{a+1})! \le (m^*)^{a/m^*}(n^*)^{b/n^*}.
    \end{equation}
For nonnegative integers $n$, $n!^{1/n}$ is increasing. This can be seen by noticing that $\log(n!^{1/n})$ is the average of $\log(1),\dots,\log(n)$. As a result, for $i \in [b+1]$, $(m_i)!^{1/m_i}\le (m^*)!^{1/m^*}$, or equivalently, $(m_i)!\le (m^*)^{m_i/m^*}$. Using the fact that $m_1+\dots +m_{b+1}=a$,
    \begin{equation*}
        (m_1)!\cdots (m_{b+1})!\le (m^*)!^{(m_1+\dots+m_{b+1})/m^*}=(m^*)!^{a/m^*}.
    \end{equation*}
We can apply similar reasoning to get $(n_1)!\cdots (n_{a+1})!\le (n^*)!^{b/n^*}$. From this, we conclude that the size of the connected components among the rankings under consideration is at most the right-hand side of \eqref{eq:bound:product}. 
    
Since there are at least $(a+b)!/2$ of the rankings under consideration, we derive the stated bound on $\traces(\Pi_T)$ by applying the Connectivity Lemma.
\end{proof}

Now, we find concrete bounds on $m^*$ and $n^*$.

\begin{proposition}
    \label{prop:mann-whitney:chunk-size}
    \[\max\left(1,\frac{a}{b+1}\right)\le m^* \le \frac{a+b}{b}\ln(4(b+1)). \]
    Symmetrically,
    \[\max\left(1,\frac{b}{a+1}\right)\le n^* \le \frac{a+b}{a}\ln(4(a+1)). \]
\end{proposition}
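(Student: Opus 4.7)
My plan is to first verify that the joint distribution of $(m_1, \ldots, m_{b+1})$ is uniform over all nonnegative integer compositions of $a$ into $b+1$ parts. This follows because a uniformly random ranking determines a uniformly random interleaving of $A$-elements and $B$-elements, and each interleaving corresponds to exactly one composition. By symmetry, each $m_i$ has the same marginal distribution, and the same analysis applies to the $n_i$ after swapping the roles of $A$ and $B$.

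For the lower bound, I would use a purely deterministic argument. Since $m_1 + \cdots + m_{b+1} = a$ in every ranking, pigeonhole gives $\max_i m_i \ge \lceil a/(b+1) \rceil \ge a/(b+1)$, so $m^*$ is at least $a/(b+1)$. Separately, when $a \ge 1$ at least one $m_i$ is positive, so $\max_i m_i \ge 1$ holds in every ranking, forcing $m^* \ge 1$. Together these give $m^* \ge \max(1, a/(b+1))$.

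For the upper bound, the plan is to apply a union bound to the tail $\Pr[m_i \ge k]$. By symmetry it suffices to analyze $m_1$, which counts the number of $A$-elements preceding the first $B$-element; equivalently $m_1 \ge k$ iff none of the first $k$ positions in the merged ranking is a $B$-element. A direct count gives
\[
    \Pr[m_1 \ge k] \;=\; \prod_{j=0}^{k-1}\frac{a-j}{a+b-j} \;\le\; \left(\frac{a}{a+b}\right)^{k},
\]
where the inequality uses that $(a-j)/(a+b-j)$ is decreasing in $j$. A union bound then yields $\Pr[\max_i m_i \ge k] \le (b+1)(a/(a+b))^k$, which is at most $1/4$ as soon as $k \ln((a+b)/a) \ge \ln(4(b+1))$. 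The last ingredient is the elementary inequality $\ln(1+x) \ge x/(1+x)$ applied to $x = b/a$, giving $\ln((a+b)/a) \ge b/(a+b)$; thus any integer $k \ge \frac{a+b}{b}\ln(4(b+1))$ works, and I take $m^*$ to be the smallest such integer, which is bounded by $\frac{a+b}{b}\ln(4(b+1))$ up to the rounding one absorbs by choosing $k = \lceil \cdot \rceil$. The bounds on $n^*$ follow by interchanging $(a,A)$ and $(b,B)$ throughout.

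I do not anticipate a serious obstacle: the only subtle point is reconciling the integer nature of $m^*$ with the real-valued bound, which is handled by verifying that the ceiling of $\frac{a+b}{b}\ln(4(b+1)) - 1$ already satisfies the tail bound, and this ceiling is bounded above by $\frac{a+b}{b}\ln(4(b+1))$. The logarithm inequality $\ln(1+x) \ge x/(1+x)$ is what makes the bound come out in the clean form stated.
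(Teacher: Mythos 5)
Your approach is essentially the same as the paper's: you use the uniform distribution of the composition $(m_1,\dots,m_{b+1})$ over compositions of $a$ into $b+1$ nonnegative parts, pigeonhole for the lower bound, and a union bound over a geometric tail estimate for the upper bound. Your sequential-selection count for $\Pr[m_1\ge k]=\prod_{j=0}^{k-1}\frac{a-j}{a+b-j}$ and the inequality $\ln(1+x)\ge x/(1+x)$ are surface variants of the paper's stars-and-bars ratio $\binom{a+b-k-1}{b}/\binom{a+b}{b}$ and its use of $1+x\le e^x$.

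One step should be cleaned up. The event $\{\max_i m_i\ge k\}$ is \emph{not} the complement of $\{\max_i m_i\le k\}$; the complement is $\{\max_i m_i\ge k+1\}$. Thus showing $\Pr[\max_i m_i\ge k]\le 1/4$ gives $\Pr[\max_i m_i\le k-1]\ge 3/4$, i.e.\ $m^*\le k-1$, and the ``smallest integer $k$ with $\Pr[\max_i m_i\ge k]\le 1/4$'' equals $m^*+1$, not $m^*$ — so it is not quite correct to ``take $m^*$ to be the smallest such integer.'' This off-by-one works in your favor: the argument yields $m^*+1\le\left\lceil\frac{a+b}{b}\ln(4(b+1))\right\rceil$, hence $m^*\le\left\lceil\frac{a+b}{b}\ln(4(b+1))\right\rceil-1\le\frac{a+b}{b}\ln(4(b+1))$, which is what your closing remark about $\left\lceil\frac{a+b}{b}\ln(4(b+1))-1\right\rceil$ was reaching for. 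The cleanest fix is to bound $\Pr[m_i>k]$ rather than $\Pr[m_i\ge k]$ (as the paper does), so that the event you control is the complement of the one that defines $m^*$.
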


\begin{proof}
We first prove the upper bound on $m^*$. Let $k$ be a positive integer. We compute the probability, over an average ranking $\ranking$, that $m_i> k$ for a specific $i$. 
Notice that there is a one-to-one correspondence between the ranking $\ranking$ and the corresponding sequence of nonnegative integers $m_1,\dots,m_{b+1}$ such that $m_1+\dots+m_{b+1}=a$, because the ranks of $B$ can be uniquely recovered as $m_1+1,m_1+m_2+2,\dots$, and the remaining ranks form $A$. By stars and bars, there are $\binom{a+b}{b}$ such sequences. Now, if $m_i>k$, then $m_i-(k+1)$ is an arbitrary nonnegative integer, and $m_1+\dots+(m_i-(k+1))+\dots+m_{b+1}=a-k-1$. By stars and bars, there are $\binom{a+b-k-1}{b}$ such sequences. Therefore, the probability that $m_i>k$ is $\binom{a+b-k-1}{b}/\binom{a+b}{b}$. Continuing,
\begin{align*}
\Pr[m_i> k]&=\frac{(a+b-k-1)\cdots (a-k)}{(a+b)\cdots (a+1)}
        \le \left(\frac{a+b-k-1}{a+b}\right)^b
        \le \exp\left(-b\cdot \frac{k+1}{a+b} \right),
\end{align*}
where the last step uses the bound $1+x \le \exp(x)$.
By a union bound and taking the complement,
    \begin{equation}\label{eq:stars}
        \Pr[m_1,\dots,m_{b+1}\le k] \ge 1-(b+1)\exp\left(-b\cdot \frac{k+1}{a+b} \right).
    \end{equation}
If $k$ is such that the right-hand side of \eqref{eq:stars} is at least $3/4$, we know that $m^* \le k$. Solving for $k$ yields the stated bound on $m^*$.
    
For the lower bounds, $m^*\ge 1$ because at least one of $m_1,\dots,m_{b+1}$ is at least 1, and $m^*\ge \frac{a}{b+1}$ because $m_1+\dots+m_{b+1}=a$, and $m^*$ is greater than or equal to the average term in the sum.
\end{proof}

The first part of Theorem~\ref{thm:Mann-Whitney} now comes from taking the logarithm of $\traces(\Pi_{\XInv})$ in Proposition~\ref{prop:mann-whitney:cc-bound} and using the bounds in Proposition~\ref{prop:mann-whitney:chunk-size}.

\begin{proof}[Proof of the first part of Theorem~\ref{thm:Mann-Whitney}]
We mainly make use of the following approximation based on Stirling's formula.
    \begin{equation}
        \label{eq:stirling-ln}
        \ln(n!) = \left(n+\frac{1}{2}\right)\ln(n)-n+O(1).
    \end{equation}
    In order to estimate $\ln \traces(\Pi_{\XInv})$, we need to estimate $\ln((a+b)!)-\frac{a}{m^*}\ln(m^*)-\frac{b}{n^*}\ln(n^*)$. By (\ref{eq:stirling-ln}),
    \begin{align}
        \ln((a+b)!) &= \left(a+b+\frac{1}{2}\right)\ln(a+b) - (a+b) + O(1)\\
        \frac{a}{m^*}\ln((m^*)!) &= \left(a+\frac{a}{2m^*}\right)\ln(m^*) - a + \frac{a}{m^*}O(1) \label{eq:mann-whitney:1}\\
        \frac{b}{n^*}\ln((n^*)!) &= \left(b+\frac{b}{2n^*}\right)\ln(n^*) - b + \frac{b}{n^*}O(1) \label{eq:mann-whitney:2}
    \end{align}
    We can use the lower bounds in Proposition~\ref{prop:mann-whitney:chunk-size}, namely that $m^*\ge 1$ and $n^*\ge \frac{b}{a+1}$, to simplify the occurrences of $m^*$ and $n^*$ in the denominators of \eqref{eq:mann-whitney:1} and \eqref{eq:mann-whitney:2}. Therefore,
    \begin{align*}
        \ln \traces(\Pi_{\XInv}) &\ge a\ln\left(\frac{a+b}{m^*}\right)+b\ln\left(\frac{a+b}{n^*}\right) - \frac{a}{2}\ln(m^*) - \frac{a+1}{2}\ln(n^*) + \frac{1}{2}\ln(a+b) - O(a)\\
        &= a\ln\left(\frac{a+b}{m^*\sqrt{m^*n^*}}\right)+\left(b+\frac{1}{2}\right)\ln\left(\frac{a+b}{n^*}\right)- O(a).
    \end{align*}
    Using the upper bounds in Proposition~\ref{prop:mann-whitney:chunk-size}, absorbing low-order terms, and using the condition that $a \le b$, we get
    \begin{align*}
        \log \traces(\Pi_{\XInv}) &\ge \Omega\left(a \log\left(\frac{b\sqrt{ab}}{a+b}\right) + b \log(a)\right) \ge \Omega(a \log(\sqrt{ab}/2) + b \log(a)) = 
        \Omega((a+b)\log(a)).
    \end{align*}
\end{proof}

\paragraph{Evaluating the Mann--Whitney statistic.} 
We now argue how the second part of Theorem~\ref{thm:Mann-Whitney} follows, i.e., that the lower bound of $\Omega((a+b)\log(a))$ holds for inversion minimization on the Mann--Whitney tree $T$ of Figure~\ref{fig:Mann-Whitney} with $a \le b$. We do so by tweaking our lower bound argument for $\Pi_{\XInv}$ to the setting of $\Pi_T$.

How does the permutahedron graph for $\Pi_T$ relate to the one for $\Pi_{\XInv}$? The problem $\Pi_T$ is a coarsening of the problem $\Pi_{\XInv}$: Output values $y$ and $ab-y$ for $\Pi_{\XInv}$ are both mapped to $\min(y,ab-y)$ under $\Pi_T$. This means that all edges present in $G(\Pi_{\XInv})$ are also present in $G(\Pi_T)$, but there may be more, and some of the connected components in $G(\Pi_{\XInv})$ corresponding to output value $y$, may be merged in $G(\Pi_T)$ with some of the connected components of $G(\Pi_{\XInv})$ corresponding to output value $ab-y$. However, by the reasoning behind Proposition~\ref{prop:change}, edges in $G(\Pi_{\XInv})$ can only go between rankings whose value under $\Pi_{\XInv}$ differ by at most one. This means that the above merging of connected components can only happen if the difference between $y$ and $ab-y$ is 1, i.e., for the values $\floor{ab/2}$ and $\ceil{ab/2}$, and only if $ab$ is odd. In fact, this is exactly the situation that we analyzed in Figure~\ref{fig:sensitivity:binary}, where $v$ coincides with the root of $T$. 

If we ignore the rankings with value $\floor{ab/2}$ or $\ceil{ab/2}$ under $\Pi_{\XInv}$, our lower bound argument for $\Pi_{\XInv}$ carries over verbatim to $\Pi_T$, except that on the right-hand side of Proposition~\ref{prop:mann-whitney:cc-bound} the factor of $\frac{1}{2}$ is replaced by $\frac{1}{2}-W$, where $W$ represents the fraction of rankings with value $\floor{ab/2}$ or $\ceil{ab/2}$ under $\Pi_{\XInv}$. Lemma~\ref{thm:gaussian:main} tells us that $W \le 2C/\sqrt{ab(a+b)}$, where $C$ denotes the constant from the lemma. Thus, we obtain a lower bound for $\traces(\Pi_T)$ that is a negligible fraction smaller than the one for $\traces(\Pi_{\XInv})$. Taking logarithms, we obtain the same lower bound for the query complexity up to an additive term. In particular, we obtain a query lower bound of $\Omega((a+b)\log(a))$ for $\Pi_T$ in case $a \le b$. This is the second part of Theorem~\ref{thm:Mann-Whitney}.

\section{Cross-Inversion Distribution}
\label{sec:cross-inversions}

In this section we prove the upper bound we need for the proof of Theorem~\ref{thm:main:binary} in Section~\ref{sec:sensitivity:root}, namely Lemma~\ref{thm:gaussian:main}. Recall that $X_{a,b}$ denotes a random variable that counts the number of cross inversions $\XInv(A,B)$ from $A$ to $B$, where $A$ is an array of length $a$, $B$ an array of length $b$, and the concatenation $AB$ is a random permutation of $[a+b]$.
Lemma~\ref{thm:gaussian:main} states that for all positive integers $a$ and $b$, $X_{a,b}$ takes on no value with probability more than $C/\sqrt{ab(a+b)}$, where $C$ is a universal constant.

We establish Lemma~\ref{thm:gaussian:main} by considering the 
characteristic function $\charfn_{a,b}(t)$ of $X_{a,b}$, which is the 
Fourier transform of the density function of $X_{a,b}$: $\charfn_{a,b}(t) \doteq \Expect(e^{itX_{a,b}})$. The probabilities can be retrieved from the characteristic function by applying the inverse Fourier transform. This allows us to express the probabilities as the following integrals: For any integer $k$ in $\{0,\dots,a+b\}$
\begin{equation}
    \label{eq:fouriertransform}
    \Pr[X_{a,b}=k] = \frac{1}{2\pi} \int_{-\pi}^{\pi}\charfn_{a,b}(t)e^{-itk}\,dt.
\end{equation}
The right-hand side of \eqref{eq:fouriertransform} is the general formula for the inverse Fourier transform of a periodic function from $\RR$ to $\CC$ with period $2\pi$. The formula applies as the density function of an integer-valued random variable can be extended to a periodic function with period $2\pi$. An alternate argument from first principles observes that the characteristic function of a finite distribution over the nonnegative integers is a polynomial in $z=e^{it}$, where the coefficient of degree $k$ equals the probability of the outcome $k$.%
\footnote{In the case at hand, after multiplication by $\binom{a+b}{a}$, the resulting polynomial is known as the Gaussian polynomial with parameter $(a,b)$.} 
Formula \eqref{eq:fouriertransform} then follows because 
\[
\int_{-\pi}^\pi z^d e^{-ikt} \, dt = 
\int_{-\pi}^\pi e^{i(d-k)t} \, dt =
\left\{ \begin{array}{cl} 2\pi & d=k \\
                             0 & d \ne k
\end{array} \right.
\]
The following lemma then represents the essence of the proof of 
 Lemma~\ref{thm:gaussian:main}.
\begin{lemma}
    \label{thm:integral}
    Then there exists a constant $C$ such that for all integers $a$, $b$ with $b \ge a \ge 2$
      \begin{equation}
        \label{eq:mainbound}
        \int_{-\pi}^{\pi} \lvert \charfn_{a,b}(t)\rvert\,dt \le \frac{C}{b\sqrt{a}}.
    \end{equation}  
where $\charfn_{a,b}(t) \doteq \Expect(e^{itX_{a,b}})$. 
\end{lemma}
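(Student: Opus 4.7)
My plan is to exploit the classical identification of the probability generating function of $X_{a,b}$ with the normalized Gaussian polynomial $\binom{a+b}{a}_q/\binom{a+b}{a}$. Setting $q=e^{it}$ in $\binom{a+b}{a}_q=\prod_{j=1}^{a}(1-q^{b+j})/(1-q^{j})$ and applying $|1-e^{i\theta}|=2|\sin(\theta/2)|$ gives the compact product form
\[
|\charfn_{a,b}(t)| = \prod_{j=1}^{a}\frac{j\,|\sin((b+j)t/2)|}{(b+j)\,|\sin(jt/2)|},
\]
in which the combinatorial prefactor $1/\binom{a+b}{a}=\prod_{j=1}^{a}j/(b+j)$ is distributed across the sine ratios so that each factor equals $1$ at $t=0$. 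Since $|\charfn_{a,b}|$ is even, it suffices to bound $\int_0^{\pi}|\charfn_{a,b}(t)|\,dt$.

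I would then split $[0,\pi]$ into a Gaussian regime $[0,T_1]$, with $T_1$ a small constant multiple of $1/(a+b)$, and a tail regime $[T_1,\pi]$. Throughout the Gaussian regime every argument $(b+j)t/2$ stays bounded, so the Taylor expansion $\log(\sin(x)/x)=-x^2/6+O(x^4)$ applies uniformly across $j$, and summing gives
\[
\log|\charfn_{a,b}(t)| \le -\frac{t^2}{24}\sum_{j=1}^{a}\bigl((b+j)^2-j^2\bigr)+O\!\bigl(t^4\,a(a+b)^4\bigr)=-\frac{t^2\,ab(a+b+1)}{24}(1-o(1)).
\]
This yields a Gaussian envelope whose standard-deviation scale $\sigma=\Theta(\sqrt{ab(a+b)})$ matches that of $X_{a,b}$ itself. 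Integrating contributes $O(1/\sigma)=O(1/\sqrt{ab(a+b)})$, which under the hypothesis $b\ge a$ is already within the target $O(1/(b\sqrt{a}))$.

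The main obstacle is the tail regime $[T_1,\pi]$, where the Gaussian heuristic breaks down. There I would use the pointwise bound $|\sin((b+j)t/2)|\le 1$ in each numerator, together with Jordan's inequality $|\sin(x)|\ge (2/\pi)\,\mathrm{dist}(x,\pi\mathbb{Z})$ in each denominator, to estimate each factor by $\pi\,j/((b+j)\cdot 2\,\mathrm{dist}(jt/2,\pi\mathbb{Z}))$. On subintervals of $[T_1,\pi]$ where none of the denominator arguments $jt/2$ lies close to a multiple of $\pi$, the resulting product decays like $(1/(bt))^a$, whose $L^1$ mass over such intervals can be bounded within budget using $a\ge 2$. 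The delicate case is subintervals on which some $jt/2$ is close to a multiple of $\pi$ and the Jordan bound blows up; there I would pair each near-vanishing denominator factor $|\sin(jt/2)|$ with a numerator factor $|\sin((b+j')t/2)|$ whose argument is also close to a multiple of $\pi$, which exists because $b\ge a$ forces the frequencies $b+j'$ (for $j'\in[a]$) to be dense enough in the lattice of multiples of $j$ to supply such a partner. Quantifying these pairings uniformly in $a$ and $b$, so that the contributions of the singular subintervals telescope into the $O(1/(b\sqrt a))$ budget alongside the Gaussian main term, is the core technical step of the argument.
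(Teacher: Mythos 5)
Your split into a Gaussian-scale center at $T_1=\Theta(1/(a+b))$ and a tail, and your Taylor-series bound on the center yielding $O(1/\sqrt{ab(a+b)})$ --- which is $O(1/(b\sqrt a))$ since $a+b\ge b$ --- match the paper's Lemma~\ref{lemma:center} and are sound. The gap is the tail.

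Two issues. The decay $(1/(bt))^a$ on your ``good'' subintervals relies on $|\sin(jt/2)|\gtrsim jt$ for every $j\le a$, which forces $t\le \pi/a$; past that the Jordan bound degenerates to $\mathrm{dist}(jt/2,\pi\mathbb{Z})$ and almost all of $[\pi/a,\pi]$ falls into your ``delicate case.'' And in that case the pairing you invoke --- a near-vanishing denominator matched with a near-vanishing numerator, justified by a frequency-density heuristic --- is precisely the missing content, not a routine verification. You need not merely a nearby zero in the numerator but a quantitative domination $|\sin((b+j')t/2)|/(b+j')\le|\sin(jt/2)|/j$, and the pairings for different $j$ must form a bijection so each numerator index is used exactly once. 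The paper's pole reduction lemma (Lemma~\ref{lemma:bijection}) establishes exactly this: for \emph{every} $t$ there is a bijection $\bij_t:\{1,\dots,a\}\to\{b+1,\dots,b+a\}$ with $|\sin(kt)/k|\ge|\sin(\bij_t(k)t)/\bij_t(k)|$ for all $k$ simultaneously, after which every paired factor is at most $1$, no good/bad case split is needed, and one leaves $\approx\sqrt a$ factors unpaired for Jordan to handle. Constructing $\bij_t$ is the real substance of the tail bound, and it is not a consequence of density: the paper reduces it to a period-enclosure criterion (Lemma~\ref{lemma:intervalmatching}), verifies Hall's marriage condition by showing the set of matchable $\ell$'s is closed under integer sums (Claim~\ref{claim:sumcondition}), and proves that closure via the mediant inequality on fractions. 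That chain of ideas is the theorem you would still need to supply.
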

\begin{proof}[Proof of Lemma~\ref{thm:gaussian:main}]
By symmetry, it suffices to consider the case where $a \le b$. In the case where $a=1$, the distribution of $X_{a,b}$ is uniform over $\{0,\dots,b\}$, so the maximum probability is $\frac{1}{b+1} \le C/\sqrt{ab(a+b)}$ for any constant $C \ge \sqrt{2}/2$. Otherwise, 
we have
\[ \Pr[X_{a,b}=k] = \frac{1}{2\pi} \int_{-\pi}^{\pi}\charfn_{a,b}(t)e^{-itk}\,dt \le \frac{1}{2\pi} \int_{-\pi}^{\pi} \lvert\charfn_{a,b}(t)\rvert\,dt \le \frac{C}{2\pi b\sqrt{a}} \le \frac{C}{\sqrt{ab(b+1)}} \]
by \eqref{eq:fouriertransform} and Lemma~\ref{thm:integral}.
\end{proof}

To establish Lemma~\ref{thm:integral}, as $\lvert \charfn_{a,b}(t) \rvert$ is an even function, it suffices to take the integral (\ref{eq:mainbound}) over the domain $[0,\pi]$ and multiply by two: 
\begin{equation}
    \label{eq:integral}
    \int_{-\pi}^{\pi} \lvert \charfn_{a,b}(t)\rvert\,dt 
    = 2 \int_0^{\pi} \lvert \charfn_{a,b}(t)\rvert\,dt 
\end{equation}  

We divide the domain of integration on the right-hand side of \eqref{eq:integral} into two regions: one close to zero, and the rest. The integrand is well-behaved in the center near zero, with it being approximated accurately by a normal curve. It is harder to analyze the behavior of the function away from zero. In this region, a pole reduction lemma (captured by \cref{lemma:bijection}) that hinges on a combinatorial matching result (\cref{lemma:intervalmatching}), plays a crucial role in eliminating most of the messy behavior of the function and still providing an effective bound.

We first derive an expression for the characteristic function of $X_{a,b}$ in Section~\ref{sec:characteristic}, bound the central part of the integral in Section~\ref{sec:central}, the peripheral part in Section~\ref{sec:peripheral}, and conclude with the pole reduction lemma in Section~\ref{sec:pole-reduction}.

\subsection{Characteristic function}
\label{sec:characteristic}

The characteristic function of a generic random variable $X$ is defined as  $\charfn_X(t):\mathbb{R}\to\mathbb{C}: t \mapsto \Expect(e^{itX})$. It always exists and has the following interesting property (among others):
\begin{fact}
    \label{fact:charfn-mult}
    For independent random variables $X,Y$,
    \[\charfn_{X+Y}(t)=\charfn_X(t)\charfn_Y(t). \]
\end{fact}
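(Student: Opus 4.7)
The plan is to start from the definition $\charfn_{X+Y}(t) \doteq \Expect(e^{it(X+Y)})$, rewrite the exponential of a sum as a product, and then invoke independence to factor the expectation. The first step is purely algebraic: since $tX$ and $tY$ are real-valued, the functional equation $e^{a+b}=e^{a}e^{b}$ applies pointwise to give $e^{it(X+Y)} = e^{itX} \cdot e^{itY}$ as an identity of complex-valued random variables. Substituting this into the definition turns the goal into showing that $\Expect(e^{itX} \cdot e^{itY}) = \Expect(e^{itX}) \cdot \Expect(e^{itY})$, after which both factors are recognized as $\charfn_X(t)$ and $\charfn_Y(t)$ by definition.

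The substantive step is the factorization of the expectation. I would justify it as follows. Because $X$ and $Y$ are independent, any Borel-measurable functions of them are independent as well; in particular $e^{itX}$ and $e^{itY}$ are independent $\mathbb{C}$-valued random variables. I would then reduce to the real case by expanding $e^{itX} = \cos(tX) + i\sin(tX)$ and $e^{itY} = \cos(tY) + i\sin(tY)$, multiplying out the four real-valued products, applying the familiar multiplicativity of expectation for independent real random variables to each product separately, and then recombining the pieces to obtain $\Expect(e^{itX}) \cdot \Expect(e^{itY})$. Integrability is not a concern since $|e^{itX}|=|e^{itY}|=1$.

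The only point that requires any care is this extension of the product rule for expectations from the real to the complex case, and the attendant verification that $e^{itX}$ and $e^{itY}$ inherit independence from $X$ and $Y$; everything else is a direct consequence of the definition of $\charfn$ and of the exponential identity. In the setting in which the fact is applied, both $X$ and $Y$ are discrete (finitely-supported integer random variables), so one may alternatively bypass the general measure-theoretic machinery and prove the factorization by a direct sum over the joint distribution, with the independence assumption letting the joint probabilities split as products.
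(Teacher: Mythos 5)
Your proposal is correct and follows the same chain of equalities the paper uses: expand $e^{it(X+Y)}=e^{itX}e^{itY}$, factor the expectation by independence, and identify the two factors as $\charfn_X(t)$ and $\charfn_Y(t)$. The paper presents this as a one-line computation without spelling out the complex-to-real reduction; your added justification is sound but not a departure in approach.
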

\begin{proof}
    \[\charfn_{X+Y}(t)=\Expect(e^{it(X+Y)})=\Expect(e^{itX}e^{itY})=\Expect(e^{itX})\Expect(e^{itY})=\charfn_X(t)\charfn_Y(t). \]
\end{proof}

Our derivation of the characteristic function $\charfn_{a,b}$ of $X_{a,b}$ is based on a connection between cross inversions and inversions in arrays.
\begin{fact}
    \label{claim:cross-inversion-property}
    Let $A$ and $B$ be arrays, and let $AB$ be the concatenation of $A$ with $B$. Then
    \[\Inv(AB)=\Inv(A)+\Inv(B)+\XInv(A,B). \]
\end{fact}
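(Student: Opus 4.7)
The plan is to prove the identity by partitioning the set of inversions of $AB$ according to where the two elements involved in each inversion reside. Recall from the definitions that an inversion of an array $C$ is a pair of positions $(i,j)$ with $i<j$ whose values are out of order, and that $\XInv(A,B)$ counts pairs $(x_1,x_2) \in A \times B$ out of order with respect to the value ranking.

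First I would fix the concatenation $C = AB$ and consider an arbitrary inversion $(i,j)$ of $C$, with $i < j$. Exactly one of three cases holds: (a) both $i$ and $j$ index positions in $A$; (b) both index positions in $B$; or (c) $i$ indexes $A$ and $j$ indexes $B$. The case $i \in B$, $j \in A$ is impossible because $A$ precedes $B$ in the concatenation, so every index into $A$ is strictly less than every index into $B$.

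Next I would argue that each case contributes the desired summand. In case (a), the pair $(i,j)$ is an inversion of $AB$ iff it is an inversion of the subarray $A$, since the positions and values involved are identical; so case (a) contributes exactly $\Inv(A)$ inversions. Case (b) is analogous after reindexing the positions of $B$, contributing $\Inv(B)$. In case (c), the element $x_1 \in A$ at position $i$ comes before the element $x_2 \in B$ at position $j$ (because $i < j$), and the inversion condition $\mathrm{value}(x_1) > \mathrm{value}(x_2)$ is precisely the condition that $(x_1,x_2)$ is a cross inversion from $A$ to $B$ under the value ranking; so case (c) contributes exactly $\XInv(A,B)$.

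Summing the three disjoint contributions yields $\Inv(AB) = \Inv(A) + \Inv(B) + \XInv(A,B)$, as desired. There is no serious obstacle here; the only point worth stating carefully is that the ordering of positions in the concatenation forces the ``$i \in B$, $j \in A$'' case to be vacuous, which is what makes the cross-inversion term appear with multiplicity one rather than two.
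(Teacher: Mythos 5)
Your proof is correct and follows essentially the same approach as the paper's: partition the inversions of $AB$ according to whether both indices lie in $A$, both lie in $B$, or one lies in each part, and identify each class with $\Inv(A)$, $\Inv(B)$, and $\XInv(A,B)$ respectively. The paper's proof is just a terser version of the same case analysis.
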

\begin{proof}
    Any inversion in $AB$ is either between two elements of $A$, two elements of $B$, or one element of $A$ and one element of $B$. In each case, the inversion is counted in $\Inv(A)$, $\Inv(B)$, or $\XInv(A,B)$, respectively.
\end{proof}

Let $\InvD_a$ be the random variable that counts the number $\Inv(A)$ of inversions in an array $A$ that is a uniform permutation of $[a]$, and let $\charfn_a(t)$ be the characteristic function of $\InvD_a$.
\begin{claim}
    \label{claim:inversion-charfn}
    We have
    \[\charfn_a(t)=\prod_{k=1}^{a}\left(\frac{e^{it(k-1)}}{k}\cdot\frac{\sin(kt/2)}{\sin(t/2)}\right). \]
\end{claim}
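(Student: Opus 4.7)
The plan is to prove the formula via the classical decomposition of inversion count into independent contributions, one per element, and then multiply the characteristic functions using Fact~\ref{fact:charfn-mult}.

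First, I would write $\InvD_a = Y_1 + Y_2 + \cdots + Y_a$, where for a random permutation $\pi$ of $[a]$ the variable $Y_k$ counts the number of $j < k$ that appear to the right of $k$ in $\pi$ --- equivalently, the number of inversions of $\pi$ whose larger element is $k$. By definition, each inversion is counted exactly once in this decomposition, so the sum equals $\InvD_a$. The key combinatorial fact is that the map
\[
\pi \longmapsto (Y_1, Y_2, \ldots, Y_a) \in \{0\} \times \{0,1\} \times \cdots \times \{0, 1, \ldots, a-1\}
\]
is a bijection (this is the inversion table or Lehmer code): given $(Y_1,\ldots,Y_a)$, the value $Y_a$ pins down the position of $a$ in $\pi$, and one recovers the rest by induction after removing $a$. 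Since both sides have cardinality $a!$, the map is bijective, and under a uniform $\pi$ the coordinates $Y_k$ are mutually independent with $Y_k$ uniform on $\{0,1,\ldots,k-1\}$.

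Second, by Fact~\ref{fact:charfn-mult} and independence,
\[
\charfn_a(t) = \prod_{k=1}^a \charfn_{Y_k}(t), \qquad \charfn_{Y_k}(t) = \frac{1}{k}\sum_{j=0}^{k-1} e^{itj} = \frac{1}{k}\cdot \frac{1 - e^{itk}}{1 - e^{it}}.
\]
Applying the identity $1 - e^{i\theta} = -2i\, e^{i\theta/2} \sin(\theta/2)$ to numerator and denominator turns each factor into $\frac{1}{k}\, e^{it(k-1)/2}\, \frac{\sin(kt/2)}{\sin(t/2)}$, which matches the form in the statement (up to the overall phase collected across all $k$). This is essentially a bookkeeping step once Step~1 is in hand.

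The only nontrivial ingredient is the bijection in Step~1; everything else is geometric-series algebra and invoking Fact~\ref{fact:charfn-mult}. There is no real obstacle, since the inversion-table bijection is standard; an alternative (equivalent) justification is to build $\pi$ incrementally by inserting the elements $1, 2, \ldots, a$ one at a time into a uniformly random position among those available, noting that each insertion contributes a uniform and independent number of new inversions in $\{0, 1, \ldots, k-1\}$. The resulting product form is exactly what the subsequent analysis (central and peripheral estimates on $\int |\charfn_{a,b}(t)|\,dt$) will exploit, after combining $\charfn_a$ and $\charfn_b$ with Fact~\ref{claim:cross-inversion-property} to obtain a comparable formula for $\charfn_{a,b}$.
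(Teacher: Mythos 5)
Your proof is correct and takes essentially the same route as the paper: the paper also decomposes $\InvD_a = U_0 + \cdots + U_{a-1}$ into independent uniforms, via the incremental-insertion argument that you mention as your ``alternative (equivalent) justification,'' and then applies Fact~\ref{fact:charfn-mult} and the geometric-sum identity. (Your primary framing via the Lehmer-code bijection and the paper's via sequential insertion are two presentations of the same combinatorial fact.) One substantive point: the phase factor $e^{it(k-1)/2}$ that your computation produces is the \emph{correct} one; the $e^{it(k-1)}$ in the statement of the claim (and carried through the paper's own chain of equalities) is a typo. You can confirm this by propagating the phase into Proposition~\ref{prop:characteristicfunction}: your $e^{it(k-1)/2}$ yields the stated $e^{itab/2}$ there, whereas $e^{it(k-1)}$ would give $e^{itab}$. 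So rather than writing it off as ``matches up to the overall phase,'' you should state plainly that the claim's exponent is off by a factor of two, and that your derived formula is the one the rest of the argument actually uses (harmlessly, since only $\lvert\charfn_{a,b}\rvert$ enters the integral bounds).
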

\begin{proof}
Consider the process of placing the elements $1,\dots,a$ one by one, each time placing each new element between two elements or on some end of the array, to form an array $A$. 

For $k=1,\dots,a$, consider the random variable that counts the number of new inversions formed with $k$ when $k$ is placed. First of all, when $k$ is placed, the number of new inversions is equal to the number of elements to the left of $k$ at the time of placement (only the elements $1,\dots,k-1$ have been placed at this point). This means the random variable has a uniform distribution over $\{0,\dots,k-1\}$, which we denote by $U_{k-1}$. Furthermore, this situation applies regardless of the placement of the other elements, so this random variable $U_{k-1}$ is independent from all other previous random variables $U_{j-1}$ with $j<k$. 

Therefore, $\InvD_a$ can be written as the following sum of independent variables:
\[\InvD_a = \UniformD_0 + \dots + \UniformD_{a-1}. \]
We can use \cref{fact:charfn-mult} to calculate the characteristic function:
\begin{align}\label{eq:char:Xa}
\begin{split}
\charfn_a(t)&=\prod_{k=1}^{a} \mathbb{E}[e^{itU_{k-1}}]=\prod_{k=1}^{a}\left(\frac{1}{k}\sum_{m=0}^{k-1} e^{itm}\right) =\prod_{k=1}^{a}\left(\frac{e^{it(k-1)}}{k}\cdot\frac{\sin(kt/2)}{\sin(t/2)}\right) \\
&= e^{ia(a-1)/2} \prod_{k=1}^{a}\left(\frac{1}{k}\cdot\frac{\sin(kt/2)}{\sin(t/2)}\right).
\end{split}
\end{align}
The second-to-last step follows from the geometric sum formula and the identity $e^{it}-e^{-it}=2\sin(t)$, and the last step from the arithmetic sum formula. 
\end{proof}

Consider a random permutation of $[a+b]$, let $A$ be the array consisting of the first $a$ elements, and $B$ the array consisting of the remaining $b$. Then $\Inv(A)$ has distribution $X_a$, $\Inv(B)$ distribution $X_b$, $\Inv(AB)$ distribution $X_{a+b}$, and $\XInv(A,B)$ distribution $X_{a,b}$. By \cref{claim:cross-inversion-property}, we have:
\[ \InvD_{a+b} = \InvD_a + \InvD_b + \XInvD_{a,b}. \]\
Moreover, the values of $\Inv(A)$, $\Inv(B)$, and $\XInv(A,B)$ are independent. Hence, by \cref{fact:charfn-mult}
\[\charfn_{a+b}(t)=\charfn_{a}(t)\charfn_{b}(t)\charfn_{a,b}(t), \]
or
\[\charfn_{a,b}(t)=\frac{\charfn_{a+b}(t)}{\charfn_a(t)\charfn_b(t)}. \]
By \eqref{eq:char:Xa} we conclude:
\begin{proposition}
\label{prop:characteristicfunction}
For integers $n\ge 0$, let $s_n(t)=\prod_{k=1}^n\frac{\sin(kt)}{k}$. Then 
    \[\charfn_{a,b}(t)=e^{itab/2}\frac{s_{a+b}(t/2)}{s_a(t/2)s_b(t/2)}
    \text{ and } \lvert\charfn_{a,b}(t)\rvert=\left| \frac{s_{a+b}(t/2)}{s_a(t/2)s_b(t/2)}\right|. 
    \]
\end{proposition}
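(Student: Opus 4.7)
The plan is to combine the ratio identity already derived just above the statement of the proposition,
\[
\charfn_{a,b}(t)=\frac{\charfn_{a+b}(t)}{\charfn_a(t)\charfn_b(t)},
\]
with the product formula for $\charfn_n(t)$ from Claim~\ref{claim:inversion-charfn}, and then repackage the result in terms of the $s_n(\cdot)$ notation introduced in the proposition. As a preliminary rewriting, I would express $\charfn_n(t)$ in the form
\[
\charfn_n(t) \;=\; e^{it\,n(n-1)/4}\,\frac{s_n(t/2)}{\sin(t/2)^n},
\]
by separating the product in Claim~\ref{claim:inversion-charfn} into three pieces: the leading phase, which contributes $\prod_{k=1}^n e^{it(k-1)/2} = e^{it\sum_{k=1}^n (k-1)/2} = e^{itn(n-1)/4}$; the shared denominator $\sin(t/2)^n$; and the product $\prod_{k=1}^n \sin(kt/2)/k$, which is exactly $s_n(t/2)$ by the definition given in the proposition.

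The second step is to plug this rewriting into the ratio. The $\sin(t/2)^n$ denominators cancel cleanly because their total exponent $a+b$ matches on top and bottom. The sine products assemble into $s_{a+b}(t/2)/\bigl(s_a(t/2)\,s_b(t/2)\bigr)$. What remains is the net phase, with exponent
\[
\frac{it}{4}\Bigl[(a+b)(a+b-1)-a(a-1)-b(b-1)\Bigr] \;=\; \frac{it}{4}\cdot 2ab \;=\; \frac{itab}{2},
\]
where the middle equality is the one-line identity $(a+b)(a+b-1)-a(a-1)-b(b-1)=2ab$. This yields the first claimed formula. The formula for $|\charfn_{a,b}(t)|$ then follows immediately because $|e^{itab/2}|=1$ and $s_n(t/2)$ is real-valued.

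There is no real obstacle here; the argument is pure bookkeeping on top of Claim~\ref{claim:inversion-charfn} and Fact~\ref{fact:charfn-mult}. The one point requiring care is tracking the factor of $1/2$ in the leading phase, which traces back to writing the geometric sum $\sum_{m=0}^{k-1}e^{itm}$ symmetrically as $e^{it(k-1)/2}\sin(kt/2)/\sin(t/2)$; getting this right is what produces the exponent $n(n-1)/4$ (rather than $n(n-1)/2$) and, in turn, the factor $e^{itab/2}$ (rather than $e^{itab}$) in the final identity.
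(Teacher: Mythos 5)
Your proof is correct and follows essentially the same route as the paper: substitute the product formula from Claim~\ref{claim:inversion-charfn} into the ratio $\charfn_{a,b}(t)=\charfn_{a+b}(t)/(\charfn_a(t)\charfn_b(t))$ and simplify the phase and the sine products separately. Worth noting is that you have silently corrected a typo in the source: Claim~\ref{claim:inversion-charfn} as printed shows the phase factor $e^{it(k-1)}$, whereas your geometric-sum calculation gives the correct $e^{it(k-1)/2}$, so the collected phase is $e^{itn(n-1)/4}$ rather than $e^{itn(n-1)/2}$; only with this correction does the cancellation $(a+b)(a+b-1)-a(a-1)-b(b-1)=2ab$ yield the advertised $e^{itab/2}$ rather than $e^{itab}$.
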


\subsection{Center bound}
\label{sec:central}

For the first piece of the integral on the right-hand side of \eqref{eq:integral}, we integrate $\lvert\charfn_{a,b}(t)\rvert$ over the interval $[0,2\pi/(a+b)]$. For the sake of convenience, we substitute $t$ with $2t$ in order to avoid the denominator of $2$ in the sine terms of the integrand. 

\begin{lemma}[center bound]
    \label{lemma:center}
    For integers $b\ge a\ge 2$,
    \[ \int_0^{\frac{2\pi}{a+b}} \lvert\charfn_{a,b}(t)\rvert\, dt = 2 \int_0^{\frac{\pi}{a+b}} \lvert\charfn_{a,b}(2t)\rvert\, dt = O\left(\frac{1}{b\sqrt{a}}\right). \]    
\end{lemma}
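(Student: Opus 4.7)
The plan is to establish a pointwise Gaussian upper bound
\[
|\charfn_{a,b}(2t)| \;\le\; \exp\!\bigl(-\tfrac{1}{6}\,t^2\,ab(a+b+1)\bigr)
\]
on the interval of integration $[0,\pi/(a+b)]$, then extend the domain to $[0,\infty)$ and apply the standard Gaussian integral $\int_0^\infty e^{-\alpha^2 t^2}\,dt = \sqrt{\pi}/(2\alpha)$. With $\alpha^2 \doteq ab(a+b+1)/6$, the hypothesis $a\ge 2$ (which gives $a+b+1\ge b$) yields $\alpha = \Omega(b\sqrt{a})$, which delivers the desired $O(1/(b\sqrt{a}))$ estimate.

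For the pointwise bound, I start from Proposition~\ref{prop:characteristicfunction} and pair each factor in $s_{a+b}(t)$ with one in $s_b(t)$ to write
\[
|\charfn_{a,b}(2t)| \;=\; \prod_{k=1}^{b} \frac{\sin((a+k)t)/((a+k)t)}{\sin(kt)/(kt)}.
\]
For $t\in[0,\pi/(a+b)]$, every argument lies in $[0,\pi]$ where $\sin(u)/u$ is positive and decreasing, so each factor lies in $(0,1]$. Setting $f(u) \doteq -\log(\sin(u)/u)$, the fundamental theorem of calculus applied factor by factor gives
\[
-\log|\charfn_{a,b}(2t)| \;=\; \sum_{k=1}^b \bigl(f((a+k)t)-f(kt)\bigr) \;=\; \sum_{k=1}^b \int_{kt}^{(a+k)t} f'(u)\,du.
\]

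The crux is the uniform lower bound $f'(u)=1/u-\cot(u)\ge u/3$ on $(0,\pi)$. I would derive it from the Weierstrass product $\sin(u)/u=\prod_{n\ge 1}\bigl(1-u^2/(n\pi)^2\bigr)$, which upon logarithmic differentiation gives $f'(u)=\sum_{n\ge 1}\frac{2u}{(n\pi)^2-u^2}\ge \sum_{n\ge 1}\frac{2u}{(n\pi)^2}=u/3$, using the Basel identity $\sum_{n\ge 1}1/n^2=\pi^2/6$. Plugging into the integral representation and telescoping,
\[
-\log|\charfn_{a,b}(2t)| \;\ge\; \sum_{k=1}^{b}\tfrac{1}{6}\bigl((a+k)^2-k^2\bigr)t^2 \;=\; \tfrac{t^2}{6}\sum_{k=1}^{b}(a^2+2ak) \;=\; \tfrac{t^2\,ab(a+b+1)}{6},
\]
which is the promised Gaussian bound.

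The main subtlety is the uniformity of $f'(u)\ge u/3$ on the \emph{full} range $(0,\pi)$. A naive route using $\sin(x)/x\le e^{-x^2/6}$ on the numerator factors and a lower-bound Taylor estimate like $\sin(x)/x\ge 1-x^2/6$ on the denominator factors fails to telescope: it leaves a positive contribution to the exponent from the denominator product that can dominate the gain when $a\ll b$. Working at the level of $f'$ via Mittag-Leffler/Basel avoids this by combining numerator and denominator into a single telescoping integral before any pointwise inequality is applied, and so delivers the correct variance constant $ab(a+b+1)/12$ essentially for free.
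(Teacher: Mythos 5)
Your proof is correct, and it takes a genuinely different route from the paper's. Both arguments start from the product formula for $\lvert\charfn_{a,b}(2t)\rvert$, but the paper pairs the $a$ denominator terms $\sin(kt)$ with the numerator terms $\sin((b+k)t)$ and proves, via hand-crafted polynomial bounds on $\sin$ and $\cot$ (Claims~\ref{claim:center1}--\ref{claim:center3}), that each factor is at most $1-\frac{b^2}{2\pi^2}t^2$; taking the $a$-th power and then $1-x\le e^{-x}$ gives a Gaussian integrand with effective exponent $ab^2t^2/(2\pi^2)$. You instead take logarithms, observe that $-\log\lvert\charfn_{a,b}(2t)\rvert$ telescopes into the integrals $\sum_{k}\int_{kt}^{(a+k)t}f'(u)\,du$ with $f(u)=-\log(\sin(u)/u)$, and lower-bound $f'(u)=1/u-\cot(u)\ge u/3$ on $(0,\pi)$ via the Weierstrass product and Basel's $\sum 1/n^2=\pi^2/6$. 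This buys two things: the estimate is uniform in $u$ over the whole interval of integration with no case analysis and no explicit trigonometric inequalities, and because the telescope sums $\sum_k\bigl((a+k)^2-k^2\bigr)=ab(a+b+1)$ you obtain the pointwise bound $\exp\!\bigl(-t^2ab(a+b+1)/6\bigr)$ with exactly the Mann--Whitney variance constant, directly delivering the symmetric form $O(1/\sqrt{ab(a+b)})$ of Lemma~\ref{thm:gaussian:main} rather than the asymmetric $O(1/(b\sqrt a))$ intermediate. (As a minor remark, the choice of pairing is immaterial for you: pairing $k\leftrightarrow b+k$ over $k\in\{1,\dots,a\}$ gives $\sum_{k=1}^a\bigl((b+k)^2-k^2\bigr)=ab(a+b+1)$ as well, since the log turns the product into a sum that no longer cares how you group it. The paper's approach, by contrast, relies on the constraint $k\le b$ in Claim~\ref{claim:center3} and hence on the pairing with $k\le a\le b$.)
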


We can write 
\[\lvert\charfn_{a,b}(2t)\rvert=\prod_{k=1}^{a} \frac{k}{b+k}\cdot\frac{\sin((b+k)t)}{\sin(kt)} \]
as every term in the product on the right-hand side is nonnegative on this interval. We start with the following estimates.

\begin{claim}
    \label{claim:center3}
    For positive integers $k\le b$ and $x\in [0,\pi/(b+k)]$,
    \[\frac{k}{b+k}\cdot \frac{\sin((b+k)x)}{\sin(kx)}\le 1-\frac{b^2}{2\pi^2}x^2. \]
\end{claim}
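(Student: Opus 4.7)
The plan is to use Euler's infinite product formula for the sine,
\[\frac{\sin z}{z}=\prod_{n=1}^{\infty}\left(1-\frac{z^{2}}{n^{2}\pi^{2}}\right),\]
to rewrite the ratio as
\[\frac{k}{b+k}\cdot\frac{\sin((b+k)x)}{\sin(kx)}=\prod_{n=1}^{\infty}\frac{1-(b+k)^{2}x^{2}/(n^{2}\pi^{2})}{1-k^{2}x^{2}/(n^{2}\pi^{2})}.\]
On the given interval $x\in[0,\pi/(b+k)]$ we have $(b+k)^{2}x^{2}/(n^{2}\pi^{2})\le 1/n^{2}\le 1$, so every numerator is nonnegative; since $k\le b+k$, every numerator is no larger than its denominator. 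Hence every factor of the product lies in $[0,1]$, and truncating to just the $n=1$ factor yields the upper bound
\[\frac{k}{b+k}\cdot\frac{\sin((b+k)x)}{\sin(kx)}\le\frac{1-(b+k)^{2}x^{2}/\pi^{2}}{1-k^{2}x^{2}/\pi^{2}}.\]

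Next, I would substitute $\alpha\doteq k/(b+k)\in(0,1/2]$ and $u\doteq(b+k)^{2}x^{2}/\pi^{2}\in[0,1]$, noting that $(1-\alpha)^{2}u=b^{2}x^{2}/\pi^{2}$, and reduce the claim to the one-variable algebraic inequality
\[\frac{1-u}{1-\alpha^{2}u}\le 1-\frac{(1-\alpha)^{2}u}{2}\quad\text{for all }\alpha,u\in[0,1].\]
The denominator $1-\alpha^{2}u$ is positive (since $\alpha<1$ whenever $b\ge 1$), so cross-multiplying and collecting terms gives the equivalent form $u(A+B-1)\le AB\,u^{2}$ with $A\doteq(1-\alpha)^{2}/2$ and $B\doteq\alpha^{2}$. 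A short computation yields $A+B-1=(3\alpha^{2}-2\alpha-1)/2=(3\alpha+1)(\alpha-1)/2$, which is nonpositive on $\alpha\in[0,1]$, while $AB\,u^{2}\ge 0$ is evident; the inequality follows at once.

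I do not anticipate any substantial obstacle. The critical insight is that the hypothesis $(b+k)x\le\pi$ places us up to exactly the first positive zero of $\sin$, which is also where the $n=1$ factor of Euler's product formula vanishes; this is precisely what permits the clean upper bound by a single factor. The only care needed is to verify that all remaining factors of the product are in $[0,1]$ (so that dropping them yields an upper bound rather than a lower bound), and to check a sign-friendly quadratic in $\alpha$ with roots at $-1/3$ and $1$.
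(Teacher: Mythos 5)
Your proof is correct, and it takes a genuinely different route from the paper's. The paper expands $\sin((b+k)x)=\sin(kx)\cos(bx)+\cos(kx)\sin(bx)$ and then bounds the pieces using two auxiliary claims — $\sin(y)\le y-y^3/\pi^2$ on $[0,\pi]$ and $\cot(y)\le 1/y$ on $(0,\pi/2]$ — to arrive directly at $1-\frac{b^3x^2}{(b+k)\pi^2}\le 1-\frac{b^2x^2}{2\pi^2}$, where the last step uses $b+k\le 2b$. You instead invoke Euler's product $\frac{\sin z}{z}=\prod_{n\ge 1}(1-z^2/(n^2\pi^2))$, observe that on $[0,\pi/(b+k)]$ every factor of the resulting ratio-of-products lies in $[0,1]$, truncate to the $n=1$ factor, and reduce the claim to the two-variable algebraic inequality $\frac{1-u}{1-\alpha^2 u}\le 1-\frac{(1-\alpha)^2 u}{2}$, which you verify by a clean sign computation on the factored quadratic $(3\alpha+1)(\alpha-1)$. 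Both arguments are sound; yours is slicker and conceptually explains where the quadratic correction comes from (the $n=1$ factor), but it leans on the non-elementary product formula, whereas the paper's version is self-contained at the level of first-year calculus. One small bonus of your route is that the algebraic inequality holds for all $\alpha\in[0,1]$, so your argument does not actually need the hypothesis $k\le b$, while the paper's proof uses $k\le b$ twice (to apply its cotangent bound on the given interval and in the step $b+k\le 2b$).
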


To prove this claim, we first prove two trigonometric bounds. Refer to \cref{fig:trigplots} for a plot of the functions and bounds.

\begin{claim}
    \label{claim:center1}
    For positive integers $k$, and $x\in[0,\pi/k]$,
    \[\sin(kx)\le kx-\frac{(kx)^3}{\pi^2}. \]
\end{claim}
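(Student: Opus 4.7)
The plan is to substitute $y = kx$, which maps the interval $x \in [0, \pi/k]$ bijectively onto $y \in [0, \pi]$ and reduces the claim to showing
\[ \sin(y) \le y - \frac{y^3}{\pi^2} \quad \text{for } y \in [0, \pi]. \]
The cleanest route is to invoke the Weierstrass product formula for the sine function,
\[ \sin(y) = y \prod_{n=1}^{\infty} \left(1 - \frac{y^2}{n^2\pi^2}\right). \]
For $y \in [0, \pi]$ and each integer $n \ge 1$, the factor $1 - y^2/(n^2\pi^2)$ lies in $[1 - 1/n^2, 1] \subseteq [0,1]$, so the whole product is bounded above by any single factor. Retaining only the $n=1$ term yields $\sin(y) \le y(1 - y^2/\pi^2)$, which after undoing the substitution gives the claim.

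For a self-contained argument avoiding the product formula, define $f(y) \doteq y - y^3/\pi^2 - \sin(y)$; it suffices to verify $f \ge 0$ on $[0, \pi]$. The endpoints are favorable since $f(0) = 0$ and $f(\pi) = 0$. A cascade of sign analyses on the successive derivatives handles the interior. First, $f'''(y) = \cos(y) - 6/\pi^2$ changes sign exactly once on $(0, \pi)$, at $y_0 \doteq \arccos(6/\pi^2)$. Since $f''(0) = 0$ and $f''(\pi) = -6/\pi < 0$, this forces $f''$ to be positive on an initial interval $(0, y_1)$ and negative on $(y_1, \pi)$ for a unique $y_1 \in (y_0, \pi)$. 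In turn, $f'(0) = 0$ and $f'(\pi) = -1 < 0$ combined with the sign pattern of $f''$ give that $f'$ rises from $0$, attains a positive maximum at $y_1$, and descends to $-1$, vanishing at a single point $y_2 \in (y_1, \pi)$. Consequently $f$ is unimodal on $[0, \pi]$, rising from $f(0) = 0$ to a positive maximum at $y_2$ and then descending back to $f(\pi) = 0$, so $f \ge 0$ throughout, as required.

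The main obstacle is essentially nonexistent for this particular lemma: both proofs are short, and the substantive work of the section lies in the subsequent claims (e.g.\ Claim~\ref{claim:center3} and the center/peripheral bounds) that chain together trigonometric inequalities of this flavor to control $\lvert\charfn_{a,b}\rvert$.
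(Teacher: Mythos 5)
Both of your arguments are correct. The Weierstrass-product argument is genuinely different from the paper's and notably slicker: since each factor $1-y^2/(n^2\pi^2)$ lies in $[0,1]$ for $y\in[0,\pi]$, truncating the product at $n=1$ immediately gives $\sin(y)\le y(1-y^2/\pi^2)$, avoiding any calculus. It buys brevity at the cost of invoking a non-elementary identity, but it is a clean one-line proof once that identity is granted.

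Your fallback elementary argument lands on the same high-level idea as the paper — both establish that the auxiliary function has a unique critical point in $(0,\pi)$, which together with vanishing at the endpoints forces the sign — but the route to uniqueness differs. The paper works with $f(y)=\sin(y)-y+y^3/\pi^2$, factors $f'(y)=\frac{3y^2}{\pi^2}-2\sin^2(y/2)$ via the half-angle identity, and locates the unique zero by the standard fact about where a line through the origin meets a sine curve. You instead take derivatives one step further and run a sign-cascade from $f'''$ down to $f$. Both are sound; the paper's is a bit shorter, while yours makes the monotonicity chain fully explicit without appeal to the half-angle trick. (Incidentally, the paper's proof opens with a typographical sign slip — it writes ``$\sin(y)\ge y - y^3/\pi^2$'' but then correctly shows $f\le 0$, i.e.\ $\sin(y)\le y-y^3/\pi^2$; your direction matches the actual claim throughout.)
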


\begin{proof}
    Let $y=kx$. It is enough to argue that $\sin(y)\ge y-\frac{y^3}{\pi^2}$ in the range $y\in[0,\pi]$.
    
    Let $f(y)=\sin(y)-y+\frac{y^3}{\pi^2}$. Notice that $f(0)=f(\pi)=0$ and $f'(\pi)>0$. We will argue that there is a unique point $y^*\in (0,\pi)$ such that $f'(y^*)=0$, which will ensure that $f(y)\le 0$ for all $y\in [0,\pi]$.
    
    We can calculate that 
    \begin{align*}
        f'(y) &= \cos(y)-1+\frac{3y^2}{\pi^2}\\
        &= \frac{3y^2}{\pi^2}-2\sin^2\left(\frac{y}{2}\right).
    \end{align*}
    So $f'(y)=0$ if and only if $\sin(y/2)= \pm (\sqrt{6}/\pi)\cdot (y/2)$, which is satisfied by one unique point $y^*\in(0,\pi)$.
\end{proof}

\begin{claim}
    \label{claim:center2}
    For positive integers $k$, and $x\in (0,\pi/2k]$, 
    \[\cot(kx)\le \frac{1}{kx}. \]
\end{claim}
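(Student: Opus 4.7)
The plan is to reduce to a single-variable inequality via the substitution $y = kx$. For $k$ a positive integer and $x \in (0, \pi/(2k)]$, we have $y \in (0, \pi/2]$, and the claim becomes $\cot(y) \le 1/y$, i.e., $y\cos(y) \le \sin(y)$.

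To prove this, I would define $g(y) \doteq \sin(y) - y\cos(y)$ and show $g(y) \ge 0$ on $[0, \pi/2]$. This works by direct differentiation: $g(0) = 0$ and
\[
g'(y) = \cos(y) - \cos(y) + y\sin(y) = y\sin(y),
\]
which is nonnegative on $[0, \pi/2]$. Hence $g$ is nondecreasing on this interval and $g(y) \ge g(0) = 0$, giving $\sin(y) \ge y\cos(y)$. Dividing by $\sin(y) > 0$ (and noting $\cos(y) \ge 0$ so both sides of the original inequality are well-defined and nonnegative, with the case $y = \pi/2$ handled separately since $\cot(\pi/2) = 0 \le 2/\pi$) yields $\cot(y) \le 1/y$, and substituting back gives the claim.

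There is no real obstacle here; the argument is a standard calculus exercise, and the only mild care needed is at the endpoint $y = \pi/2$, where $\tan(y)$ blows up but $\cot(y) = 0$ makes the inequality trivial. This mirrors the style of Claim~\ref{claim:center1}, where a similar monotonicity argument via a single auxiliary function suffices.
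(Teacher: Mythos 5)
Your proof is correct and takes essentially the same route as the paper: both reduce to showing $\sin(y) - y\cos(y) \ge 0$ on $[0,\pi/2]$ by noting the function vanishes at $0$ and has nonnegative derivative $y\sin(y)$ there (the paper phrases this as comparing the derivatives of the two sides separately, but that is the same computation). The extra endpoint discussion is harmless but unnecessary, since once $y\cos(y)\le\sin(y)$ is established the original inequality follows directly.
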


\begin{proof}
    Let $y=kx$. We will prove that $\cot(y)\le \frac{1}{y}$ for all $y\in (0,\pi/2]$. It is enough to prove that $y\cos(y)\le \sin(y)$ for all $y\in [0,\pi/2]$.
    
    The latter inequality follows because both sides are zero when $y=0$, and the derivative of the left hand side is bounded above by the derivative of the right hand side when $y\in [0,\pi/2]$:
    \[\cos(y)-y\sin(y)\le \cos(y).  \]
\end{proof}

\begin{figure}[ht]
    \centering
    \begin{subfigure}[b]{0.4\textwidth}
        \centering
        \begin{tikzpicture}
            \draw[very thin, color=gray] (-0.1,-0.1) grid (3.9,3.9);
            \draw[->] (-0.1,0)--(3.9,0) node[right]{$x$};
            \draw[->] (0,-0.1)--(0,3.9) node[above]{$y$};
            
            \draw[densely dotted, domain=0:3.1415926, variable=\x, thick] plot (\x,{sin(\x r)});
            \draw[dashed, domain=0:3.1415926, variable=\x, thick] plot (\x,{\x-(\x^3/9.8696)});
            \node at (3.1415926,-0.1)[anchor=north]{$\pi$};
        \end{tikzpicture}
        \caption{\cref{claim:center1}}
    \end{subfigure}
    \begin{subfigure}[b]{0.4\textwidth}
        \centering
        \begin{tikzpicture}
            \draw[very thin, color=gray] (-0.1,-0.1) grid (3.9,3.9);
            \draw[->] (-0.1,0)--(3.9,0) node[right]{$x$};
            \draw[->] (0,-0.1)--(0,3.9) node[above]{$y$};
            
            \draw[densely dotted, domain=0.5:3.1415926, variable=\x, thick] plot (\x,{cot(\x/2 r)});
            \draw[dashed, domain=0.5:3.1415926, variable=\x, thick] plot (\x,{2/\x});
            \node at (3.1415926,0)[anchor=north]{$\pi/2$};
        \end{tikzpicture}
        \caption{\cref{claim:center2}}
    \end{subfigure}
    \caption{Plots of Trigonometric Bounds. \\ Trigonometric functions are dotted, upper bounds are dashed.}
    \label{fig:trigplots}
\end{figure}

Now we finish the proof of \cref{claim:center3}.

\begin{proof}
    Notice that  
    \[\frac{\sin((b+k)x)}{\sin(kx)}=\frac{\sin(kx)\cos(bx)+\sin(bx)\cos(kx)}{\sin(kx)} \]
    \[= \cos(bx)+\cot(kx)\sin(bx). \]
    
    Of course, $\cos(bx)\le 1$. Furthermore, from \cref{claim:center1} and \cref{claim:center2}, we can see that in this domain of $x$, $\sin(bx)\le bx-\frac{b^3x^3}{\pi^2}$ and $\cot(kx)\le \frac{1}{kx}$. Additionally, $\sin(bx)\ge 0$. Therefore, using the fact that $k\le b$,
    \[\frac{k}{b+k}\cdot \frac{\sin((b+k)x)}{\sin(kx)}\le \frac{k}{b+k}\left(1+\frac{1}{kx}\left(bx-\frac{b^3x^3}{\pi^2}\right)\right)\]
    \[\le 1-\frac{b^3}{(b+k)\pi^2}x^2 \le 1-\frac{b^2}{2\pi^2}x^2. \]
\end{proof}

From this, we can now prove \cref{lemma:center}.

\begin{proof}
    Recall that on this interval
    \[\lvert\charfn_{a,b}(2t)\rvert=\prod_{k=1}^{a} \frac{k}{b+k}\cdot\frac{\sin((b+k)t)}{\sin(kt)}.\]
    
\cref{claim:center3} applies on all $t$ in the domain because $\pi/(b+a)\le \pi/(b+k)$ for all $k$.
    
    Therefore,
    \[\int_0^{\frac{\pi}{a+b}} \lvert\charfn_{a,b}(2t)\rvert\,dt \le \int_0^{\frac{\pi}{a+b}} \left(1-\frac{b^2t^2}{2\pi^2}\right)^a\,dt \le \int_0^{\frac{\pi}{a+b}} \exp\left(-\frac{ab^2t^2}{2\pi^2}\right)\,dt =O\left(\frac{1}{b\sqrt{a}}\right). \]
    Here, we use the fact that $1-x\le \exp(-x)$ for all $x$, and the Gaussian integral: the integral of $\exp(-t^2)$ over $\mathbb{R}$ is constant, and by scaling the argument, the integral of $\exp(-ct^2)$ over $\mathbb{R}$ is a constant factor of $c^{-1/2}$ for any parameter $c$. 
\end{proof}

\subsection{Peripheral bound}
\label{sec:peripheral}

We now bound $\lvert\charfn_{a,b}(t)\rvert$ in the region away from $0$, namely, the interval $[2\pi/(a+b),\pi]$. As for the other part of the integral on the right-hand side of \eqref{eq:integral}, we substitute $t$ with $2t$ in order to avoid the denominator of $2$ in the sine terms of the integrand. 

\begin{lemma}[peripheral bound]
    \label{lemma:far}
    For integers $b\ge a\ge 2$,
    \[ \int_{\frac{2\pi}{a+b}}^\pi \lvert\charfn_{a,b}(t)\rvert\, dt =  2 \int_{\frac{\pi}{a+b}}^{\frac{\pi}{2}} \lvert\charfn_{a,b}(2t)\rvert\, dt = O\left(\frac{1}{b\sqrt{a}}\right). \]    
\end{lemma}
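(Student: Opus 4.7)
}

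The plan follows the same blueprint as the center bound proof (\cref{lemma:center}): derive a pointwise upper bound on $|\charfn_{a,b}(2t)|$ on the peripheral range $t\in[\pi/(a+b),\pi/2]$, then integrate. The new difficulty is that the naive factor-by-factor bounds used in \cref{claim:center3} break down here because each denominator $\sin(kt)$ with $k\le a$ has zeros inside the peripheral range, producing apparent singularities in the product form
\[
|\charfn_{a,b}(2t)| \;=\; \prod_{k=1}^{a}\frac{k}{b+k}\cdot\frac{|\sin((b+k)t)|}{|\sin(kt)|},
\]
even though $\charfn_{a,b}$ is a genuine trigonometric polynomial and hence bounded.

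The first step is to invoke the pole reduction lemma (\cref{lemma:bijection}) to repair the product. By the combinatorial matching of \cref{lemma:intervalmatching}, every zero $t_* = j\pi/k$ of a denominator lying in the peripheral range can be paired with a nearby zero $j'\pi/(b+k')$ of some numerator $\sin((b+k')t)$, with $|t_* - j'\pi/(b+k')| = O(1/(b+k'))$. After the associated rearrangement, the expression becomes a product in which every remaining \emph{matched ratio} is bounded (its apparent pole has been cancelled against a nearby zero) and the remaining \emph{unmatched numerators} are each at most $1$ in absolute value.

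The second step is to combine the matched ratios with the prefactor $\prod_{k=1}^{a}\frac{k}{b+k} = 1/\binom{a+b}{a}$ to produce a pointwise bound on $|\charfn_{a,b}(2t)|$ that records both how each matched ratio decays as $t$ moves away from its paired pole and how the product tightens as $a$ grows. The third step is to integrate this bound over $[\pi/(a+b),\pi/2]$. The $1/b$ savings arise because each high-frequency numerator $|\sin((b+k')t)|$ averages to a constant less than $1$ on any subinterval of length $\Theta(1/b)$, and the $1/\sqrt{a}$ savings arise from a concentration estimate on the product of $a$ such oscillatory factors, analogous to the Gaussian estimate that drove the center bound.

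The main obstacle will be the quantitative pointwise bound of the second step. A crude bound that only uses the boundedness of the matched ratios and the unmatched numerators yields at best $O(1/b)$ for the integral, not $O(1/(b\sqrt{a}))$. Extracting the extra $1/\sqrt{a}$ requires tracking the precise decay of each matched ratio away from its paired pole and arguing that the resulting product concentrates sharply across the peripheral range — a peripheral analog of the Gaussian-style concentration used in \cref{lemma:center}, but now driven by cancellations identified through the pole reduction rather than by proximity to the origin.
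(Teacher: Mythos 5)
Your proposal correctly identifies the pole reduction lemma as the essential tool for taming the apparent singularities in $|\charfn_{a,b}(2t)|$ on the peripheral range, and you are right that a crude application of it yields only $O(1/b)$. However, the mechanism you propose for extracting the extra $1/\sqrt{a}$ — a ``concentration estimate on the product of $a$ such oscillatory factors, analogous to the Gaussian estimate that drove the center bound'' — is not a workable plan, and this is precisely the step the paper handles with a different and more concrete device. There is no single pointwise Gaussian-style bound of order $1/\sqrt{a}$ valid across the peripheral range (and your intuition that each high-frequency numerator ``averages to a constant less than $1$ over an interval of length $\Theta(1/b)$'' is also not where the $1/b$ comes from: the $1/b$ comes simply from the prefactor $\frac{1}{\bij_t(1)} \le \frac{1}{b}$ in the ratio that is retained).

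What the paper actually does is split the peripheral interval into three subregions governed by a free parameter $n \le a$: $[\pi/(a+b),\pi/(2n)]$, $[\pi/(2n),\pi/2-\pi/(2n)]$, and $[\pi/2-\pi/(2n),\pi/2]$. In the first region, $t$ is small enough that the linear lower bound $\sin(kt) \ge 2kt/\pi$ applies for all $k\le n$, so one can afford to retain $n$ ratios and obtain a pointwise bound $\sim (\pi/2)^n/(b^n t^n)$, which integrates to $O(1/(b(n-1)))$. In the middle region, retaining only two ratios gives a bound that integrates to $O(n/b^2)$. The outer region near $\pi/2$ contributes $O(1/(b(n-1)))$ with one retained ratio. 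Balancing $\frac{1}{bn}$ against $\frac{n}{b^2}$ and using $b \ge a$, the paper sets $n=\lceil\sqrt{a}\rceil$ to make all three contributions $O(1/(b\sqrt{a}))$. The $1/\sqrt{a}$ therefore arises not from any concentration phenomenon but from a trade-off: near the origin one can keep many pole-reduced ratios (making the integrand decay like $t^{-n}$), but keeping more ratios shrinks the region over which this decay applies. Your proposal names the obstacle accurately but stops short of a mechanism that would resolve it; as written it would not yield the stated bound.
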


In this region, the main problem is that the denominator of $\lvert\charfn_{a,b}(2t)\rvert$ often goes to zero, which could potentially blow up the integrand. However, the terms in the numerator always cancel out these blowups. The following lemma will be our main tool for bounding $\lvert\charfn_{a,b}(2t)\rvert$ in this region; it will allow terms in the numerator to cancel out bad terms in the denominator.

\begin{lemma}[pole reduction]
    \label{lemma:bijection}
    For every $t\in\mathbb{R}$, there exists a bijection $\bij_t:\{1,\dots,a\}\to\{b+1,\dots,b+a\}$ (depending on $t$) such that for every $k=1,\dots,a$,
    \[\left|\frac{1}{k}\sin(kt)\right| \ge \left|\frac{1}{\bij_t(k)}\sin(\bij_t(k)t)\right|. \]
\end{lemma}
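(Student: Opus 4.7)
My plan is to reduce the construction of $\bij_t$ to a combinatorial matching problem whose data is the approximate periodicity of $|\sin(kt)|$ as a function of the integer $k$. First, by the $\pi$-periodicity and evenness of $|\sin|$, it suffices to treat $t \in (0, \pi/2]$. Writing $A = \{1,\ldots,a\}$, $B = \{b+1,\ldots,b+a\}$, and $f(k) \doteq |\sin(kt)|/k$, the standard sorted-matching criterion (a direct consequence of Hall's theorem) says that a bijection $\bij_t : A \to B$ with $f(k) \ge f(\bij_t(k))$ for every $k$ exists if and only if, for every threshold $v \ge 0$,
\[
    \bigl|\{k \in A : f(k) \ge v\}\bigr| \;\ge\; \bigl|\{m \in B : f(m) \ge v\}\bigr|.
\]
This reformulates the lemma as a single dominance statement to verify for all $v$.

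To establish this dominance, I will unpack the quasi-periodic structure of $|\sin(kt)|$. Letting $p = \pi/t$, I partition the positive integers into sign-blocks $H_j = \{k \in \mathbb{Z}_+ : (j-1)p < k \le jp\}$ on which $\sin(kt)$ has constant sign. On each block, writing $d(k)$ for the distance from $kt$ to the nearest multiple of $\pi$, one has $|\sin(kt)| = \sin(d(k))$ with $d(k) \in [0,\pi/2]$, so $|\sin(kt)|$ traces a single hump of $\sin$ as a function of the integer position within $H_j$. Because $B = A + b$, the values $\{|\sin(mt)| : m \in B\}$ are exactly the values $\{|\sin(kt)| : k \in A\}$ sampled after a rigid rotation by $bt \bmod \pi$ on the phase circle $\mathbb{R}/\pi\mathbb{Z}$. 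I will then invoke the combinatorial interval matching lemma (\cref{lemma:intervalmatching}) block by block to obtain a bijection $\bij_t$ that respects the shifted block decomposition, matching the phase rank of $k$ in its $A$-block to a comparable phase rank of $\bij_t(k)$ in its $B$-block. Combined with the automatic slack $\bij_t(k)/k \ge (b+1)/a \ge 1$ coming from $b \ge a$, this will yield $\sin(d(k))/k \ge \sin(d(\bij_t(k)))/\bij_t(k)$ pointwise, establishing the lemma.

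The main obstacle is precisely this matching step, because one cannot in general arrange $d(\bij_t(k)) \le d(k)$ pointwise. For instance, with $a=1$, $b=2$, $t = \pi/10$, one has $d(1) = \pi/10$, while the only candidate image $\bij_t(1) = 3$ satisfies $d(3) = 3\pi/10 > d(1)$; the lemma still holds because the weight factor $1/3$ compensates, as $\sin(\pi/10) \approx 0.309$ exceeds $\sin(3\pi/10)/3 \approx 0.270$. In general one must trade phase mismatch against the weight ratio $\bij_t(k)/k$: a $B$-element whose phase lies near a zero of $\sin(\cdot\,t)$ must be paired with an $A$-element whose phase is also near such a zero, with any residual discrepancy absorbed by the multiplicative slack $\bij_t(k)/k \ge (b+1)/a$. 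Orchestrating this trade-off consistently across all sign-blocks is the technical heart of the argument and is exactly what the interval matching lemma is designed to accomplish.
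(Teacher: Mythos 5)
Your proposal correctly identifies the interval matching lemma as the key combinatorial tool, but it stops short of the step that actually makes the lemma work, and the "slack" mechanism you propose in its place does not suffice.

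You open with a sorted-matching (dominance) reformulation — that for every threshold $v$ the $A$-side dominates the $B$-side in count — which is a valid characterization, but you never verify the dominance; instead you pivot to invoking the interval matching lemma. Fine, that is the paper's tool too. But you then assert that the matching, "combined with the automatic slack $\bij_t(k)/k\ge 1$," yields $\sin(d(k))/k \ge \sin(d(\bij_t(k)))/\bij_t(k)$ pointwise. This is exactly what is not automatic. The weight ratio $\ell/k$ is bounded above by $b+a$, whereas the ratio $\sin(d(\ell))/\sin(d(k))$ can be arbitrarily large when $kt$ is near a multiple of $\pi$ and $\ell t$ is near an odd multiple of $\pi/2$; no uniform slack fixes this. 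You acknowledge the phase-mismatch obstruction yourself (your $a{=}1,b{=}2$ example), but your resolution — matching "phase ranks block by block" — is not a proof; it is a restatement of the difficulty. In particular, there is no guarantee that the $B$-element matched to $k$ has $d(\bij_t(k)) \le d(k)$, and no argument is given for why the trade-off balances out.

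What is actually needed — and what the paper supplies in the step from the interval matching lemma to the pole reduction lemma (Claim~\ref{claim:bijection-proof}) — is a quantitative consequence of the enclosing relation itself: if the $k$-interval of $t$ contains the $\ell$-interval of $t$, then letting $\kappa$ and $\lambda$ be the fractional parts of $kt$ and $\ell t$ (after rescaling the period to $1$), one has $\kappa/k \ge \lambda/\ell$, because $\kappa/k$ and $\lambda/\ell$ are the distances from $t$ to the respective left interval endpoints and the $k$-endpoint lies weakly left of the $\ell$-endpoint. Then, when $\kappa\le\lambda$, concavity of $f=|\sin|$ (with $f(0)=f(1)=0$) gives $f(\kappa) \ge \frac{\kappa}{\lambda}f(\lambda) \ge \frac{k}{\ell}f(\lambda)$, i.e., $f(\kappa)/k\ge f(\lambda)/\ell$; the case $\lambda<\kappa$ follows by reflecting $f$. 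This is the precise mechanism by which phase mismatch is traded off against the factor $\ell/k$, and it depends on the enclosing structure, not on a crude bound $\ell/k\ge 1$. Your proposal does not contain this argument (or an equivalent), so the key inference is missing.
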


A basic bound can be found by applying \cref{lemma:bijection} on $\lvert\charfn_{a,b}(2t)\rvert$ for $k=2,\dots,a$, resulting in an upper bound of $1/b\sin(t)$. This bound is usable, but is weak on points closer to 0. To remedy this, we can divide the domain of integration into multiple parts, where the points closer to 0 can safely include more terms in the denominator.

Let $2\le n\le a$ be an integer. We split the domain of integration into three intervals: $[\frac{\pi}{a+b}, \frac{\pi}{2n}]$, $[\frac{\pi}{2n},  \frac{\pi}{2}- \frac{\pi}{2n}]$, and $[\frac{\pi}{2}-\frac{\pi}{2n},\frac{\pi}{2}]$. By selecting a good value of $n$, we can get a reasonable upper bound on this region.
Here, we will make use of \cref{lemma:bijection} and the following linear approximation to sine: 

\begin{fact}
    \label{fact:linearlowerbound}
    For a positive integer $k$ and $t\in [0, \pi/2k]$,
    \[\sin(kt)\ge \frac{2kt}{\pi }. \]
\end{fact}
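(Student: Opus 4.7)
The plan is to reduce to a single-variable inequality and then invoke concavity of $\sin$. Specifically, I would substitute $y \doteq kt$, which ranges over $[0,\pi/2]$ as $t$ ranges over $[0,\pi/(2k)]$, so the claim becomes $\sin(y) \ge \tfrac{2y}{\pi}$ for $y \in [0,\pi/2]$. This is the classical Jordan inequality.

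To prove it, I would observe that the right-hand side $\tfrac{2y}{\pi}$ is exactly the linear function agreeing with $\sin$ at the two endpoints of $[0,\pi/2]$: it vanishes at $y=0$ and equals $1 = \sin(\pi/2)$ at $y = \pi/2$. Since $\sin''(y) = -\sin(y) \le 0$ on $[0,\pi/2]$, the function $\sin$ is concave on that interval, and any concave function lies above the chord connecting two of its points. Therefore $\sin(y) \ge \tfrac{2y}{\pi}$ on $[0,\pi/2]$, which is what we needed.

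There is essentially no obstacle here — the statement is a textbook inequality with a one-line justification via concavity. An alternative route, if we wanted to avoid invoking the concavity shortcut, would be to define $g(y) \doteq \sin(y) - \tfrac{2y}{\pi}$, note $g(0) = g(\pi/2) = 0$, compute $g'(y) = \cos(y) - \tfrac{2}{\pi}$ which has a unique zero in $(0,\pi/2)$ at $y^* = \arccos(2/\pi)$, conclude that $g$ first increases then decreases on $[0,\pi/2]$, and hence $g \ge 0$ throughout the interval since it is nonnegative at both endpoints. Either argument suffices; I would present the concavity version as it is shorter and slots directly into the surrounding discussion.
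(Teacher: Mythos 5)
Your proposal is correct and matches the paper's argument: the paper also notes that $\sin(kt)$ and $\tfrac{2kt}{\pi}$ agree at the endpoints $t=0$ and $t=\pi/(2k)$ and then appeals to concavity of $\sin$ on that interval. Your substitution $y=kt$ and the optional derivative-based variant are cosmetic additions to the same idea.
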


\begin{proof}
    Notice that $\sin(kt)=\frac{2kt}{\pi}$ when $t=0$ and $t=\frac{\pi}{2k}$. This fact then follows since sine is concave on this interval.     
\end{proof}

\paragraph{Region I.}

The first region of integration is $[\pi/(a+b),\pi/(2n)]$.

\begin{align*}
    \int_{\frac{\pi}{a+b}}^{\frac{\pi}{2n}} \lvert\charfn_{a,b}(2t)\rvert\,dt &\le  \int_{\frac{\pi}{a+b}}^{\frac{\pi}{2n}} \frac{n!}{\bij_t(1)\cdots \bij_t(n)} \left\lvert \frac{\sin(\bij_t(1)t)\cdots \sin(\bij_t(n)t)}{\sin(t)\cdots \sin(nt)}\right\rvert \, dt \\
    &\le \frac{n!}{b^n}\int_{\frac{\pi}{a+b}}^{\frac{\pi}{2n}} \left\lvert\frac{1}{\sin(t)\cdots \sin(nt)} \right\rvert\,dt \\
    &\le \frac{1}{b^n}\int_{\frac{\pi}{a+b}}^{\frac{\pi}{2n}} \left(\frac{\pi}{2}\right)^n \frac{1}{t^n}\,dt\\
    &\le \frac{1}{b^n}\cdot \left(\frac{\pi}{2}\right)^n \cdot\frac{1}{n-1}\cdot \left(\frac{a+b}{\pi}\right)^{n-1}\\
    &\le \frac{1}{b^n}\cdot \left(\frac{\pi}{2}\right)^n \cdot\frac{1}{n-1}\cdot \left(\frac{2b}{\pi}\right)^{n-1}\\
    &\le \frac{\pi}{2b(n-1)}.
\end{align*}

The first two steps involve applying \cref{lemma:bijection} on all $k \in \{n+1,\dots,a\}$, and then using the fact that $\lvert\sin(x)\rvert\le 1$ and $\bij_t(k)\ge b$. The third step uses \cref{fact:linearlowerbound} for $k \in \{1,\dots,n\}$, which applies on the interval $[\pi/(a+b),\pi/(2n)]$ for these values of $k$. From there, we bound with the left limit of integration.

\paragraph{Region II.}

We now bound $\lvert\charfn_{a,b}(2t)\rvert$ on the interval $[\frac{\pi}{2n},\frac{\pi}{2}-\frac{\pi}{2n}]$. We can use \cref{lemma:bijection} to eliminate all terms except $k \in \{1,2\}$ this time.
\begin{align*}
    \int_{\frac{\pi}{2n}}^{\frac{\pi}{2}-\frac{\pi}{2n}} \lvert\charfn_{a,b}(2t)\rvert\,dt &\le \int_{\frac{\pi}{2n}}^{\frac{\pi}{2}-\frac{\pi}{2n}} \frac{2}{\bij_t(1)\bij_t(2)} \left\lvert\frac{\sin(\bij_t(1)t)\sin(\bij_t(2)t)}{\sin(t)\sin(2t)} \right\rvert\,dt\\
    &\le \frac{2}{b^2}\int_{\frac{\pi}{2n}}^{\frac{\pi}{2}-\frac{\pi}{2n}} \left\lvert\frac{1}{\sin(t)\sin(2t)} \right\rvert\,dt.
\end{align*}

Because $\lvert\sin(t)\rvert$ is increasing here and $\lvert\sin(2t)\rvert$ is symmetric about $\frac{\pi}{4}$, the value of the integral on the interval $[\frac{\pi}{2n},\frac{\pi}{4}]$ exceeds the value on the interval $[\frac{\pi}{4},\frac{\pi}{2}-\frac{\pi}{2n}]$. Using \cref{fact:linearlowerbound},
\[\frac{2}{b^2}\int_{\frac{\pi}{2n}}^{\frac{\pi}{4}} \left\lvert\frac{1}{\sin(t)\sin(2t)} \right\rvert\,dt
\le \frac{1}{b^2}\int_{\frac{\pi}{2n}}^{\frac{\pi}{4}} \left(\frac{\pi}{2}\right)^2 \frac{1}{t^2}\,dt
\le  \frac{\pi^2}{4b^2}\cdot \frac{2n}{\pi}
=\frac{\pi n}{2b^2}.\]
We have now established that
\[\int_{\frac{\pi}{2n}}^{\frac{\pi}{2}-\frac{\pi}{2n}} \lvert\charfn_{a,b}(2t)\rvert\,dt \le \frac{\pi n}{b^2}.\]

\paragraph{Region III.}

Notice that $\lvert\sin(t)\rvert$ is increasing on the interval $[\frac{\pi}{2}-\frac{\pi}{2n},\frac{\pi}{2}]$, so we can bound $\lvert\sin(t)\rvert$ by $\lvert\sin(\frac{\pi}{2}-\frac{\pi}{2n})\rvert$. Similar to before, the first step follows from \cref{lemma:bijection}, this time applied to $k \in \{2,\dots,a\}$. 
\begin{align*}
    \int_{\frac{\pi}{2}-\frac{\pi}{2n}}^{\frac{\pi}{2}} \lvert\charfn_{a,b}(2t)\rvert\,dt 
    &\le \int_{\frac{\pi}{2}-\frac{\pi}{2n}}^{\frac{\pi}{2}} \frac{1}{\bij_t(1)} \left\lvert\frac{\sin(\bij_1(t))}{\sin(t)}\right\rvert\,dt
    \le \int_{\frac{\pi}{2}-\frac{\pi}{2n}}^{\frac{\pi}{2}} \frac{1}{|b \sin(t)|}\,dt\\
    &\le \frac{\pi}{2n}\cdot \frac{1}{b\sin(\frac{\pi}{2}-\frac{\pi}{2n})}
    \le \frac{\pi}{2n}\frac{1}{b(1-\frac{1}{n})}
    = \frac{\pi}{2b(n-1)}.
\end{align*}

\paragraph{Overall bound.}

Summing the above bounds, we can deduce that
\[\int_{\frac{\pi}{a+b}}^{\frac{\pi}{2}} \lvert\charfn_{a,b}(2t)\rvert\,dt \le \frac{\pi}{b(n-1)}+\frac{\pi n}{b^2}.  \]
By choosing $n=\lceil\sqrt{a} \,\rceil$ (keeping in mind that $b\ge a \ge 2$), we can deduce \cref{lemma:far}.

\subsection{Pole reduction}
\label{sec:pole-reduction}

Finally, we prove the pole reduction lemma (\cref{lemma:bijection}). The essence is an interval matching strategy capture \cref{lemma:intervalmatching}. Here is the intuition. 

Recall that we want to upper bound factors of the form $|\frac{\sin(\ell t)}{\ell}|$ by rescaled versions $|\frac{\sin(k t)}{k}|$ of the same pattern, where $\ell \in \{b+1,\dots,b+a\}$ and $k \in \{1,\dots,a\}$ are matched. The matching definitely needs to avoid situations like in Figure~\ref{fig:interval:bad}, where $|\frac{\sin(kt)}{k}|$ vanishes at the point $t$ while $|\frac{\sin(\ell t)}{\ell}|$ does not. Ideally, the period of the $k$-scaled version that contains the point $t$ encloses the period of the $\ell$-scaled version that contains $t$, like in Figure~\ref{fig:interval:good}. As long as the pattern is convex, this ensures that the $k$-scaled version is larger than the $\ell$-scaled version everywhere on the encompassed period. (We will formally prove this in Claim~\ref{claim:bijection-proof}.) Thus, if at every point $t$, we can set up a matching such that the enclosing relationship holds for all matched pairs, we are home free. Lemma~\ref{lemma:intervalmatching} below does exactly this.

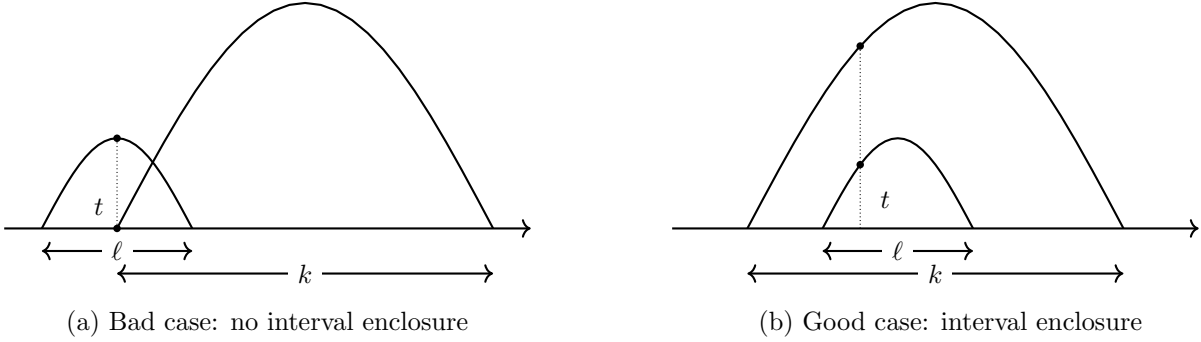
\begin{figure}[ht]
    \centering
    \begin{subfigure}[b]{0.45\textwidth}
        \centering
        \begin{tikzpicture}
            \draw[->,thick] (-0.5,0)--(6.5,0);
            
            \draw[domain=1:6,variable=\t,thick] plot ({\t}, {3*sin((\t-1)*36)});
            \draw[domain=0:2,variable=\t,thick] plot ({\t}, {1.2*sin((\t)*90)});
            
            \draw[<->, thick] (0,-0.3) --
                node[pos=0.5,style={fill=white}] {\small $\ell$}
            (2,-0.3);
            \draw[<->, thick] (1,-0.6) --
                node[pos=0.5,style={fill=white}] {\small $k$}
            (6,-0.6);
            
            \node at (1,0)[circle,fill,inner sep=1pt,label={above left:\small $t$}]{};
            \node at (1,{1.2*sin((1)*90)})[circle,fill,inner sep=1pt]{};
            \draw[densely dotted] (1,0)--(1,{1.2*sin((1)*90)});
        \end{tikzpicture}
        \caption{Bad case: no interval enclosure}\label{fig:interval:bad}
    \end{subfigure}
    \hfill
    \begin{subfigure}[b]{0.45\textwidth}
        \begin{tikzpicture}
            \draw[->,thick] (0,0)--(7,0);
            
            \draw[domain=1:6,variable=\t,thick] plot ({\t}, {3*sin((\t-1)*36)});
            \draw[domain=2:4,variable=\t,thick] plot ({\t}, {1.2*sin((\t-2)*90)});
            
            \draw[<->, thick] (2,-0.3) --
                node[pos=0.5,style={fill=white}] {\small $\ell$}
            (4,-0.3);
            \draw[<->, thick] (1,-0.6) --
                node[pos=0.5,style={fill=white}] {\small $k$}
            (6,-0.6);
            
            \node at (2.5,0)[label={above right:\small $t$}]{};
            \node at (2.5,{3*sin((2.5-1)*36)})[circle,fill,inner sep=1pt]{};
            \node at (2.5,{1.2*sin((2.5-2)*90)})[circle,fill,inner sep=1pt]{};
            \draw[densely dotted] (2.5,0)--(2.5,{3*sin((2.5-1)*36)});
        \end{tikzpicture}
        \caption{Good case: interval enclosure}\label{fig:interval:good}
    \end{subfigure}
    \caption{Enclosing intervals are needed for \cref{lemma:bijection}.}
    \label{fig:goodbad}
\end{figure}

Let us first introduce some notation. For every real number $t$ and positive integer $k$, there is a unique integer $n$ such that $t$ is contained in the half-open interval $[n/k, (n+1)/k)$. We call this interval the \emph{$k$-interval} of $t$. For positive integers $k,\ell$, we can say that the $k$-interval of $t$ \emph{encloses} the $\ell$-interval of $t$ if the $\ell$-interval of $t$ is a subset of the $k$-interval of $t$. We use the shorthand that $k$ encloses $\ell$ at $t$.


\begin{lemma}[interval matching]
    \label{lemma:intervalmatching}
    Let $a,b$ be positive integers. For any real $t$, there exists a bijection $\bij_t$ between $\{1,\dots,a\}$ and $\{b+1,\dots,b+a\}$ such that for all $k=1,\dots,a$, $k$ encloses $\ell=\bij_t(k)$ at $t$.
\end{lemma}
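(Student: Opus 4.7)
I would prove \cref{lemma:intervalmatching} via Hall's marriage theorem. Let $G_t$ be the bipartite graph with left vertex set $\{1,\dots,a\}$, right vertex set $\{b+1,\dots,b+a\}$, and an edge $k\sim\ell$ whenever the $k$-interval of $t$ encloses the $\ell$-interval of $t$. The sought bijection is precisely a perfect matching in $G_t$, so by Hall's theorem it suffices to check that $|N(K)|\ge|K|$ for every $K\subseteq\{1,\dots,a\}$, where $N(K)$ denotes the neighborhood in $G_t$.

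A useful arithmetic reformulation is as follows. Writing $u_j\doteq\{jt\}$, the $k$-interval of $t$ has distance $u_k/k$ from $t$ to its left endpoint and distance $(1-u_k)/k$ to its right endpoint; thus $k$ encloses $\ell$ iff the point $(u_k/k,(1-u_k)/k)$ coordinatewise dominates $(u_\ell/\ell,(1-u_\ell)/\ell)$, equivalently iff the integer $\ell u_k-k u_\ell$ lies in $[0,\ell-k]$. A key consequence is that $k$ always encloses every positive integer multiple of itself, via $m\lfloor kt\rfloor\le\lfloor mkt\rfloor\le m\lfloor kt\rfloor+m-1$. In particular, $k=1$ encloses every $\ell$, so Hall's condition is automatic whenever $1\in K$, and more generally $|N(\{k\})|\ge\lfloor a/k\rfloor$ via divisibility applied to multiples in $\{b+1,\dots,b+a\}$.

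The main obstacle is verifying Hall's condition for $K$'s on which divisibility alone is insufficient; for instance $K=\{2,3,4\}$ with $b=4$ yields only $\{6,8\}$ from divisibility, short of $|K|=3$. I would argue by contradiction, assuming $K$ is a minimal Hall-deficient set, so that $1\notin K$ and every proper subset of $K$ already satisfies Hall's. The plan is to analyze the union $J_K=\bigcup_{k\in K}I_k(t)$, which is a single interval since each $I_k(t)$ contains $t$, with left and right extents from $t$ equal to $\max_{k\in K}u_k/k$ and $\max_{k\in K}(1-u_k)/k$, respectively. The subtlety---and the main technical hurdle---is that $I_\ell(t)\subseteq J_K$ is strictly weaker than $I_\ell(t)\subseteq I_k(t)$ for some single $k\in K$, so a naive length-based count of $\ell$-intervals inside $J_K$ overshoots $|N(K)|$. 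To close the gap, I would start from the partial matching furnished by divisibility within $K$ and use the explicit arithmetic condition $\ell u_k-k u_\ell\in[0,\ell-k]$ to show that the existence of an $\ell\in\{b+1,\dots,b+a\}\setminus N(K)$ forces the sequence $(u_j)_{j=1}^{b+a}$ into an arithmetic incompatibility---most likely via the three-distance structure of $(\{jt\})$---that contradicts the minimality of $K$. Making this incompatibility quantitative is where I expect the principal difficulty to lie.
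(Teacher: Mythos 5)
Your proposal correctly sets up the problem as a perfect matching and reduces it to Hall's condition, which is also the framing the paper uses. You also correctly observe that divisibility gives some guaranteed neighbors (every multiple of $k$ is enclosed by $k$ at every $t$), and you correctly identify the obstacle that this alone is not enough --- your $K=\{2,3,4\}$, $b=4$ example is apt. However, from there the proposal does not actually close the gap: you propose a contradiction via a minimal Hall-deficient set and the three-distance structure of $\{jt\}$, but you explicitly acknowledge that you do not see how to make this quantitative, and indeed it is not clear that route would work.

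The missing idea in the paper's proof is a strengthening of your divisibility observation: not only do multiples $mk$ of a single $k\in K$ lie in the (time-uniform) neighborhood, but \emph{arbitrary sums} of (not necessarily distinct) elements of $K$ do. Concretely, the paper introduces the set $\Nbr(K)$ of all $\ell$ enclosed by \emph{some} $k\in K$ at \emph{every} $t$, then proves via a mediant argument that $\ell_1,\ell_2\in\Nbr(K)$ implies $\ell_1+\ell_2\in\Nbr(K)$: if an interval $I$ contains a fraction with denominator $k$ for every $k\in K$, then it contains some $x/\ell_1$ and $y/\ell_2$, hence also their mediant $(x+y)/(\ell_1+\ell_2)$, which lies between them. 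Since $K\subseteq\Nbr(K)$ trivially, this yields $\Sum(K)\subseteq\Nbr(K)$, and then a simple modular argument shows $\Sum(K)$ meets any $a$ consecutive integers in at least $|K|$ places: with $m=\max(K)$, $\Sum(K)$ contains every integer congruent mod $m$ to an element of $K$, and $\{b+1,\dots,b+a\}$ hits every residue mod $m\le a$. In your $\{2,3,4\}$ example this produces $5=2+3$, $6=2+4$, $7=3+4$, so Hall's condition holds with room to spare. The mediant/sum-closure step is the crux, and your proposal does not contain it or an equivalent; your fallback plan (three-distance theorem) remains speculative.
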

Note that the bijection can be different depending on $t$. In fact, this is necessary as otherwise $\ell$ would need to be a multiple of $k$, which is not possible with a bijection between the sets $\{1,\dots,a\}$ and $\{b+1,\dots,b+a\}$.

We can interpret \cref{lemma:intervalmatching} as a matching on a bipartite graph by using Hall's marriage lemma.

\begin{lemma}[Hall's marriage lemma]
    \label{thm:hall}
    Let $G$ be a bipartite graph with partitions $L,R$. For any $\A\subseteq L$, let $\Nbr(\A)$ be the set of all vertices in $R$ with at least one neighbor in $\A$. The graph $G$ admits a perfect matching if and only if for all such $\A$, $|\Nbr(\A)|\ge |\A|$. 
\end{lemma}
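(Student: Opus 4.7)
The plan is to prove both directions of the biconditional separately. The forward direction is immediate: if a perfect matching $M$ exists, then for any $\A \subseteq L$, the edges of $M$ incident to $\A$ map each vertex of $\A$ to a distinct neighbor in $R$, so $|\Nbr(\A)| \ge |\A|$. The content of the lemma lies in the reverse direction, which I will prove by strong induction on $|L|$.

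For the base case $|L| = 1$, the single vertex $v \in L$ satisfies $|\Nbr(\{v\})| \ge 1$, so it has at least one neighbor to be matched to. For the inductive step with $|L| \ge 2$, I plan to split into two cases based on whether Hall's condition is ever \emph{tight} on a proper subset:

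\textbf{Case 1 (slack everywhere).} Suppose $|\Nbr(\A)| \ge |\A|+1$ for every nonempty proper $\A \subset L$. Pick any $v \in L$ and any neighbor $u$ of $v$ (which exists because $|\Nbr(\{v\})| \ge 1$). Form the graph $G'$ by deleting $v$ and $u$. For any $\A' \subseteq L \setminus \{v\}$, we have $|\Nbr_{G'}(\A')| \ge |\Nbr_G(\A')| - 1 \ge |\A'|$, so Hall's condition holds in $G'$. The inductive hypothesis yields a perfect matching of $G'$, which together with the edge $\{v,u\}$ produces one for $G$.

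\textbf{Case 2 (tightness somewhere).} Suppose there is a nonempty proper $\A \subset L$ with $|\Nbr(\A)| = |\A|$. Apply the inductive hypothesis to the induced bipartite subgraph on $(\A, \Nbr(\A))$ to obtain a perfect matching $M_1$ there. Then consider the induced subgraph on $(L \setminus \A, R \setminus \Nbr(\A))$. For any $\A'' \subseteq L \setminus \A$, the set $\A \cup \A''$ satisfies $|\Nbr_G(\A \cup \A'')| \ge |\A| + |\A''|$ by hypothesis, and since $\Nbr_G(\A \cup \A'') \setminus \Nbr(\A)$ consists exactly of the neighbors of $\A''$ in $R \setminus \Nbr(\A)$, Hall's condition holds in this second subgraph too. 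The inductive hypothesis yields a perfect matching $M_2$ there, and $M_1 \cup M_2$ is a perfect matching of $G$.

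The main subtlety will be Case 2: verifying that Hall's condition transfers to the subgraph on $(L \setminus \A, R \setminus \Nbr(\A))$ is where the tightness assumption $|\Nbr(\A)| = |\A|$ gets used essentially, since without it the bound $|\Nbr_G(\A \cup \A'')| \ge |\A| + |\A''|$ would not immediately translate into the required bound on neighbors outside $\Nbr(\A)$. Once this is in hand, the induction closes cleanly because both subproblems strictly decrease the size of the left partition.
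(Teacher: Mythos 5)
The paper does not prove this lemma at all: Hall's marriage lemma is invoked as a classical black box, used only to reduce the interval-matching statement (Lemma~\ref{lemma:intervalmatching}) to checking $|\Nbr_t(\A)|\ge|\A|$. So there is no in-paper argument to compare against; what you have supplied is the standard Halmos--Vaughan proof by strong induction on $|L|$, splitting on whether Hall's condition is tight on some nonempty proper subset, and it is correct. The forward direction is fine, Case~1 correctly uses the slack to absorb the loss of the single deleted vertex $u$ from each neighborhood, and Case~2 correctly exploits tightness ($|\Nbr(\A)|=|\A|$) to show that neighbors of any $\A''\subseteq L\setminus\A$ outside $\Nbr(\A)$ number at least $|\A''|$, so both subproblems satisfy Hall's condition and are strictly smaller.

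One small caveat, which traces back to the paper's own phrasing rather than to your argument: what your induction actually constructs is a matching saturating $L$ (your base case and Case~1 match $v$ to one neighbor and leave any surplus vertices of $R$ uncovered), whereas a literal ``perfect matching'' of $G$ also requires $|L|=|R|$. As stated, the ``if'' direction is false when $|R|>|L|$; the correct statement is Hall's condition iff there is an $L$-saturating matching, with perfection following when $|L|=|R|$. In the paper's application this is immaterial, since the two sides are $\{1,\dots,a\}$ and $\{b+1,\dots,b+a\}$, both of size $a$, so your proof establishes exactly what is needed. If you want your write-up to be airtight, state and prove the $L$-saturating version and note that it yields a perfect matching whenever the two sides have equal size.
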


For $\A\subseteq \{1,\dots,a\}$, let $\Nbr_t(\A)$ be the set of all $\ell\in\{b+1,\dots,b+a\}$ such that there exists $k\in \A$ where $k$ encloses $\ell$ at $t$. To produce the desired bijection $\bij_t$ in \cref{lemma:intervalmatching}, it is sufficient to prove that $|\Nbr_t(\A)|\ge |\A|$ for all $t$ and $\A$.

There are some values of $\ell$ that are always contained in $\Nbr_t(\A)$ regardless of the value of $t$. Let $\Nbr(\A)$ be the set of all $\ell\in\mathbb{N}$ such that for all $t$, there exists $k\in \A$ where $k$ encloses $\ell$ (this $k$ can vary depending on $t$). As $\Nbr(\A)\cap \{b+1,\dots,b+a\}\subseteq \Nbr_t(\A)$, it is sufficient to prove that $|\Nbr(\A)\cap \{b+1,\dots,b+a\}|\ge |\A|$ for all $\A$ and apply \cref{thm:hall} to prove \cref{lemma:intervalmatching}.

\begin{example}
    \label{ex:intervalmatching}
    $5\in \Nbr(\{2,3\}).$
\end{example}

\begin{proof}
We only consider $t\in[0,1)$ for clarity. When $t\in [0,0.4)$, the $2$-interval for $t$ is $[0,0.5)$, while the $5$-interval for $t$ is either $[0,0.2)$ or $[0.2,0.4)$, which means $2$ encloses $5$. When $t\in[0.4,0.6)$, the $3$-interval for $t$ is $[1/3,2/3)$, which encloses the $5$-interval $[0.4,0.6)$. When $t\in [0.6,1)$, the $2$-interval is $[0.5,1)$ while the $5$-interval is either $[0.6,0.8)$ or $[0.8,1)$, so $2$ encloses $5$. These enclosures are shown in \cref{fig:intervalmatching}, where the marked $5$-intervals are contained within the respectively marked $2$ or $3$-intervals.

For all $t$, either 2 encloses 5, or 3 encloses 5. In other words, $5\in \Nbr(\{2,3\})$.
\end{proof}

\begin{figure}[ht]
    \centering
    \begin{tikzpicture}
        \node at (0,0)[circle,fill,inner sep=1.5pt]{};
        \node at (2,0)[circle,fill,inner sep=1.5pt]{};
        \node at (4,0)[circle,fill,inner sep=1.5pt]{};
        \node (x5y6) at (6,0)[circle,fill,inner sep=1.5pt]{};
        \node (x5y8) at (8,0)[circle,fill,inner sep=1.5pt]{};
        \node (x5y10) at (10,0)[circle,fill,inner sep=1.5pt]{};
        \draw[thick, dashed] (0,0) --(4,0);
        \draw[thick, densely dotted] (4,0)--(6,0);
        \draw[thick, decorate, decoration=snake] (x5y6)--(x5y8);
        \draw[thick, decorate, decoration=snake] (x5y8)--(x5y10);
        \node at (10.5,0)[anchor=west]{$\ell=5$};
        
        \node at (0,1)[circle,fill,inner sep=1.5pt]{};
        \node at (10/3,1)[circle,fill,inner sep=1.5pt]{};
        \node at (20/3,1)[circle,fill,inner sep=1.5pt]{};
        \node at (10,1)[circle,fill,inner sep=1.5pt]{};
        \draw (0,1)--(10/3,1);
        \draw (20/3,1)--(10,1);
        \draw[thick, densely dotted] (10/3,1)--(20/3,1);
        \node at (10.5,1)[anchor=west]{$k=3$};
        
        \node at (0,2)[circle,fill,inner sep=1.5pt]{};
        \node (x2y5) at (5,2)[circle,fill,inner sep=1.5pt]{};
        \node (x2y10) at (10,2)[circle,fill,inner sep=1.5pt]{};
        \draw[thick, dashed] (0,2)--(5,2);
        \draw[thick, decorate, decoration=snake] (x2y5)--(x2y10);
        \node at (10.5,2)[anchor=west]{$k=2$};
    \end{tikzpicture}
    \caption{\cref{ex:intervalmatching}}
    \label{fig:intervalmatching}
\end{figure}

We first consider a different characterization of $\Nbr(\A)$.

\begin{claim}
    \label{claim:intervalcondition}
    Let $\ell$ be a positive integer. Then $\ell\in \Nbr(\A)$ if and only if every open interval $I\subset \mathbb{R}$ that contains a fraction of denominator $k$ for every $k\in \A$ must also contain a fraction of denominator $\ell$. (These fractions do not have to be distinct or reduced.)
\end{claim}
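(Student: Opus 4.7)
The plan is to reformulate the enclosure condition in purely combinatorial terms and then deduce both directions by straightforward interval manipulations. Specifically, with $t$ lying in the $\ell$-interval $[m/\ell,(m+1)/\ell)$, I will first verify that $k$ encloses $\ell$ at $t$ if and only if the open interval $(m/\ell,(m+1)/\ell)$ contains no fraction of denominator $k$. The forward direction here is the observation that any $p/k$ strictly inside $[m/\ell,(m+1)/\ell)$ is a $k$-breakpoint splitting the $\ell$-interval across two distinct $k$-intervals; for the converse, if no such $p/k$ exists, the $\ell$-interval lies between two consecutive $k$-breakpoints (with equality at an endpoint allowed), hence fits inside a single $k$-interval. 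Using this, the condition $\ell\in\Nbr(\A)$ becomes: for every integer $m$, some $k\in\A$ leaves the open interval $(m/\ell,(m+1)/\ell)$ free of fractions of denominator $k$; equivalently, no open interval of the form $(m/\ell,(m+1)/\ell)$ contains a fraction of every denominator in $\A$ simultaneously.

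For the forward implication of the claim, suppose $\ell\in\Nbr(\A)$ and let $I$ be an open interval containing no fraction of denominator $\ell$. I will take $m$ to be the largest integer with $m/\ell\le\inf I$; then $(m+1)/\ell\ge\sup I$, and because $I$ is open with endpoints potentially sitting on $\ell$-breakpoints, this gives $I\subseteq(m/\ell,(m+1)/\ell)$. Any fraction of denominator $k$ inside $I$ would also lie in $(m/\ell,(m+1)/\ell)$, so by the reformulation above some $k\in\A$ leaves $I$ free of fractions of denominator $k$. Contrapositively, if $I$ does contain a fraction of every denominator in $\A$, it must also contain a fraction of denominator $\ell$.

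For the reverse direction, I will apply the contrapositive of the hypothesis to the open interval $I=(m/\ell,(m+1)/\ell)$ itself, which contains no fraction of denominator $\ell$ (any such $p/\ell$ would require $m<p<m+1$). The hypothesis then forces $I$ to miss some denominator in $\A$, which by the reformulation furnishes a $k\in\A$ enclosing $\ell$ at every $t\in[m/\ell,(m+1)/\ell)$. Since $m$ was arbitrary this gives $\ell\in\Nbr(\A)$. I do not anticipate any serious obstacle; the only delicate point is the bookkeeping between the half-open $\ell$-interval and the open interval $(m/\ell,(m+1)/\ell)$, which amounts to checking that $k$-breakpoints coinciding with $m/\ell$ or $(m+1)/\ell$ do not disrupt enclosure, and that open intervals avoiding $\ell$-fractions are genuinely sandwiched between two consecutive $\ell$-breakpoints.
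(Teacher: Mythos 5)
Your proof is correct and takes essentially the same route as the paper: the core technical device in both is the equivalence between $k$ enclosing $\ell$ at $t$ and the open interval $\bigl(m/\ell,(m+1)/\ell\bigr)$ being free of fractions of denominator $k$, which the paper isolates as Proposition~\ref{prop:intervalcondition-1} (the $(1)\Leftrightarrow(3)$ equivalence) and you re-derive inline. Your forward and reverse directions correspond directly to the paper's chain of equivalences, with the only cosmetic difference being that you work explicitly with the interval $\bigl(m/\ell,(m+1)/\ell\bigr)$ and the contrapositive, whereas the paper phrases the argument through interiors of $\ell$-intervals and arbitrary open subintervals.
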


The idea is that if no $k\in \A$ encloses $\ell$ for some $t$, then the endpoints of each $k$-interval form a fraction of denominator $k$ contained strictly within the $\ell$-interval of $t$. As a result, we have a contiguous interval $I$ containing a fraction of denominator $k$, and $I$ is contained strictly within the $\ell$-interval of $t$. As such, $I$ cannot contain a fraction of denominator $\ell$. This is illustrated in \cref{fig:intervalcondition}. The formal proof also considers the edge cases involving the endpoints of the intervals.

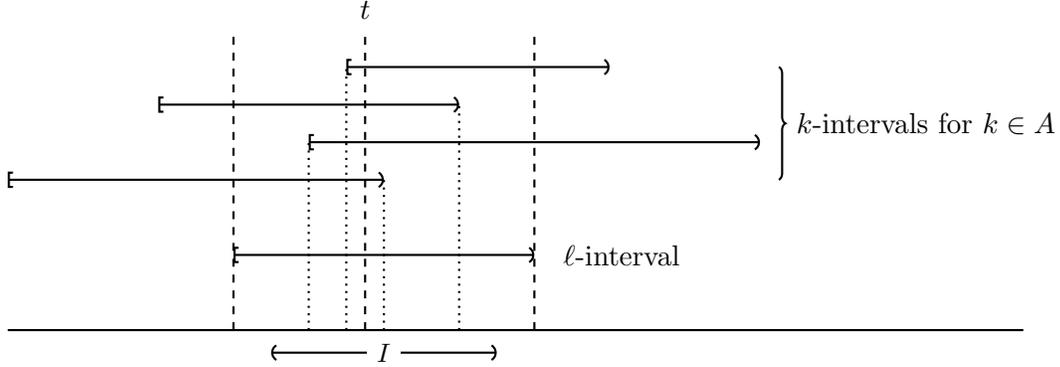
\begin{figure}[ht]
    \centering
    \begin{tikzpicture}
        \draw[thick] (0,0)--(13.5,0);
        
        \draw[dashed, thick] (3,0)--(3,4);
        \draw[dashed, thick] (7,0)--(7,4);
        \draw[dotted, thick] (5,0)--(5,2);
        \draw[dotted, thick] (4,0)--(4,2.5);
        \draw[dotted, thick] (6,0)--(6,3);
        \draw[dotted, thick] (4.5,0)--(4.5,3.5);
        
        \draw[dashed, thick] (4.75,0)--(4.75,4);
        \node at (4.75,4)[anchor=south]{$t$};
        
        \draw[Bracket-Parenthesis, thick] (3,1)--(7,1);
        
        \draw[Bracket-Parenthesis, thick] (0,2)--(5,2);
        \draw[Bracket-Parenthesis, thick] (4,2.5)--(10,2.5);
        \draw[Bracket-Parenthesis, thick] (2,3)--(6,3);
        \draw[Bracket-Parenthesis, thick] (4.5,3.5)--(8,3.5);
        
        \draw[decorate, decoration=brace, thick] (10.25,3.5)--(10.25,2);
        \node at (10.35,2.75)[anchor=west]{$k$-intervals for $k\in A$};
        \node at (7.25,1)[anchor=west]{$\ell$-interval};
        
        \draw[Parenthesis-Parenthesis, thick] (3.5,-0.3) -- 
            node[pos=0.5,style={fill=white}] {\small $I$}
        (6.5,-0.3);
    \end{tikzpicture}
    \caption{$I$ contains a fraction of denominator $k$ for all $k$, but no fraction of denominator $\ell$.}
    \label{fig:intervalcondition}
\end{figure}

For \cref{claim:intervalcondition},  we prove the following proposition.
\begin{proposition}
    \label{prop:intervalcondition-1}
    The following statements are equivalent for any half-open interval $[c,d)$ and positive integer $k$:
    \begin{enumerate}[(1)]
        \item $[c,d)$ is contained within a $k$-interval.
        \item $(c,d)$ is contained within a $k$-interval.
        \item $(c,d)$ contains no fraction of denominator $k$.
    \end{enumerate}
\end{proposition}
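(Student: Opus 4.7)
The plan is to establish the cyclic chain of implications $(1) \Rightarrow (2) \Rightarrow (3) \Rightarrow (1)$. The first two implications are essentially bookkeeping about the half-open structure of $k$-intervals, and the only substantive step is $(3) \Rightarrow (1)$, which amounts to showing that if $[c,d)$ is not contained in a single $k$-interval, then some fraction of denominator $k$ must sit strictly inside $(c,d)$.

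For $(1) \Rightarrow (2)$, simply note that $(c,d) \subseteq [c,d)$, so any $k$-interval containing $[c,d)$ also contains $(c,d)$. For $(2) \Rightarrow (3)$, suppose $(c,d) \subseteq [n/k,(n+1)/k)$ for some integer $n$. The only fraction of denominator $k$ lying in $[n/k,(n+1)/k)$ is $n/k$ itself, and since $n/k \le c$, we have $n/k \notin (c,d)$, so $(c,d)$ contains no fraction of denominator $k$.

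The key step is $(3) \Rightarrow (1)$. Let $n$ be the unique integer with $n/k \le c < (n+1)/k$, so that $c$ lies in the $k$-interval $[n/k,(n+1)/k)$. I would argue by contradiction: if $[c,d)$ were not contained in $[n/k,(n+1)/k)$, then $d > (n+1)/k$. Combining with $c < (n+1)/k$, the fraction $(n+1)/k$ would satisfy $c < (n+1)/k < d$, i.e., $(n+1)/k \in (c,d)$, contradicting (3). Hence $d \le (n+1)/k$, which yields $[c,d) \subseteq [n/k,(n+1)/k)$ as required. I do not anticipate any real obstacle here; the proposition is a clean unpacking of the definitions, and no nontrivial case analysis is needed.
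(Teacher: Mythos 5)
Your proof is correct and follows essentially the same cyclic chain $(1)\Rightarrow(2)\Rightarrow(3)\Rightarrow(1)$ as the paper, with the $(3)\Rightarrow(1)$ step implemented in the same way: take $n$ with $n/k \le c < (n+1)/k$ and observe that $(n+1)/k$ cannot lie in $(c,d)$, forcing $d \le (n+1)/k$. Your $(2)\Rightarrow(3)$ argument is phrased slightly differently (identifying $n/k$ as the only candidate fraction and ruling it out via $n/k \le c$, rather than noting that a fraction in $(c,d)$ would be strictly between $n/k$ and $(n+1)/k$), but this is a cosmetic variation of the same observation.
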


\begin{proof} $(1) \implies (2)$. This follows as $(c,d)$ is a subset of $[c,d)$.
        
        $(2) \implies (3)$. Suppose $(c,d)$ is contained in the $k$-interval $[n/k, (n+1)/k)$. The interval $(c,d)$ cannot contain a fraction of denominator $k$, otherwise said fraction would be strictly between $n/k$ and $(n+1)/k$.
        
        $(3) \implies (1)$. Let $n/k$ be the largest fraction of denominator $k$ less than or equal to $c$. It must be true that $(n+1)/k\ge d$, since $(n+1)/k$ cannot be contained in $(c,d)$. Therefore, $[c,d)$ is contained in the $k$-interval $[n/k, (n+1)/k)$.
\end{proof}

From this, we can prove \cref{claim:intervalcondition}. 

\begin{proof}
    By definition, the condition that $\ell\in N(A)$ is that any $\ell$-interval $J$ is contained in some $k$-interval for some $k\in A$. By condition (2) in \cref{prop:intervalcondition-1}, this is equivalent to saying $\Int(J)$ is contained in some $k$-interval, where $\Int(J)$ is the interior of $J$. This is true if and only if any open subinterval $I$ of $\Int(J)$ is contained in some $k$-interval. Equivalently, if $I$ is an open interval that is not contained in any $k$-interval, then $I$ is not contained in any $\ell$-interval. Using condition (3) in \cref{prop:intervalcondition-1}, this is finally equivalent to the condition that if $I$ contains a fraction of denominator $k$ for every $k\in \A$, then $I$ contains a fraction of denominator $\ell$.
\end{proof}

The characterization in \cref{claim:intervalcondition} allows us to prove the following key claim about $\Nbr(\A)$.

\begin{claim}
    \label{claim:sumcondition}
    If $\ell_1,\ell_2\in \Nbr(\A)$, then $\ell_1+\ell_2\in \Nbr(\A)$.
\end{claim}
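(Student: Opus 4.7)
The plan is to apply the characterization of $N(A)$ from Claim~\ref{claim:intervalcondition} and then use a mediant argument. Specifically, to show $\ell_1 + \ell_2 \in N(A)$, it suffices to verify that every open interval $I \subseteq \mathbb{R}$ which contains a fraction of denominator $k$ for every $k \in A$ must also contain a fraction of denominator $\ell_1 + \ell_2$. So I would fix such an $I$ and produce the required fraction.

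Since $\ell_1 \in N(A)$, the same characterization tells us that $I$ contains a fraction $p_1/\ell_1$, and since $\ell_2 \in N(A)$, $I$ contains a fraction $p_2/\ell_2$, where $p_1, p_2$ are integers. The candidate I would exhibit is the mediant
\[
\frac{p_1 + p_2}{\ell_1 + \ell_2},
\]
which has denominator $\ell_1 + \ell_2$ (by design). The remaining task is to show the mediant lies in $I$.

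For this, I would verify the standard mediant inequality: if, without loss of generality, $p_1/\ell_1 \le p_2/\ell_2$, then $p_1 \ell_2 \le p_2 \ell_1$, and adding $p_1 \ell_1$ to both sides gives $p_1(\ell_1 + \ell_2) \le (p_1 + p_2)\ell_1$, so $p_1/\ell_1 \le (p_1+p_2)/(\ell_1+\ell_2)$; an analogous manipulation yields $(p_1+p_2)/(\ell_1+\ell_2) \le p_2/\ell_2$. Hence the mediant lies in the closed interval with endpoints $p_1/\ell_1$ and $p_2/\ell_2$, both of which belong to the open interval $I$, so the mediant belongs to $I$ as well. This produces a fraction of denominator $\ell_1 + \ell_2$ in $I$, and Claim~\ref{claim:intervalcondition} then gives $\ell_1 + \ell_2 \in N(A)$.

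There is no substantive obstacle here; the proof is essentially a two-line application of the mediant inequality combined with the interval-containment characterization of $N(A)$. The only subtlety worth noting is the edge case where $p_1/\ell_1 = p_2/\ell_2$, in which case the mediant coincides with both endpoints and is trivially in $I$.
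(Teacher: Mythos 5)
Your proof is correct and follows essentially the same route as the paper: invoke the interval-containment characterization from Claim~\ref{claim:intervalcondition}, exhibit the mediant $(p_1+p_2)/(\ell_1+\ell_2)$, and use the standard mediant inequality to place it between the two given fractions and hence inside $I$. The paper even records the same edge-case observation about the two fractions possibly coinciding.
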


\begin{proof}
    By \cref{claim:intervalcondition}, any interval $I$ that contains a fraction of denominator $k$ for every $k\in \A$ must also contain two fractions $x/\ell_1$ and $y/\ell_2$. For positive integers $a,b,c,d$, define the \emph{mediant} of $a/b$ and $c/d$ to be $(a+c)/(b+d)$. Then the mediant is always between the two fractions. In other words, if $a/b\le c/d$,
    \[\frac{a}{b}\le \frac{a+c}{b+d}\le \frac{c}{d}. \]
    This fact can be proven with elementary algebra, as both sides are equivalent to $a/b\le c/d$.
    
    From this, we see that the mediant $(x+y)/(\ell_1+\ell_2)$ is contained in $I$, as it is between two elements of $I$. As this applies to every such $I$, we can use \cref{claim:intervalcondition} to conclude that $\ell_1+\ell_2\in \Nbr(\A)$.
    
    Note that $x/\ell_1$ and $y/\ell_2$ do not have to be distinct. $\ell_1$ and $\ell_2$ might be equal, or $x/\ell_1=y/\ell_2$.
\end{proof}


As $k$ encloses $k$ for every $t$, we have that $\A\subseteq \Nbr(\A)$. Let $\Sum(\A)$ be the set of positive integers that can be written as the sum of not necessarily distinct elements of $\A$. \cref{claim:sumcondition} implies that $\Sum(\A)\subseteq \Nbr(\A)$.

\begin{claim}
    For nonnegative integers $n$, \[\lvert\Sum(\A)\cap \{n+1,\dots,n+a\}\rvert\ge |\A|.\]
\end{claim}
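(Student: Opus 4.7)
The plan is to prove the claim by induction on $n \ge 0$. The base case $n = 0$ is immediate: the window $\{1,\dots,a\}$ contains $\A$ entirely, and $\A \subseteq \Sum(\A)$ since each element of $\A$ is a one-term sum, so at least $|\A|$ elements of $\Sum(\A)$ lie in that window.

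For the inductive step from $n-1$ to $n$, the window slides by one position, dropping its left endpoint $n$ and picking up the new right endpoint $n+a$. I split on whether $n \in \Sum(\A)$. If $n \notin \Sum(\A)$, then removing $n$ from $\{n,\dots,n+a-1\}$ costs nothing, and since $\{n+1,\dots,n+a-1\} \subseteq \{n+1,\dots,n+a\}$, the inductive hypothesis transfers directly to the new window. If instead $n \in \Sum(\A)$, the inductive hypothesis is not even needed: for each $k \in \A$, the sum $n+k$ lies in $\Sum(\A)$ (append $k$ to any representation of $n$) and in $\{n+1,\dots,n+a\}$ (since $1 \le k \le a$), so the $|\A|$ values $\{n+k : k \in \A\}$ provide $|\A|$ distinct elements of $\Sum(\A) \cap \{n+1,\dots,n+a\}$.

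No step presents a genuine obstacle; the entire argument is a one-shot induction driven by the closure of $\Sum(\A)$ under adding elements of $\A$, which bootstraps from the presence of any single element of $\Sum(\A)$ in the window to the presence of $|\A|$ of them.
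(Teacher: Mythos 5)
Your proof is correct and takes a genuinely different route from the paper's. The paper argues directly: letting $m = \max(\A) \le a$, it notes that $\Sum(\A)$ contains every positive integer congruent modulo $m$ to some element of $\A$ (by repeatedly adding $m$), that distinct elements of $\A \subseteq \{1,\dots,m\}$ give distinct residues mod $m$, and that a window of $a \ge m$ consecutive integers hits every residue class, hence hits at least $|\A|$ members of $\Sum(\A)$. Your argument instead slides the window one step at a time and, in the nontrivial case where the lost left endpoint $n$ lies in $\Sum(\A)$, uses the closure of $\Sum(\A)$ under adding a single element of $\A$ (together with $\A \subseteq \{1,\dots,a\}$) to exhibit $|\A|$ witnesses in the new window outright. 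Both exploit the same two facts — $\A \subseteq \Sum(\A)$ and closure under addition — but the paper's residue-counting resolves everything in one shot, whereas yours is more elementary (no modular arithmetic) at the cost of an induction, with the case split doing the work that residue classes do for the paper. Either is a perfectly good proof.
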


\begin{proof}
    Let $m=\max(\A)$. Because $\A\subseteq \Sum(\A)$, $\Sum(\A)$ contains every positive integer congruent to an element of $\A$ modulo $m$, since we can repeatedly add $m$ to any element of $\A$. As $a\ge m$, the set $\{n+1,\dots,n+a\}$ contains at least one element for every residue mod $m$. Therefore, $\Sum(\A)$ contains at least $|\A|$ elements of $\{n+1,\dots,n+a\}$. 
\end{proof}

Putting it all together, we have that
\[|\A|\le \lvert\Sum(\A)\cap\{b+1,\dots,b+a\}\rvert \le |\Nbr(\A)\cap \{b+1,\dots,b+a\}|\le |\Nbr_t(\A)|, \]
which proves \cref{lemma:intervalmatching} by our previous reasoning. To finish, we need to show that this implies \cref{lemma:bijection}.

\begin{claim}
    \label{claim:bijection-proof}
    \cref{lemma:intervalmatching} implies \cref{lemma:bijection}.
\end{claim}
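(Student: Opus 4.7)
The plan is to transport the interval-enclosure relationship from \cref{lemma:intervalmatching} into a pointwise sine inequality via a scaling substitution, and then close the argument using the concavity of $\sin$ on $[0,\pi/2]$. Given $t\in\mathbb{R}$, I would apply \cref{lemma:intervalmatching} to the rescaled argument $s=t/\pi$ to obtain a bijection $\bij_t\colon\{1,\dots,a\}\to\{b+1,\dots,b+a\}$. If $k$ encloses $\ell\doteq\bij_t(k)$ at $s$, then by definition the $(1/k)$-interval of $s$ contains the $(1/\ell)$-interval of $s$; multiplying through by $\pi$ translates this into the statement that the $(\pi/k)$-interval $J_k$ of $t$ contains the $(\pi/\ell)$-interval $J_\ell$ of $t$. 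This is the right scaling because the zeros of $|\sin(mt)|$ lie exactly at integer multiples of $\pi/m$, so $J_k$ and $J_\ell$ are precisely the half-periods on which $|\sin(kt)|$ and $|\sin(\ell t)|$ each form a single sine arch.

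For each positive integer $m$ and each $t$ in its $(\pi/m)$-interval, I would introduce the distance $\delta_m(t)$ from $t$ to the nearest endpoint of that interval. The $\pi$-periodicity of $|\sin|$ together with the identity $\sin(u)=\sin(\pi-u)$ on $[0,\pi]$ yields the normalized representation
\[
\left|\frac{\sin(mt)}{m}\right|=\frac{\sin(m\,\delta_m(t))}{m},\qquad m\,\delta_m(t)\in[0,\pi/2].
\]
Because $J_\ell\subseteq J_k$, every endpoint of $J_\ell$ is at least as close to $t$ as the nearer endpoint of $J_k$, so $\delta_\ell(t)\le\delta_k(t)$. Combined with $k\le a\le b<\ell$, the pole reduction claim reduces to proving
\[
\frac{\sin(k\delta_k)}{k}\;\ge\;\frac{\sin(\ell\delta_\ell)}{\ell}
\]
whenever $k\le\ell$, $\delta_\ell\le\delta_k$, and $k\delta_k,\ell\delta_\ell\in[0,\pi/2]$.

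This last inequality will be handled by chaining two standard monotonicity properties of $\sin$ through the intermediate quantity $\sin(k\delta_\ell)/k$. Monotonicity of $\sin$ on $[0,\pi/2]$ gives $\sin(k\delta_k)/k\ge\sin(k\delta_\ell)/k$ from $k\delta_\ell\le k\delta_k\le\pi/2$. Monotonicity of the sinc function $\sin(x)/x$ on $(0,\pi/2]$ (easily checked by differentiating $x\cos x-\sin x$) applied to $x=k\delta_\ell\le\ell\delta_\ell=y$, then multiplied by $\delta_\ell$, gives $\sin(k\delta_\ell)/k\ge\sin(\ell\delta_\ell)/\ell$; the boundary case $\delta_\ell=0$ is trivial. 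The main technical point is to get the scaling right: \cref{lemma:intervalmatching} must be invoked at $s=t/\pi$ so that its abstract $(1/k)$-intervals coincide with the half-periods of $\sin(kt)$. Once that alignment is in place, the rest is essentially a short concavity argument.
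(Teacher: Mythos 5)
Your argument is correct and rests on the same overall idea as the paper's proof --- transport the interval enclosure into a phase inequality and then close with a concavity-flavored estimate for sine on a single arch --- but your execution differs in a way worth noting. The paper normalizes the period to~$1$, measures $\kappa/k$ and $\lambda/\ell$ (distances from $t$ to the \emph{left} endpoints of the $k$- and $\ell$-intervals), deduces $\kappa/k\ge\lambda/\ell$ from enclosure, and then splits into the cases $\kappa\le\lambda$ and $\lambda<\kappa$, the latter handled by reflecting the period-$1$ function. You instead use the distance $\delta_m(t)$ to the \emph{nearest} endpoint together with the symmetry $|\sin(\pi-\theta)|=|\sin\theta|$ to write $|\sin(mt)|=\sin(m\delta_m(t))$ with $m\delta_m(t)\in[0,\pi/2]$; this symmetrization absorbs the paper's case split. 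Your final chain $\sin(k\delta_k)/k\ge\sin(k\delta_\ell)/k\ge\sin(\ell\delta_\ell)/\ell$, via monotonicity of $\sin$ and of $\sin(x)/x$ on $(0,\pi/2]$, is the same ``concavity through the origin'' step in a different guise (the paper also states it for an arbitrary concave period-$1$ function $f$ with $f(0)=f(1)=0$, a harmless generalization). One wording slip to fix: ``every endpoint of $J_\ell$ is at least as close to $t$ as the nearer endpoint of $J_k$'' is not what you mean and need not hold; the correct observation is that each endpoint of $J_\ell$ lies between $t$ and the \emph{same-side} endpoint of $J_k$, so both endpoint distances weakly decrease, which is what gives $\delta_\ell(t)\le\delta_k(t)$.
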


\begin{proof}
    We prove the following more general claim. Let $f:\mathbb{R}\to\mathbb{R}$ be a function that has period 1, and additionally, $f$ is concave on $[0,1]$ and $f(0)=f(1)=0$. For $t\in [0,1]$, let $\bij_t$ be a bijection that satisfies the conditions of \cref{lemma:intervalmatching}. Then for any $k\in \{1,\dots,a\}$, let $\ell=\bij_t(k)$. We seek to prove that 
    \[\frac{f(kt)}{k}\ge \frac{f(\ell t)}{\ell} \]
    for all $k=1,\dots,a$. Notice that $f(t)=\lvert\sin(t)\rvert$ satisfies all the conditions of the claim, albeit after scaling the period from $\pi$ to $1$.
    
    To prove the general claim, let $\kappa=kt-\lfloor kt\rfloor$ and $\lambda=\ell t-\lfloor \ell t\rfloor$, noting that $\kappa,\lambda\in [0,1]$. By the enclosing property, it is true that $\kappa/k \ge \lambda/\ell$, since they represent the distance from $t$ to the left endpoints of the $k$ and $\ell$-intervals, respectively, and $k$ encloses $\ell$ at $t$. This can be seen in \cref{fig:enclosingpicture}.
    
    Suppose $\kappa\le \lambda$. We have 
    \[f(kt)=f(\kappa) \ge \frac{\kappa}{\lambda}f(\lambda)\ge \frac{k}{\ell}f(\lambda)=\frac{k}{\ell}f(\ell t). \]
    The first step follows from the periodicity of $f$, the second from the concavity of $f$ on $[0,1]$ and the fact that $f(0)=0$ (the point $(\kappa,f(\kappa))$ is above the line segment connecting $(\lambda,f(\lambda))$ and $(0,0)$), the third from the aforementioned inequality $\kappa/k \ge \lambda/\ell$, and the last from the periodicity of $f$ again. 
    
    If $\lambda<\kappa$, we can instead consider the function $\widetilde{f}(t)=f(1-t)$. Here, $\widetilde{\kappa}=1-\kappa$ and $\widetilde{\lambda}=1-\lambda$, and we can use our previous reasoning. This proves the general claim.
\end{proof}

\begin{figure}[ht]
    \centering
    \begin{tikzpicture}[scale=0.8]
        \draw[->,thick] (-1,0)--(11,0) node[anchor=west]{$x$};
        \draw[domain=0:10,variable=\t,thick] plot ({\t}, {5*sin((\t)*18)});
        \draw[domain=1:5,variable=\t,thick] plot({\t}, {2*sin((\t-1)*45)});
        \draw[densely dotted] ({1+0.4*4},-1)--({1+0.4*4},{5*sin((1+0.4*4)*18)});
        \draw[densely dotted] (0,-1)--(0,0);
        \draw[densely dotted] (1,-0.5)--(1,0);
        
        \draw[<->,thick] (1,-0.5) -- 
            node[pos=0.5,style={fill=white}] {\small $\lambda/\ell$}
        (1+0.4*4,-0.5);
        \draw[<->,thick] (0,-1) -- 
            node[pos=0.5,style={fill=white}] {\small $\kappa/k$}
        (1+0.4*4,-1);
        
        \node at (4.25,2.25){$\dfrac{f(\ell x)}{\ell}$};
        \node at (8,4.25){$\dfrac{f(k x)}{k}$};
        
        \node at ({1+0.4*4},0)[anchor=south west]{$t$};
        \node at ({1+0.4*4},{2*sin((0.4*4)*45)})[circle,fill,inner sep=1.5pt]{};
        \node at ({1+0.4*4},{5*sin((1+0.4*4)*18)})[circle,fill,inner sep=1.5pt]{};
    \end{tikzpicture}
    \caption{Proof of \cref{claim:bijection-proof}: $\lambda/\ell \le \kappa/k$}
    \label{fig:enclosingpicture}
\end{figure}
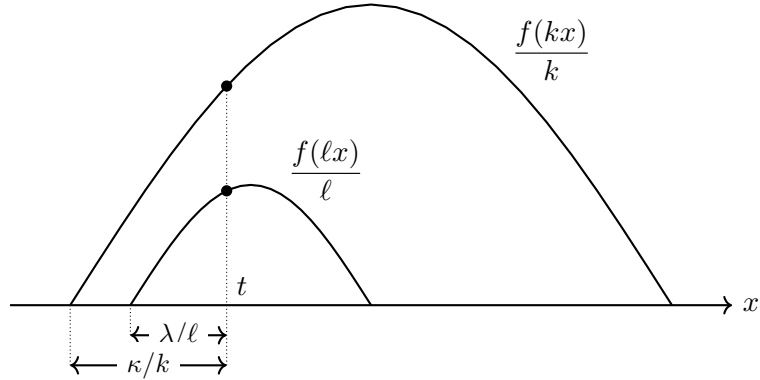

\section{Turing Complexity}
\label{sec:turing}

This section serves as a supplementary discussion on the time (and space) complexity of algorithms for computing the minimum number of inversions in trees. From the analysis in \cref{sec:decomposition}, computing $\MInv(T,\ranking)$ involves computing the sum of $\MRInv(T_v,\ranking)$ independently for each $v\in T$:
\begin{equation*}
   \MInv(T,\ranking) = \sum_v \MRInv(T_v,\ranking).
\end{equation*}
In order to compute $\MRInv(T_v,\ranking)$ for each $v$, we need to check the orderings $\ordering$ of the children $u_1, \dots, u_k$ of $v$ and determine the minimum number of cross inversions between the corresponding leaf sets $L_1,\dots,L_k$ in the order given by $\ordering$:
\begin{equation*}
    \RInv(T_v,\ranking,\ordering) \doteq \sum_{1 \le i < j \le k} \XInv_\ranking(L_{\ordering(i)},L_{\ordering(j)}).
\end{equation*}

A natural approach for computing $\MInv(T,\ranking)$ consists of two phases:
\begin{enumerate}
\item In a first phase, we compute $\XInv_\ranking$ between the leaf sets of any pair of siblings in $T$. This can be done in a bottom-up fashion similar to mergesort. More precisely, for a node $v$ with sorted leaf sets $L_1,\dots,L_k$, the cross inversions between every pair of leaf sets can be calculated using a merge operation in $O(k \cdot(\lvert L_1 \lvert+\dots+\lvert L_k \lvert))$ time, and sorts the concatenation of the leaf sets for use in future steps. Each leaf of depth $d$ appears in $d$ operations, which gives the total runtime of $O(\deg(T)\cdot \avgdepth(T)\cdot n)$. 
\item In a second phase, we check, for each node $v$ independently, which ordering $\ordering$ of the children of $v$ minimizes $\RInv(T_v,\ranking,\ordering)$. We then output the sum of the values $\MRInv(T_v,\ranking)$.
\end{enumerate}
Exhaustively testing all orderings $\ordering$ of the $k$ children of $v$ to compute $\MRInv(T_v,\ranking)$ takes $O(k \cdot k!)$ time, as $O(k)$ time is required to calculate $\RInv(T_v,\ranking,\ordering)$, for each $\ordering$, from the precomputed values of $\XInv$. This results in a total of $O(n \cdot \deg(T)!)$ time for the second phase, and an overall running time of $O((\deg(T)!+\deg(T)\cdot\avgdepth(T))\cdot n)$. 

For any constant bound on the degree of the tree $T$, the basic algorithm runs in time polynomial in $n$. The dependency of the running time on the degree can be improved. One way to do so is by reducing the problem of the second phase to the closely-related and well-studied problem of computing a minimum arc feedback set of a weighted directed graph.
\begin{definition}[Minimum Feedback Arc Set]
    Given a directed graph $G=(V,E)$ with an ordering $\ordering$ on the vertices $v_1,\dots,v_n$, a \emph{feedback arc} is an edge $e_k$ from $v_i$ to $v_j$ such that $\ordering(v_i)>\ordering(v_j)$. The minimum feedback arc set problem is finding the minimum number of feedback arcs induced by any ordering $\ordering$.
    In \emph{weighted} minimum feedback arc set, each edge from $v_i$ to $v_j$ has a weight $w_{ij}$, and the objective is to minimize $\sum_{e\in E} e_{ij} \cdot \Indicator[\ordering(v_i)>\ordering(v_j)]$.
\end{definition}

We can encode the problem of computing $\MRInv(T_v,\ranking)$ as an instance of weighted minimum arc feedback set, where each edge of the graph $G$ has a positive weight. If the leaf sets of the children of $v$ are $L_1,\dots,L_k$, we construct a graph $G$ with $k$ vertices $v_1,\dots,v_k$. For each pair of vertices $v_i$ and $v_j$, if $\XInv_\ranking(L_i,L_j)<\XInv_\ranking(L_j,L_i)$, we add an edge from $v_i$ to $v_j$ of weight $\XInv_\ranking(L_j,L_i)-\XInv_\ranking(L_i,L_j)$. We can extract the value of $\MRInv(T_v,\ranking)$ from the weight of the minimum feedback arc set.

As a consequence, we can use existing efficient algorithms for weighted minimum arc feedback set to construct algorithms for inversion minimization on trees that are more efficient than the basic algorithm we described. \cite{Bodlaender2011} gives two exact algorithms for weighted minimum arc feedback set. One algorithm {\cite[Algorithm 1]{Bodlaender2011}} is based on the Held-Karp algorithm for the traveling salesman problem \cite{HK62}; it uses dynamic programming to achieve a time complexity of $\Theta(n^22^n)$ and a space complexity of $\Theta(2^n)$ for a graph of $n$ vertices. Another algorithm {\cite[Algorithm 2]{Bodlaender2011}} uses a divide and conquer approach that achieves a time complexity of $O(\poly(n)\cdot 4^n)$, but has the advantage of only needing polynomial space.

An adaptation of the dynamic programming algorithm for calculating $\MRInv$ is given in \cref{alg:MRInv-DP}. Using this subroutine for computing $\MRInv$, we can improve the time complexity of our basic algorithm to $O((\deg(T)^2 2^{\deg(T)}+\deg(T)\cdot\avgdepth(T))\cdot n)$.

\begin{algorithm}[ht]
    \caption{$\MRInv(T_v, \ranking)$, Dynamic Programming}\label{alg:MRInv-DP}
    \begin{algorithmic}
        \Require Tree $T_v$ with child leaf sets $L_1,\dots,L_k$, ranking $\ranking$
        \State Initialize $\Cost[S]$, where $S$ is over all subsets of $\{1,\dots,k\}$.
        \State $\Cost[\varnothing]\gets 0$
        \For {$i$ from $1$ to $k$}
            \For {all sets $S$ of size $i$}
                \State $\Cost[S]\gets \min_{s\in S} (\Cost(S\setminus \{s\}) + \sum_{j\in S, j\neq s} \XInv_\ranking(L_{j}, L_{s})))$
            \EndFor
        \EndFor
        \State \Return $\Cost[\{1,\dots,k\}]$.
    \end{algorithmic}
\end{algorithm}

The improved running time is still not efficient for trees with unrestricted degree. This is to be expected, as there also exists a reduction from minimum feedback arc set to inversion minimization on trees, and the former is NP-hard \cite{Karp1972}.

\begin{figure}[ht]
    \centering
    \begin{tikzpicture}[]
        \node at (-8,0){$G$:};
        
        \node (Gu) at (-7,-0.2)[circle, draw, inner sep=1.5pt, label={below: $v_i$}]{};
        \node (Gv) at (-5,-0.2)[circle, draw, inner sep=1.5pt, label={below: $v_j$}]{};
        \draw[->] (Gu) -- node[pos=0.5, label={ $e_k$}]{} (Gv);
        
        \node at (-3,0){$T$:};
        \node at (-3,-3.9){$\ranking$:};
        
        \node (root) at (0,0)[circle, draw, inner sep=1.5pt]{};
        \node (u) at (-1.5,-1.5)[circle, draw, inner sep=1.5pt, label={above left: $v_i$}]{};
        \node (v) at (1.5,-1.5)[circle, draw, inner sep=1.5pt, label={above right: $v_j$}]{};
        \node (u1) at (-2.25,-3.5)[circle, draw, inner sep=1.5pt, label={below:\small $-2k$}]{};
        \node (u2) at (-0.75,-3.5)[circle, draw, inner sep=1.5pt, label={below:\small $2k-1$}]{};
        \node (v1) at (0.75,-3.5)[circle, draw, inner sep=1.5pt, label={below:\small $-2k+1$}]{};
        \node (v2) at (2.25,-3.5)[circle, draw, inner sep=1.5pt, label={below:\small $2k$}]{};
        
        \node at (0,-1.5){$\cdots$};
        \node at (-1.5,-3.5){$\cdots$};
        \node at (1.5,-3.5){$\cdots$};
        
        \draw (root)--(u)--(u1);
        \draw (u)--(u2);
        \draw (root)--(v)--(v1);
        \draw (v)--(v2);
        
        \draw (-0.6,-0.1) .. controls (0,-0.5) .. (0.6,-0.1);
        \node at (-0.8,-0.3){\small $n$};
        \draw (-1.8,-1.75) .. controls (-1.5,-1.95) .. (-1.2,-1.75);
        \node at (-2.5,-2){\scriptsize $2\deg_G(v_i)$};
        \draw (1.8,-1.75) .. controls (1.5,-1.95) .. (1.2,-1.75);
        \node at (0.5,-2){\scriptsize $2\deg_G(v_j)$};
    \end{tikzpicture}
    \caption{Encoding of an edge $e_k$}
    \label{fig:turing:MFAS-encoding}
\end{figure}
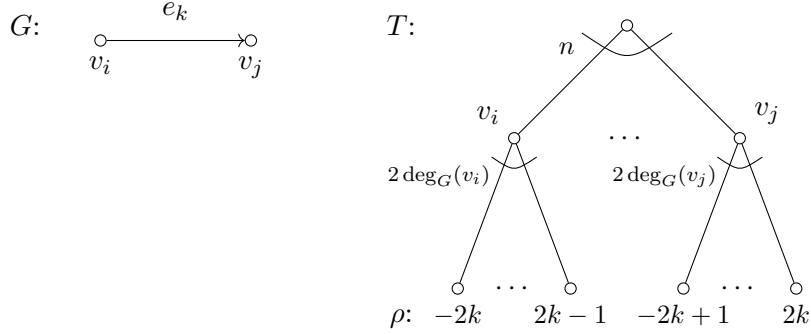

\begin{proposition}
    Computing $\MInv(T,\ranking)$ is NP-hard.
\end{proposition}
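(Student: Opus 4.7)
I plan to reduce from the minimum feedback arc set problem on directed graphs (MFAS), which is NP-hard by Karp~\cite{Karp1972}. The reduction follows the encoding in Figure~\ref{fig:turing:MFAS-encoding}. Given a directed graph $G=(V,E)$ with $V=\{v_1,\dots,v_n\}$ and $E=\{e_1,\dots,e_m\}$, I construct a depth-$2$ tree $T$ whose root has children $u_1,\dots,u_n$ corresponding to the vertices of $G$. For each edge $e_k=(v_a,v_b)$, I attach two leaves to $u_a$ with temporary ranks $-2k$ and $2k-1$, and two leaves to $u_b$ with temporary ranks $-2k+1$ and $2k$; then I relabel the $4m$ distinct rank values order-preservingly to obtain a ranking $\ranking$ with image $[4m]$. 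The construction is clearly polynomial-time.

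Since each $u_i$ has only leaf children, $\MRInv(T_{u_i},\ranking)=0$ regardless of $\ranking$, so by the decomposition of Section~\ref{sec:decomposition} the value $\MInv(T,\ranking)$ equals $\MRInv(T,\ranking)$, namely the minimum over orderings $\ordering$ of $u_1,\dots,u_n$ of the total cross inversions at the root. The crux is a lemma asserting that for every $\ordering$ this total equals $C+m+2F(\ordering)$, where $F(\ordering)$ is the number of feedback arcs of $G$ under the induced vertex ordering and $C$ is a constant depending only on $G$ (and computable in polynomial time from $G$). Granting the lemma, $\MInv(T,\ranking)=C+m+2\cdot\text{MFAS}(G)$, so a polynomial-time algorithm for $\MInv$ would yield one for MFAS, establishing NP-hardness.

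To prove the lemma I plan to classify pairs of leaves in different root-subtrees by which edges $e_k,e_{k'}$ generated them. When $k=k'$, say $e_k=(v_a,v_b)$, a direct check of the four cross-subtree pairs among the four leaves of $e_k$ shows that exactly $1$ of them is an inversion if $u_a$ precedes $u_b$ in $\ordering$ and exactly $3$ are inversions otherwise; summing over all edges yields the $m+2F(\ordering)$ contribution. When $k\ne k'$, I plan to argue that the contribution is independent of $\ordering$. Assuming $k<k'$, the key property is that for any two endpoints $u_i$ (hosting two leaves of $e_k$) and $u_j$ (hosting two leaves of $e_{k'}$), exactly two of the four resulting rank pairs form inversions regardless of whether $u_i$ or $u_j$ comes first. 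This invariance stems from the fact that the rank values of $e_{k'}$ strictly enclose those of $e_k$ around zero in a symmetric fashion, so the $(-2k,2k-1)$ and $(-2k+1,2k)$ patterns pair cleanly against the $(-2k',2k'-1)$ and $(-2k'+1,2k')$ patterns. The main obstacle is making this invariance argument uniform across the sub-cases of whether $e_k$ and $e_{k'}$ share a vertex and, if so, which role it plays in each edge; each sub-case reduces to enumerating a handful of rank pairs, so the argument is mechanical but requires careful bookkeeping. The constant $C$ is then $8$ times the number of vertex-disjoint edge pairs plus $6$ times the number of pairs sharing exactly one vertex, which is computable in polynomial time.
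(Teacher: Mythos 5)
Your proposal follows the same reduction from minimum feedback arc set as the paper, with the identical depth-$2$ tree construction and rank encoding for the edges. The one substantive divergence is the constant $C$ accounting for the order-independent inter-edge inversions. The paper asserts that each edge's four leaves form $8(m-1)$ inversions with the leaves of the other $m-1$ edges, yielding $C = 4m(m-1)$; but as you correctly observe, the count of $8$ holds only for a vertex-disjoint pair of edges --- when two edges share an endpoint, one of the four cross-subtree pairings collapses into a single root-subtree (hence contributes no root inversions), and the per-pair contribution drops to $6$. Your formula $C = 8\cdot(\text{number of vertex-disjoint pairs}) + 6\cdot(\text{number of pairs sharing one vertex})$ is the accurate one, and a small worked example (two edges sharing a vertex) confirms that the paper's $4m(m-1)$ overcounts. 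Since the reduction only needs the inter-edge contribution to be order-invariant (which both you and the paper establish via the $2$-inversions-per-opposing-subtree-pair observation) and $C$ to be polynomial-time computable, both arguments establish NP-hardness, but your bookkeeping is the correct one. Two small points you should make explicit, which the paper does and which your $C$ formula implicitly relies on: assume $G$ has no isolated vertices (so every child of the root actually receives leaves) and no antiparallel edges (so no edge pair shares two vertices, which would contribute $4$ rather than $6$ or $8$); both are without loss of generality for minimum feedback arc set.
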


\begin{proof}
    For a graph $G$ with $n$ vertices $v_1,\dots,v_n$ and $m$ directed edges $e_1,\dots,e_m$, we construct a depth-2 tree $T$ and a ranking $\ranking$ of its leaves. We will assume that $G$ has no isolated vertices; this goes without loss of generality as isolated vertices can be dropped from an instance of minimum arc feedback set without affecting the answer. We also assume that between any two vertices at most one of the two directed edges is present; this is also without loss of generality since dropping the edges in case both are present reduces the answer by one. Finally, for ease of notation, we allow the ranking $\ranking$ to be an injective mapping into the integers; this can be changed easily by replacing each integer by its rank in the range.
        
    In the first layer, the root of $T$ has $n$ children corresponding to $v_1,\dots,v_n$. The second layer has leaves with ranks encoding the edges of $G$. For each edge $e_k$ going from $v_i$ to $v_j$, we add two leaves under $v_i$ with ranks $-2k$ and $2k-1$, and two leaves under $v_j$ with ranks $-2k+1$ and $2k$, as shown in \cref{fig:turing:MFAS-encoding}. All ranks are distinct.
    
    Consider the number of inversions in $T$ induced by an ordering $\ordering$ of $v_1,\dots,v_n$. For each edge $e_k$, the number of inversions between the leaves of rank $-2k$ and $2k-1$ and the leaves of rank $-2k+1$ and $2k$ is $1$ if $\sigma(v_i)<\sigma(v_j)$ and $3$ if $\sigma(v_i)>\sigma(v_j)$. These four leaves also form $8(m-1)$ inversions with all other leaves, keeping in mind that these inversions are counted twice when summed up over all edges.
    
    Therefore, the minimum number of inversions in $T$ is given by
    \begin{equation*}
        \MInv(T,\ranking)=\min_{\ordering} \left(4m(m-1)+m+ 2\cdot\sum_{e_k\in E}\Indicator[\sigma(v_i)>\sigma(v_j)]\right).
    \end{equation*}
    The size of the minimum arc feedback set is precisely $\min_{\ordering} \left( \sum_{e_k\in E}\Indicator[\sigma(v_i)>\sigma(v_j)] \right)$, which can be extracted from $\MInv(T,\ranking)$ with straightforward calculations. This completes the reduction.
\end{proof}

\paragraph{Approximation Algorithms.} If we relax our requirements to an approximate answer, we can approximate $\MRInv$ in polynomial time using existing approximation algorithms for weighted minimum feedback arc set. The best known such algorithm achieves an approximation ratio of $O(\log n\log\log n)$ on a graph with $n$ vertices \cite{ENSS98}. Adapting this algorithm for minimizing inversions in trees produces an approximation factor of $O(\log (\deg(T)) \log\log (\deg(T)))$ for $\MInv(T,\ranking)$. Under the unique games conjecture, there does not exist a constant-factor approximation algorithm for minimum feedback arc set on arbitrary digraphs \cite{GHMRC11}.

In the special case of tournament graphs, which have exactly one edge of weight 1 between every pair of vertices, there are efficient constant factor approximation algorithms for minimum arc feedback set. Some of these also apply to weighted tournaments, where for every pair of vertices $v_i,v_j$, the nonnegative edge weights $w_{ij},w_{ji}$ satisfy $w_{ij}+w_{ji}=1$. This case corresponds to the scenario of computing $\MRInv(T_v,\ranking)$ where all leaf sets $L_i$ of siblings have the same size. \cite{KS10} gives an algorithm with runtime $O^*(2^{O(\sqrt{\mathrm{OPT}})})$, given that the optimal answer is $\mathrm{OPT}$. \cite{KMS07} also gives an approximation algorithm in the case where $w_{ij}+w_{ji}\in [b,1]$ for some $b>0$: For any $\epsilon>0$, the algorithm produces a $(1+\epsilon)$-approximation of $\mathrm{OPT}$ in time $n 2^{\Tilde{O}(1/(\epsilon b)^{12}) }$. For the problem of computing $\MRInv(T_v,\ranking)$, the parameter $b$ represents the ratio between the smallest and largest possible values of $\lvert L_i\rvert\cdot\lvert L_j\rvert$.

\paragraph{Wilcoxon test.} As a final remark we point out an alternate way of computing $\Pi_T(\ranking)$ in the special case of the Mann--Whitney trees of Figure~\ref{fig:Mann-Whitney}. The number of cross inversions $\XInv_\ranking(A,B)$ can be written in terms of the rank sum $W_B \doteq \sum_{y \in B} \ranking(y)$ as follows, where $a \doteq |A|$ and $b \doteq |B|$:
\begin{equation}\label{eq:Wilcoxon}
\XInv_\ranking(A,B) = ab + \frac{b(b+1)}{2} - W_B.
\end{equation}
The quantity $W_B$ is known as the Wilcoxon rank-sum statistic for differences between random variables. Because of the relationship \eqref{eq:Wilcoxon} the Wilcoxon test is equivalent in power to the Mann--Whitney test. However, the evaluation based on the efficient computation of cross inversions (especially in the case of unbalanced set sizes $a$ and $b$) is superior to the evaluation based on the rank sum $S_B$, as the latter presumes sorting the combined set $X = A \sqcup B$.

\section*{Acknowledgements}

We would like to thank Greta Panova, Robin Pemantle, and Richard Stanley for pointers regarding Gaussian polynomials, Stasys Jukna for answering questions about the complexity measure $N$, and the anonymous reviewers for helpful suggestions. We appreciate the partial support for this research by the U.S.\ National Science Foundation under Grants No. 2137424 and 2312540. Any opinions, findings, and conclusions or recommendations expressed in this material are those of the authors and do not necessarily reflect the views of the National Science Foundation.

A preliminary version of this paper appears in the Proceedings of the 2023 ACM-SIAM Symposium on Discrete Algorithms \cite{HvMM23}.


\bibliography{refs}

\end{document}